\let\newfloat\newfloat@ltx
\definecolor{WildSand}{rgb}{0.968,0.968,0.968}
\definecolor{NewYorkPink}{rgb}{0.925,0.792,0.792}
\definecolor{Feijoa}{rgb}{0.537,0.85,0.76}
\definecolor{RED}{rgb}{1.,0,0}
\renewcommand\onecolumngrid{
\do@columngrid{one}{\@ne}
\def\set@footnotewidth{\onecolumngrid}
\def\footnoterule{\kern-6pt\hrule width 1.5in\kern6pt}
}
\renewcommand\twocolumngrid{
        \def\footnoterule{
        \dimen@\skip\footins\divide\dimen@\thr@@
        \kern-\dimen@\hrule width.5in\kern\dimen@}
        \do@columngrid{mlt}{\tw@}
}
\def\KC{\mathcal{K}}
\def\LC{\mathcal{L}}
\tikzset{every picture/.style=remember picture}
\newcommand{\poly}{\operatorname{poly}}
\newcommand{\Ebb}{\mathbb{E}}
\newcommand{\DC}{\mathcal{D}}
\newcommand{\MC}{\mathcal{M}}
\newcommand{\NC}{\mathcal{N}}
\newcommand{\OC}{\mathcal{O}}
\newcommand{\PC}{\mathcal{P}}
\newcommand{\QC}{\mathcal{Q}}
\newcommand{\UC}{\mathcal{U}}
\newcommand{\XC}{\mathcal{X}}
\newcommand{\Var}{{\rm Var}}
\newcommand{\Cov}{{\rm Cov}}
\renewcommand{\geq}{\geqslant}
\renewcommand{\leq}{\leqslant}
\DeclareMathOperator*{\argmax}{arg\,max}
\renewcommand{\vec}[1]{\boldsymbol{#1}}  
\newcommand{\bs}{\textsf{BS}}
\newcommand{\lm}{\lambda }
\newcommand{\id}{\mathbb{1}}
\newcommand{\qt}{\tilde{q}}
\newcommand{\qth}{q_{\thv}}
\newcommand{\qtth}{\tilde{q}_{\thv}}
\newcommand{\pt}{\tilde{p}}
\newcommand{\thv}{\vec{\theta}}
\newcommand{\xv}{\vec{x}}
\newcommand{\yv}{\vec{y}}
\newcommand{\alv}{\vec{\alpha} }
\newcommand{\btv}{\vec{\beta} }
\def\be{\begin{equation}}
\def\ee{\end{equation}}
\def\bs{\begin{split}}
\def\e{\end{split}}
\def\ba{\begin{eqnarray}}
\def\bea{\begin{eqnarray}}
\def\tea{\end{eqnarray}}
\def\ea{\end{eqnarray}}
\def\eea{\end{eqnarray}}
\newtheorem{theorem}{Theorem}
\newtheorem{lemma}{Lemma}
\newtheorem{corollary}{Corollary}
\newtheorem{proposition}{Proposition}
\newtheorem*{proposition*}{Proposition}
\newtheorem{supplemental_proposition}{Supplemental Proposition}
\newtheorem{definition}{Definition}
\def\be{\begin{equation}}
\def\te{\end{equation}}
\def\ee{\end{equation}}
\def\ba{\begin{eqnarray}}
\def\bea{\begin{eqnarray}}
\def\tea{\end{eqnarray}}
\def\ea{\end{eqnarray}}
\def\eea{\end{eqnarray}}
\newcommand{\vol}{\boldsymbol{\mathcal{V}}} 
\newcommand{\pauli}{\mathfrak{p}_n}
\begin{document}

\title{Trainability barriers and opportunities in quantum generative modeling}

\author{Manuel S.~Rudolph}
\thanks{The first three authors contributed equally to this work.}
\affiliation{Institute of Physics, Ecole Polytechnique F\'{e}d\'{e}rale de Lausanne (EPFL), CH-1015 Lausanne, Switzerland}

\author{Sacha Lerch}
\thanks{The first three authors contributed equally to this work.}
\affiliation{Institute of Physics, Ecole Polytechnique F\'{e}d\'{e}rale de Lausanne (EPFL), CH-1015 Lausanne, Switzerland}

\author{Supanut Thanasilp$^*$}
\thanks{supanut.thanasilp@gmail.com}

\affiliation{Institute of Physics, Ecole Polytechnique F\'{e}d\'{e}rale de Lausanne (EPFL), CH-1015 Lausanne, Switzerland}
\affiliation{Chula Intelligent and Complex Systems, Department of Physics, Faculty of Science, Chulalongkorn University, Bangkok, Thailand, 10330}

\author{Oriel Kiss}
\affiliation{European Organization for Nuclear Research (CERN), Geneva 1211, Switzerland}
\affiliation{Department of Nuclear and Particle Physics, University of Geneva, Geneva 1211, Switzerland}

\author{Oxana Shaya}
\affiliation{Institute of Physics, Ecole Polytechnique F\'{e}d\'{e}rale de Lausanne (EPFL), CH-1015 Lausanne, Switzerland}

\author{Sofia Vallecorsa}
\affiliation{European Organization for Nuclear Research (CERN), Geneva 1211, Switzerland}

\author{Michele Grossi}
\affiliation{European Organization for Nuclear Research (CERN), Geneva 1211, Switzerland}

\author{Zo\"{e} Holmes}
\affiliation{Institute of Physics, Ecole Polytechnique F\'{e}d\'{e}rale de Lausanne (EPFL), CH-1015 Lausanne, Switzerland}

\date{\today}

\begin{abstract}
Quantum generative models provide inherently efficient sampling strategies and thus show promise for achieving an advantage using quantum hardware. In this work, we investigate the barriers to the trainability of quantum generative models posed by barren plateaus and exponential loss concentration. We explore the interplay between explicit and implicit models and losses, and show that using quantum generative models with explicit losses such as the KL divergence leads to a new flavour of barren plateaus. In contrast, the implicit Maximum Mean Discrepancy loss can be viewed as the expectation value of an observable that is either low-bodied and provably trainable, or global and untrainable depending on the choice of kernel. In parallel, we find that solely low-bodied implicit losses cannot in general distinguish high-order correlations in the target data, while some quantum loss estimation strategies can. We validate our findings by comparing different loss functions for modelling data from High-Energy-Physics.
\end{abstract}

\maketitle

\section{Introduction}
The advent of quantum computing has opened up new avenues for solving classically intractable problems~\cite{harrow2017quantum,QML_Lloyd,huang2021quantum, daley2022practical}. 
Naturally, researchers gravitate towards finding the first high-value applications that could be tackled with near- and mid-term quantum devices~\cite{preskill2018quantum}. This includes not only speed-ups~\cite{harrow2009quantum,lloyd2014quantum,huang2021power,huang2021quantum}, but potentially superior memory efficiency~\cite{anschuetz2022interpretable} or concrete qualitative improvements~\cite{alcazar2020classical,QCBM_generalisation}. Quantum machine learning (QML) is one of the domains that attracts this attention~\cite{QML_Lloyd}. Quantum systems, in being inherently probabilistic, are particularly well suited to generative modelling tasks~\cite{PerdomoOrtiz2017}. Generative models aim to learn the underlying distribution of a dataset and thereby provide a means of generating new data samples that are similar to the original data. As well as providing a naturally efficient means of generating samples, quantum generative models can provably encode probability distributions that are out of reach for classical models~\cite{coyle2020born, sweke2020learnability, gao2021enhancing}, and have been proposed for various applications, such as handwritten digits~\cite{rudolph2022generation}, finance~\cite{coyle2020generativeFinance} or High\hyp Energy\hyp Physics~\cite{Kiss_PRA,Delgado_QCBM}.

Despite the excitement surrounding the potential of generative QML, there remain substantial questions concerning its scalability. This is non-trivial to assess since implementations are constrained by hardware limitations to small-scale proof-of-principle problems~\cite{Hamilton2018,leyton2019robust,coyle2020generativeFinance, Zhu2018,rudolph2022generation}. Thus analytic results are essential to guide the successful development of this field. Of particular concern is the growing body of literature on cost function concentration and barren plateaus~\cite{mcclean2018barren,arrasmith2021equivalence, larocca2021diagnosing,cerezo2020impact,arrasmith2020effect,holmes2021barren,zhao2021analyzing, thanasilp2022exponential}, where loss function values can exponentially concentrate around a fixed value and loss gradients vanish exponentially with growing problem size. This phenomenon, which exponentially increases the resources required for training, originates from different sources~\cite{mcclean2018barren, holmes2021connecting, larocca2021diagnosing, cerezo2020cost, marrero2020entanglement, patti2020entanglement, wang2020noise, wang2021can, thanasilp2021subtleties, leone2022practical, li2022concentration}, and has been studied in a number of architectures~\cite{mcclean2018barren, larocca2021diagnosing, napp2022quantifying, pesah2020absence, larocca2022group, tangpanitanon2020expressibility, sharma2020trainability,rudolph2021orqviz,thanasilp2022exponential} as well as classes of cost function~\cite{cerezo2020cost, napp2022quantifying, thanasilp2021subtleties}. However, its impact on quantum generative modelling thus far has, except for the odd notable exception~\cite{kieferova2021quantum}, and very recent developments~\cite{coopmans2023sample}, been largely overlooked.

In this work, we provide a thorough study of trainability barriers and opportunities in quantum generative modelling. Critical to our analysis is the distinction between explicit and implicit models and losses. Explicit models provide efficient access directly to the model probabilities, whereas implicit models only provide samples drawn from their distribution~\cite{mohamed2016learning}. Quantum circuit Born machines (QCBMs)~\cite{benedetti2019generative}, the focus of this work, encode a probability distribution in an $n$-qubit pure state and thus are a paradigmatic example of an implicit model. Mirroring the capabilities of the models, explicit losses are those that are formulated explicitly in terms of the model and target probabilities, whereas implicit losses compare samples from the model and the training distribution. The most commonly used explicit loss for quantum generative models is the Kullbach-Leibler (KL) divergence~\cite{KLD}. Other examples include the Jensen-Shannon divergence (JSD), the total variation distance (TVD) and the classical fidelity. The Maximum Mean Discrepancy (MMD)~\cite{Gretton2012mmd} on the other hand is one of the leading examples of an implicit loss.

\begin{table}[t]
\centering
\begin{tblr}{width = \columnwidth,
  colspec = {Q[150]Q[223]Q[308]Q[240]},
  row{even} = {c},
  row{3} = {c},
  row{5} = {c},
  cells = {WildSand,c},
  cell{1}{1} = {r=2}{c},
  cell{1}{2} = {c=2}{c},
  cell{1}{4} = {r=2}{c},
  cell{3}{2} = {r=3}{NewYorkPink},
  cell{3}{3} = {r=2}{Feijoa},
  cell{3}{4} = {Feijoa},
  cell{4}{4} = {Feijoa},
  cell{5}{3} = {NewYorkPink},
  cell{5}{4} = {NewYorkPink},
  vlines,
  hline{1,6} = {-}{0.08em},
  hline{2} = {2-3}{},
  hline{3} = {-}{},
  hline{4} = {1,4}{},
  hline{5} = {1,3-4}{},
}
\textbf{Circuit depth}& \textbf{Explicit loss (pairwise)} &                           & {\textbf{Implicit loss}\\ \textbf{(MMD)}} \\
         & Conventional strategy            & Quantum strategy                  &                         \\
Product & {\textbf{No}\\(Corollary~\ref{coro:untrain})}        & {\textbf{Yes}\\(Local Quantum Fidelity~\cite{cerezo2020cost})} & {\textbf{Yes}\\($\sigma\in\Theta(n)$, Theorem~\ref{thm:mmd-sigma})}      \\
Shallow &                          &                           & {\textbf{Yes}\\($\sigma\in\Theta(n)$, Theorem~\ref{thm:mmd-train-general})}   \\
Deep     &                          & \textbf{No}~\cite{mcclean2018barren, holmes2021connecting}                        & \textbf{No}~\cite{mcclean2018barren, holmes2021connecting}                      
\end{tblr}
\caption{\textbf{Summary of our main results.}
This table summarizes our key analytical results on the trainability of different loss functions in quantum generative modelling tasks. 
Without a strong inductive bias, pairwise explicit losses are untrainable for all circuit depths with the conventional sampling strategy. 
A quantum strategy could be utilised to efficiently estimate the local quantum fidelity, Eq.~\eqref{eq:local-quantum-fidelity}, which is trainable for a shallow-depth circuit. 
The MMD using a classical Gaussian kernel with a linearly-scaled bandwidth ($\sigma \in \Theta(n)$) is expected to be trainable for a shallow-depth circuits. Note that `Yes' here indicates the existence of regimes with trainability guarantees- it does not preclude untrainable regimes including, for example, the use of global quantum fidelity or the MMD with a fixed bandwidth.}
\label{table:summary-of-results}
\end{table}

Here we argue that the tension between using an implicit generative model (providing only samples) with an explicit loss (requiring access to probabilities) leads to a new flavour of barren plateau. This result disqualifies all before-mentioned explicit losses, and crucially the KL divergence, for efficient training of QCBMs without a strong inductive bias towards the target distribution. In contrast, the MMD as an implicit loss exhibits more nuanced behaviour and can be either trainable or untrainable. By viewing the classical MMD loss as the expectation value of a quantum observable, we show that varying the bandwidth parameter of a Gaussian kernel interpolates the MMD loss between a loss composed of predominantly global terms and one composed of low-bodied terms with either exponentially or polynomially decaying loss variances in the number of qubits. In particular, we derive a polynomial lower bound on the loss for a wide family of different classes of structured and unstructured models that depends only on the effective entanglement light cone of the circuit.  These results are summarised in Table~\ref{table:summary-of-results}.

In parallel, we provide insights into how the globality of a generative loss affects the types of correlations in a dataset that can reliably be learned. In particular, we show that a $k$-bodied loss (see Fig.~\ref{fig:weight_mmd_operator}) cannot distinguish between distributions that agree on all $k$-marginals but disagree about higher-order correlations. Hence we argue that in the context of quantum generative modelling it is advantageous to train on \textit{full-bodied} losses, that is losses containing both low and high-bodied terms, rather than the purely local losses advocated elsewhere in quantum machine learning.
The MMD is then a promising candidate choice for the training of QCBMs as its bodyness can be controlled via the bandwidth parameter.

We additionally expand the pool of viable loss functions by proposing a new local quantum fidelity-type loss which leverages what we call a quantum strategy for evaluating losses. This is to be contrasted with the conventional measurement strategy which simply uses samples from the model distribution in the computational basis. We provide an efficient training protocol using the local quantum fidelity loss with provable trainability guarantees.

Finally, we support our analysis with a comparison of the performance of the KL divergence, MMD and local quantum fidelity losses for modelling High\hyp Energy\hyp Physics (HEP) data. Specifically, we consider electron energy depositions in the electromagnetic calorimeter (ECAL) part of detectors involved in a typical proton\hyp proton collision experiment at the LHC. We learn to generate hits in the detector as black and white images of various sizes, with up to 16 qubits. We confirm that the properly-tuned MMD and the local quantum fidelity losses remain trainable using a restrictive shot budget, while training with the KL divergence becomes increasingly futile. 

\begin{figure*}
    \centering
    \includegraphics[width=0.8\linewidth]{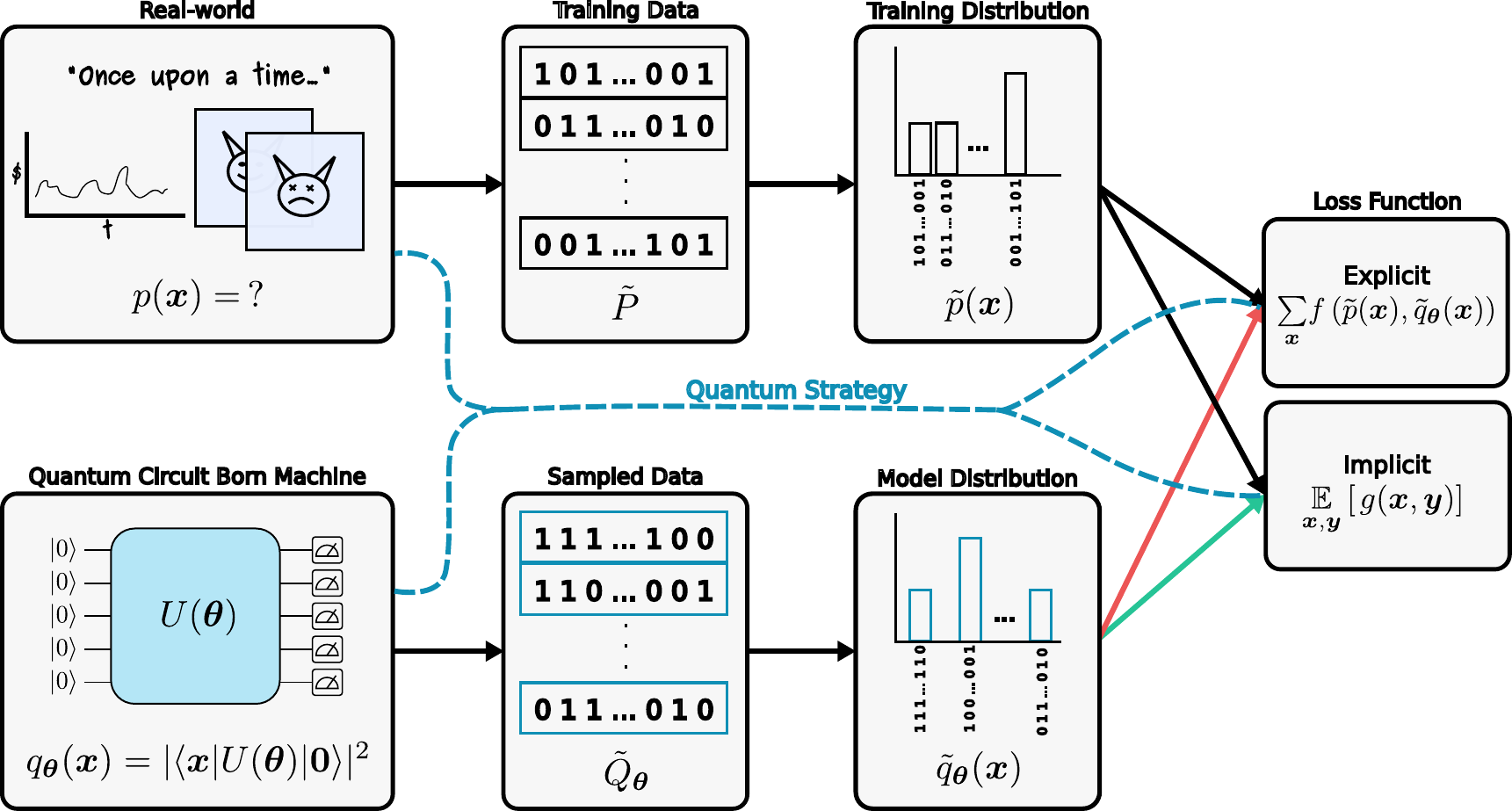}
    \caption{\textbf{The generative modelling framework using quantum circuit Born machines}. Given a training dataset $\tilde{P}$ with distribution $\pt(\xv)$ over discrete data samples $\xv$, the goal of a QCBM is to learn a distribution $\qth(\xv)$ which models the real-world distribution $p(\xv)$ from which the training data itself was sampled. This is done by tuning the parameters $\thv$ of a parametrized quantum circuit such that the QCBM minimizes a loss function that estimates the distance between the model and the training distribution. The QCBM is an implicit model and can thus in general not be paired with an explicit loss function, but it may be trainable using an implicit loss. In contrast to the conventional loss estimation strategy (solid lines) of generating a set of samples $\tilde{Q}_{\thv}$ and forming an empirical distribution $\qtth(\xv)$, strategies that are `more quantum' (dashed lines) can be employed with the aim of allowing QCBMs to be trained with loss functions which conventionally appear explicit. }
    \label{fig:generative_schematic}
\end{figure*}

\section{Results}

\subsection{Framework}

The goal of generative modelling is to use samples from a target distribution $p(\xv)$ to learn a model of $p(\xv)$ which can be used to generate new samples.
More concretely, as sketched in Fig.~\ref{fig:generative_schematic}, a generative model takes as input a training dataset $\Tilde{P}$ consisting of $M = |\Tilde{P}|$ samples drawn from the target distribution $p(\xv)$. This training set can be used to construct the empirical probability distribution $\tilde{p}(\xv)$ for all samples $\xv \in \Tilde{P}$. The training dataset, or the training distribution, is then used to train the variational parameters $\thv$ of a parameterized probability distribution $\qth(\xv)$. If successful, the output of the algorithm is a set of optimized parameters $\vec{\theta_{\rm opt}}$ such that the trained model $q_{\vec{\theta_{\rm opt}}}(\xv)$ well-approximates the unknown target distribution $p(\xv)$. The trained model  $q_{\vec{\theta_{\rm opt}}}(\xv)$ can then be used to generate new and previously unseen data. For compactness, we use the notation $p$ and $q_{\thv}$ to denote the target and model distributions respectively.

The process of training requires a \textit{loss function} $\LC(\thv)$ which estimates the distance between the model distribution $q_{\thv} (\xv)$ and the training distribution $\pt(\xv)$. 
For typical choices in loss function (detailed further in Section~\ref{sec:losses}), the loss is minimised when the model parameters $\thv$ are tuned such that the model distribution perfectly matches the empirical distribution obtained from the training data. That is, $\LC(\thv)=0$ if and only if $\qth(\xv) = \pt(\xv)$ over the entire data space $\XC$. 
Thus, by perfectly minimizing the loss, one perfectly learns the empirical distribution $\tilde{p}(\xv)$ but not the true target distribution $p(\xv)$. This scenario is commonly called \textit{overfitting}\footnote{In contrast, discriminative machine learning models can be perfectly minimized on the training data and not be overfitted.}.
To allow for \textit{generalization}~\cite{gili2022evaluating}, whereby the model can generate novel data with similar properties to the training data, one seeks to significantly reduce (but not perfectly minimize) the training loss. While generalization is the end-all goal of generative models, it is not the focus of this work. Instead, we focus on the training component of the generative framework, as failing to train also prohibits generalization.

\subsubsection{Quantum circuit models}

One prototypical quantum generative model is the \textit{quantum circuit Born machine} (QCBM)~\cite{cheng2018information,benedetti2019generative,Born_machine_Liu,coyle2020born}. Owing its name to the Born rule of quantum mechanics, a QCBM encodes a probability distribution over discrete data (here bitstrings) in an $n$-qubit pure quantum state that depends on a parameterized unitary $U(\thv)$,
\begin{align}
    \qth(\xv)  & = |\langle \xv| U(\thv)|\vec{0}\rangle|^2  \label{eq:qcbm_distribution} \, . 
\end{align}
Here $\ket{\xv}$ is a computational basis state corresponding to a bitstring $\xv$ and, without loss of generality, an initial state can be chosen as $\ket{\vec{0}} = \ket{0}^{\otimes n}$. We note that estimating $\qth(\xv)$ is equivalent to finding the expectation value of a global projector $|\xv \rangle \langle \xv |$.
More fundamentally, QCBMs enable the encoded distribution to be efficiently sampled simply by measuring in a chosen computational basis. That is, every measurement of the quantum state provides an unbiased sample from the encoded distribution (in an ideal noise-free setting). This is a very desirable property in generative models that many (classical) generative models do not share with the QCBM. Sampling techniques for classical generative models are often unreliable and may break down for certain distributions, as is the case for \textit{restricted Boltzmann machines} (RBMs)~\cite{RBM,hinton2012practical}. Born machines represent an effort to create a powerful, flexible and efficient generative model for classical discrete data, and as well as numerous `standard' digital quantum implemenations~\cite{Hamilton2018,leyton2019robust,coyle2020generativeFinance, Zhu2018,rudolph2022generation}, they have been widely implemented using tensor networks~\cite{MPS_born_machine,Cheng2019TTNBM,vieijra2022PEPSBM,TNBM}, continuous variable hardware~\cite{Born_machine_CV}, in a conditional setting~\cite{Kiss_PRA,Benedetti2021inference}, with non\hyp linearities~\cite{non-linear_QCBM}.

 \medskip 

An important, but rather subtle, distinction in generative modelling is that between \textit{explicit and implicit generative models}~\cite{mohamed2016learning,jerbi2021quantum}. Explicit generative models are ones that allow efficient access to the model probability $\qth(\xv)$ for any data sample $\xv$. Here, ``efficient'' means that the probabilities can be computed in a time and memory that are polynomial in the size of the data samples, i.e., $\OC(\poly(n))$ resources. Explicit (classical) generative models include for example auto-regressive models~\cite{Pixel_RNN}, RNNs~\cite{RNN}, tensor networks without loops (which includes tensor network Born machines)~\cite{MPS_born_machine,Cheng2019TTNBM}, and many forms of density estimators. In contrast, \textit{implicit} models lack this property and instead offer efficient access to samples from $\qth(\xv)$, which some forms of explicit models may struggle with.
A popular example of an implicit generative model are \textit{Generative Adversarial Networks} (GANs) \cite{GAN_Goodfellow} that leverage an implicit training scheme to learn powerful generators.

In the case of QCBMs implemented on quantum devices, it becomes evident that we do not have (efficient) explicit access to $q_{\thv}(\xv)$, but only to samples of the distribution in the computational basis. Consequently, QCBMs can be classified as implicit generative models.
In this work, we study the trainability issues that QCBMs suffer from as a result.

\subsubsection{Loss functions}\label{sec:losses}
Similarly to the distinction between explicit and implicit generative models, we draw a distinction between \textit{explicit and implicit loss functions}. In broad terms, explicit losses are those that can only be formulated explicitly in terms of the target and model \textit{probabilities}, whereas implicit losses are those that can be formulated in terms of an average over model and training data \textit{samples}. This distinction at the level of loss functions thus mirrors the capabilities and limitations of explicit and implicit generative models.

More concretely, we define an \textit{explicit loss} as a loss function $\LC$ that can be written solely as a function of the probabilities of the target and model distributions, without any dependence on the data itself. 
Explicit losses thus take the general form 
\begin{equation}\label{eq:explicitcost}
   \LC_{\rm expl}(\thv) := \sum_{\vec{x}_1 ... \vec{x}_{r}} f\Big(p(\xv_1), ..., p(\xv_r), \qth(\xv_{1}), ..., \qth(\xv_{r}) \Big) 
   \, ,
\end{equation}
where $f(\cdot)$ is a function that depends on the target probabilities $p(\xv_i)$ and model probabilities $ \qth(\xv_{i})$ for  data variables $\xv_i \in \XC$ with $i = 1, \, ... \, , r$. For this loss to be useful, the function $f$ should be chosen such that it measures the distance between the probability distributions $p$ and $\qth$.
Crucially, the function $f$ does not take the data values $\xv$ themselves as arguments. 

While in full generality explicit losses could compare multiple copies of the target and model probabilities (i.e., we can have $r > 1$), in practice, they usually take the simpler form
\begin{align}
    \LC(\thv) = \sum_{\xv \in \XC} f(p(\xv),\qth(\xv) ) \;. \label{eq:pairwiseexplicitloss}
\end{align}
We call such losses \textit{pairwise explicit losses} since they compare the model and target probabilities on the same data samples, or in our case, bitstrings. The pairwise explicit loss covers all so-called $f$-divergences~\cite{csiszar1967information}, including the commonly encountered KL divergence (KLD)~\cite{kullback1951KLD},
\begin{align}\label{eq:KLD-loss}
    \LC^{\rm KLD}(\thv) =  \sum_{\xv \in \XC} p(\xv) \log\left( \frac{p(\xv)}{\qth(\xv) }\right) \;,
\end{align}
the reverse-KLD,
\begin{align}
    \LC^{\rm rev-KLD}(\thv) = \sum_{\xv \in \XC} \qth(\xv) \log\left( \frac{\qth(\xv)}{p(\xv)}\right) \;,
\end{align}
the Jensen-Shannon divergence (JSD)~\cite{lin1991divergence},
\begin{align}
    \LC^{\rm JSD}(\thv) =  \sum_{\xv \in \XC}\Big[ &p(\xv) \log\left( \frac{p(\xv)}{p(\xv)+\qth(\xv) }\right) + \nonumber\\
    &\qth(\xv) \log\left( \frac{\qth(\xv)}{p(\xv)+\qth(\xv) }\right) \Big] \;,
\end{align}
and the total variation distance (TVD),
\begin{align}
\label{eq:loss_TV}
    \LC^{\rm TVD}(\thv) =  \sum_{\xv \in \XC} |p(\xv) - \qth(\xv)| \;.
\end{align}
Another example of loss function that can be written in this form is the classical fidelity,
\begin{align}\label{eq:classical_fidelity}
    \LC^{\rm CF}(\thv) = 1 -  \sum_{\xv \in \XC} \sqrt{p(\xv)\qth(\xv)} \;.
\end{align}

Notably, any non-data dependent post-processing of an explicit loss retains its explicit character. Thus, any non-data dependent function of an explicit loss (Eq.~\eqref{eq:explicitcost}) may also be considered an explicit loss. 
For example, the Rényi divergence~\cite{renyi1961measures}
\begin{align}\label{eq:renyi_divergence}
    \LC_{R,\alpha}(\thv) = \frac{1}{\alpha - 1} \log \left( \sum_{\xv} \frac{p^\alpha(\xv)}{\qth^{\alpha - 1}(\xv)}\right) \;,
\end{align}
with $0 < \alpha < \infty$ and $\alpha \neq 1$ can be classified as an explicit loss function.

\medskip

On the other hand, we define an \textit{implicit loss} as one that can be written as an average over samples drawn from the target and model distributions. That is, an implicit loss function can be expressed as
\begin{equation}\label{eq:implicitcost}
    \LC_{\rm impl}(\thv) := \mathbb{E}_{\xv_1, ..., \xv_r \sim \{p, q_{\thv}\}} \, g(\xv_1, ..., \xv_r) \, ,
\end{equation}
where $g(\xv_1, ..., \xv_r)$ is some function that depends on the data (but not probabilities), and the expectation is taken over data variables $\xv_1, ..., \xv_r$ sampled either from the data distribution $p$ or the model distribution $\qth$.

As a key example of an implicit loss, we focus on the commonly used \textit{Maximum Mean Discrepancy} (MMD)~\cite{Gretton2012mmd} loss. The MMD takes the form
\begin{align}
    \LC_{\rm MMD}(\thv)
    = \, & \Ebb_{\vec{x},\vec{y}\sim \qth }[K(\vec{x},\vec{y})] - 2  \Ebb_{\vec{x} \sim \qth,\vec{y}\sim p }[K(\vec{x},\vec{y})] \nonumber \\
    & + \Ebb_{\vec{x},\vec{y}\sim p }[K(\vec{x},\vec{y})] \, ,\label{eq:mmd-loss-implicit}
\end{align}
where $K(\xv,\yv)$ is a freely chosen kernel function. We consider the popular choice of a classical \textit{Gaussian kernel}, which is defined as
\begin{align}\label{eq:gaussian-kernel}
    K_{\sigma}(\vec{x},\vec{y})  & = e^{-\frac{\| \vec{x} - \vec{y}\|^2_2}{2\sigma}} = \prod_{i=1}^n e^{-\frac{(x_i - y_i)^2}{2\sigma}}   \;.
\end{align}
Here, $\|. \|_2$ is the 2-norm, $\sigma > 0$ is the so-called \textit{bandwidth} parameter, and $x_i, y_i$ are the values of bit $i$ in bitstring $\xv, \yv$, respectively. 
This kernel in effect provides a continuous measure of the distance between target and model bitstrings.

Interestingly, an implicit loss can always additionally be expressed in a form where it contains the target and model probabilities. 
Taking the MMD loss in Eq.~\eqref{eq:mmd-loss-implicit} as a concrete example, the loss can be re-written as
\begin{align}
    \LC_{\rm MMD}(\thv)
    =& \sum_{\vec{x},\vec{y} \in \XC} \qth(\vec{x})\qth(\vec{y}) K(\vec{x},\vec{y}) \nonumber \\
    & - 2 \sum_{\vec{x},\vec{y}  \in \XC} \qth(\vec{x})p(\vec{y}) K(\vec{x},\vec{y}) \nonumber \\
    & + \sum_{\vec{x},\vec{y} \in \XC} p(\vec{x})p(\vec{y}) K(\vec{x},\vec{y}) \;. \label{eq:mmd-loss-general}  
\end{align}
However, we stress that due to the data-dependence in the kernel $K(\xv, \yv)$, the MMD loss function can in general not be classified as an explicit loss. 

Nonetheless, this brings us to the subtle point that explicitness and implicitness are in fact not strictly mutually exclusive, i.e., one may be able to find a loss function that satisfies both Eq.~\eqref{eq:explicitcost} and Eq.~\eqref{eq:implicitcost} in specific cases.
For example, for the MMD this occurs if the kernel is chosen to be a Kronecker delta function, $K(\vec{x},\vec{y}) = \delta_{\xv\yv}$. However, such hybrid losses are very much rare edge cases, and the overwhelming majority of losses are either explicit or implicit. 
A more detailed discussion of the technical nuances of the explicit and implicit loss  distinction is provided in Supplementary Note~\ref{sec:technical_nuances}.

\subsubsection{Loss measurement strategies}

Central to the trainability of quantum generative models is the measurement strategy used to estimate the loss. Here we draw a distinction between \textit{conventional and quantum measurement strategies}. For simplicity we now restrict our discussion to implicit quantum generative models such as the QCBM.

The \textit{conventional} measurement strategy, which can be employed by both classical and quantum implicit models, starts by collecting sample data from the target and model distributions in the bases in which the data distribution is modelled, e.g., the computational basis for the case of classical data. For an implicit loss these samples can then be directly used to evaluate the loss function in Eq.~\eqref{eq:implicitcost}. For an explicit loss, this is not possible, and instead one needs to use the collected samples to recreate an empirical estimate $\qtth$ of the true model distributions $\qth$. 

More formally, as sketched in Fig.~\ref{fig:generative_schematic}, consider the set of bitstrings $\Tilde{Q}_{\thv}$ obtained after collecting $N$ samples from the model and the empirical model distribution~$\qtth(\xv)$ constructed from these samples. Then, the statistical estimate of the pairwise explicit loss function $\Tilde{\LC}(\thv)$ in Eq.~\eqref{eq:pairwiseexplicitloss} can be expressed as
\begin{align}
    \Tilde{\LC}(\thv) = \sum_{\xv \in \XC}
    f(\pt(\xv),\qtth(\xv)) \; .
\end{align} 
Crucially, since this proxy is all we have access to, the properties of this statistical estimate are what determine the trainability of an explicit loss function when evaluated via the conventional strategy. We note that zero-estimates of the model probabilities with $\qtth(\xv)=0$ are often `clipped' with a small regularization parameter $\epsilon\ll 1$ in order to avoid numerical instabilities in the loss computation.

This conventional strategy is somewhat classical in the sense that after sampling is performed on the quantum model, the post processing required to compute the cost is entirely classical. However, `more quantum' measurement strategies are also possible. In this case, a quantum circuit is used to compute functions of the probabilities, potentially more directly and/or collectively. 

For example, rather than computing the classical fidelity in Eq.~\eqref{eq:classical_fidelity} by explicitly computing the probabilities $\qth(\xv)$, one could encode the target distribution in a quantum state
$\ket{\phi} = \sum_{\xv} \sqrt{\pt (\xv)} \ket{\xv}$ and compute the quantum fidelity
\begin{align}\label{eq:quantumfid}
    \LC_{QF}(\thv) &:= 1 - |\bra{\phi} \psi(\thv)\rangle|^2 \\ &\sim 1 - \left| \sum_x  \sqrt{\tilde{p}(\xv)\qth(\xv)} \right|^2\;.
\end{align}
Up to arbitrary global phase factors (and a mod-square) this is equivalent to the classical fidelity. However, it can be computed via coherent strategies - namely a Loschmidt echo circuit~\cite{gibbs2021long, gibbs2022dynamical, caro2022outofdistribution, Volkoff2021Universal} or a SWAP test~\cite{barenco1997stabilization, garcia2013swap}. 
We note that in this case quantum generative modelling is equivalent to a state learning problem.
While this expression seemingly requires the entire training dataset to be loaded into a wavefunction, we present an approach in Sec.~\ref{sec:quantumfid} to estimate this cost using pairwise Hadamard tests.

More generally, it remains an open question if/when commonly encountered losses for generative modelling can be computed using quantum strategies and whether or not this brings any advantages\footnote{Beyond QCBMs, \textit{Quantum Generative Adversarial Networks} (QGANs)~\cite{QGAN_Loyd} trained with classical discriminators~\cite{QGAN_Zoufal,QGAN_classical,style_qgan} in effect use a conventional measurement strategy, whereas their variant with quantum discriminators~\cite{entangling_QGAN} use a quantum strategy.}. Nonetheless, we suggest that this is an interesting avenue for future research.

\subsubsection{Exponential concentration and barren plateaus}

For a quantum generative model to be trained successfully, the loss landscape must be sufficiently featured to enable a solution to be found. 
There is a growing awareness of the importance of barren plateaus, and its sister phenomenon \textit{exponential concentration}, for quantum machine learning~\cite{mcclean2018barren,arrasmith2021equivalence, larocca2021diagnosing,cerezo2020impact,arrasmith2020effect,holmes2021barren,zhao2021analyzing, thanasilp2022exponential}. 
A barren plateau (BP) is a loss landscape where the magnitudes
of gradients vanish exponentially with growing problem size~\cite{mcclean2018barren,cerezo2020cost,larocca2021diagnosing,marrero2020entanglement,patti2020entanglement,holmes2021connecting,holmes2021barren,zhao2021analyzing,wang2020noise,thanasilp2021subtleties,cerezo2020impact,arrasmith2020effect,wang2021can}. Closely related and equally problematic is exponential concentration where the loss is shown to concentrate with high probability to a single fixed value~\cite{arrasmith2021equivalence}. This, with high probability, results in poorly trained models using a polynomial number of measurement shots (regardless of the optimization method employed)~\cite{arrasmith2020effect}. More precisely, exponential concentration can be formally defined as follows. 

\begin{definition} [Exponential concentration]\label{def:exp-concentration}
Consider a quantity $X(\vec{\alpha})$ that depends on a set of variables $\vec{\alpha}$ and can be measured from a quantum computer as the expectation of some observable. $X(\vec{\alpha})$ is said to be deterministically exponentially concentrated in the number of qubits $n$ towards a certain fixed value $\mu$ if
\begin{align}
    |X(\vec{\alpha}) - \mu |\leq \beta \in O(1/b^n) \;,
\end{align}
for some $b>1$ and all $\vec{\alpha}$. Analogously, $X(\vec{\alpha})$ is probabilistically exponentially concentrated if
\begin{align} \label{eq:def-prob-concentration}
    {\rm Pr}_{\vec{\alpha}}[|X(\vec{\alpha}) - \mu| \geq \delta] \leq \frac{\beta}{\delta^2} \;\; , \; \beta \in O(1/b^n) \;,
\end{align}
for $b> 1$. That is, the probability that $X(\vec{\alpha})$ deviates from $\mu$ by a small amount $\delta$ is exponentially small for all $\vec{\alpha}$.
\end{definition}

A number of causes of exponential concentration and barren plateaus have been identified including using parameterized circuits that are too expressive~\cite{mcclean2018barren, holmes2021connecting, larocca2021diagnosing, tangpanitanon2020expressibility} or too entangling~\cite{marrero2020entanglement, sharma2020trainability, patti2020entanglement}. Hardware noise~\cite{wang2020noise, franca2020limitations, wang2021can} has also been shown to exponentially flatten the loss landscapes, which strongly hinders the potential of current noisy quantum devices. The exponential concentration can also happen due to randomness in the training dataset~\cite{thanasilp2021subtleties, leone2022practical,li2022concentration}.
In addition, there are studies on the exponential concentration in different QML models including dissipative parametrized quantum circuits~\cite{sharma2020trainability} as well as quantum kernel-based models~\cite{thanasilp2022exponential}.

Finally, the choice of loss function can also induce these phenomena. 
Thus far, loss concentration has predominantly been studied in the context of losses of the form
\begin{equation}\label{eq:VQEcost}
    C(\thv) = \Tr[ O U(\thv) \rho U(\thv)^\dagger] \;, 
\end{equation}
where $\rho$ is an $n$-qubit input state and $O$ is a Hermitian operator.
In particular, it has been shown that `global'~\cite{cerezo2020cost} losses, i.e., those where $O$ acts non-trivially on $\OC(n)$ qubits, induce loss concentration even for very shallow random circuits. Conversely, local losses where $O$ acts non-trivially on at most $log(n)$ \textit{adjacent} qubits (and more generally low-body losses where the adjacency constraint is lifted - see panel a) of Fig.~\ref{fig:weight_mmd_operator}) have been shown to enjoy trainability guarantees~\cite{cerezo2020cost,napp2022quantifying} with shallow unstructured circuits.
Furthermore, we note that how barren plateaus affect parametrized quantum circuits with a non-linear loss in the discriminative QML setting has been studied in Ref.~\cite{thanasilp2021subtleties}.

Here we study exponential concentration for generative modelling tasks on classical discrete data using implicit quantum generative models, and use our insights to establish guidelines of how best to train such models. 
Crucially, in this generative modeling context, the fixed points of the model probabilities tend to be exponentially small and the loss function contains the sum over exponentially many terms. These two together render previously used tools not directly applicable for studying the trainability of quantum generative models. 

\subsubsection{Large gradient variances are not enough}\label{sec:exactvariance} 

The presence or absence of barren plateaus is usually diagnosed by computing the variance of the loss over a given parameter distribution. Crucially this is usually computed for the \textit{exact loss}, i.e., not including the effect of shot noise. Here we argue that this approach can fail in the context of quantum generative modelling. In particular, if one computes the variance of the KLD loss then the loss variance can be non-exponentially vanishing even for very deep circuits. However, as we will argue in this section, the KLD loss is untrainable for both deep and shallow unstructured circuits if the model is implicit (i.e., only gives efficient access to samples and not to the probabilities). 

We now show that the variance of the exact KL divergence depends directly on the support of the target distribution and hence can be polynomially large. This is quantified by the following proposition which we prove in Supplementary Note~\ref{ap:KLD_exact}. 

\begin{proposition}\label{prop:exact-kl-var-main}
Consider the KLD loss as defined in Eq.~\eqref{eq:KLD-loss}. Assume access to the exact target distribution $p(\xv)$ and the model distribution $\qth(\xv)$. Then, we have
\begin{itemize}
    \item For deep (Haar random) parametrized circuit $U(\thv)$, the variance of the loss scales asymptotically ($2^n \gg 1$) as
    \begin{align}\label{eq:exact_kld_var}
        \Var_{\thv}[\LC^{\rm KLD}(\thv)] = \frac{\pi^2}{6}\sum_{\xv} p^2(\xv) \;.
    \end{align}
    \item  For a random tensor product circuit $U(\thv) = \bigotimes_{i=1}^n U_i(\theta_i)$ where $U_i(\theta_i)$ is a random single-qubit unitary, the variance of the loss scales as
    \begin{align}
        \Var_{\thv}[\LC^{\rm KLD}(\thv)] = n-\frac{\pi^2}{6}\sum_{\xv, \xv'} p(\xv) p(\xv')\norm{\xv-\xv'}_H \;, 
    \end{align}
    where $\| \cdot \|_{\rm H}$ is a Hamming distance.
\end{itemize}
\end{proposition}

\begin{figure*}
    \centering
    \includegraphics[width=2\columnwidth]{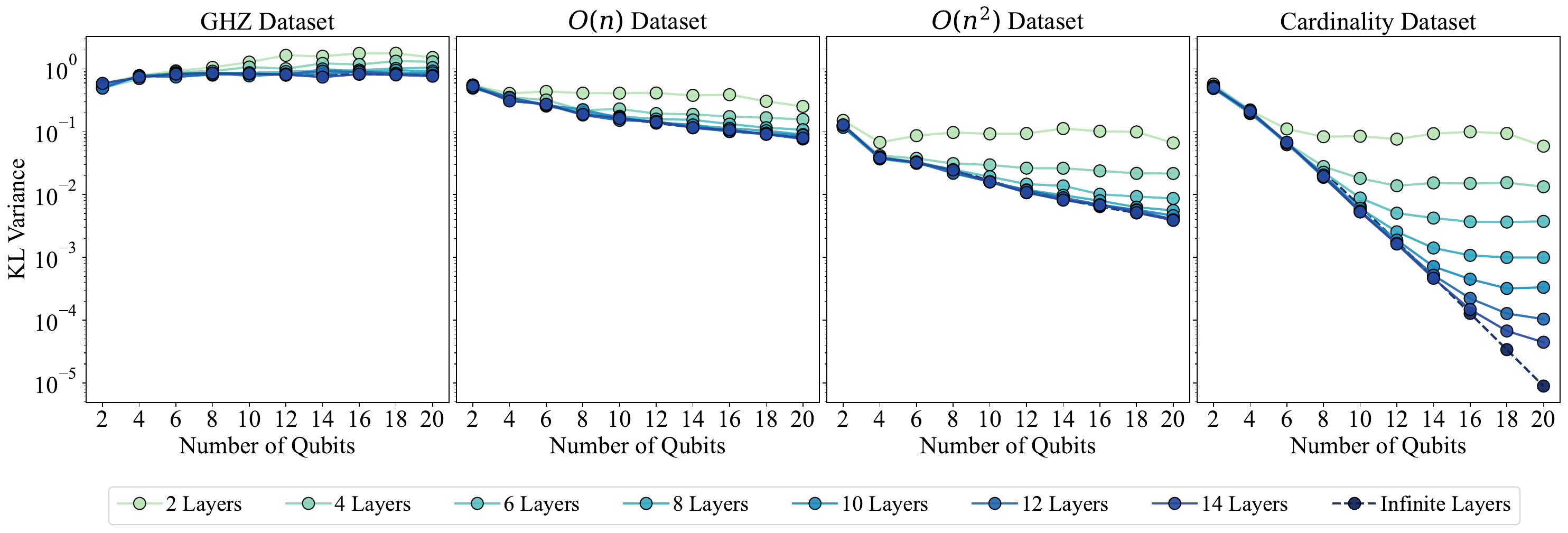}
    \caption{\textbf{Study of loss concentration with the \textit{exact} KLD loss function.} Numerical evidence that the \textit{exact} KLD loss can have a non-vanishing loss variance even when model probabilities exhibit exponential concentration. We study the loss concentration in randomly initialized line-topology circuits for various datasets, and increasing the number of qubits $n$ and circuit depth. We emphasize that the model probabilities $\qth(\xv)$ where evaluated exactly and in the absence of shot noise. We also show the infinite layer results beyond 6 qubits that are generated using Eq.~\eqref{eq:exact_kld_var}. The GHZ dataset consists of the all-0 and all-1 bitstrings ($\OC(1)$ support), the $\OC(n)$ and $\OC(n^2)$ datasets consist of $n$ and $n^2$ random bitstrings, respectively, and the cardinality dataset contains all bitstrings with $\frac{n}{2}$ cardinality ($\OC(2^n)$ support). There appears to be a strong data-dependence for the magnitude of the loss variance, which could lead to exponential concentration.}
    \label{fig:var-kld-exact}
\end{figure*}

It follows that the variance of the exact KLD can be non-exponentially vanishing even for a deep circuit, where one would generally expect a barren plateau~\cite{mcclean2018barren}, if the purity $\sum_{\xv} p^2(\xv)$ of the target distribution is non-exponentially vanishing. For this condition to be met, all we need is that at least one probability $p(\vec{x})$ of the target distribution is non-exponentially vanishing. This is captured by the following corollary. 

\begin{corollary}
    Under the same assumption as in Proposition~\ref{prop:exact-kl-var-main}, for the target distribution, at least one probability is at least polynomially large. Then, the variance of the KLD loss function does not vanish exponentially with the system's size. That is, $\exists \xv: \; p(\vec{x}) \in \Omega\left(\frac{1}{\poly(n)}\right)$, we have
    \begin{align}
        \Var_{\thv}[\LC^{\rm KLD}(\thv)] \notin \OC\left( \frac{1}{b^n} \right) \;,
    \end{align}
    for some constant $b > 1$. 
\end{corollary}

We note that any distribution with support on at most $D$ bit strings necessarily has at least one probability that is $1/D$ large. Thus the support of a distribution lower bounds the variance of the exact KLD. 
This is reflected in Fig~\ref{fig:var-kld-exact}. For the GHZ dataset, which has $\OC(1)$ support, we observe a strong evidence for non-vanishing variance for all circuit depths. For linear and quadratic support datasets, the variances moderately decrease as the number of qubits increases for deep circuits.

Thus we see that for certain target probability distributions, the KLD does not exhibit a barren plateau for \textit{explicit} models. This suggests that quantum-inspired models that can provide direct access to probabilities (e.g. tensor network Born machines~\cite{MPS_born_machine,Cheng2019TTNBM}) might be trainable with the KLD. However, current generative models running on quantum devices only provide access to samples from a distribution via measurements and, as we will argue in the next section, the large variance of the exact loss, in contrast to standard VQE-style losses, does not translate to substantial loss gradients in practise.

\subsection{Trainability analysis on loss functions}
In this section, we analyse the trainability of different loss functions used in quantum generative modelling.  

\subsubsection{Pairwise explicit losses}\label{sec:explictlosstrain}

\begin{figure}[t]
\includegraphics[width=0.99\columnwidth]{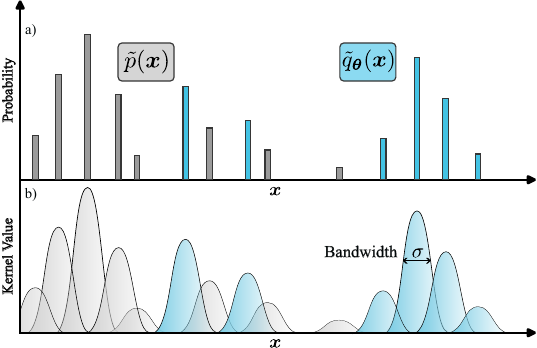}
\caption{\textbf{The problem with pairwise explicit losses.} In the space with $2^n$ unique  $n$-bit bitstrings, samples $\xv$ generated from an uninformed model with high probability do not coincide with any of the training bitstrings. In other words, the empirical model distribution $\qtth(\xv)$ and the training distribution $\pt(\xv)$ do not both have non-zero probabilities for any bitstring $\xv$. On the other hand, an implicit loss function such as the MMD provides a continuous measure of distance between the distributions by use of a Gaussian kernel with bandwidth $\sigma$.} 
\label{fig:nooverlap}
\end{figure}

Part of the power of \textit{quantum} generative models is that they can be used to continuously parameterise and express distributions over discrete data with exponential support. That is, an $n$-qubit model can be used to model distributions over $2^n$ different $n$-bitstrings. However, while the true target distribution may have exponential support, the amount of training data $\Tilde{P}$ is in practise restricted. More precisely, for large $n$ (e.g., $n>50$), it is reasonable to assume that the number of bitstrings in the training dataset scales at most polynomially in $n$. 
Similarly, the number of bitstrings samples obtained from the model must also scale at most polynomially in $n$. That is, $|\Tilde{P}|, |\Tilde{Q}_{\thv}| \in \OC(\poly(n))$. 

This discrepancy between the polynomial support of the training data and the exponential support of the model, can make it highly challenging to train implicit models using pairwise explicit loss functions. In loose terms, the problem is that the only bitstrings that contribute to the evaluation of a statistical estimate of an explicit cost are those corresponding to bitstrings $\tilde{P}$ in the training data. To estimate the loss one thus needs good estimates of the model distributions over the support of $\tilde{P}$. However, for an implicit model these estimates are obtained via sampling and the set $\tilde{P}$ contains an exponentially small proportion of the total number of bitstrings. As such, \textit{for generic models} (i.e., those using no information about the particular dataset at hand), the probability of measuring any bitstring in the training set will also be exponentially small (as sketched in Fig.~\ref{fig:nooverlap}), leading to a poor statistical estimate of the loss. This observation was in fact one of the original motivations for moving away from the KLD and introducing the MMD loss in a quantum context in Ref.~\cite{liu2018differentiable} or follow-up works such as Ref.~\cite{coyle2020born}. 

\subsubsection{Concentration of Pairwise Explicit Losses}

To make this line of argument more concrete, the first family of models we will consider are those where the individual model probabilities $\qth(\xv)$ are exponentially concentrated over different values of $\thv$. This is the case for a large family of unstructured parameterised quantum circuits.
Since estimating $\qth(\xv)$ is equivalent to computing the expectation value of the global projector $|\xv \rangle \langle \xv |$, the concentration of $\qth(\xv)$ can be viewed as resulting from the global-measurement induced barren plateau phenomenon~\cite{cerezo2020cost}. In this case, concentration is observed even for an ansatz that is comprised of only a single layer of single-qubit rotations. However, alternative phenomena (e.g. noise~\cite{wang2020noise} or expressibility~\cite{holmes2021connecting}) can also lead to the exponential concentration of $\qth(\xv)$. More formally, the following proposition holds. 

\begin{proposition}[Concentration of model]\label{prop:circuit-example}
    For all possible bitstrings $\xv \in \XC$, the underlying probability $\qth(\xv)$ of the quantum model exponentially concentrates towards some exponentially small fixed point $\mu \in O(1/b^n)$ for $b>1$ if the quantum generative model is constructed with:
    \begin{itemize}
        \item A single layer of random single qubit gates $U(\thv) = \bigotimes_{i = 1}^n U_i(\thv_i)$. Or, more precisely, if $\{U_i(\thv_i)\}_{\thv_i}$ forms a local 2-design on qubit $i$~\cite{cerezo2020cost}.
        \item $L$ layers of random $k$-local 2-designs, i.e., $U(\thv) = \prod_{l=1}^L\bigotimes_{j = 1}^{n/k} U_{l,j}(\thv_{l,j})$ with each $U_{l,j}(\thv_{l,j})$ acting on $k$ qubits and $\{ U_{l,j}(\thv_{l,j})\}_{\thv_{l,j}}$ forming a $k$-local 2-design over $\thv_{l,j}$~\cite{cerezo2020cost}.
        \item A parameterised quantum circuit $U(\thv)$ such that its ensemble over $\thv$ i.e., $\{ U(\thv)\}_{\thv}$ forms an approximate 2-design on $n$ qubits~\cite{mcclean2018barren, holmes2021connecting}. This holds even for the problem-inspired circuits~\cite{larocca2021diagnosing}.
        \item A linear-depth quantum circuit subject to local Pauli noise between each layer~\cite{wang2020noise}.
    \end{itemize}
\end{proposition}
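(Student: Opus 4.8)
The plan is to exploit the observation, already made in the excerpt, that the model probability is nothing but the expectation value of a global rank-one projector,
\begin{equation}
    \qth(\xv) = |\langle \xv | U(\thv) |\vec{0}\rangle|^2 = \Tr[ P_{\xv}\, U(\thv)\dya{\vec{0}} U(\thv)\ad ] \,,
\end{equation}
with $P_{\xv} = \dya{\xv}$. This recasts $\qth(\xv)$ in exactly the form of Eq.~\eqref{eq:VQEcost} with a maximally \emph{global} observable, so each of the four bullets reduces to establishing concentration of this one expectation value around its ensemble mean. In all four cases the fixed point coincides: for any unitary ensemble whose first moment is the maximally mixed state, $\mu = \Tr[P_{\xv}\, \id/2^n] = 1/2^n$, which is indeed exponentially small. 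It then remains only to bound either the variance (for the probabilistic bullets) or the worst-case deviation (for the deterministic, noisy bullet).

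For the first bullet I would compute the moments directly. Because $\ket{\xv}$, $\ket{\vec{0}}$ and $U(\thv)=\bigotimes_i U_i(\thv_i)$ all factorize across qubits, the probability factorizes as $\qth(\xv) = \prod_{i=1}^n q_i$ with $q_i = |\langle x_i|U_i(\thv_i)|0\rangle|^2$ independent across $i$. Using the single-qubit $2$-design moments $\Ebb[q_i]=1/2$ and $\Ebb[q_i^2]=1/3$, I get $\Ebb[\qth(\xv)] = (1/2)^n = \mu$ and $\Ebb[\qth(\xv)^2]=(1/3)^n$, hence $\Var[\qth(\xv)] = (1/3)^n - (1/4)^n \in O(1/3^n)$. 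Chebyshev's inequality then yields Eq.~\eqref{eq:def-prob-concentration} with $b=3$.

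For the second and third bullets I would lean on the $2$-design machinery rather than recompute from scratch. For a full (approximate) $n$-qubit $2$-design, the standard second-moment formula gives $\Ebb[\qth(\xv)^2] = 2/(2^n(2^n+1))$, so that $\Var[\qth(\xv)] = (2^n-1)/(2^{2n}(2^n+1)) \in O(1/4^n)$, with the approximate-design corrections controlled in the usual way~\cite{mcclean2018barren,holmes2021connecting}, and Chebyshev again gives concentration. The layered $k$-local $2$-design architecture is precisely the setting of the cost-function-dependent barren plateau of Ref.~\cite{cerezo2020cost} applied to the \emph{global} observable $P_{\xv}$, so the exponentially small variance bound transfers directly. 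The fourth bullet instead invokes noise-induced concentration~\cite{wang2020noise}: local Pauli noise contracts the evolved state towards $\id/2^n$ exponentially in depth, so at linear depth $|\qth(\xv) - 1/2^n|$ is exponentially small simultaneously for \emph{all} $\thv$, which is the deterministic branch of Definition~\ref{def:exp-concentration}.

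The main obstacle I anticipate is not any single calculation but the bookkeeping required to reconcile the differently-normalised statements in the barren-plateau literature with the present claim. Those results are usually phrased as variances of gradients or of costs built from traceless observables, whereas here the relevant object is the probability value itself concentrating onto the \emph{exponentially small} point $1/2^n$; one must verify that the resulting $\beta$ is genuinely $O(1/b^n)$ for some $b>1$ and that the bound is uniform in both $\xv$ and $\thv$, as the definition demands. For the approximate-design case the only genuinely technical step is bounding the deviation of the second moment from its exact-design value, which I would handle with the standard approximate-$2$-design error estimates.
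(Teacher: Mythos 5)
Your proposal is correct and takes essentially the same route as the paper: the paper justifies Proposition~\ref{prop:circuit-example} precisely by noting that $\qth(\xv)$ is the expectation value of the global projector $|\xv\rangle\langle\xv|$ in the form of Eq.~\eqref{eq:VQEcost}, and then invoking the global-cost, expressibility, and noise-induced concentration results of Refs.~\cite{cerezo2020cost, mcclean2018barren, holmes2021connecting, larocca2021diagnosing, wang2020noise} with fixed point $\mu = 1/2^n$. Your explicit single-qubit and $n$-qubit $2$-design moment computations (giving $\Var[\qth(\xv)] \in O(1/3^n)$ and $O(1/4^n)$ respectively, followed by Chebyshev) are a self-contained filling-in of details the paper leaves to the cited literature.
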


Proposition~\ref{prop:circuit-example} provides examples of cases where the model probabilities exponentially concentrate over \textit{all} bitstrings in $\XC$. However, we find that in fact trainability difficulties arise even if model probabilities are only exponentially concentrated over the training dataset (but perhaps not on points outside the dataset). That is, all that is required for untrainability is that the probability of measuring a sample that is also in the dataset is practically zero. 
This is likely to be the case even for highly structured quantum circuits if the generative model is built without a strong inductive bias. 
We formalise this intuition in Supplementary Note~\ref{app:pairwise-no-overlap}.

We now argue that the exponential concentration of probabilities $\qth(\xv)$ over the dataset causes $\Tilde{\LC}(\thv)$ to also exponentially concentrate. To understand why, let us look at the probability of measuring one specific bitstring (e.g., $\xv_{0}$ - the all-zero bitstring) and assume that $\qth(\xv_0)$ is exponentially concentrated towards some exponentially small value $\mu$. Then, for any given parameter constellation, it is highly likely that $\qth(\xv_0)$ is exponentially close to $\mu$. To estimate $\qth(\xv_0)$ on a quantum computer we sample $N$ bitstrings from the quantum model and record the observations. The chance that none of the sampled bitstrings are the specific bitstring that we are interested in is $(1 - \qth(\xv_0))^N \approx 1 - N \mu $. However, the number of circuits $N$ that can be efficiently run is necessarily limited - here we will assume $N \in \text{poly}(n)$. Thus we have that the probability of not measuring the bitstring we are interested in is exponentially close to 1. That is, the statistical estimate of $\qtth(\xv_0)$ is almost always zero.
We can then generalize this intuition for a single bitstring to the estimation of each of the (polynomially many) target bitstrings and therefore the whole loss function. The following theorem formalizes this argument.

\begin{theorem}[Concentration of pairwise explicit loss for concentrated models]\label{thm:explicit-loss}
Consider the loss function of the form in Eq.~\eqref{eq:pairwiseexplicitloss}. 
Assume that for all bitstrings in the training dataset, $\xv \in \Tilde{P}$, the quantum generative model $\qth(\xv)$ exponentially concentrates towards some exponentially small value (as defined in Definition~\ref{def:exp-concentration}). Suppose that $N \in \OC(\poly(n))$ samples are collected from the quantum model corresponding to the set of sampled bitstrings $\Tilde{Q}_{\thv}$, and that the training dataset $\tilde{P}$ contains $M \in \OC(\poly(n))$ samples.
We define the fixed point of the loss as
\begin{align}\label{eq:fixedpoint}
    \LC_0(\Tilde{P}, \Tilde{Q}_{\thv}) = \sum_{\xv \in \PC} f(\pt(\xv), 0) + \sum_{\xv \in \QC_{\thv}} f(0, \qtth(\xv)) \;,
\end{align}
with $\PC$ (and $\QC_{\thv}$) being a set of \textit{unique} bitstrings in $\Tilde{P}$ (and $\Tilde{Q}_{\thv}$).
Then, the probability that the estimated value $\Tilde{\LC}(\thv)$ is equal to $\LC_0(\Tilde{P}, \Tilde{Q}_{\thv})$ is exponentially close to 1, i.e.,
\begin{align}
    {\rm Pr}_{\Tilde{Q}_{\thv},\thv}[\Tilde{\LC}(\thv) = \LC_0(\Tilde{P}, \Tilde{Q}_{\thv})] \geq 1 - \delta \;,
\end{align}
with $\delta \in \OC\left(\frac{\poly(n)}{c^n}\right)$ for some $c > 1$. 
\end{theorem}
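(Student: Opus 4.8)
The plan is to reduce the statement about the estimated loss to a purely combinatorial question about whether the model samples and the training bitstrings overlap. Since any meaningful distance measure satisfies $f(0,0)=0$ (otherwise the sum over all $2^n$ bitstrings would not even be well-defined), the only terms contributing to $\Tilde{\LC}(\thv) = \sum_{\xv} f(\pt(\xv),\qtth(\xv))$ come from bitstrings in $\PC$ (where $\pt\neq 0$) or in $\QC_{\thv}$ (where $\qtth\neq 0$). Splitting the sum accordingly, whenever $\PC \cap \QC_{\thv} = \emptyset$ the cross terms with both arguments nonzero never appear: every $\xv\in\PC$ contributes $f(\pt(\xv),0)$, every $\xv\in\QC_{\thv}$ contributes $f(0,\qtth(\xv))$, and all remaining bitstrings contribute $f(0,0)=0$. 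Thus $\Tilde{\LC}(\thv)$ collapses \emph{exactly} to the fixed point $\LC_0(\Tilde{P},\Tilde{Q}_{\thv})$ of Eq.~\eqref{eq:fixedpoint}. Since no-overlap implies $\Tilde{\LC}(\thv)=\LC_0$, it suffices to lower bound the no-overlap probability,
\[
\Pro_{\Tilde{Q}_{\thv},\thv}[\PC\cap\QC_{\thv} = \emptyset] \geq 1 - \delta \;.
\]

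First I would condition on a fixed $\thv$ and control the sampling randomness alone. The probability that a single draw from $\qth$ lands on the training support is $\sum_{\xv\in\PC}\qth(\xv)$, so a union bound over the $N$ model samples gives
\[
\Pro_{\Tilde{Q}_{\thv}}[\PC\cap\QC_{\thv}\neq\emptyset \mid \thv] \leq N\sum_{\xv\in\PC}\qth(\xv) \leq NM\max_{\xv\in\PC}\qth(\xv)\;,
\]
using $|\PC|\leq M$. The task then reduces to bounding $\max_{\xv\in\PC}\qth(\xv)$ through the concentration hypothesis.

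For deterministic concentration this is immediate: $\qth(\xv)\leq\mu+\beta\in\OC(1/b^n)$ for every $\thv$, so the conditional overlap probability is at most $NM\cdot\OC(1/b^n)\in\OC(\poly(n)/b^n)$ uniformly in $\thv$, and the claim follows with $c=b$. The probabilistic case is the genuinely delicate point, since one must juggle the randomness over $\thv$ together with the sampling randomness. Here I would introduce a ``good'' event $G$ on which $\qth(\xv)\leq\mu+\delta'$ simultaneously for all $\xv\in\PC$, and apply the bound of Definition~\ref{def:exp-concentration} with a union bound over the polynomially many training bitstrings to get $\Pro_{\thv}[\neg G]\leq M\beta/\delta'^2$. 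On $G$ the previous step yields a conditional overlap probability at most $NM(\mu+\delta')$, so combining via $\Pro[\text{overlap}]\leq\Pro[\neg G]+\Pro[\text{overlap}\mid G]$ gives
\[
\delta \leq \frac{M\beta}{\delta'^2} + NM(\mu+\delta') \;.
\]

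The final step is to tune the free threshold $\delta'$ so that both contributions remain exponentially small. Taking $\delta' = 1/c'^n$ with $1<c'<\sqrt{b}$ ensures $\beta/\delta'^2\in\OC((c'^2/b)^n)$ still decays (since $c'^2<b$) while $\mu+\delta'\in\OC(1/c'^n)$ decays as well; both are then of the form $\poly(n)/c^n$ with $c = \min\{b/c'^2,\,c'\} > 1$. I expect the only careful bookkeeping to be checking that the polynomial prefactors $N,M\in\OC(\poly(n))$ together with the $\delta'^{-2}$ blow-up never overtake the exponential decay — which is exactly what the slack $c'<\sqrt{b}$ buys — and confirming that the collapse of the first paragraph is an exact equality, holding precisely because $f(0,0)=0$ for the standard divergences, so that the fixed point $\LC_0$ is attained on the nose rather than merely approximated.
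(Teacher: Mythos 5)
Your proof is correct, and it reaches the same conclusion by a genuinely different technical route. Both arguments share the inevitable skeleton (reduce the claim to showing the no-overlap event $\PC\cap\QC_{\thv}=\emptyset$ occurs with probability $1-\delta$), but the paper executes this by marginalizing over the aggregate quantity $s=\sum_{\xv\in\PC}\qth(\xv)$, writing the no-overlap probability as $\int_0^1(1-s)^N\,{\rm Pr}_{\thv}[s]\,ds$, localizing $s$ with Chebyshev's inequality at scale $\sqrt{\sigma_s}$, and bounding the mean and variance of the sum — the variance step requiring Cauchy--Schwarz and AM--GM to control the covariances ${\rm Cov}_{\thv}[\qth(\xv),\qth(\xv')]$ between different training bitstrings. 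You instead condition on $\thv$, apply a union bound over the $N$ model samples, and replace the Chebyshev-on-the-sum step by a per-bitstring good event $G$ (union bound over the $M\in\OC(\poly(n))$ training strings) with a tunable threshold $\delta'=c'^{-n}$, $1<c'<\sqrt{b}$. Your route buys three things: it never needs covariance bookkeeping (union bounds are agnostic to dependence between the $\qth(\xv)$), it treats the deterministic and probabilistic cases of Definition~\ref{def:exp-concentration} separately and works directly from the tail-bound form of the probabilistic definition (the paper instead works with ${\rm Var}_{\thv}[\qth(\xv)]$, implicitly identifying it with $\beta$), and your choice of $c'<\sqrt{b}$ makes the exponent bookkeeping fully explicit, playing the same role as the paper's choice $k=1/\sqrt{\sigma_s}$ in Chebyshev. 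You also make explicit the assumption $f(0,0)=0$ needed for the exact collapse $\Tilde{\LC}(\thv)=\LC_0(\Tilde{P},\Tilde{Q}_{\thv})$, which the paper leaves implicit; this holds for all the losses considered (KLD, reverse KLD, JSD, TVD, classical fidelity) and is a worthwhile clarification rather than a discrepancy.
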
 

As a direct consequence of Theorem~\ref{thm:explicit-loss}, the following corollary gives the concentration points of some specific explicit loss functions mentioned in this work.
\begin{corollary}[Concentration points of common explicit loss functions]\label{corol:fixedpoints}
Under the same conditions as in Theorem~\ref{thm:explicit-loss}, the following loss functions concentrate at
\begin{itemize}
    \item KL-divergence: 
    \begin{align}\label{eq:KLfixedvalue}
        \LC_0^{{\rm KLD}}(\Tilde{P},\Tilde{Q}_{\thv}) = \sum_{\xv \in \PC} \pt(\xv) \log \left( \frac{\pt(\xv)}{\epsilon}\right) \;. 
    \end{align}
    Here $\epsilon \ll 1$ is a clipping value, which is common practice to avoid the singularity of the logarithm at $\qth(\xv) = 0$.
    \item Classical fidelity:
    \begin{align}
        \LC_0^{\rm CF}(\Tilde{P},\Tilde{Q}_{\thv}) = 1 \;.
    \end{align}
    \item Reverse KL-divergence:
    \begin{align}
        \LC_0^{\rm rev-KLD}(\Tilde{P},\Tilde{Q}_{\thv}) = \sum_{\xv \in \QC_{\thv}}\qtth(\xv) \log\left( \frac{\qtth(\xv)}{\epsilon}\right) \;.
    \end{align}
    \item Total variation distance:
    \begin{align}
        \LC_0^{\rm TVD}(\Tilde{P},\Tilde{Q}_{\thv}) =  2 \;.
    \end{align}
\end{itemize}
\end{corollary}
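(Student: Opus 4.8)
The plan is to read off each concentration point by directly substituting the relevant summand $f$ into the fixed-point expression of Theorem~\ref{thm:explicit-loss}. That theorem guarantees that with probability at least $1-\delta$ the estimated loss equals $\LC_0(\Tilde{P},\Tilde{Q}_{\thv}) = \sum_{\xv\in\PC} f(\pt(\xv),0) + \sum_{\xv\in\QC_{\thv}} f(0,\qtth(\xv))$, where the first sum runs over the unique training bitstrings (on which the empirical model estimate $\qtth$ vanishes) and the second over the unique sampled bitstrings (on which the empirical training probability $\pt$ vanishes). Hence for each loss it suffices to evaluate the two one-sided limits $f(p,0)$ and $f(0,q)$, adopting the standard conventions $0\log 0 = 0$ and, for the logarithmically divergent cases, the clipping replacement $q=0 \mapsto \epsilon$ flagged in the main text. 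I would organise the four cases according to whether $f$ is singular at a vanishing argument.

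For the KL divergence, $f(p,q)=p\log(p/q)$. The second sum vanishes termwise since $f(0,\qtth(\xv)) = 0\cdot\log(0/\qtth(\xv)) = 0$ by the convention $0\log 0=0$, while the first sum picks up the clipped value $f(\pt(\xv),\epsilon) = \pt(\xv)\log(\pt(\xv)/\epsilon)$, which yields Eq.~\eqref{eq:KLfixedvalue}. The reverse KL divergence, $f(p,q)=q\log(q/p)$, is the mirror image: now the first sum vanishes and the clipping acts on the second sum, giving $\sum_{\xv\in\QC_{\thv}}\qtth(\xv)\log(\qtth(\xv)/\epsilon)$.

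For the remaining two losses no clipping is needed, and one instead invokes normalisation of the empirical distributions. Writing the classical fidelity summand as $f(p,q) = p-\sqrt{pq}$ (absorbing the constant via $1=\sum_{\xv}\pt(\xv)$), one finds $f(p,0)=p$ and $f(0,q)=0$, so the fixed point collapses to $\sum_{\xv\in\PC}\pt(\xv)=1$; equivalently, under the no-overlap condition every cross term $\sqrt{\pt(\xv)\qtth(\xv)}$ vanishes and the overlap sum is zero. For the TVD, $f(p,q)=|p-q|$ gives $f(p,0)=p$ and $f(0,q)=q$, so both sums each contribute a full unit of probability mass and $\LC_0^{\rm TVD}=\sum_{\xv\in\PC}\pt(\xv)+\sum_{\xv\in\QC_{\thv}}\qtth(\xv)=2$. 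None of these steps is a genuine obstacle, since the corollary is essentially substitution into Theorem~\ref{thm:explicit-loss}; the only point requiring care is the bookkeeping of the indeterminate $0\log 0$ terms and the placement of the clipping parameter $\epsilon$, which must be introduced precisely on the zero-valued model estimates so that the divergent logarithm is regularised without disturbing the surviving sum.
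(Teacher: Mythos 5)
Your proposal is correct and follows essentially the same route as the paper, which states this corollary as a direct consequence of Theorem~\ref{thm:explicit-loss} obtained by substituting each loss's summand $f$ into the fixed-point expression $\LC_0(\Tilde{P},\Tilde{Q}_{\thv})=\sum_{\xv\in\PC}f(\pt(\xv),0)+\sum_{\xv\in\QC_{\thv}}f(0,\qtth(\xv))$, with the same $0\log 0=0$ convention, clipping of divergent logarithms, and normalisation of the empirical distributions. The only nitpick is your closing remark that $\epsilon$ is introduced ``on the zero-valued model estimates'': for the reverse KLD the clipped quantity is the vanishing \emph{training} probability in the denominator (as your own derivation correctly implements), so the general rule is that $\epsilon$ regularises whichever vanishing argument sits inside the logarithm.
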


Looking at the expressions for the fixed points given above, in the case of the KL divergence, classical fidelity and total variational distance, the fixed point is independent of $\thv$. Thus it is clear that the costs cannot be used to train the quantum circuit model. In the case of the reverse KL divergence, the fixed point depends on $\thv$ but is independent of the training data and thus the reverse KL also cannot be used to train the model to learn the target distribution. 

More generally, for all explicit losses of the form Eq.~\eqref{eq:pairwiseexplicitloss}, the concentration point $\LC_0(\Tilde{P}, \Tilde{Q}_{\thv})$, Eq.~\eqref{eq:fixedpoint}, can be separated into two terms: (i) the term that involves only $\Tilde{P}$ and (ii) the other that involves only $\Tilde{Q}_{\thv}$. In other words, the $\thv$ dependence of the estimator of the loss is independent of the target distribution and thus the estimate of the loss is worthless for training the generative model. 
This no-go result is rigorously established in Corollary~\ref{coro:untrain}. 
Our approach is to show that the loss function at two arbitrary parameter values $\thv_1$ and $\thv_2$, contains no information about the training distribution.

\begin{figure}[t]
\includegraphics[width=0.99\linewidth]{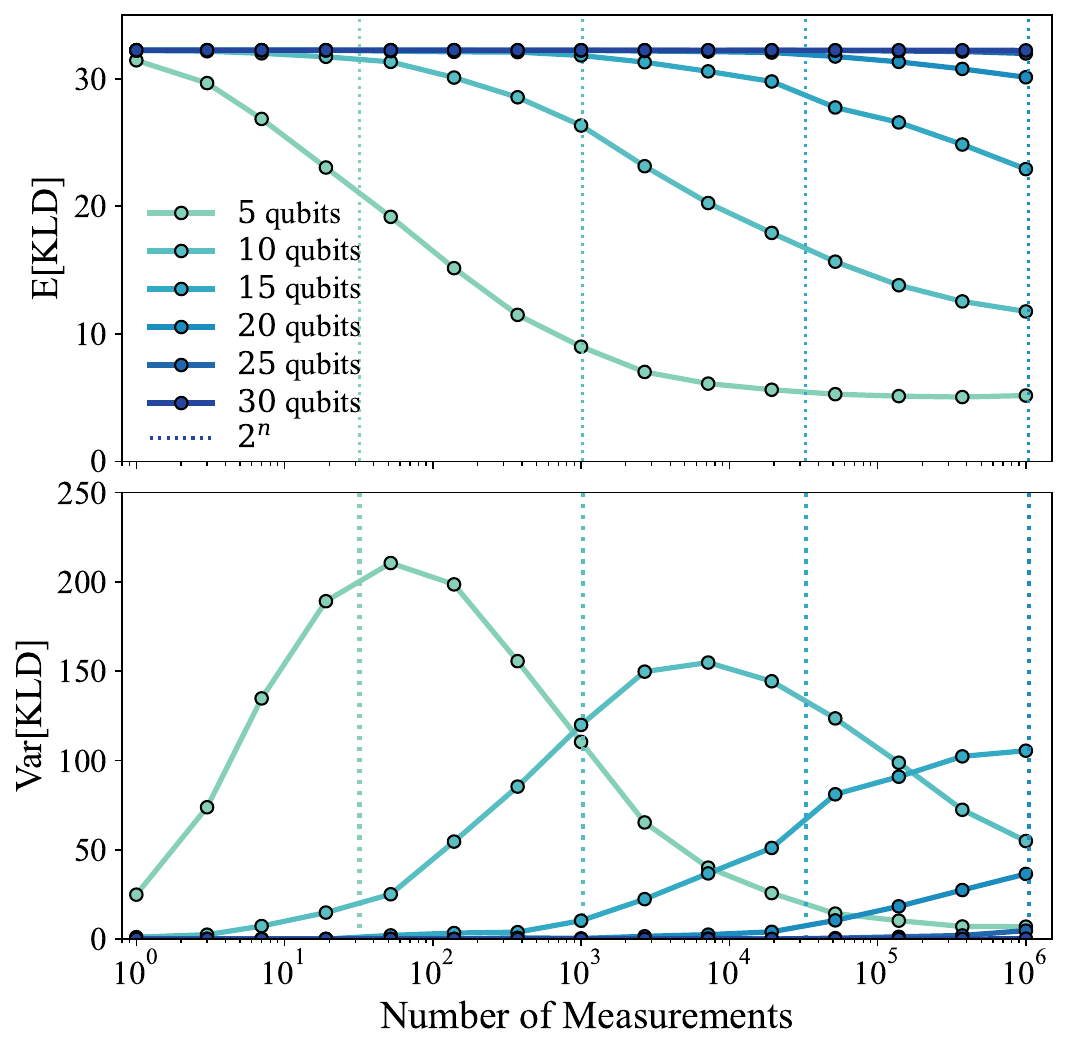}
\caption{\textbf{Variance of the KL divergence with finite shots.} Concentration of the KL divergence loss as a function of the number of measurements and qubits for random product state circuits. Here we take the target distribution to be $\pt(\vec{0}) = 1$ and take the cutoff of the KLD to be $\epsilon = 10^{-14}$. Vertical lines indicate where the number of measurements equal $2^n$. Thus, we see that the the KLD estimate is biased upwards with any finite number of measurements, and the number of measurements required to achieve a reasonable level of uncertainty increases exponentially with the number of qubits $n$.} \label{fig:general-no-go}
\end{figure}

\begin{corollary}[Untrainability of pairwise explicit loss functions]\label{coro:untrain}
Under the same conditions as in Theorem~\ref{thm:explicit-loss}, the probability that the difference between the two statistical estimates of the loss function at $\thv_1$ and $\thv_2$ does not contain any information about the training distribution is exponentially close to 1. Particularly, we have
\begin{align}
    {\rm Pr}_{\Tilde{Q}_{\thv},\thv}[\Tilde{\LC}(\thv_1) - \Tilde{\LC}(\thv_2) = \Delta\LC_0(\Tilde{Q}_{\thv_1},\Tilde{Q}_{\thv_2})]\geq 1 - 2\delta \; ,
\end{align}
with $\delta \in\OC\left(\frac{\poly(n)}{c^n}\right)$ for some $c > 1$, $\Tilde{Q}_{\thv_1}$ (and $\Tilde{Q}_{\thv_2}$) is a set of sampling bitstrings obtained from the quantum generative model at the parameter value $\thv_1$ (and $\thv_2$), as well as
\begin{align}
    \Delta \LC_0(\Tilde{Q}_{\thv_1},\Tilde{Q}_{\thv_2}) = \sum_{\xv \in \QC_{\thv_1}} f(0,\Tilde{q}_{\thv_1}(\xv)) - \sum_{\xv \in \QC_{\thv_2}} f(0,\Tilde{q}_{\thv_2}(\xv)) \;,
\end{align}
with $\QC_{\thv_1}$ (and $\QC_{\thv_2}$) being a set of unique bit-strings in $\Tilde{Q}_{\thv_1}$ (and $\Tilde{Q}_{\thv_2}$). Crucially, $\Delta \LC_0(\Tilde{Q}_{\thv_1},\Tilde{Q}_{\thv_2})$ does not depend on any $\pt(\xv) \in \Tilde{P}$.
\end{corollary}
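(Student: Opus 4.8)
The plan is to obtain this corollary as a direct consequence of Theorem~\ref{thm:explicit-loss}, combining a union bound over the two parameter values with an algebraic cancellation of the training-data-dependent contribution to the fixed point. The crucial structural observation is that the fixed point $\LC_0(\Tilde{P},\Tilde{Q}_{\thv})$ in Eq.~\eqref{eq:fixedpoint} splits into a term $\sum_{\xv\in\PC} f(\pt(\xv),0)$ that depends only on the (fixed) training data through the unique set $\PC$, plus a term $\sum_{\xv\in\QC_{\thv}} f(0,\qtth(\xv))$ that depends only on the model samples. Only the second term carries any $\thv$-dependence, so differencing across two parameter values will annihilate the training-data contribution.

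First I would invoke Theorem~\ref{thm:explicit-loss} once at $\thv_1$ and once at $\thv_2$. This produces two events, $A_1 \colo \Tilde{\LC}(\thv_1)=\LC_0(\Tilde{P},\Tilde{Q}_{\thv_1})$ and $A_2 \colo \Tilde{\LC}(\thv_2)=\LC_0(\Tilde{P},\Tilde{Q}_{\thv_2})$, each holding with probability at least $1-\delta$ with $\delta\in\OC(\poly(n)/c^n)$ over the randomness in the sampled bitstrings and the parameters. A union bound, $\Pro[A_1\cap A_2]\geq 1-\Pro[A_1^{\rm c}]-\Pro[A_2^{\rm c}]\geq 1-2\delta$, then guarantees that both estimates coincide with their respective fixed points simultaneously, with probability exponentially close to $1$. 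Conditioned on $A_1\cap A_2$, I would subtract the two fixed-point expressions; since $\PC$ does not depend on $\thv$, the term $\sum_{\xv\in\PC} f(\pt(\xv),0)$ is identical in both and cancels, leaving precisely
\begin{align}
    \Tilde{\LC}(\thv_1)-\Tilde{\LC}(\thv_2)
    = \sum_{\xv\in\QC_{\thv_1}} f(0,\Tilde{q}_{\thv_1}(\xv))
    - \sum_{\xv\in\QC_{\thv_2}} f(0,\Tilde{q}_{\thv_2}(\xv))
    = \Delta\LC_0(\Tilde{Q}_{\thv_1},\Tilde{Q}_{\thv_2}) \, ,
\end{align}
which by inspection depends only on the model samples $\Tilde{Q}_{\thv_1},\Tilde{Q}_{\thv_2}$ and on no $\pt(\xv)\in\Tilde{P}$.

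There is no substantive obstacle here; the argument is essentially a bookkeeping exercise layered on top of the nontrivial Theorem~\ref{thm:explicit-loss}. The only points requiring care are correctly tracking the two sources of randomness so that the union bound is applied to the marginal events (which holds for any joint law of $\thv_1,\thv_2$), and noting that the cancellation relies on the same training set $\Tilde{P}$ being used at both parameter values, which is the natural setting. The payoff is the interpretive conclusion: because the surviving difference is completely independent of the target probabilities, any optimizer that proceeds by comparing loss values at distinct parameter points receives, with overwhelming probability, no information about the target distribution, which rigorously establishes untrainability.
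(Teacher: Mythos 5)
Your proof is correct and takes essentially the same approach as the paper: invoke Theorem~\ref{thm:explicit-loss} at both $\thv_1$ and $\thv_2$, and observe that the training-data contribution $\sum_{\xv\in\PC} f(\pt(\xv),0)$ to the fixed point is identical at both parameter values and cancels in the difference, leaving exactly $\Delta\LC_0(\Tilde{Q}_{\thv_1},\Tilde{Q}_{\thv_2})$. The only difference is bookkeeping: you combine the two events with a union bound, $\Pro[A_1\cap A_2]\geq 1-\Pro[A_1^{\rm c}]-\Pro[A_2^{\rm c}]\geq 1-2\delta$, whereas the paper uses the independence of the sampling at the two parameter values to write $\Pro[A_1\cap A_2]=(1-\delta)^2\geq 1-2\delta$; both yield the same bound, with your union bound having the mild advantage of not requiring independence.
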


To support our analytic claims we further conducted a numerical study of the exponential concentration of pairwise explicit costs. For concreteness, we here decided to focus on the KL divergence. 
In Fig.~\ref{fig:general-no-go}, we plot the mean and variance (over $\thv$) of the KL divergence for the target distribution $\pt(\vec{0}) = 1$ as a function of the number of measurement shots and qubits. For simplicity we take our model to be a (Haar) random product state. 

We see in Fig.~\ref{fig:general-no-go}a) that with a polynomial number of measurements, as per Eq.~\eqref{eq:KLfixedvalue}, the empirical estimate of the loss concentrates at $\log(1/\epsilon) \approx 32.2$ for $\epsilon=10^{-14}$. Correspondingly, with a polynomial number of measurements the variance in Fig.~\ref{fig:general-no-go}b) is exponentially close to zero. Using an exponential number of measurements, the estimate of the KL tends towards its true value and the variance is again small. The transition between these two regimes is marked by a very high variance corresponding to the case where the measurement count is high enough for there to be some overlap between the sampled bits strings and the $\vec{0}$ bitstring, but not enough overlap to obtain a reliable estimate of $\qth(\vec{0})$.
This results in the loss estimate to sporadically fluctuate between $\log(1/\epsilon)$ and $\log(1/\qth(\xv))$ with $\qth(\xv)>0$. 
While in Fig.~\ref{fig:general-no-go} the target dataset consists of a single bitstring, larger datasets only shift the curves to the left by a polynomial amount.
\medskip
\paragraph*{Broader Implications.}

While our results above are formulated for training QCBMs with pairwise explicit costs, we argue that the underlying problem is more general and immune to simple solutions.
One approach, for example, might be to take non-data-dependent functions of pairwise explicit losses, as in the case of the Rényi-divergence in Eq.~\eqref{eq:renyi_divergence}. However, such loss functions exponentially concentrate in the same manner as the explicit losses themselves when employing the conventional measurement strategy. A more promising but challenging approach would be to attempt to measure such losses via quantum strategies. We discuss this further in Section~\ref{sec:quantumfid}.

More generally, while we provide strict no-go results only for pairwise explicit losses, we believe that any explicit losses in the general form of Eq.~\eqref{eq:explicitcost} will suffer from concentration or exponential imprecision due to the inherent inability of implicit models to accurately estimate the model probabilities in polynomial time\footnote{A possible exception is if a particular implicit model instead allows for efficient estimation of gradients of an explicit loss function, as it is the case for RBMs training on the KL divergence loss function}. We are however not aware of any practical explicit loss function that cannot be brought into the pairwise explicit form.

We further stress that our results hold for unstructured ans\"{a}tze or ans\"{a}tze that lack an appropriate inductive initial bias. Thus, while explicit losses such as the KLD will not work at scale with implicit models straight out-of-the-box, our no-go theorems could be side-stepped using clever initialization strategies in conjunction with specialized ans\"{a}tze. For example, while we argue in Supplementary Note~\ref{app:pairwise-no-overlap} that initializing the quantum circuit model on a subset of training states will not alleviate the fundamental issue when using a generic ansatz, this may work if one leverages a quantum circuit that constrains the model to the symmetry sector of the data. Among other hard constraints, this is conceivable if the data consists only of samples with a certain hamming weight or cardinality, as it can be the case in certain financial applications~\cite{chang2000heuristics,alcazar2021geo}.
However, many real world datasets may not contain strong symmetries that one can leverage so straight-forwardly.
It is therefore critically important to study the effect of strong parameter initializations and inductive biases using explicit losses-- both theoretically and experimentally.

\subsubsection{Implicit losses: Maximum Mean Discrepancy}\label{sec:mmd}
In the previous section we saw that an explicit loss function, used in conjunction with an implicit generative model and the conventional sampling strategy, exhibits exponential concentration and hence is untrainable. The root cause was, at least in part, a miss-match between using an explicit loss function with an implicit model. Thus it is natural to ask whether an implicit quantum loss would fare better. 

Here we focus on analysing the MMD loss function (see Eqs.~\eqref{eq:mmd-loss-implicit} and~\eqref{eq:mmd-loss-general}), which is a commonly-used implicit loss.
In contrast to the pairwise explicit losses discussed previously, each bitstring drawn from the model is generally compared with all training bitstrings, with the
kernel function $K(\xv,\vec{y})$  controlling the contribution of each comparison. 
With a poor choice in kernel it is clear that the MMD will be susceptible to exponential concentration. For example, the Gaussian kernel with the bandwidth $\sigma \rightarrow 0$ is equivalent to a delta function kernel, $K(\xv,\vec{y}) = \langle \xv, \yv\rangle = \delta_{\xv\yv}$. In this case the MMD reduces to the pairwise explicit loss $ \sum_{\xv \in \XC} (p(\xv) - \qth(\xv))^2$ (see Supplementary Note~\ref{sec:technical_nuances} for details), and consequently is subject to our no-go result in Theorem~\ref{thm:explicit-loss}. This thus prompts the question of how exactly $\sigma$ affects trainability.

\medskip
\paragraph*{Properties of the MMD loss.}

To study the properties of the MMD loss, it is helpful to note that each term in the MMD can be viewed as the expectation value of an observable whose properties depend on the choice of $\sigma$. This change in perspective allows us to leverage existing knowledge from the VQA trainability literature. In particular, prior no-go results on VQAs with observable-type loss functions are now directly applicable here, including those on cost function induced~\cite{cerezo2020cost}, expressiblity-induced~\cite{mcclean2018barren, holmes2021connecting}, and noise-induced~\cite{wang2020noise} barren plateaus.

Specifically, each term in the MMD can be written as 
\begin{align}\label{eq:mmd-expectation}
    \MC(\rho, \rho') = \Tr[  O^{(\sigma)}_{\rm MMD} (\rho \otimes \rho')] \;, \;
\end{align}
where we have defined the MMD observable 
\begin{equation}
    O^{(\sigma)}_{\rm MMD} := \sum_{\xv,\vec{y}}  K_{\sigma}(\xv,\yv) |\xv \rangle \langle \xv | \otimes |\yv \rangle \langle \yv |\;.
\end{equation}
This observable acts on $2n$ qubits, namely $n$ qubits corresponding to the QCBM, $\rho_{\thv} = \ket{\psi(\thv)} \bra{\psi(\thv)}$, and $n$ qubits corresponding to the dataset, $\rho_{\pt} = \sum_{\yv} \pt(\yv) |\yv \rangle \langle \yv |$. 
For the first term in the MMD, both $\xv$ and $\yv$ are sampled from the QCBM and we have $\rho = \rho' = \rho_{\thv}$.
The cross-term instead has $\rho = \rho_{\thv}$ and $\rho' = \rho_{\pt}$, and the final term has $\rho = \rho' = \rho_{\pt}$. 

\begin{figure*}
    \centering
    \includegraphics[width=0.95\linewidth]{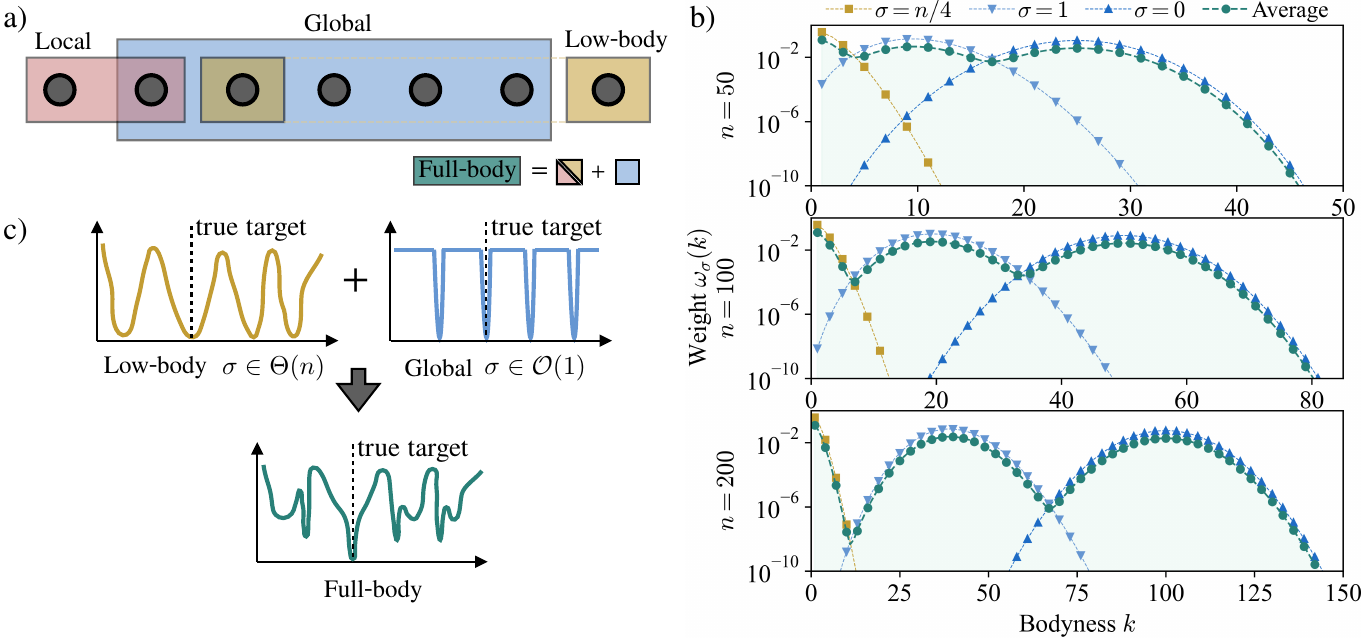}
    \caption{\textbf{Bodyness of the MMD loss.} a) We illustrate the difference between `low-body', `local', `global' and `full-body' operators. An operator $O$ is low-bodied if it acts non-trivially on at most $\OC(\log(n))$ qubits. If a low-bodied operator acts on qubits that are adjacent to each other, then $O$ is said to be local. On the other hand, $O$ is global if it acts non-trivially on $\Theta(n)$ qubits. Lastly, a full-body operator consists of the sum of several operators that are low-body/local and global. b) We depict the weight $w_\sigma(k)$ for the terms in the MMD operator as a function of their bodyness $k$ for $n=50$, $100$ and $200$ qubits and the bandwidths $\sigma = 0$, $ 1$ and $n/4$. The average weight over these three $\sigma$ values is also shown. For small $\sigma$, the MMD operator is a sum of predominantly global operators, i.e., with $\sigma \in \OC(1)$ the mean bodyness is $\Theta(n)$. In contrast, $\sigma\in\Theta(n)$ results in predominantly low-bodied operators. c) Sketch of the expected landscapes for low-body, global and full-body losses respectively. Because low-body and global operators are exclusively sensitive to low-body and global features, respectively, their loss landscapes exhibit spurious minima, which don't coincide with the minimum of the true target distribution. A full-body loss on the other hand should have a single optimal solution solution where all its constituent operator's minima align.}
    \label{fig:weight_mmd_operator}
\end{figure*}

In the Pauli basis, the MMD observable $O^{(\sigma)}_{\rm MMD}$ takes the elegant form 
\begin{align}\label{eq:mmd-observable}
    O^{(\sigma)}_{\rm MMD} = \sum_{l=0}^n w_{\sigma}(l) D_{2l} \; ,
\end{align}
where $D_{2l}$ are normalized $2l$-body diagonal operators (defined explicitly in Supplementary Note~\ref{app:mmd-k-body-observable}), and
\begin{equation}\label{eq:mmd_weight}
    w_{\sigma}(l) = {n \choose l} (1-p_\sigma)^{n-l}p_\sigma^l
\end{equation}
are Bernoulli-distributed weights with effective \textit{probability}
\begin{equation}
    p_\sigma = (1 - e^{-1/2\sigma})/2\;.
\end{equation}
Thus estimating the MMD loss function in Eq.~\eqref{eq:mmd-loss-implicit} using a batch of measurements $\Tilde{Q}$ is equivalent to using the same measurements to estimate a weighted expectation of the observables $D_{2l}$.

The properties of the MMD observable clearly depend on the distribution of the terms of different bodyness through the $w_\sigma(l)$ factor.
Fig.~\ref{fig:weight_mmd_operator} shows how $w_\sigma(l)$ are distributed for different $\sigma$. Owing to the Bernoulli-distributed weights, we can straight-forwardly provide the average bodyness of $O^{(\sigma)}_{\rm MMD}$, which is given by 
\begin{align}\label{eq:mean_bodyness}
    \Ebb_{l \sim \omega_\sigma(l)}[2l] = 2np_\sigma \;,
\end{align}
and the variance in the bodyness, which is 
\begin{align}\label{eq:var_bodyness}
    \Var_{l \sim \omega_\sigma(l)}[2l] = 4np_\sigma(1-p_\sigma) \;.
\end{align}
From these expressions it follows that the MMD loss is predominantly composed of global operators when $\sigma\in \OC(1)$. More concretely the following proposition holds. 
\begin{proposition}[MMD consists largely of global terms for $\sigma\in \OC(1)$]\label{prop:mmd-global}
For $\sigma\in \OC(1)$, the average bodyness of the MMD operator containing Pauli terms with weight $w_\sigma(l)$ is
\begin{equation}
     \Ebb_{l \sim w_\sigma(l)}[2l] \in \Theta(n) \, .
\end{equation}
Similarly, the variance in the bodyness is given by 
\begin{equation}
   \Var_{l \sim w_\sigma(l)}[2l] \in \Theta(n) \, .
\end{equation}
\end{proposition}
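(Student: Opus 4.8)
The plan is to reduce both asymptotic claims to a single elementary fact: that the effective Bernoulli probability $p_\sigma = (1 - e^{-1/2\sigma})/2$ is bounded strictly between $0$ and $1$ \emph{uniformly in} $n$ whenever $\sigma \in \OC(1)$. Since Eqs.~\eqref{eq:mean_bodyness} and~\eqref{eq:var_bodyness} already supply the mean and variance of the bodyness as $\Ebb_{l \sim w_\sigma(l)}[2l] = 2np_\sigma$ and $\Var_{l \sim w_\sigma(l)}[2l] = 4np_\sigma(1-p_\sigma)$ --- both of which follow from the observation in Eq.~\eqref{eq:mmd_weight} that $w_\sigma(l)$ is precisely the binomial distribution $\mathrm{Bin}(n, p_\sigma)$ --- the entire proposition becomes an exercise in controlling $p_\sigma$.

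First I would establish the range of $p_\sigma$. For any $\sigma > 0$ we have $e^{-1/2\sigma} \in (0,1)$, so $p_\sigma \in (0, 1/2)$; in particular $p_\sigma$ is automatically bounded away from $1$. It then remains only to bound $p_\sigma$ away from $0$. The hypothesis $\sigma \in \OC(1)$ means there is a constant $C > 0$ with $\sigma \leq C$ for all sufficiently large $n$. Hence $1/2\sigma \geq 1/2C$, so $e^{-1/2\sigma} \leq e^{-1/2C} < 1$, giving the uniform lower bound $p_\sigma \geq (1 - e^{-1/2C})/2 =: c_0 > 0$. Combining, $c_0 \leq p_\sigma \leq 1/2$, i.e.\ $p_\sigma \in \Theta(1)$.

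Given this, the two conclusions follow directly. For the mean, $2c_0 n \leq 2np_\sigma \leq n$, so $\Ebb_{l \sim w_\sigma(l)}[2l] \in \Theta(n)$. For the variance, the map $t \mapsto t(1-t)$ is increasing on $[0,1/2]$ (its derivative $1-2t$ is nonnegative there), so from $p_\sigma \in [c_0, 1/2]$ we obtain $c_0(1-c_0) \leq p_\sigma(1-p_\sigma) \leq 1/4$, whence $4np_\sigma(1-p_\sigma) \in \Theta(n)$ as well.

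The argument is entirely elementary once the binomial structure of $w_\sigma(l)$ is in hand, so I do not expect a genuine obstacle. The only point requiring care is translating the one-sided asymptotic hypothesis $\sigma \in \OC(1)$ --- an \emph{upper} bound on $\sigma$ --- into a \emph{lower} bound on $p_\sigma$, and verifying that this bound holds uniformly in $n$ rather than merely pointwise; otherwise the $\Theta(n)$ conclusions could break down if $\sigma$ were permitted to drift upward with $n$. The real content of the proposition is simply the recognition that the $\sigma \in \OC(1)$ regime pins $p_\sigma$ away from the degenerate limit $p_\sigma \to 0$ attained as $\sigma \to \infty$, which is exactly the low-body regime $\sigma \in \Theta(n)$ treated elsewhere.
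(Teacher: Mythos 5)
Your proof is correct and takes essentially the same route as the paper's: both rest on the binomial form of $w_\sigma(l)$ and the closed-form mean and variance $2np_\sigma$ and $4np_\sigma(1-p_\sigma)$, then reduce the claim to showing $p_\sigma$ is of constant order when $\sigma \in \OC(1)$. Your write-up is in fact slightly more careful than the paper's one-line argument, which merely asserts $p_\sigma,\,(1-p_\sigma) \in \OC(1)$; you correctly identify that the substantive step is converting the upper bound $\sigma \leq C$ into the uniform \emph{lower} bound $p_\sigma \geq (1 - e^{-1/2C})/2 > 0$, without which the $\Theta(n)$ conclusions would not follow.
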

This shows that with fixed-size bandwidths $\sigma$, as is commonly done (e.g., Ref.~\cite{Born_machine_Liu}), the MMD suffers from global loss function-induced barren plateaus~\cite{cerezo2020cost} and hence is untrainable. This practice of using constant bandwidths is carried over from classical ML literature~\cite{li2017mmd,wang2018improving,Li2018domain}, but Proposition~\ref{prop:mmd-global} shows that this is fundamentally incompatible with quantum generative models using unstructured circuits.

In contrast, we show that if the bandwidth scales linearly in the number of qubits, $\sigma\in\Theta(n)$, the MMD loss function is approximately low-bodied.
We recall that being low-bodied is more general than being \textit{local}, the latter corresponding to the case where an operator is low-bodied and each term only acts non-trivially on adjacent qubits. 
The following proposition formalizes this relation by quantifying the error made when truncating the MMD observable after a certain bodyness. 
\begin{proposition}[MMD consists largely of low-body terms for $\sigma\in\Theta(n)$]\label{prop:mmd-k-body}
Let $\Tilde{\LC}^{(\sigma, k)}_{\rm MMD}(\thv)$ be a truncated MMD loss with a truncated operator
$\tilde{O}^{(\sigma, k)}_{\rm MMD}$ that contains up to the $2k$-body interactions in $O^{(\sigma)}_{\rm MMD}$,
\begin{align} \label{eq:MMD_truncated_operator}
     \tilde{O}^{(\sigma,k)}_{\rm MMD} :=  \sum_{l=0}^k w_{\sigma}(l) D_{2l} \; ,
\end{align}
where $w_{\sigma}(l)$ are Bernoulli-distributed weights defined in Eq.~\eqref{eq:mmd_weight}. For $\sigma \in \Theta(n)$, the difference between the exact and local approximation of the loss is bounded as
\begin{align}
    |\LC^{(\sigma)}_{\rm MMD}(\thv) - \Tilde{\LC}^{(\sigma, k)}_{\rm MMD}(\thv)| \leq  \epsilon(k) \;,
\end{align} 
with
\begin{align}
    \epsilon(k) \in \OC\left( n (c/k)^k\right) \;,
\end{align}
for some positive constant $c$. 
\end{proposition}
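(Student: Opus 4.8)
The plan is to reduce the loss-level bound to an operator-level tail estimate and then control that tail using the fact that $\sigma\in\Theta(n)$ forces the effective probability $p_\sigma$ to scale as $\Theta(1/n)$. First I would write the MMD loss through its three constituent expectation values, $\LC^{(\sigma)}_{\rm MMD}(\thv) = \MC(\rho_\thv,\rho_\thv) - 2\MC(\rho_\thv,\rho_{\pt}) + \MC(\rho_{\pt},\rho_{\pt})$ with $\MC(\rho,\rho') = \Tr[O^{(\sigma)}_{\rm MMD}(\rho\otimes\rho')]$, and note that replacing $O^{(\sigma)}_{\rm MMD}$ by its truncation $\tilde O^{(\sigma,k)}_{\rm MMD}$ only discards the high-body terms,
\begin{equation}
    O^{(\sigma)}_{\rm MMD} - \tilde O^{(\sigma,k)}_{\rm MMD} = \sum_{l=k+1}^n w_\sigma(l)\, D_{2l} \,.
\end{equation}
Since each $D_{2l}$ is normalized so that $\|D_{2l}\|_\infty \leq 1$ (it is a uniform average of $2l$-body $Z$-type Pauli strings), one has $|\Tr[D_{2l}(\rho\otimes\rho')]| \leq 1$ for all density matrices $\rho,\rho'$. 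Applying the triangle inequality across the three terms (whose coefficients have absolute values $1,2,1$) then gives
\begin{equation}
    |\LC^{(\sigma)}_{\rm MMD}(\thv) - \tilde\LC^{(\sigma,k)}_{\rm MMD}(\thv)| \leq 4\sum_{l=k+1}^n w_\sigma(l) \,,
\end{equation}
so the whole problem collapses to estimating the upper tail $\sum_{l>k} w_\sigma(l) = \Pro[\,\mathrm{Bin}(n,p_\sigma) > k\,]$.

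Next I would pin down the scaling of $p_\sigma$. Using $1-e^{-x}\leq x$ together with $\sigma\in\Theta(n)$ (so $\sigma\geq c'n$ for some constant $c'>0$) gives $p_\sigma = \tfrac12(1-e^{-1/2\sigma}) \leq \tfrac{1}{4\sigma} \leq \tfrac{1}{4c'n}$, whence the mean parameter $\lambda := n p_\sigma \in \Theta(1)$ is bounded by a constant. This is the crux, because it places the truncation index $k$ well beyond the mode of the binomial. Concretely, for $k$ exceeding a constant ($k\gtrsim\lambda$) the ratio $w_\sigma(l+1)/w_\sigma(l) = \tfrac{n-l}{l+1}\tfrac{p_\sigma}{1-p_\sigma}$ is below $1$ for every $l\geq k$, so the tail terms decrease monotonically and the sum of its (at most $n$) terms is bounded by $n\,w_\sigma(k+1)$. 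Bounding the leading term with $\binom{n}{l}\leq (en/l)^l$ and $(1-p_\sigma)^{n-l}\leq 1$,
\begin{equation}
    w_\sigma(k+1) \leq \Big(\tfrac{e n p_\sigma}{k+1}\Big)^{k+1} = \Big(\tfrac{e\lambda}{k+1}\Big)^{k+1} \leq \Big(\tfrac{c}{k}\Big)^{k} \,,
\end{equation}
where the constant $c$ absorbs $e\lambda$ and the shift from $k+1$ to $k$. Combining the pieces yields $|\LC^{(\sigma)}_{\rm MMD}(\thv) - \tilde\LC^{(\sigma,k)}_{\rm MMD}(\thv)| \leq 4n(c/k)^k \in \OC\!\big(n(c/k)^k\big)$, as claimed.

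I expect the main obstacle to be the tail estimate rather than the operator-norm reduction, which is essentially mechanical once $\|D_{2l}\|_\infty\leq 1$ is in hand. The delicate points are (i) verifying that $k$ lies beyond the mode of $\mathrm{Bin}(n,p_\sigma)$ so that the monotonicity argument legitimately caps the sum by $n$ times its first term, and (ii) tracking constants so that $e\lambda$ and the $k{+}1$-versus-$k$ mismatch fold cleanly into a single $c$. The somewhat generous factor $n$ in the final bound originates precisely from the crude ``number of terms times the largest term'' step; a sharper Poisson-type estimate $\sum_{l>k}\lambda^l/l! \leq e^\lambda(e\lambda/k)^k$ would remove it, but the stated $\OC(n(c/k)^k)$ is enough and keeps the argument elementary.
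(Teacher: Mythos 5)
Your proposal is correct and follows essentially the same route as the paper's proof: reduce the loss difference to the operator-norm of the truncated tail via the triangle and H\"{o}lder inequalities (giving the factor $4$), bound that norm by the binomial tail $\sum_{l>k} w_\sigma(l)$, and control the tail by ``number of terms times largest term'' using $\binom{n}{l}\leq (en/l)^l$ and $p_\sigma\leq 1/(4\sigma)$. The only cosmetic difference is how the largest tail term is identified --- you use the ratio test on the binomial weights beyond the mode, whereas the paper maximizes the upper-bound function $f(x)=\left(\tfrac{ne}{4\sigma x}\right)^{x}$ by calculus --- but the resulting constant $c\in\Theta(np_\sigma)=\OC(1)$ and the final bound $\OC\left(n(c/k)^k\right)$ coincide.
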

This implies that one can view the MMD loss with a bandwidth $\sigma\in\Theta(n)$ as composed almost exclusively of low-body contributions. We therefore expect, given the results of Refs.~\cite{cerezo2020cost, napp2022quantifying}, that the the MMD is trainable for $\sigma \in \Theta(n)$  for quantum generative models which employ shallow quantum circuits. We note that there appears to be no merit in increasing $\sigma$ beyond $\Theta(n)$, as that simply increases the relative weight of the constant $l=0$ term in Eq.~\eqref{eq:mmd-observable}. That is, the MMD operator tends towards the trivial identity measurement for $\sigma \rightarrow \infty$. 

\medskip

To probe this further, and get a better understanding of the effect of $\sigma$ on the trainability of the MMD loss, we start by considering the case of QCBM with a product ansatz. This allows us to find a closed-form expression of the MMD variance as a function of the circuit parameters (Supplemental Proposition~\ref{sup-prop:mmd-var-tensor}) from which we can study the concentration of the MMD for different $\sigma$ values. Our findings are summarized by the following Theorem (proven in Supplementary Note~\ref{app:mmd_var_theorem}).

\begin{theorem}[Product ansatz trainability of MMD, informal]\label{thm:mmd-sigma}
Consider the MMD loss function $\LC^{(\sigma)}_{\rm MMD}(\thv)$ as defined in Eq.~\eqref{eq:mmd-loss-implicit}, which uses the classical Gaussian kernel as defined in Eq.~\eqref{eq:gaussian-kernel} with the bandwidth $\sigma>0$, and a quantum circuit generative model that is comprised of a tensor-product ansatz $U = \bigotimes_i^n U_i(\theta_i)$ with $\{ U_i(\theta_i) \}_{\theta_i}$ being single-qubit (Haar) random unitaries. Given a training dataset $\Tilde{P}$, the asymptotic scaling of the variance of the MMD loss depends on the value of $\sigma$.

\medskip
    \noindent For $\sigma \in \OC(1)$, we have
    \begin{align}
        \Var_{\thv}[\LC^{(\sigma)}_{\rm MMD}(\thv)] \in \OC(1/b^n) \;,
    \end{align}
    with some $b>1$.
    
    \noindent On the other hand, for $\sigma \in \Theta(n)$, we have
    \begin{align}
        \Var_{\thv}[\LC^{(\sigma)}_{\rm MMD}(\thv)] \in  \Omega(1/n)\;.
    \end{align}
\end{theorem}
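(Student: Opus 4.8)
The plan is to diagonalize the Gaussian kernel in the parity (Walsh--Hadamard) basis, which converts the MMD into a weighted sum of squared Fourier coefficients whose variance factorizes over qubits because the ansatz is a product. Starting from the quadratic form $\LC^{(\sigma)}_{\rm MMD}(\thv)=\sum_{\xv,\yv}K_\sigma(\xv,\yv)\,(\qth(\xv)-p(\xv))(\qth(\yv)-p(\yv))$ implied by Eq.~\eqref{eq:mmd-loss-general}, I would write the single-qubit kernel matrix as $(1-\kappa)\id+\kappa J$ with $\kappa:=e^{-1/(2\sigma)}$ and $J$ the all-ones $2\times2$ matrix, so $K_\sigma=\bigotimes_i[(1-\kappa)\id+\kappa J]$. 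Diagonalizing each block in the $\{\ket{+},\ket{-}\}$ basis gives eigenvalues $1\pm\kappa$, hence $K_\sigma$ acts on the parity character $\chi_S(\xv)=(-1)^{\sum_{i\in S}x_i}$ with eigenvalue $(1+\kappa)^{n-|S|}(1-\kappa)^{|S|}$ for each $S\subseteq[n]$. Using the product structure $\qth(\xv)=\prod_i q_i(x_i)$, the parity coefficient of $\qth$ factorizes as $\prod_{i\in S}m_i$ with $m_i:=\langle Z\rangle_{\rho_i(\theta_i)}$, while that of $p$ is the data correlator $P_S:=\Ebb_{\xv\sim p}[\chi_S(\xv)]$. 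This gives the closed form (the content of Supplemental Proposition~\ref{sup-prop:mmd-var-tensor})
\begin{equation}
\LC^{(\sigma)}_{\rm MMD}(\thv)=(1-p_\sigma)^{n}\sum_{S\subseteq[n]}\eta^{|S|}\Big(\prod_{i\in S}m_i-P_S\Big)^{2},\quad \eta:=\tfrac{1-\kappa}{1+\kappa}\,.
\end{equation}

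Second, for a product ansatz of single-qubit Haar unitaries the $m_i$ are i.i.d.\ uniform on $[-1,1]$, with $\Ebb[m_i]=\Ebb[m_i^3]=0$, $\Ebb[m_i^2]=1/3$, $\Ebb[m_i^4]=1/5$. I would expand the square and isolate the $\thv$-dependent part as $Y_1-2Y_2$ with $Y_1:=\sum_S\eta^{|S|}\prod_{i\in S}m_i^2$ and $Y_2:=\sum_S\eta^{|S|}P_S\prod_{i\in S}m_i$, so that $\Var_\thv[\LC^{(\sigma)}_{\rm MMD}]=(1-p_\sigma)^{2n}\big(\Var[Y_1]+4\Var[Y_2]-4\Cov[Y_1,Y_2]\big)$. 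The key structural fact is $\Cov[Y_1,Y_2]=0$: every monomial in $Y_2$ carries an \emph{odd} power of some $m_i$, whose odd moment vanishes, so there is no destructive interference between the model--model and cross terms. Both remaining pieces are then manifestly non-negative sums that factorize over qubits, with $\Var[Y_2]=\sum_{S\neq\emptyset}(\eta^2/3)^{|S|}P_S^2$ and $\Var[Y_1]=\sum_{S,S'}\eta^{|S|+|S'|}\Cov[\prod_{i\in S}m_i^2,\prod_{j\in S'}m_j^2]$, where each covariance equals $(1/3)^{|S\triangle S'|}\big[(1/5)^{|S\cap S'|}-(1/9)^{|S\cap S'|}\big]\geq0$.

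Third, I would treat the two regimes. For $\sigma\in\OC(1)$, $\kappa$, $\eta$ and $1-p_\sigma=(1+\kappa)/2$ are constants strictly between $0$ and $1$; bounding $|P_S|\leq1$ and collapsing the per-qubit products yields $\Var[Y_1]\leq(1+2\eta/3+\eta^2/5)^n$ and $\Var[Y_2]\leq(1+\eta^2/3)^n$, each multiplied by $(1-p_\sigma)^{2n}$. A short computation verifies that the effective bases $(1-p_\sigma)^2(1+2\eta/3+\eta^2/5)$ and $(1-p_\sigma)^2(1+\eta^2/3)$ are both $<1$ for all $\kappa<1$, giving $\Var_\thv[\LC^{(\sigma)}_{\rm MMD}]\in\OC(1/b^n)$. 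For $\sigma\in\Theta(n)$ I would set $1-\kappa=\Theta(1/n)$, so $\eta=\Theta(1/n)$, $n\eta=\Theta(1)$, and $(1-p_\sigma)^{2n}\to$ const. Since $\Cov[Y_1,Y_2]=0$ and $\Var[Y_2]\geq0$, it suffices to lower-bound $\Var[Y_1]$; because all its summands are non-negative, $\Var[Y_1]\geq\sum_i\eta^2\Var[m_i^2]=\tfrac{4}{45}n\eta^2=\Theta(1/n)$, with the $|S\cap S'|\geq2$ contributions only of order $1/n^2$. Crucially this floor is data-independent (it involves no $P_S$), so $\Var_\thv[\LC^{(\sigma)}_{\rm MMD}]\in\Omega(1/n)$ for every training set.

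The hard part is the lower bound for $\sigma\in\Theta(n)$: a priori the cross term $Y_2$ (whose variance depends on the unknown data correlators $P_S$) and its covariance with $Y_1$ could cancel the $\Theta(1/n)$ signal from $Y_1$, leaving only a weaker, data-dependent estimate. The parity identity $\Cov[Y_1,Y_2]=0$ together with the non-negativity of $\Var[Y_2]$ and of every summand of $\Var[Y_1]$ removes this possibility and isolates a clean, universally valid $\Omega(1/n)$ floor coming entirely from the model--model fluctuations. The remaining work---the per-qubit moment bookkeeping, the Bernoulli/$\eta$ asymptotics, and checking that the geometric bases stay below $1$---is routine.
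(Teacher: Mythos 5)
Your proof is correct, and it reaches exactly the paper's closed form (Supplemental Proposition~\ref{sup-prop:mmd-var-tensor}) by a genuinely different route. The paper computes the variance by brute-force Haar integration over the product unitaries, invoking the explicit single-qubit $t$-design formulas up to fourth order (Eq.~\eqref{eq:4-design-int-1-qubit}, needed because $\KC_{q,q}$ is quadratic in $\qth$) for the model--model term, a separate Pauli-basis variance lemma (Lemma~\ref{lemma:var-o-pauli}) for the cross term, and a third Haar computation to kill the covariance. You instead diagonalize the Gaussian kernel in the parity basis, which turns the MMD into $(1-p_\sigma)^n\sum_S \eta^{|S|}\bigl(\prod_{i\in S}m_i-P_S\bigr)^2$ with $\eta=p_\sigma/(1-p_\sigma)$, and then only need the elementary fact that the Bloch-$z$ coordinates $m_i$ of Haar-random single-qubit states are i.i.d.\ uniform on $[-1,1]$: one can check that $(1-p_\sigma)^2(1+2\eta/3+\eta^2/5)=(7+6\kappa+2\kappa^2)/15$ and $(1-p_\sigma)^2(1+\eta/3)^2=(4+4\kappa+\kappa^2)/9$ with $\kappa=e^{-1/(2\sigma)}$, recovering $B_\sigma$ in Eq.~\eqref{eq:MMD_First_Term_Variance} exactly, while $(1-p_\sigma)^{2n}\Var[Y_2]$ reproduces $C_\sigma(\Tilde{P})$ in Eq.~\eqref{eq:MMD_cross_terms_variance_with_Pauli_Decomposition}, and your odd-moment argument reproduces the vanishing covariance. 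Notably, the paper does derive the parity-basis representation of the MMD (Eq.~\eqref{eq:mmd-loss-za}) but uses it only for the faithfulness analysis in Appendix~\ref{app:mmd-faithfulness}; your proof unifies the variance computation with that representation. What your route buys: no Weingarten/design bookkeeping, manifest non-negativity of every variance contribution (so the $\Omega(1/n)$ floor for $\sigma\in\Theta(n)$ follows immediately by truncating $\Var[Y_1]$ to singleton terms, $\Var[Y_1]\geq n\eta^2\Var[m_i^2]=\tfrac{4}{45}n\eta^2$, rather than via the paper's Taylor-expansion bounds and Lemma~2), and a transparent, data-independent lower bound as required by the informal statement. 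What the paper's route buys: the Haar-integration machinery does not rely on the Fourier coefficients of $\qth$ factorizing, so it is the version that one would attempt to extend beyond product ans\"{a}tze to shallow circuits forming local designs (cf.\ Conjecture~\ref{conj:mmd-train}), and its formal statement (Theorem~\ref{thm:mmd-sigma-formal}) additionally isolates a lower bound on the data-carrying cross term $C_\sigma(\Tilde{P})$ under an assumption on the correlators $z_A(\Tilde{P})$, which your argument does not address but which the informal theorem does not require.
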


We numerically verify Theorem~\ref{thm:mmd-sigma} in Fig.~\ref{fig:mmd_var_training}. In panel a) we show that the analytical predictions for different bandwidths coincide perfectly with the numerical estimates. The exponentially vanishing loss variances observed for $\sigma \in O(1)$ are expected to render the loss untrainable. This is demonstrated in panel b), where we further train a QCBM  with $\sigma=n/4$ (which approximately maximizes the variance) and $\sigma=1$.
We find that a QCBM with $\sigma = n/4$ can be successfully trained even for $n=1000$ qubits. In contrast, the training starts to fail to learn the $|\vec{0}\rangle$ target state after $n \approx 50$ and is fully untrainable at $n=100$ when $\sigma = 1$ is used.

\begin{figure}
    \centering
    \includegraphics[width=0.99\linewidth]{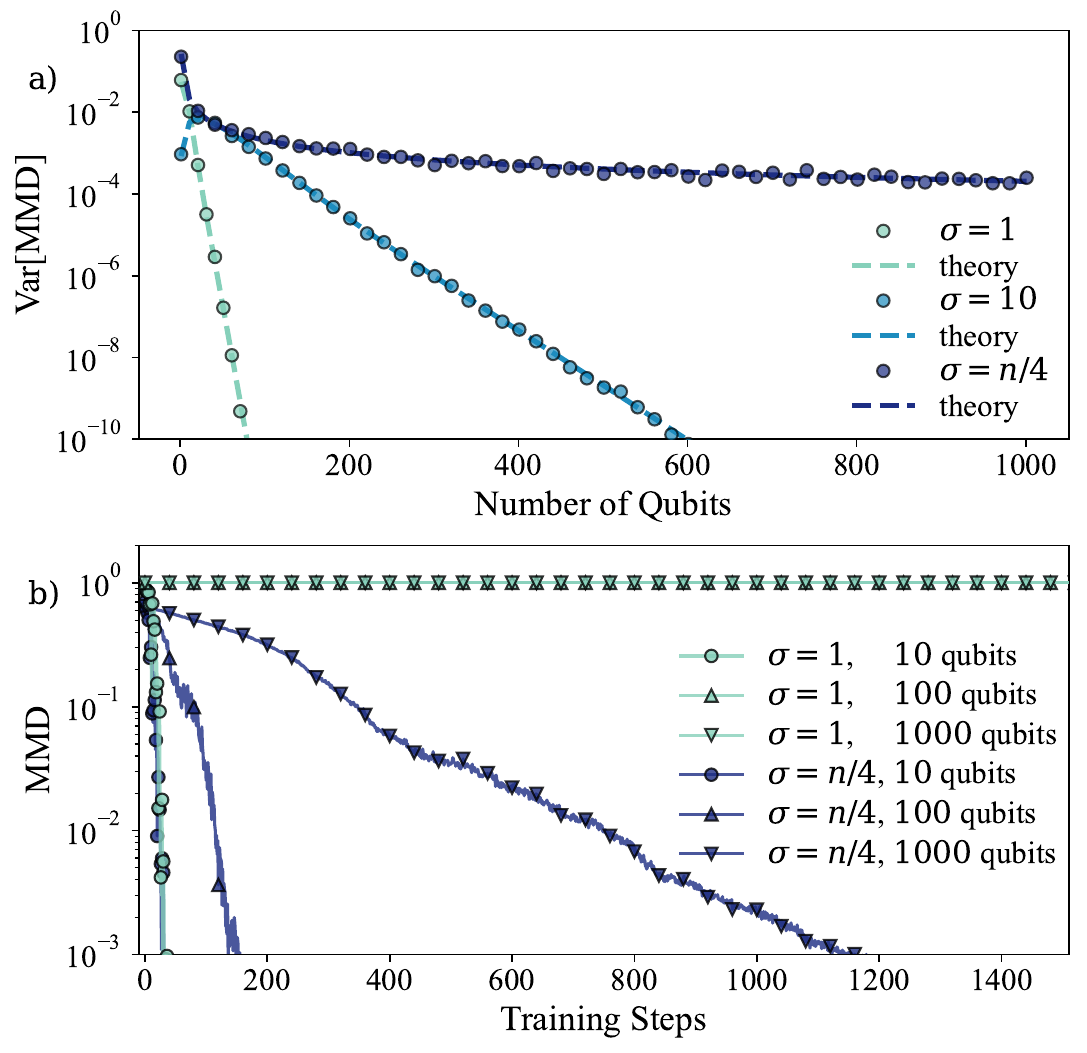}
    \caption{\textbf{$\bm\sigma$-dependence of the MMD loss function.} a) Comparison of the MMD variance between the analytical prediction in Eq.~\eqref{eq:var-mmd-tensor} and empirical variance using 100 measurements from a product state ansatz. b) Training a product state ansatz on the $|\vec{0}\rangle$ target state for $\sigma=1$ and $\sigma=n/4$ using the {CMA-ES~\cite{CMA-ES}} optimizer and 512 measurements. In both cases the QCBM ansatz consists of a single layer of Ry rotations on each qubit.}
    \label{fig:mmd_var_training}
\end{figure}

It is interesting to note that the approximately optimal bandwidth $\sigma\sim \frac{n}{4}$ for the product state ansatz coincides with the so-called \textit{median heuristic}~\cite{Gretton2012mmd} from classical ML literature. For random circuits, the median (hamming) distance between bitstrings is in fact $\frac{n}{2}$, which we satisfy with the factor of $2$ in our kernel convention.

\medskip 

To go towards more practical generative modelling, we recall that Ref.~\cite{napp2022quantifying} proves that cost functions of the form of Eq.~\eqref{eq:VQEcost} using $2k$-body observables with $k \in \OC(\log(n))$ are trainable using 1D-random $\log(n)$ depth circuits.
Since Proposition~\ref{prop:mmd-k-body} implies that the MMD is well approximated by a $\log(n)$-body cost, it should follow that the MMD is also trainable at $\log(n)$ depths. There are a few technical caveats associated with constructing a full proof. For example, the first term of the MMD requires working with 4-designs instead of 2-designs, and the second term depends on the target distribution, leading to additional subtleties. However, there is no strong reason to expect that these technicalities make the MMD untrainable.

To extend the trainability result beyond a simple tensor product ansatz, we consider a generic Pauli rotation ansatz of the form $U_{\rm PQC}(\thv) = U(\alv) U_{\rm tensor}(\btv)$ where
\begin{equation}\label{eq:pauli-rotation-ansatze}
    U(\alv) = \prod_{k=1}^M e^{- i \alpha_k G_k /2} V_k
\end{equation}
such that the generators $\{ G_k \}_{k=1}^M$ are some $n$-qubit Pauli strings $G_k \in  \{ \id, X, Y, Z \}^n$ and $\{ V_k\}_{k=1}^M$ is a set of non-parametrized Clifford gates, and $ U_{\rm tensor} (\btv) = \bigotimes_{i=1^n} U_i(\beta_i)$ is a layer of random single qubit rotations $U_i(\beta_i)$. We additionally assume all parameters are uncorrelated. Then, we have the variance of the MMD loss scales as follows. 
\begin{theorem}[General Pauli rotation ansatz trainability of MMD, informal]\label{thm:mmd-train-general}
Consider a general Pauli rotation ansatz of the form in Eq.~\eqref{eq:pauli-rotation-ansatze} and the MMD loss function as defined in Eq.~\eqref{eq:mmd-loss-implicit} using the Gaussian kernel in Eq.~\eqref{eq:gaussian-kernel} with the bandwidth $\sigma \in \Theta(n)$. As long as the average light cone of the back-propagated MMD observable with $U'(\vec{\alpha})$ remains in the order of $\log(n)$, then the QCBM is trainable in the sense that
\begin{align}
    \Var_{\thv} [\LC^{(\sigma)}_{\rm MMD}(\thv)] \in \Omega(1/\poly(n)) \;.
\end{align}
\end{theorem}

Theorem~\ref{thm:mmd-train-general} indicates that in practice one has to only determine the average light cone of the effective MMD observable back-propagated by a \textit{given} ansatz (see Eq.~\eqref{eq:avg-light-cone} in Supplementary Note~\ref{appx:mmd-general-pauli} for a formal definition) to have a trainability guarantee. In Supplementary Note~\ref{appx:mmd-general-pauli}, we provide further details on how to compute this average light cone for the Pauli rotation ansatz. An example of an ansatz satisfying the few-body light cone condition is a shallow depth circuit with nearest neighbour connectivity (which could be either hardware efficient or problem inspired). Crucially, we emphasise that this light cone argument goes beyond the 2-design assumption and is expected to work even more generally to any ansatz that may not even be in the Pauli rotation form.

It is worth noting that as few mild technical assumptions are required to more formally state Theorem~\ref{thm:mmd-train-general}. These, along with our proof are provided in Supplementary Note~\ref{appx:mmd-general-pauli}. We further remark that although some proof techniques are similar to Ref.~\cite{letcher2023tight}, the main technical challenges here are to analytically show that the covariance between different terms which involve higher moments vanish and compute the exact form of the variance lower bound of the purity term. We further remark that although some proof techniques are similar to Ref.~\cite{letcher2023tight}, the main technical challenges here have arisen from dealing with the two system registers of the MMD which involves computing the exact form of the higher moments.

Theorem~\ref{thm:mmd-train-general} is further supported by our numerical evidence for the trainability of the MMD for deeper circuits and more realistic datasets shown in Fig.~\ref{fig:mmd_variance}. Here we plot the loss variance as a function of circuit depth $L$ and the number of qubits $n$ for $\sigma = n/4$ on four datasets from four different target distributions. We observe that the polynomial scaling of the loss variance does in fact extend beyond product states to shallow circuits, i.e., $L \in \OC(\log(n))$. However, for sufficiently deep circuits, i.e., $L \in \Omega(n)$, the MMD variance appears to decay exponentially. This aligns with expressibility-induced barren plateaus observed in other VQA applications, which occur even for maximally local loss functions, i.e., $k=1$.

\begin{figure*}
    \centering
    \includegraphics[width=0.99\linewidth]{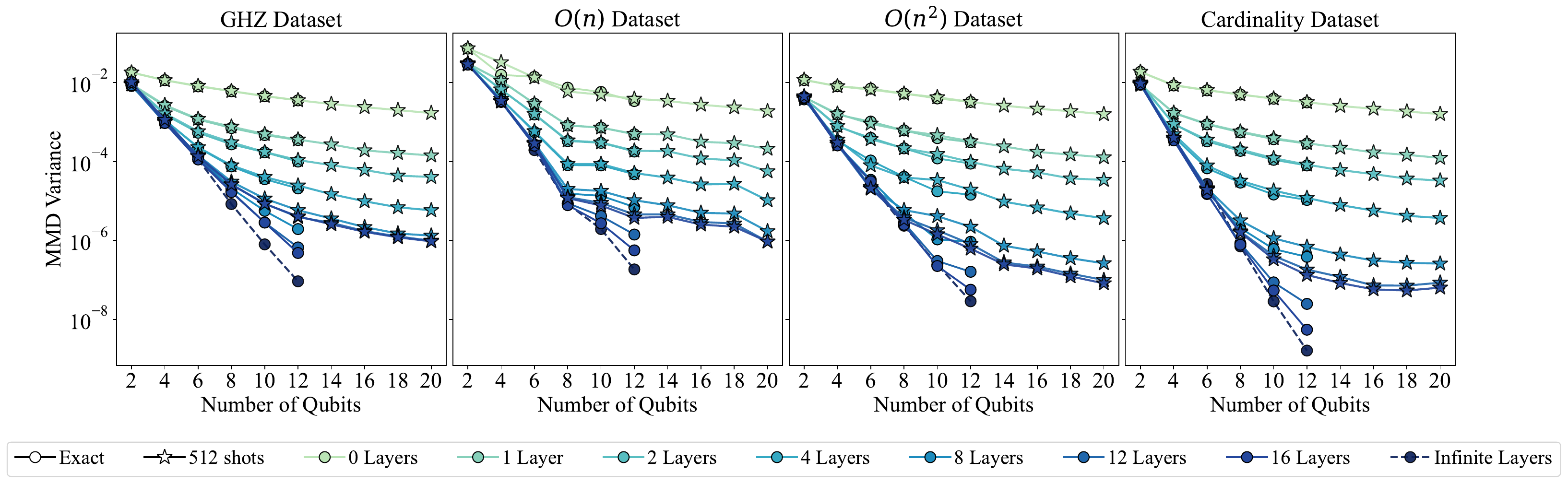}
    \caption{\textbf{Study of loss concentration with the MMD loss function.} Numerical evidence that the MMD loss with Gaussian bandwidth parameter $\sigma=n/4$ does not exhibit global or explicit loss function barren plateaus, but does exhibit loss concentration with deep quantum circuits. We study the loss concentration in randomly initialized line-topology circuits for various datasets, and increasing number of qubits $n$ and circuit depth. The infinite layers results were generated by drawing wavefunctions from the Porter-Thomas distribution~\cite{porter1956fluctuations}. The GHZ dataset consists of the all-0 and all-1 bitstrings ($\OC(1)$ support), the $\OC(n)$ and $\OC(n^2)$ datasets consist of $n$ and $n^2$ random bitstrings, respectively, and the cardinality dataset contains all bitstrings with $\frac{n}{2}$ cardinality ($\OC(2^n)$ support). There does not appear to be a strong data-dependence for the magnitude of the loss variance.}
    \label{fig:mmd_variance}
\end{figure*}

\medskip

\subsubsection{The role of local minima}

Our results so far appear to indicate that picking a single bandwidth $\sigma \in \Theta(n)$ maximizes the trainability of the MMD loss function with a Gaussian kernel. While it is true that this choice maximizes the expected magnitude of initial gradients for a QCBM, non-vanishing gradients are a necessary condition but not sufficient to guarantee reliable training performance. 
And in fact it turns out that while low-body losses exhibit large gradients they come with other limitations. 
Particularly, we show that the bodyness of a generative loss function defines the maximal order of marginals of the target distribution that can be distinguished. That is, the model only learns to match the target distribution on subsets of bits, i.e. on its marginals. This introduces a continuous family of minima which are indistinguishable from the true minimum when using a low-bodied loss function, but which are systematically wrong for the purposes of generative modelling.
The worry is that the non-vanishing loss gradients in low-bodied losses are predominantly due to the presence of such spurious minima and do not point in the direction of the true global minimum. This is sketched in Fig.~\ref{fig:weight_mmd_operator}. 

Formally, let $\qth(\xv_A)$ denote the marginal model distribution on a subset $A\subseteq\{1, 2, ..., n\}$ of qubits, and $\pt(\xv_A)$ the marginal target distribution on that same subset. For more details we refer to Eq.~\eqref{eq:marginal_prob} and Eq.~\eqref{eq:marginal_prob_model} in Supplementary Note~\ref{app:mmd-faithfulness}. The connection between the bodyness of the loss operator and the marginals of the model and target distributions is then formalized in the following Proposition.

\begin{proposition}[The truncated MMD loss is not faithful]\label{prop:mmd_not_faithful}
   Consider a distribution $\qth(\xv)$ that agrees with the training distribution $\pt(\xv)$ on all the marginals up to $k$ bits, but disagrees on higher-order marginals. The distribution $\qth(\xv)$ minimizes the truncated MMD loss. That is, suppose
   \begin{align}
       \qth(\xv_A) = \pt(\xv_A) \;,
   \end{align}
    for all $A\subseteq\{1,2,...,n\}$ with $|A| \leq k$, then
    \begin{align}
       \tilde{\LC}^{(\sigma, k)}_{\rm MMD}(\thv) = 0 \;.
   \end{align}
    Crucially, this is true even if for some $B\subseteq\{1,2,...,n\}$ with $|B| > k$
   \begin{align}
       \qth(\xv_B) \neq \pt(\xv_B) \;.
   \end{align}
\end{proposition}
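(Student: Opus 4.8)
The plan is to rewrite both the exact and the truncated MMD losses as weighted sums of squared differences of subset-parity (Fourier) coefficients, and then to note that each such coefficient is a function of a single marginal of the underlying distribution. First I would expand the Gaussian kernel of Eq.~\eqref{eq:gaussian-kernel} on a per-bit basis. Since $x_i,y_i\in\{0,1\}$ the exponent $(x_i-y_i)^2$ is $0$ when $x_i=y_i$ and $1$ otherwise, so with $p_\sigma=(1-e^{-1/2\sigma})/2$ one checks that $e^{-(x_i-y_i)^2/2\sigma}=(1-p_\sigma)+p_\sigma(-1)^{x_i+y_i}$. Taking the product over all $n$ bits and expanding yields
\begin{align}
K_\sigma(\xv,\yv)=\sum_{S\subseteq\{1,\dots,n\}}(1-p_\sigma)^{n-|S|}\,p_\sigma^{|S|}\prod_{i\in S}(-1)^{x_i+y_i}\,,
\end{align}
which is precisely the Pauli-basis decomposition $O^{(\sigma)}_{\rm MMD}=\sum_l w_\sigma(l)D_{2l}$ of Eq.~\eqref{eq:mmd-observable} reorganized so that the terms with $|S|=l$ assemble into the normalized $2l$-body operator $D_{2l}$ (with $w_\sigma(l)=\binom{n}{l}(1-p_\sigma)^{n-l}p_\sigma^{l}$).

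Next I would insert this expansion into the probability form of the loss, Eq.~\eqref{eq:mmd-loss-general}, and factorize. Writing $c_{|S|}=(1-p_\sigma)^{n-|S|}p_\sigma^{|S|}$ and defining the parity coefficients $\widehat q_\thv(S):=\sum_{\xv}\qth(\xv)\prod_{i\in S}(-1)^{x_i}=\Tr[\rho_\thv\prod_{i\in S}\Z_i]$, and analogously $\widehat p(S)$ for $\pt$, the factorization $\prod_{i\in S}(-1)^{x_i+y_i}=\prod_{i\in S}(-1)^{x_i}\cdot\prod_{i\in S}(-1)^{y_i}$ decouples the double sum over $\xv,\yv$. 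Each of the three terms of the MMD then collapses to a single sum $\sum_S c_{|S|}\,\widehat u(S)\,\widehat v(S)$ where $\widehat u,\widehat v\in\{\widehat q_\thv,\widehat p\}$ according to which distributions appear, and combining them produces a perfect square,
\begin{align}
\LC^{(\sigma)}_{\rm MMD}(\thv)=\sum_{S\subseteq\{1,\dots,n\}}c_{|S|}\left[\widehat q_\thv(S)-\widehat p(S)\right]^2\,.
\end{align}
Because the truncated operator of Eq.~\eqref{eq:MMD_truncated_operator} simply deletes the $l>k$ terms, the identical manipulation gives $\tilde\LC^{(\sigma,k)}_{\rm MMD}(\thv)=\sum_{|S|\le k}c_{|S|}[\widehat q_\thv(S)-\widehat p(S)]^2$.

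The final ingredient is the marginal observation. For any $S$ the parity $\prod_{i\in S}(-1)^{x_i}$ depends only on the bits indexed by $S$, so summing out the complementary bits gives $\widehat q_\thv(S)=\sum_{\xv_S}\qth(\xv_S)\prod_{i\in S}(-1)^{x_i}$, i.e.\ $\widehat q_\thv(S)$ is a function of the marginal $\qth(\xv_S)$ alone (and likewise $\widehat p(S)$ of $\pt(\xv_S)$). The hypothesis that $\qth(\xv_A)=\pt(\xv_A)$ for every $A$ with $|A|\le k$ therefore forces $\widehat q_\thv(S)=\widehat p(S)$ for all $S$ with $|S|\le k$. Every summand of $\tilde\LC^{(\sigma,k)}_{\rm MMD}(\thv)$ then vanishes and $\tilde\LC^{(\sigma,k)}_{\rm MMD}(\thv)=0$, while any mismatch on a higher-order marginal $B$ with $|B|>k$ lives entirely in modes $\widehat q_\thv(S)\neq\widehat p(S)$ with $|S|>k$ that the truncation has removed.

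There is no single hard obstacle here: once the kernel decomposition is in place the argument is essentially bookkeeping. The point requiring the most care is the exact correspondence between the appendix's normalized diagonal operators $D_{2l}$ (Appendix~\ref{app:mmd-k-body-observable}) and the subset-parity functions, and in particular verifying that truncating $O^{(\sigma)}_{\rm MMD}$ at bodyness $2k$ corresponds \emph{exactly} to restricting the sum of squares to $|S|\le k$ --- it is this exact (rather than approximate) correspondence that yields $\tilde\LC^{(\sigma,k)}_{\rm MMD}(\thv)=0$ rather than merely a small value.
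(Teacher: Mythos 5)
Your proof is correct and follows essentially the same route as the paper: the paper likewise expands the truncated observable in the Pauli basis to obtain $\tilde{\LC}^{(\sigma,k)}_{\rm MMD}(\thv)=\sum_{|A|\leq k}(1-p_\sigma)^{n-|A|}p_\sigma^{|A|}\left[\langle Z_A\rangle_{\thv}-\langle Z_A\rangle_{\tilde{p}}\right]^2$ and then notes that each parity expectation $\langle Z_A\rangle$ is determined by the marginal on $A$ alone. Your classical Fourier coefficients $\widehat{q}_{\thv}(S)$, $\widehat{p}(S)$ are exactly the paper's $\langle Z_S\rangle_{\thv}$, $\langle Z_S\rangle_{\tilde{p}}$, so the two arguments coincide up to notation.
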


In other words, if the MMD operator can be approximated well by a 
truncated operator with at most $2k$-body terms, model distributions that match the target distribution exactly up to $k$-body marginals or higher cannot be distinguished from ones that match fully. 
As an example of such distributions, consider the uniform distribution over the bitstrings $[001, 011, 101, 110]$, where the third bit is the bit-wise addition of the previous two bits. Using only second-order marginals, it is not possible to distinguish this correlated distribution from the uniform distribution over all eight possible outcomes. 

Notably, long-range correlations in the data can still be learned by the low-bodied MMD loss, just not ones that are particularly high-order\footnote{Note that this is in contrast to a loss composed purely of local terms which would be restricted to learning local/short-range correlations.}. 
Not all distributions will however exhibit such higher-order correlations and thus some distributions will be learnable using losses composed of low-body terms. Viewing this result through the lens of generalization to the underlying distribution, there are two opposing consequences that would need to be studied in future works. On the one hand, convergence to the systematically wrong low-order minima is likely to impede generalization. On the other hand, one could also imagine that being ignorant to very high order marginals of the training set could reduce overfitting and thus enhance generalization.

Proposition~\ref{prop:mmd_not_faithful} thus establishes that to fulfill the promise of quantum generative models, that is to be able to learn both long-range \textit{and} many-body correlations, one cannot use exclusively low-body losses. However, such a requirement is in immediate tension with the low-bodyness required for the trainability guarantees (see Theorem~\ref{thm:mmd-sigma}). In particular, in Proposition~\ref{prop:mmd-k-body} we show that for $\sigma\in\Theta(n)$ the contribution of $k\in\Theta(n)$ terms are exponentially small in $n$. Thus, although the loss is still strictly faithful given an infinite shot budget, with a reasonable shot budget we will not be able to resolve the contribution from the exponentially small high-body terms. Hence, there can be spurious minima that we cannot resolve from the true minimum and therefore for all practical purposes the loss is effectively not truly faithful. 

One approach to resolving this tension would be to adapt the initial value of $\sigma$ from $\Theta(n)$, where the loss exhibits large gradients but predominantly learns low-order marginals,  towards $\OC(1)$ to also learn high-order correlations as the model improves. This is in line with studies from the classical ML literature showing that bandwidths for optimal MMD performance are oftentimes smaller than the so-called median heuristic~\cite{Gretton2012optimal,sutherland2016generative,garreau2017large}, which coincides with our result of $\sigma\in\Theta(n)$.
Another approach, which is also already employed in classical ML literature, is to use a kernel that averages the effects of several $\sigma$~\cite{li2017mmd,wang2018improving,Li2018domain}. That is, the kernel is taken to be
\begin{equation}\label{eq:average_sigmas}
    K_\textbf{c}(\xv, \yv) = \frac{1}{|\textbf{c}|} \sum_{i\in \textbf{c}} K_{\sigma_i}(\xv, \yv) \equiv  \sum_{l=1}^k  \langle w_{\sigma}(l) \rangle_{\textbf{c}} D_{2l}
\end{equation}
for a set of bandwidths $\textbf{c}=\{\sigma_1, \sigma_2, ... \}$. The resulting weight of each $2k-$body term of the new MMD observable is an average of the weightings corresponding to each $\sigma_i$ in $\mathbf{c}$ as shown in Fig.~\ref{fig:weight_mmd_operator}. Theorem~\ref{thm:mmd-sigma} shows that for a QCBM without inductive bias to not fall prey to exponential concentration, at least one of the $\sigma_i$ needs to be $\Theta(n)$. But the results of Proposition~\ref{prop:mmd_not_faithful} suggest that for data sets exhibiting high-order correlations a small bandwidth $\sigma_i\in\OC(1)$ is required for correct convergence. It stands to reason that the optimal set $\textbf{c}$ contains a spectrum of bandwidths that both enable trainability and faithful convergence to the target distribution (as sketched in Fig.~\ref{fig:weight_mmd_operator}c)). 
How successful this strategy is in practice remains to be determined.
\medskip

\paragraph*{Broader Implications.}

Our work highlights that one can treat classical machine learning losses as quantum observables to study their properties. This implies that our results transfer to other types of quantum generative models beyond the QCBM that will also be affected by the fundamental limitations described by Proposition~\ref{prop:mmd_not_faithful}. In fact, we show in Supplementary Note~\ref{app:faithful_arbitrary} that any generative modelling loss function for classical data that can be brought into the form $\mathcal{L}(\thv) = \text{Tr}[\mathcal{M}\rho_{\thv}]$, with a diagonal measurement operator $\mathcal{M}$, faces the same tension described above. That is, if $\mathcal{M}$ contains at most $k$-body terms in the Pauli basis representation, then the loss cannot distinguish two distributions that agree on all $k$-order marginals but disagree on higher-order marginals. Thus losses composed exclusively of local terms (with the conventional measurement strategy) cannot be used in generative modelling to learn complex higher-order correlations.

With a little thought it becomes clear that an exclusively global loss is also undesirable. Not only do such losses exhibit exponential concentration for unstructured circuits, they will also in general possess spurious minima in virtue of only probing global properties of the distribution (i.e. the average global parity), as shown in Fig~\ref{fig:weight_mmd_operator}.  Instead we advocate using \textit{full}-body losses which contain both low and high-body terms, such as those obtained by averaging in Fig.~\ref{fig:weight_mmd_operator}. Even then, global contributions cannot be vanishingly small or else they will not be possible to resolve with a realistic shot budget. 

For another example, one may aim to train a quantum generative model using a QGAN framework, where a Discriminator $D$ provides a score $D(\xv)$ to every sample. The corresponding operator can then be written as $\mathcal{M} = \sum_{\xv} D(\xv) |\xv\rangle\langle\xv|$. The Discriminator may have to initially implement an effectively low-bodied operator to facilitate initial gradients, but later in training become higher-bodied to learn global features. That is not to say that the Discriminator should only classify marginals of the bitstring such as in Ref.~\cite{leadbeater2021fdivergence}. Rather, the architecture and initialization should be such that the operator $\mathcal{M}$ in the Pauli basis initially contains low-body terms but can include high-body terms during convergence.
Interestingly, the interpolation from trainable to faithful could be naturally full-filled during training when the Generator and Discriminator are optimized in tandem.

Fine tuning the interplay between the loss function gradients, density of local minima and the faithfulness of a generative loss is beyond the scope of this work, but is an important direction for future research. We especially emphasize the necessity to evaluate the implications of our results on models and datasets of practical relevance. In Section~\ref{sec:train-hea-dataset} we take steps in this direction by investigating training a QCBM to model real data from the HEP domain.

\subsubsection{Quantum strategies: quantum fidelity}\label{sec:quantumfid}

\begin{figure*}
    \centering
    \includegraphics[width=0.99\linewidth]{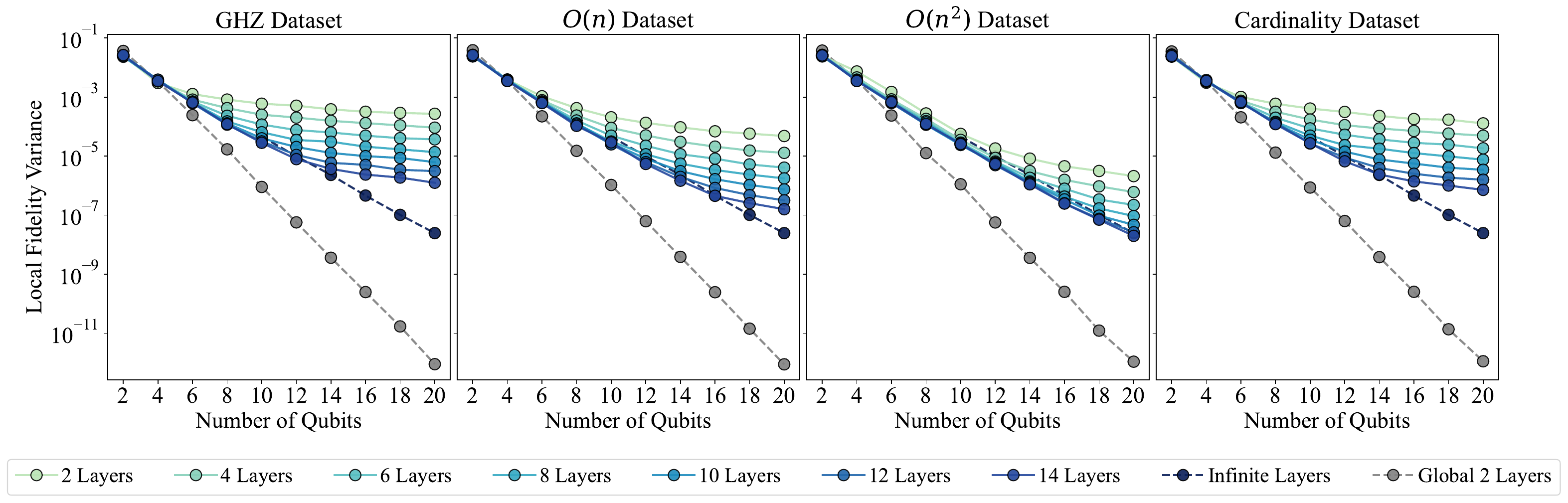}
    \caption{\textbf{Study of loss concentration with the local quantum fidelity loss function.} Numerical evidence that the local quantum fidelity loss function does not exhibit global or explicit loss function barren plateaus. It does however exhibit expressivity-induced barren plateaus with deeper and deeper circuits, as it is the case for all generic VQA-type loss functions in the form of Eq.~\eqref{eq:VQEcost}. In contrast, the global quantum fidelity variance decays exponentially at all circuit depths. The numerical setup is the same as for the MMD in Fig.~\ref{fig:mmd_variance}, and the infinite layers results were generated by drawing wavefunctions from the Porter-Thomas distribution~\cite{porter1956fluctuations}}.
    \label{fig:fidelity_variance}
\end{figure*}

While the classical fidelity in Eq.~\eqref{eq:classical_fidelity} is an explicit cost function, the quantum fidelity, defined in Eq.~\eqref{eq:quantumfid}, allows for a simple known quantum estimation strategy. Key to the quantum fidelity loss is to interpret the training distribution as a target state $\ket{\phi} = \sum_{\xv} \sqrt{\pt (\xv)} \ket{\xv}$. The QCBM model loss can then be rewritten as the expectation of an observable, e.g. in the form of Eq.~\eqref{eq:VQEcost}, with $\rho = |\phi \rangle\langle \phi |  $ and $O = | \vec{0}\rangle\langle \vec{0} |$ being the all-zero projective measurement. 
Crucially, as $O = | \vec{0}\rangle\langle \vec{0} |$ is a global projector, the quantum fidelity is subject to a globality-induced barren plateaus~\cite{cerezo2020cost} and the loss exponentially concentrates towards one~\cite{arrasmith2021equivalence}. That is, we have
\begin{align}
    \Var_{\thv}[ \LC_{QF}(\thv)] \in \OC(1/b^n) \;.
\end{align}
This global-measurement-induced barren plateau can however be avoided by localising $\LC_{QF}(\thv)$. That is, we replace the global projective measurement $|0 \rangle \langle 0|$ with its local version $H_L = \frac{1}{n} \sum_{i = 1}^n \ket{0_i}\bra{0_i} \otimes \id_{\bar{i}} $, where $\bar{i}$ indicates all qubits except qubit $i$. The new localised version of the quantum fidelity loss is given by
\begin{align}\label{eq:local-quantum-fidelity}
    \LC_{QF}^{(L)}(\thv) = 1 - \bra{\phi} U(\thv) H_L U^{\dagger}(\thv) \ket{\phi} \;.
\end{align}
This local loss is faithful to its global variant for product state training in the sense that it vanishes under the same conditions~\cite{khatri2019quantum}, i.e. when the QCBM distribution matches the data distribution exactly. However, it enjoys trainability guarantees via the results of Ref.~\cite{cerezo2020cost}. This implies that, unlike the MMD and other classical losses that utilize the conventional measurement strategy, the local quantum fidelity can effectively distinguish between high-order marginals even at $k=1$ bodyness. 
However, although the local loss function can evade global measurement-induced BPs, it still suffers under BPs from other sources, such as expressibility or noise. Additionally, it is not yet explored how practical a fidelity loss is for the purposes of generalizing from training data.

Fig.~\ref{fig:fidelity_variance} depicts numerical variance results for the fidelity loss on a range of datasets, circuit depths and numbers of qubits. For all datasets, the local quantum fidelity exhibits only polynomially decaying variance over random parameters when the quantum circuits are not too deep. As a reference, we additionally depict the global quantum fidelity which exponentially decays for all circuit depths.

\medskip

The challenge now becomes how to estimate $\LC_{QF}^{(L)}(\thv)$ using measurements from the quantum computer. The seemingly straight-forward approach is to prepare the initial state $\ket{\phi}$, evolve it under $U^\dagger (\thv)$, and then evaluate the observable defined by $H_L$ through measurements in the computational basis. However, loading classical data into a quantum state $\ket{\phi}$ is not expected to be feasible in general. In Supplementary Note~\ref{ap:localcost}, we propose an approach that can be used to estimate $\LC_{QF}^{(L)}(\thv)$ using a series of Hadamard tests without needing to prepare $\ket{\phi}$. We note that, while in theory our approach requires a number of Hadamard tests that scales with the amount training data, we expect stochastic techniques, such as stochastic gradient descent~\cite{sweke2020stochastic}, to be sufficient in practice.

\medskip

\paragraph*{Broader Implications.}

In this section we have presented one example of a quantum strategy to measure a fidelity-based loss for quantum generative modelling. While this approach puts more load on the quantum computer as compared to losses employing the conventional measurement strategy, it enjoys simultaneous trainability and faithfulness to the target distribution.

An interesting extension would be to explore other quantum approaches for efficiently training QML models.
One could for example attempt to compute the KL divergence or other explicit losses directly on the quantum computer.
Although the implementation of non-linear operations on quantum computers has been demonstrated in Ref.~\cite{holmes2023nonlinear, gilyen2019quantum, martyn2021grand}, we are not yet aware of quantum strategies beyond one related demonstration for the Rènyi divergence in Ref.~\cite{kieferova2021quantum}. One alternative approach would be to attempt to indirectly turn the QCBM into an explicit generative model by estimating its probabilities using amplitude amplification or other techniques. As discussed in Section~\ref{sec:exactvariance}, with access to exact probabilities the KL divergence can provably avoid barren plateaus even with unstructured circuits for certain target distributions.

\subsection{Training on a HEP dataset}\label{sec:train-hea-dataset}

\begin{figure*}
    \centering
    \includegraphics[scale=0.5]{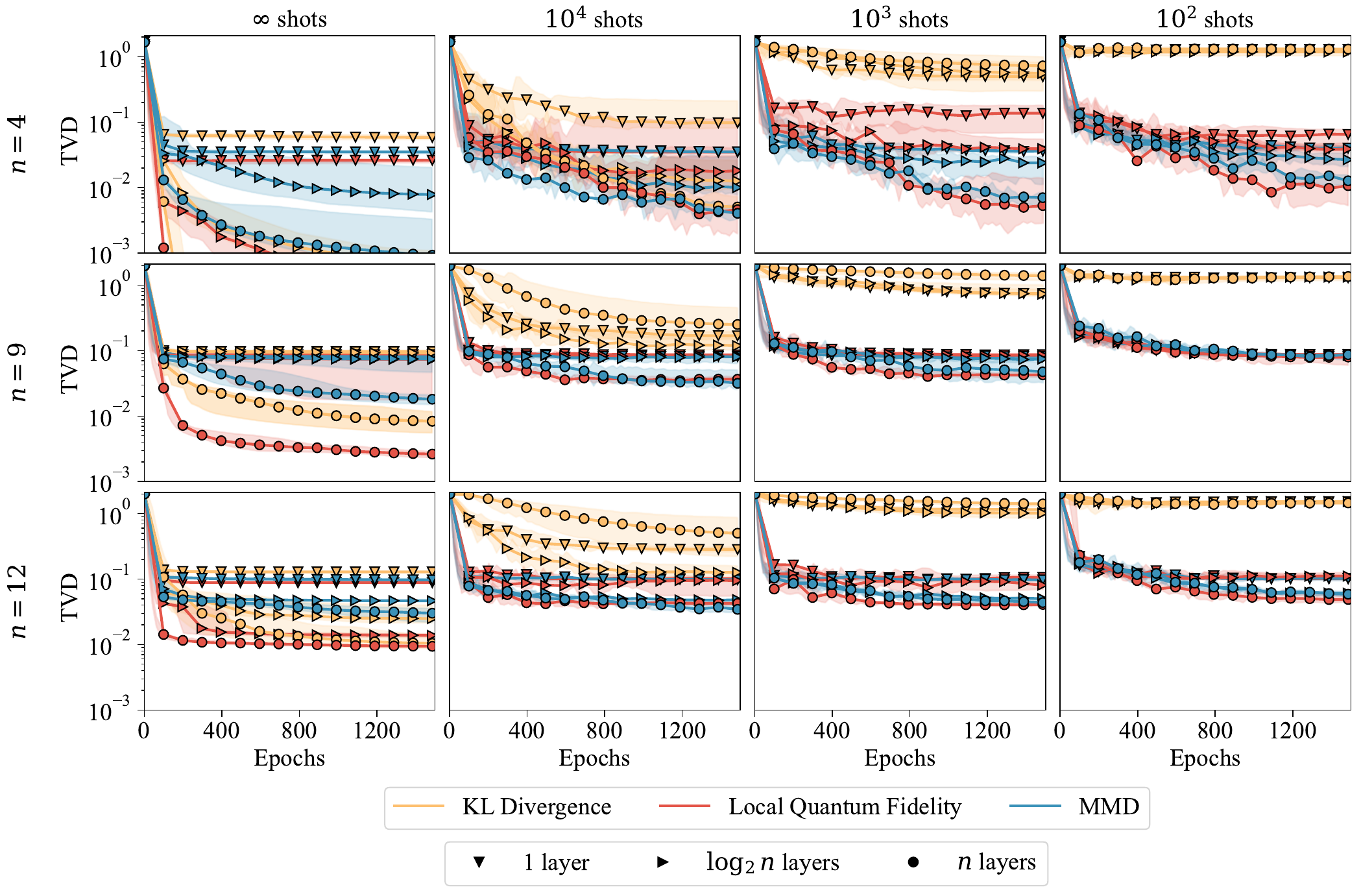}
    \caption{\textbf{Finite-shot comparison of loss functions.} TVD, computed with infinite statistics, on the training curve of the QCBMs with varying number of qubits $n=4$, $n=9$ and $n=12$ (rows) and layers (symbols), where the gradients are computed with different number of shots (columns) for different loss function (colours).}
    \label{fig:TV}
\end{figure*}

\begin{figure*}
    \centering
    \includegraphics[scale=0.55]{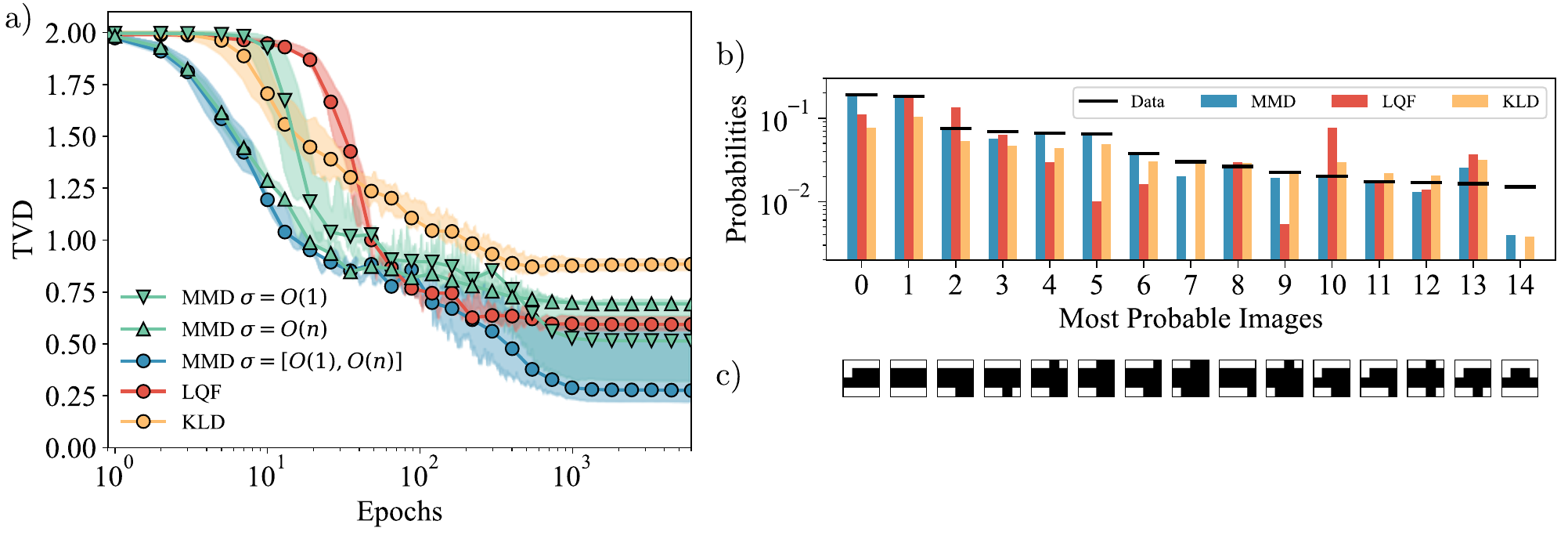}
    \caption{\textbf{16 qubit QCBM training.} a) Exact TVD computed during training of $n=16$ QCBMs with $\log_2{n}=4$ layers and 100 shots per loss evaluation. We report median TVD values and 25\% to 75\% percentiles for the MMD, LQF and KLD loss functions. For the MMD, bandwidths $\sigma = 0.01, n/4$, and $n$ are used to both showcase improved trainability of large $\sigma$ and improved convergence of small $\sigma$. b) Histograms of the trained QCBMs on the 15 most probable images, which are shown in panel c). The black lines denote the training dataset probabilities.}
    \label{fig:16}
\end{figure*}
In this section, we perform realistic training of QCBMs on a more practical dataset which is derived from HEP colliders experiments. We compare the implicit cost functions MMD and local quantum fidelity (LQF) with the explicit KL divergence for an increasing number of circuit depth $L$ and the number of qubits $n$, and across several measurement budgets. To summarize our results, we observe that the presence of shot noise causes the training with KLD to fail, while the MMD and LQF both hold up significantly better. 

\medskip

\paragraph*{Dataset.}
We consider a dataset consisting of energy depositions in an electromagnetic calorimeter (ECAL)~\cite{ECAT_data}. The data was generated using a Monte Carlo approach (theGeant4 toolkit~\cite{Geant4}), which accurately describes the ECAL detector behaviour under a typically proton- proton collision at a LHC experiment. The dataset consists of the energy deposition on a $25 \times 25 \times 25$ grid, that we downsized to a two dimensional grid of various sizes. The images are converted to a black and white scale by considering the pixel `hit' if the energy deposition exceeds a certain threshold, which is chosen as one tenth of the mean energy deposit.  We map each pixel to a qubit and take the state $|1\rangle$ to represent a hit. This dataset naturally has a polynomial support and thus is precisely the type of dataset that we might hope to learn using quantum machine learning.

\medskip

\paragraph*{Training.} 
We use a parametrized quantum circuit of the form 
\begin{equation}
\label{eq:ansatz}
    U(\thv) = \left[\prod_{l=1}^{L} \text{R}(\thv_l) W_l(\vec{\alpha}_l)\right] \text{R}(\thv_0) \; ,
\end{equation}
where R$(\thv_l)$ is a layer of arbitrary single qubit unitaries that can be parameterised using $3n$ Euler angles, $W(\vec{\alpha}_l) := \prod_{i=1}^{n-1}\text{CX}_{i,i+1}\text{RY}_i(\alpha_l^i)\text{CX}_{i,i+1}$ acts as parametrized entangling gate with CX$_{i,j}$ a CNOT gate between qubits $i$ and $j$ and RY$_i(\alpha_l^i)$ a single qubit rotation of qubit $i$ around the $y$-axis, and the parameters $\thv =  \{\thv_l,\vec{\alpha}_l\}$. We use the total variation distance (TVD), see Eq.~\eqref{eq:loss_TV}, as a common metric to assess the performance of each loss function. To verify performance accurately, we compute the TVD using exact simulation. The gradients for each loss are computed using the parameter shift rule~\cite{quantum_grad} which provides estimates of the analytical gradient, and the parameters are updated with the ADAM~\cite{adam} optimizer with a decaying learning rate lr$(t)$ = $\max{(0.01e^{-\beta t},10^{-5})}$, where $t$ is the  optimization step and $\beta=0.005$ the decaying rate. The computation of the KLD is stabilised using a regulariser of $\epsilon = 10^{-6}$, which has been tuned through trial and error. To follow best-practices with the MMD, we average the gradient estimation over several different bandwidths 
\begin{equation}
    \vec{\sigma} = \begin{pmatrix} 0.01, &0.1,&0.25,&0.5,&1,&10\end{pmatrix} n \;,
\end{equation}
which incurs no additional quantum resources. This makes the loss full-bodied and thus keeps the model trainable while aiding convergence. We note that these are likely not the optimal bandwidths to average over but it demonstrates the best-practice approach.

\medskip

\paragraph*{Results.}

Fig.~\ref{fig:TV} shows the TVD, computed with infinite statistics, on the training curve of the QCBMs with varying number of qubits $n\in \{4,9,12\}$ (rows) and layers (symbols), where the gradients are computed with different number of shots (columns) for different loss function (colours). The lines denote the median over ten random parameter initialisation while the shaded area denotes the $25\%$ to $75\%$ percentile. We observe that the performance of the KLD quickly deteriorates as the number of shots is reduced while the MMD and local quantum fidelity (LQF) remain more stable. We further observe that increasing the expressivity of the QCBM from $\log_2{n}$ to $n$ layers does not lead to a significant increase in performance for a low number of shots.

To demonstrate the scalability to larger systems, we also train a $n=16$ QCBM with $\log_2{n}=4$ layers and 100 shots per function evaluation. The quantum circuit has a linear entangling topology and the initial parameters are chosen uniformly at random. We highlight that the downsized data at $n=16$ is structurally different to the one in Fig.~\ref{fig:TV}, which can yield quantitatively different results. In panel~a) of Fig.~\ref{fig:16}, we depict the median and $25\%$ to $75\%$ percentiles for the KLD and LQF over 50 and 10 random repetitions respectively, whereas for the MMD we use 6 random repetitions per line. We also compare the MMD performance for different bandwidths $\sigma = 0.01$, $\sigma = n/4$, and a kernel averaging $\sigma = [0.01, n/4, n]$. We indeed see that large $\sigma$ show improved initial trainability and small $\sigma$ improved convergence. In panel~b) we show the probability histograms of the 15 most occurring images in the dataset, as well as the final respective model probabilities. The corresponding $4\times 4$ pixel images are displayed at the bottom panel~c). 

In this 16-qubit example, it appears that the LQF is no longer performing on-par with the MMD, as was the case in Fig.~\ref{fig:TV} for smaller system sizes. A possible explanation is that one chooses all relative phases in the data state $|\phi\rangle = \sum_{\xv} \sqrt{\pt(\xv)}|\xv\rangle$, which strongly reduces the number of wavefunctions that minimize the LQF loss even though they may produce the desired measurement distribution. This may not only produce less solutions, it also enforces that the ansatz needs to be able to express exactly that state. While this could be leveraged using specialized real-valued ans\"{a}tze, this is not attempted here. We conclude that the practical properties of the LQF loss as compared to implicit losses using the conventional measurement strategy are still to be studied in more detail.

To emphasise the importance of the size of the support, in Supplementary Note~\ref{app:training} we also consider an exponential version of the dataset, by using a negative logarithm transformation.
We find in this case that the KLD does not suffer from exponential concentration and can be trained. This explains the successes previously observed for training QCBMs using the KLD for small scale problems. However, as the amount of classical training data cannot scale exponentially these successes are not relevant to larger, non-classically simulable, problems.

\section{Discussion}

In this work, we have introduced the notion of explicit and implicit losses, which broadly reflect the capabilities of explicit and implicit generative models~\cite{mohamed2016learning}. We argue that these concepts provide a useful framework to understand the trainability of quantum generative models. In particular, we argue that the mismatch between the indirect access to probabilities provided by implicit models with the explicit probabilities required by explicit losses renders implicit models untrainable via explicit losses. More concretely, focusing our attention on quantum circuit Born machines as a commonly used implicit model, we prove that pairwise explicit losses exponentially concentrate (Theorem~\ref{thm:explicit-loss}). This result prohibits efficient training using a large class of commonly-used losses including the KL divergence, JS divergence and the total variation distance. Such losses may however be usable with explicit ``quantum'' generative models such as tensor network Born machines~\cite{Han2018TNBM,Cheng2019TTNBM,vieijra2022PEPSBM}.

Crucially, our results assume access to a polynomial (in the number of qubits) number of training data samples and measurements from the quantum circuit.
With only moderate numbers of qubits, this assumption is unnecessary and explicit losses such as the KL divergence may appear to be trainable (see e.g. Refs.~\cite{Hamilton2018,leyton2019robust,coyle2020generativeFinance, Zhu2018,rudolph2022generation}).
More generally, if we restrict the number of qubits used to classically simulable sizes, this assumption can be lifted and one could use quantum generative models purely for their efficient sampling capabilities. However, to harness the full potential of quantum generative modelling one surely wants to push to non-classically tractable problem sizes, at which point this assumption is essential. 
For example, even with only 50 qubits, access to $\sim 2^{50} \approx 10^{15}$ training samples or quantum measurements is unrealistic.

While formulated initially for random quantum circuits, the intuition underlying Theorem~\ref{thm:explicit-loss} suggests our no-go result extends to scenarios where the implicit generative model's measurement distribution only has polynomial support (e.g., near-identity initialization of the circuit~\cite{grant2019initialization}), as well as beyond the pairwise explicit form of the explicit loss. One exception may be if the quantum generative model has a strong inductive bias. Hence our work further motivates the search for new methods for constructing parameterised circuits with strong inductive biases (e.g. via warm starts~\cite{sauvage2021flip, liu2023mitigating, cheng2022clifford, mitarai2022quadratic, rudolph2022synergy} or incorporating symmetry constraints~\cite{larocca2022group,schatzki2022theoretical, nguyen2022atheory, ragone2022representation, meyer2022exploiting}).

In contrast to explicit losses, implicit losses are naturally suited to training implicit models. Within this line of thought, we have identified the MMD loss with a Gaussian kernel as a promising implicit loss for training QCBMs. We show that this loss can be interpreted as the expectation value of a quantum observable, where crucially the properties of the observable depend on the bandwidth parameter $\sigma$. 
In the common case where $\sigma$ is independent of the system size, $\sigma \in \OC(1)$, the observable becomes predominantly global and thus exponentially concentrates.
Conversely, when $\sigma$ scales linearly with the system size, $\sigma \in \Theta(n)$, the low-body interaction terms in the observable are largely dominant over the global terms and hence exhibits large gradients. We further use these insights to derive a rigorous polynomially lower bound on the MMD loss variance for a wide range of structured and unstructured circuits with small effective light cones.

Our main results for explicit and implicit losses assume the conventional strategy for estimating a generative loss function from an implicit model, where the model provides a set of samples in the computational basis, which are then used to estimate the loss in conjunction with the training data samples. While this is the standard classical strategy, quantum generative models can employ alternative quantum strategies by leveraging quantum computing power.
As an example, we propose the local quantum fidelity as a trainable loss function for generative modelling. Developing alternative quantum strategies for training quantum generative models is an interesting avenue for future research. A natural candidate might be, as suggested in Ref.~\cite{coyle2020born}, to implement the MMD loss with a \textit{quantum} kernel, where the kernel values themselves are estimated using quantum computers. While one could hope for a potential quantum advantage with this approach (especially when training on quantum data), there is the additional challenge that quantum kernels without inductive bias tend to exponentially concentrate~\cite{thanasilp2022exponential}.

To put our conclusions to the test, we studied how these loss functions perform in more practical scenarios with data derived from High Energy Physics experiments at the LHC. This dataset naturally satisfies our assumptions of a polynomial number of data samples at all system sizes. Our training results are found to be consistent with our theoretical predictions in which both the MMD and quantum fidelity losses significantly outperform the KLD loss when a strict measurement budget is employed.

Finally, while our work addresses the question of whether a given loss exhibits non-exponentially vanishing gradients, we stress that this is just one ingredient among many to ensure the success of quantum generative modelling. Of particular importance is the observation that models with local losses will generally struggle to learn global correlations due to the function's inability to distinguish high-order features in the data. Hence we advocate using \textit{full}-body losses, which contain both low and high-body terms, for quantum generative modelling. More broadly, the ability of a model to successfully generalize will also presumably depend on the choice in loss, but this is beyond the scope of this work. Nonetheless, ensuring non-vanishing loss gradients and ensuring faithfulness of the loss function are critical steps since failing here precludes the successful training and generalization of quantum generative models. Hence, our work constitutes an important first step to understanding of the barriers that need to be overcome to achieve a quantum advantage in generative modelling. 

\section{Data Availability}
Data generated and analyzed during the current study are available from the corresponding author upon reasonable request.

\section{Code Availability}
Code used for the current study is available from the corresponding author upon reasonable request.

\section{Acknowledgments}
    The authors would like to acknowledge Christa Zoufal, Samson Wang and Marco Cerezo for helpful discussions.
    ST and ZH acknowledge support from the Sandoz Family Foundation-Monique de Meuron
program for Academic Promotion. OK, SV and MG are supported by CERN through the CERN Quantum Technology Initiative.

\section{Author Contributions}
M.R., S.L. and S.T. contributed equally to this work. The project was conceived by M.R., S.T., and Z.H. Theoretical results were proved by M.R., S.L., S.T., O.S., and Z.H. Numerical implementations were performed by M.R. and O.K. The practical application of QCBMs was conceived and guided by S.V. and M.G. All authors contributed to the scientific discussions. The manuscript was written by all authors.

\section{Competing Interests}
The authors declare no competing interests.

\bibliography{quantum.bib,cern.bib}

\begin{thebibliography}{125}%
\makeatletter
\providecommand \@ifxundefined [1]{%
 \@ifx{#1\undefined}
}%
\providecommand \@ifnum [1]{%
 \ifnum #1\expandafter \@firstoftwo
 \else \expandafter \@secondoftwo
 \fi
}%
\providecommand \@ifx [1]{%
 \ifx #1\expandafter \@firstoftwo
 \else \expandafter \@secondoftwo
 \fi
}%
\providecommand \natexlab [1]{#1}%
\providecommand \enquote  [1]{``#1''}%
\providecommand \bibnamefont  [1]{#1}%
\providecommand \bibfnamefont [1]{#1}%
\providecommand \citenamefont [1]{#1}%
\providecommand \href@noop [0]{\@secondoftwo}%
\providecommand \href [0]{\begingroup \@sanitize@url \@href}%
\providecommand \@href[1]{\@@startlink{#1}\@@href}%
\providecommand \@@href[1]{\endgroup#1\@@endlink}%
\providecommand \@sanitize@url [0]{\catcode `\\12\catcode `\$12\catcode
  `\&12\catcode `\#12\catcode `\^12\catcode `\_12\catcode `\%12\relax}%
\providecommand \@@startlink[1]{}%
\providecommand \@@endlink[0]{}%
\providecommand \url  [0]{\begingroup\@sanitize@url \@url }%
\providecommand \@url [1]{\endgroup\@href {#1}{\urlprefix }}%
\providecommand \urlprefix  [0]{URL }%
\providecommand \Eprint [0]{\href }%
\providecommand \doibase [0]{http://dx.doi.org/}%
\providecommand \selectlanguage [0]{\@gobble}%
\providecommand \bibinfo  [0]{\@secondoftwo}%
\providecommand \bibfield  [0]{\@secondoftwo}%
\providecommand \translation [1]{[#1]}%
\providecommand \BibitemOpen [0]{}%
\providecommand \bibitemStop [0]{}%
\providecommand \bibitemNoStop [0]{.\EOS\space}%
\providecommand \EOS [0]{\spacefactor3000\relax}%
\providecommand \BibitemShut  [1]{\csname bibitem#1\endcsname}%
\let\auto@bib@innerbib\@empty
\bibitem [{\citenamefont {Harrow}\ and\ \citenamefont
  {Montanaro}(2017)}]{harrow2017quantum}%
  \BibitemOpen
  \bibfield  {author} {\bibinfo {author} {\bibfnamefont {Aram~W}\ \bibnamefont
  {Harrow}}\ and\ \bibinfo {author} {\bibfnamefont {Ashley}\ \bibnamefont
  {Montanaro}},\ }\bibfield  {title} {\enquote {\bibinfo {title} {Quantum
  computational supremacy},}\ }\href {\doibase 10.1038/nature23458} {\bibfield
  {journal} {\bibinfo  {journal} {Nature}\ }\textbf {\bibinfo {volume} {549}},\
  \bibinfo {pages} {203--209} (\bibinfo {year} {2017})}\BibitemShut {NoStop}%
\bibitem [{\citenamefont {Biamonte}\ \emph {et~al.}(2017)\citenamefont
  {Biamonte}, \citenamefont {Wittek}, \citenamefont {Pancotti}, \citenamefont
  {Rebentrost}, \citenamefont {Wiebe},\ and\ \citenamefont
  {Lloyd}}]{QML_Lloyd}%
  \BibitemOpen
  \bibfield  {author} {\bibinfo {author} {\bibfnamefont {Jacob}\ \bibnamefont
  {Biamonte}}, \bibinfo {author} {\bibfnamefont {Peter}\ \bibnamefont
  {Wittek}}, \bibinfo {author} {\bibfnamefont {Nicola}\ \bibnamefont
  {Pancotti}}, \bibinfo {author} {\bibfnamefont {Patrick}\ \bibnamefont
  {Rebentrost}}, \bibinfo {author} {\bibfnamefont {Nathan}\ \bibnamefont
  {Wiebe}}, \ and\ \bibinfo {author} {\bibfnamefont {Seth}\ \bibnamefont
  {Lloyd}},\ }\bibfield  {title} {\enquote {\bibinfo {title} {Quantum machine
  learning},}\ }\href {\doibase https://doi.org/10.1038/nature23474} {\bibfield
   {journal} {\bibinfo  {journal} {Nature}\ }\textbf {\bibinfo {volume}
  {549}},\ \bibinfo {pages} {195--202} (\bibinfo {year} {2017})}\BibitemShut
  {NoStop}%
\bibitem [{\citenamefont {{Huang}}\ \emph {et~al.}(2022)\citenamefont
  {{Huang}}, \citenamefont {{Broughton}}, \citenamefont {{Cotler}},
  \citenamefont {{Chen}}, \citenamefont {{Li}}, \citenamefont {{Mohseni}},
  \citenamefont {{Neven}}, \citenamefont {{Babbush}}, \citenamefont {{Kueng}},
  \citenamefont {{Preskill}},\ and\ \citenamefont
  {{McClean}}}]{huang2021quantum}%
  \BibitemOpen
  \bibfield  {author} {\bibinfo {author} {\bibfnamefont {Hsin-Yuan}\
  \bibnamefont {{Huang}}}, \bibinfo {author} {\bibfnamefont {Michael}\
  \bibnamefont {{Broughton}}}, \bibinfo {author} {\bibfnamefont {Jordan}\
  \bibnamefont {{Cotler}}}, \bibinfo {author} {\bibfnamefont {Sitan}\
  \bibnamefont {{Chen}}}, \bibinfo {author} {\bibfnamefont {Jerry}\
  \bibnamefont {{Li}}}, \bibinfo {author} {\bibfnamefont {Masoud}\ \bibnamefont
  {{Mohseni}}}, \bibinfo {author} {\bibfnamefont {Hartmut}\ \bibnamefont
  {{Neven}}}, \bibinfo {author} {\bibfnamefont {Ryan}\ \bibnamefont
  {{Babbush}}}, \bibinfo {author} {\bibfnamefont {Richard}\ \bibnamefont
  {{Kueng}}}, \bibinfo {author} {\bibfnamefont {John}\ \bibnamefont
  {{Preskill}}}, \ and\ \bibinfo {author} {\bibfnamefont {Jarrod~R.}\
  \bibnamefont {{McClean}}},\ }\bibfield  {title} {\enquote {\bibinfo {title}
  {Quantum advantage in learning from experiments},}\ }\href {\doibase
  10.1126/science.abn7293} {\bibfield  {journal} {\bibinfo  {journal}
  {Science}\ }\textbf {\bibinfo {volume} {376}},\ \bibinfo {pages} {1182--1186}
  (\bibinfo {year} {2022})}\BibitemShut {NoStop}%
\bibitem [{\citenamefont {Daley}\ \emph {et~al.}(2022)\citenamefont {Daley},
  \citenamefont {Bloch}, \citenamefont {Kokail}, \citenamefont {Flannigan},
  \citenamefont {Pearson}, \citenamefont {Troyer},\ and\ \citenamefont
  {Zoller}}]{daley2022practical}%
  \BibitemOpen
  \bibfield  {author} {\bibinfo {author} {\bibfnamefont {Andrew~J}\
  \bibnamefont {Daley}}, \bibinfo {author} {\bibfnamefont {Immanuel}\
  \bibnamefont {Bloch}}, \bibinfo {author} {\bibfnamefont {Christian}\
  \bibnamefont {Kokail}}, \bibinfo {author} {\bibfnamefont {Stuart}\
  \bibnamefont {Flannigan}}, \bibinfo {author} {\bibfnamefont {Natalie}\
  \bibnamefont {Pearson}}, \bibinfo {author} {\bibfnamefont {Matthias}\
  \bibnamefont {Troyer}}, \ and\ \bibinfo {author} {\bibfnamefont {Peter}\
  \bibnamefont {Zoller}},\ }\bibfield  {title} {\enquote {\bibinfo {title}
  {Practical quantum advantage in quantum simulation},}\ }\href
  {https://www.nature.com/articles/s41586-022-04940-6} {\bibfield  {journal}
  {\bibinfo  {journal} {Nature}\ }\textbf {\bibinfo {volume} {607}},\ \bibinfo
  {pages} {667--676} (\bibinfo {year} {2022})}\BibitemShut {NoStop}%
\bibitem [{\citenamefont {Preskill}(2018)}]{preskill2018quantum}%
  \BibitemOpen
  \bibfield  {author} {\bibinfo {author} {\bibfnamefont {John}\ \bibnamefont
  {Preskill}},\ }\bibfield  {title} {\enquote {\bibinfo {title} {Quantum
  computing in the {NISQ} era and beyond},}\ }\href {\doibase
  10.22331/q-2018-08-06-79} {\bibfield  {journal} {\bibinfo  {journal}
  {Quantum}\ }\textbf {\bibinfo {volume} {2}},\ \bibinfo {pages} {79} (\bibinfo
  {year} {2018})}\BibitemShut {NoStop}%
\bibitem [{\citenamefont {Harrow}\ \emph {et~al.}(2009)\citenamefont {Harrow},
  \citenamefont {Hassidim},\ and\ \citenamefont {Lloyd}}]{harrow2009quantum}%
  \BibitemOpen
  \bibfield  {author} {\bibinfo {author} {\bibfnamefont {Aram~W}\ \bibnamefont
  {Harrow}}, \bibinfo {author} {\bibfnamefont {Avinatan}\ \bibnamefont
  {Hassidim}}, \ and\ \bibinfo {author} {\bibfnamefont {Seth}\ \bibnamefont
  {Lloyd}},\ }\bibfield  {title} {\enquote {\bibinfo {title} {Quantum algorithm
  for linear systems of equations},}\ }\href {\doibase
  10.1103/PhysRevLett.103.150502} {\bibfield  {journal} {\bibinfo  {journal}
  {Physical Review Letters}\ }\textbf {\bibinfo {volume} {103}},\ \bibinfo
  {pages} {150502} (\bibinfo {year} {2009})}\BibitemShut {NoStop}%
\bibitem [{\citenamefont {Lloyd}\ \emph {et~al.}(2014)\citenamefont {Lloyd},
  \citenamefont {Mohseni},\ and\ \citenamefont
  {Rebentrost}}]{lloyd2014quantum}%
  \BibitemOpen
  \bibfield  {author} {\bibinfo {author} {\bibfnamefont {Seth}\ \bibnamefont
  {Lloyd}}, \bibinfo {author} {\bibfnamefont {Masoud}\ \bibnamefont {Mohseni}},
  \ and\ \bibinfo {author} {\bibfnamefont {Patrick}\ \bibnamefont
  {Rebentrost}},\ }\bibfield  {title} {\enquote {\bibinfo {title} {Quantum
  principal component analysis},}\ }\href {\doibase
  https://doi.org/10.1038/nphys3029} {\bibfield  {journal} {\bibinfo  {journal}
  {Nature Physics}\ }\textbf {\bibinfo {volume} {10}},\ \bibinfo {pages}
  {631--633} (\bibinfo {year} {2014})}\BibitemShut {NoStop}%
\bibitem [{\citenamefont {Huang}\ \emph {et~al.}(2021)\citenamefont {Huang},
  \citenamefont {Broughton}, \citenamefont {Mohseni}, \citenamefont {Babbush},
  \citenamefont {Boixo}, \citenamefont {Neven},\ and\ \citenamefont
  {McClean}}]{huang2021power}%
  \BibitemOpen
  \bibfield  {author} {\bibinfo {author} {\bibfnamefont {Hsin-Yuan}\
  \bibnamefont {Huang}}, \bibinfo {author} {\bibfnamefont {Michael}\
  \bibnamefont {Broughton}}, \bibinfo {author} {\bibfnamefont {Masoud}\
  \bibnamefont {Mohseni}}, \bibinfo {author} {\bibfnamefont {Ryan}\
  \bibnamefont {Babbush}}, \bibinfo {author} {\bibfnamefont {Sergio}\
  \bibnamefont {Boixo}}, \bibinfo {author} {\bibfnamefont {Hartmut}\
  \bibnamefont {Neven}}, \ and\ \bibinfo {author} {\bibfnamefont {Jarrod~R}\
  \bibnamefont {McClean}},\ }\bibfield  {title} {\enquote {\bibinfo {title}
  {Power of data in quantum machine learning},}\ }\href {\doibase
  10.1038/s41467-021-22539-9} {\bibfield  {journal} {\bibinfo  {journal}
  {Nature {C}ommunications}\ }\textbf {\bibinfo {volume} {12}},\ \bibinfo
  {pages} {1--9} (\bibinfo {year} {2021})}\BibitemShut {NoStop}%
\bibitem [{\citenamefont {Anschuetz}\ \emph {et~al.}(2023)\citenamefont
  {Anschuetz}, \citenamefont {Hu}, \citenamefont {Huang},\ and\ \citenamefont
  {Gao}}]{anschuetz2022interpretable}%
  \BibitemOpen
  \bibfield  {author} {\bibinfo {author} {\bibfnamefont {Eric~R}\ \bibnamefont
  {Anschuetz}}, \bibinfo {author} {\bibfnamefont {Hong-Ye}\ \bibnamefont {Hu}},
  \bibinfo {author} {\bibfnamefont {Jin-Long}\ \bibnamefont {Huang}}, \ and\
  \bibinfo {author} {\bibfnamefont {Xun}\ \bibnamefont {Gao}},\ }\bibfield
  {title} {\enquote {\bibinfo {title} {Interpretable quantum advantage in
  neural sequence learning},}\ }\href {\doibase 10.1103/PRXQuantum.4.020338}
  {\bibfield  {journal} {\bibinfo  {journal} {PRX Quantum}\ }\textbf {\bibinfo
  {volume} {4}},\ \bibinfo {pages} {020338} (\bibinfo {year}
  {2023})}\BibitemShut {NoStop}%
\bibitem [{\citenamefont {Alcazar}\ \emph {et~al.}(2020)\citenamefont
  {Alcazar}, \citenamefont {Leyton-Ortega},\ and\ \citenamefont
  {Perdomo-Ortiz}}]{alcazar2020classical}%
  \BibitemOpen
  \bibfield  {author} {\bibinfo {author} {\bibfnamefont {Javier}\ \bibnamefont
  {Alcazar}}, \bibinfo {author} {\bibfnamefont {Vicente}\ \bibnamefont
  {Leyton-Ortega}}, \ and\ \bibinfo {author} {\bibfnamefont {Alejandro}\
  \bibnamefont {Perdomo-Ortiz}},\ }\bibfield  {title} {\enquote {\bibinfo
  {title} {Classical versus quantum models in machine learning: insights from a
  finance application},}\ }\href {\doibase 10.1088/2632-2153/ab9009} {\bibfield
   {journal} {\bibinfo  {journal} {Machine Learning: Science and Technology}\
  }\textbf {\bibinfo {volume} {1}},\ \bibinfo {pages} {035003} (\bibinfo {year}
  {2020})}\BibitemShut {NoStop}%
\bibitem [{\citenamefont {Gili}\ \emph
  {et~al.}(2022{\natexlab{a}})\citenamefont {Gili}, \citenamefont
  {Hibat-Allah}, \citenamefont {Mauri}, \citenamefont {Ballance},\ and\
  \citenamefont {Perdomo-Ortiz}}]{QCBM_generalisation}%
  \BibitemOpen
  \bibfield  {author} {\bibinfo {author} {\bibfnamefont {Kaitlin}\ \bibnamefont
  {Gili}}, \bibinfo {author} {\bibfnamefont {Mohamed}\ \bibnamefont
  {Hibat-Allah}}, \bibinfo {author} {\bibfnamefont {Marta}\ \bibnamefont
  {Mauri}}, \bibinfo {author} {\bibfnamefont {Chris}\ \bibnamefont {Ballance}},
  \ and\ \bibinfo {author} {\bibfnamefont {Alejandro}\ \bibnamefont
  {Perdomo-Ortiz}},\ }\bibfield  {title} {\enquote {\bibinfo {title} {Do
  quantum circuit born machines generalize?}}\ }\href
  {https://arxiv.org/abs/2207.13645} {\bibfield  {journal} {\bibinfo  {journal}
  {arXiv preprint arXiv:2207.13645}\ } (\bibinfo {year}
  {2022}{\natexlab{a}})}\BibitemShut {NoStop}%
\bibitem [{\citenamefont {Perdomo-Ortiz}\ \emph {et~al.}(2018)\citenamefont
  {Perdomo-Ortiz}, \citenamefont {Benedetti}, \citenamefont
  {Realpe-G{\'o}mez},\ and\ \citenamefont {Biswas}}]{PerdomoOrtiz2017}%
  \BibitemOpen
  \bibfield  {author} {\bibinfo {author} {\bibfnamefont {Alejandro}\
  \bibnamefont {Perdomo-Ortiz}}, \bibinfo {author} {\bibfnamefont {Marcello}\
  \bibnamefont {Benedetti}}, \bibinfo {author} {\bibfnamefont {John}\
  \bibnamefont {Realpe-G{\'o}mez}}, \ and\ \bibinfo {author} {\bibfnamefont
  {Rupak}\ \bibnamefont {Biswas}},\ }\bibfield  {title} {\enquote {\bibinfo
  {title} {Opportunities and challenges for quantum-assisted machine learning
  in near-term quantum computers},}\ }\href {\doibase 10.1088/2058-9565/aab859}
  {\bibfield  {journal} {\bibinfo  {journal} {Quantum Science and Technology}\
  }\textbf {\bibinfo {volume} {3}},\ \bibinfo {pages} {030502} (\bibinfo {year}
  {2018})}\BibitemShut {NoStop}%
\bibitem [{\citenamefont {Coyle}\ \emph {et~al.}(2020)\citenamefont {Coyle},
  \citenamefont {Mills}, \citenamefont {Danos},\ and\ \citenamefont
  {Kashefi}}]{coyle2020born}%
  \BibitemOpen
  \bibfield  {author} {\bibinfo {author} {\bibfnamefont {Brian}\ \bibnamefont
  {Coyle}}, \bibinfo {author} {\bibfnamefont {Daniel}\ \bibnamefont {Mills}},
  \bibinfo {author} {\bibfnamefont {Vincent}\ \bibnamefont {Danos}}, \ and\
  \bibinfo {author} {\bibfnamefont {Elham}\ \bibnamefont {Kashefi}},\
  }\bibfield  {title} {\enquote {\bibinfo {title} {The born supremacy: quantum
  advantage and training of an ising born machine},}\ }\href {\doibase
  10.1038/s41534-020-00288-9} {\bibfield  {journal} {\bibinfo  {journal} {npj
  Quantum Information}\ }\textbf {\bibinfo {volume} {6}},\ \bibinfo {pages}
  {60} (\bibinfo {year} {2020})}\BibitemShut {NoStop}%
\bibitem [{\citenamefont {Sweke}\ \emph {et~al.}(2021)\citenamefont {Sweke},
  \citenamefont {Seifert}, \citenamefont {Hangleiter},\ and\ \citenamefont
  {Eisert}}]{sweke2020learnability}%
  \BibitemOpen
  \bibfield  {author} {\bibinfo {author} {\bibfnamefont {Ryan}\ \bibnamefont
  {Sweke}}, \bibinfo {author} {\bibfnamefont {Jean-Pierre}\ \bibnamefont
  {Seifert}}, \bibinfo {author} {\bibfnamefont {Dominik}\ \bibnamefont
  {Hangleiter}}, \ and\ \bibinfo {author} {\bibfnamefont {Jens}\ \bibnamefont
  {Eisert}},\ }\bibfield  {title} {\enquote {\bibinfo {title} {On the quantum
  versus classical learnability of discrete distributions},}\ }\href {\doibase
  10.22331/q-2021-03-23-417} {\bibfield  {journal} {\bibinfo  {journal}
  {Quantum}\ }\textbf {\bibinfo {volume} {5}},\ \bibinfo {pages} {417}
  (\bibinfo {year} {2021})}\BibitemShut {NoStop}%
\bibitem [{\citenamefont {Gao}\ \emph {et~al.}(2022)\citenamefont {Gao},
  \citenamefont {Anschuetz}, \citenamefont {Wang}, \citenamefont {Cirac},\ and\
  \citenamefont {Lukin}}]{gao2021enhancing}%
  \BibitemOpen
  \bibfield  {author} {\bibinfo {author} {\bibfnamefont {Xun}\ \bibnamefont
  {Gao}}, \bibinfo {author} {\bibfnamefont {Eric~R.}\ \bibnamefont
  {Anschuetz}}, \bibinfo {author} {\bibfnamefont {Sheng-Tao}\ \bibnamefont
  {Wang}}, \bibinfo {author} {\bibfnamefont {J.~Ignacio}\ \bibnamefont
  {Cirac}}, \ and\ \bibinfo {author} {\bibfnamefont {Mikhail~D.}\ \bibnamefont
  {Lukin}},\ }\bibfield  {title} {\enquote {\bibinfo {title} {Enhancing
  generative models via quantum correlations},}\ }\href {\doibase
  10.1103/PhysRevX.12.021037} {\bibfield  {journal} {\bibinfo  {journal} {Phys.
  Rev. X}\ }\textbf {\bibinfo {volume} {12}},\ \bibinfo {pages} {021037}
  (\bibinfo {year} {2022})}\BibitemShut {NoStop}%
\bibitem [{\citenamefont {Rudolph}\ \emph {et~al.}(2022)\citenamefont
  {Rudolph}, \citenamefont {Toussaint}, \citenamefont {Katabarwa},
  \citenamefont {Johri}, \citenamefont {Peropadre},\ and\ \citenamefont
  {Perdomo-Ortiz}}]{rudolph2022generation}%
  \BibitemOpen
  \bibfield  {author} {\bibinfo {author} {\bibfnamefont {Manuel~S}\
  \bibnamefont {Rudolph}}, \bibinfo {author} {\bibfnamefont {Ntwali~Bashige}\
  \bibnamefont {Toussaint}}, \bibinfo {author} {\bibfnamefont {Amara}\
  \bibnamefont {Katabarwa}}, \bibinfo {author} {\bibfnamefont {Sonika}\
  \bibnamefont {Johri}}, \bibinfo {author} {\bibfnamefont {Borja}\ \bibnamefont
  {Peropadre}}, \ and\ \bibinfo {author} {\bibfnamefont {Alejandro}\
  \bibnamefont {Perdomo-Ortiz}},\ }\bibfield  {title} {\enquote {\bibinfo
  {title} {Generation of high-resolution handwritten digits with an ion-trap
  quantum computer},}\ }\href {\doibase 10.1103/PhysRevX.12.031010} {\bibfield
  {journal} {\bibinfo  {journal} {Physical Review X}\ }\textbf {\bibinfo
  {volume} {12}},\ \bibinfo {pages} {031010} (\bibinfo {year}
  {2022})}\BibitemShut {NoStop}%
\bibitem [{\citenamefont {Coyle}\ \emph {et~al.}(2021)\citenamefont {Coyle},
  \citenamefont {Henderson}, \citenamefont {Le}, \citenamefont {Kumar},
  \citenamefont {Paini},\ and\ \citenamefont
  {Kashefi}}]{coyle2020generativeFinance}%
  \BibitemOpen
  \bibfield  {author} {\bibinfo {author} {\bibfnamefont {Brian}\ \bibnamefont
  {Coyle}}, \bibinfo {author} {\bibfnamefont {Maxwell}\ \bibnamefont
  {Henderson}}, \bibinfo {author} {\bibfnamefont {Justin Chan~Jin}\
  \bibnamefont {Le}}, \bibinfo {author} {\bibfnamefont {Niraj}\ \bibnamefont
  {Kumar}}, \bibinfo {author} {\bibfnamefont {Marco}\ \bibnamefont {Paini}}, \
  and\ \bibinfo {author} {\bibfnamefont {Elham}\ \bibnamefont {Kashefi}},\
  }\bibfield  {title} {\enquote {\bibinfo {title} {Quantum versus classical
  generative modelling in finance},}\ }\href {\doibase
  10.1088/2058-9565/abd3db} {\bibfield  {journal} {\bibinfo  {journal} {Quantum
  Science and Technology}\ }\textbf {\bibinfo {volume} {6}},\ \bibinfo {pages}
  {024013} (\bibinfo {year} {2021})}\BibitemShut {NoStop}%
\bibitem [{\citenamefont {Kiss}\ \emph {et~al.}(2022)\citenamefont {Kiss},
  \citenamefont {Grossi}, \citenamefont {Kajomovitz},\ and\ \citenamefont
  {Vallecorsa}}]{Kiss_PRA}%
  \BibitemOpen
  \bibfield  {author} {\bibinfo {author} {\bibfnamefont {Oriel}\ \bibnamefont
  {Kiss}}, \bibinfo {author} {\bibfnamefont {Michele}\ \bibnamefont {Grossi}},
  \bibinfo {author} {\bibfnamefont {Enrique}\ \bibnamefont {Kajomovitz}}, \
  and\ \bibinfo {author} {\bibfnamefont {Sofia}\ \bibnamefont {Vallecorsa}},\
  }\bibfield  {title} {\enquote {\bibinfo {title} {Conditional born machine for
  monte carlo event generation},}\ }\href {\doibase
  10.1103/PhysRevA.106.022612} {\bibfield  {journal} {\bibinfo  {journal}
  {Phys. Rev. A}\ }\textbf {\bibinfo {volume} {106}},\ \bibinfo {pages}
  {022612} (\bibinfo {year} {2022})}\BibitemShut {NoStop}%
\bibitem [{\citenamefont {Delgado}\ and\ \citenamefont
  {Hamilton}(2022)}]{Delgado_QCBM}%
  \BibitemOpen
  \bibfield  {author} {\bibinfo {author} {\bibfnamefont {Andrea}\ \bibnamefont
  {Delgado}}\ and\ \bibinfo {author} {\bibfnamefont {Kathleen~E.}\ \bibnamefont
  {Hamilton}},\ }\bibfield  {title} {\enquote {\bibinfo {title} {Unsupervised
  quantum circuit learning in high energy physics},}\ }\href {\doibase
  10.1103/PhysRevD.106.096006} {\bibfield  {journal} {\bibinfo  {journal}
  {Phys. Rev. D}\ }\textbf {\bibinfo {volume} {106}},\ \bibinfo {pages}
  {096006} (\bibinfo {year} {2022})}\BibitemShut {NoStop}%
\bibitem [{\citenamefont {Hamilton}\ \emph {et~al.}(2019)\citenamefont
  {Hamilton}, \citenamefont {Dumitrescu},\ and\ \citenamefont
  {Pooser}}]{Hamilton2018}%
  \BibitemOpen
  \bibfield  {author} {\bibinfo {author} {\bibfnamefont {Kathleen~E}\
  \bibnamefont {Hamilton}}, \bibinfo {author} {\bibfnamefont {Eugene~F}\
  \bibnamefont {Dumitrescu}}, \ and\ \bibinfo {author} {\bibfnamefont
  {Raphael~C}\ \bibnamefont {Pooser}},\ }\bibfield  {title} {\enquote {\bibinfo
  {title} {Generative model benchmarks for superconducting qubits},}\ }\href
  {\doibase 10.1103/PhysRevA.99.062323} {\bibfield  {journal} {\bibinfo
  {journal} {Physical Review A}\ }\textbf {\bibinfo {volume} {99}},\ \bibinfo
  {pages} {062323} (\bibinfo {year} {2019})}\BibitemShut {NoStop}%
\bibitem [{\citenamefont {Leyton-Ortega}\ \emph {et~al.}(2021)\citenamefont
  {Leyton-Ortega}, \citenamefont {Perdomo-Ortiz},\ and\ \citenamefont
  {Perdomo}}]{leyton2019robust}%
  \BibitemOpen
  \bibfield  {author} {\bibinfo {author} {\bibfnamefont {Vicente}\ \bibnamefont
  {Leyton-Ortega}}, \bibinfo {author} {\bibfnamefont {Alejandro}\ \bibnamefont
  {Perdomo-Ortiz}}, \ and\ \bibinfo {author} {\bibfnamefont {Oscar}\
  \bibnamefont {Perdomo}},\ }\bibfield  {title} {\enquote {\bibinfo {title}
  {Robust implementation of generative modeling with parametrized quantum
  circuits},}\ }\href {\doibase https://doi.org/10.1007/s42484-021-00040-2}
  {\bibfield  {journal} {\bibinfo  {journal} {Quantum Machine Intelligence}\
  }\textbf {\bibinfo {volume} {3}},\ \bibinfo {pages} {1--10} (\bibinfo {year}
  {2021})}\BibitemShut {NoStop}%
\bibitem [{\citenamefont {Zhu}\ \emph {et~al.}(2019)\citenamefont {Zhu},
  \citenamefont {Linke}, \citenamefont {Benedetti}, \citenamefont {Landsman},
  \citenamefont {Nguyen}, \citenamefont {Alderete}, \citenamefont
  {Perdomo-Ortiz}, \citenamefont {Korda}, \citenamefont {Garfoot},
  \citenamefont {Brecque} \emph {et~al.}}]{Zhu2018}%
  \BibitemOpen
  \bibfield  {author} {\bibinfo {author} {\bibfnamefont {Daiwei}\ \bibnamefont
  {Zhu}}, \bibinfo {author} {\bibfnamefont {Norbert~M}\ \bibnamefont {Linke}},
  \bibinfo {author} {\bibfnamefont {Marcello}\ \bibnamefont {Benedetti}},
  \bibinfo {author} {\bibfnamefont {Kevin~A}\ \bibnamefont {Landsman}},
  \bibinfo {author} {\bibfnamefont {Nhung~H}\ \bibnamefont {Nguyen}}, \bibinfo
  {author} {\bibfnamefont {C~Huerta}\ \bibnamefont {Alderete}}, \bibinfo
  {author} {\bibfnamefont {Alejandro}\ \bibnamefont {Perdomo-Ortiz}}, \bibinfo
  {author} {\bibfnamefont {Nathan}\ \bibnamefont {Korda}}, \bibinfo {author}
  {\bibfnamefont {A}~\bibnamefont {Garfoot}}, \bibinfo {author} {\bibfnamefont
  {Charles}\ \bibnamefont {Brecque}},  \emph {et~al.},\ }\bibfield  {title}
  {\enquote {\bibinfo {title} {Training of quantum circuits on a hybrid quantum
  computer},}\ }\href {\doibase 10.1126/sciadv.aaw9918} {\bibfield  {journal}
  {\bibinfo  {journal} {Science advances}\ }\textbf {\bibinfo {volume} {5}}
  (\bibinfo {year} {2019}),\ 10.1126/sciadv.aaw9918}\BibitemShut {NoStop}%
\bibitem [{\citenamefont {McClean}\ \emph {et~al.}(2018)\citenamefont
  {McClean}, \citenamefont {Boixo}, \citenamefont {Smelyanskiy}, \citenamefont
  {Babbush},\ and\ \citenamefont {Neven}}]{mcclean2018barren}%
  \BibitemOpen
  \bibfield  {author} {\bibinfo {author} {\bibfnamefont {Jarrod~R}\
  \bibnamefont {McClean}}, \bibinfo {author} {\bibfnamefont {Sergio}\
  \bibnamefont {Boixo}}, \bibinfo {author} {\bibfnamefont {Vadim~N}\
  \bibnamefont {Smelyanskiy}}, \bibinfo {author} {\bibfnamefont {Ryan}\
  \bibnamefont {Babbush}}, \ and\ \bibinfo {author} {\bibfnamefont {Hartmut}\
  \bibnamefont {Neven}},\ }\bibfield  {title} {\enquote {\bibinfo {title}
  {Barren plateaus in quantum neural network training landscapes},}\ }\href
  {\doibase 10.1038/s41467-018-07090-4} {\bibfield  {journal} {\bibinfo
  {journal} {Nature {C}ommunications}\ }\textbf {\bibinfo {volume} {9}},\
  \bibinfo {pages} {1--6} (\bibinfo {year} {2018})}\BibitemShut {NoStop}%
\bibitem [{\citenamefont {Arrasmith}\ \emph {et~al.}(2022)\citenamefont
  {Arrasmith}, \citenamefont {Holmes}, \citenamefont {Cerezo},\ and\
  \citenamefont {Coles}}]{arrasmith2021equivalence}%
  \BibitemOpen
  \bibfield  {author} {\bibinfo {author} {\bibfnamefont {Andrew}\ \bibnamefont
  {Arrasmith}}, \bibinfo {author} {\bibfnamefont {Zo{\"e}}\ \bibnamefont
  {Holmes}}, \bibinfo {author} {\bibfnamefont {M}~\bibnamefont {Cerezo}}, \
  and\ \bibinfo {author} {\bibfnamefont {Patrick~J}\ \bibnamefont {Coles}},\
  }\bibfield  {title} {\enquote {\bibinfo {title} {Equivalence of quantum
  barren plateaus to cost concentration and narrow gorges},}\ }\href {\doibase
  10.1088/2058-9565/ac7d06} {\bibfield  {journal} {\bibinfo  {journal} {Quantum
  Science and Technology}\ }\textbf {\bibinfo {volume} {7}},\ \bibinfo {pages}
  {045015} (\bibinfo {year} {2022})}\BibitemShut {NoStop}%
\bibitem [{\citenamefont {Larocca}\ \emph
  {et~al.}(2022{\natexlab{a}})\citenamefont {Larocca}, \citenamefont {Czarnik},
  \citenamefont {Sharma}, \citenamefont {Muraleedharan}, \citenamefont
  {Coles},\ and\ \citenamefont {Cerezo}}]{larocca2021diagnosing}%
  \BibitemOpen
  \bibfield  {author} {\bibinfo {author} {\bibfnamefont {Martin}\ \bibnamefont
  {Larocca}}, \bibinfo {author} {\bibfnamefont {Piotr}\ \bibnamefont
  {Czarnik}}, \bibinfo {author} {\bibfnamefont {Kunal}\ \bibnamefont {Sharma}},
  \bibinfo {author} {\bibfnamefont {Gopikrishnan}\ \bibnamefont
  {Muraleedharan}}, \bibinfo {author} {\bibfnamefont {Patrick~J.}\ \bibnamefont
  {Coles}}, \ and\ \bibinfo {author} {\bibfnamefont {M.}~\bibnamefont
  {Cerezo}},\ }\bibfield  {title} {\enquote {\bibinfo {title} {Diagnosing
  {B}arren {P}lateaus with {T}ools from {Q}uantum {O}ptimal {C}ontrol},}\
  }\href {\doibase 10.22331/q-2022-09-29-824} {\bibfield  {journal} {\bibinfo
  {journal} {{Quantum}}\ }\textbf {\bibinfo {volume} {6}},\ \bibinfo {pages}
  {824} (\bibinfo {year} {2022}{\natexlab{a}})}\BibitemShut {NoStop}%
\bibitem [{\citenamefont {Cerezo}\ and\ \citenamefont
  {Coles}(2021)}]{cerezo2020impact}%
  \BibitemOpen
  \bibfield  {author} {\bibinfo {author} {\bibfnamefont {M.}~\bibnamefont
  {Cerezo}}\ and\ \bibinfo {author} {\bibfnamefont {Patrick~J}\ \bibnamefont
  {Coles}},\ }\bibfield  {title} {\enquote {\bibinfo {title} {Higher order
  derivatives of quantum neural networks with barren plateaus},}\ }\href
  {\doibase 10.1088/2058-9565/abf51a} {\bibfield  {journal} {\bibinfo
  {journal} {Quantum Science and Technology}\ }\textbf {\bibinfo {volume}
  {6}},\ \bibinfo {pages} {035006} (\bibinfo {year} {2021})}\BibitemShut
  {NoStop}%
\bibitem [{\citenamefont {Arrasmith}\ \emph {et~al.}(2021)\citenamefont
  {Arrasmith}, \citenamefont {Cerezo}, \citenamefont {Czarnik}, \citenamefont
  {Cincio},\ and\ \citenamefont {Coles}}]{arrasmith2020effect}%
  \BibitemOpen
  \bibfield  {author} {\bibinfo {author} {\bibfnamefont {Andrew}\ \bibnamefont
  {Arrasmith}}, \bibinfo {author} {\bibfnamefont {M.}~\bibnamefont {Cerezo}},
  \bibinfo {author} {\bibfnamefont {Piotr}\ \bibnamefont {Czarnik}}, \bibinfo
  {author} {\bibfnamefont {Lukasz}\ \bibnamefont {Cincio}}, \ and\ \bibinfo
  {author} {\bibfnamefont {Patrick~J}\ \bibnamefont {Coles}},\ }\bibfield
  {title} {\enquote {\bibinfo {title} {Effect of barren plateaus on
  gradient-free optimization},}\ }\href {\doibase 10.22331/q-2021-10-05-558}
  {\bibfield  {journal} {\bibinfo  {journal} {Quantum}\ }\textbf {\bibinfo
  {volume} {5}},\ \bibinfo {pages} {558} (\bibinfo {year} {2021})}\BibitemShut
  {NoStop}%
\bibitem [{\citenamefont {Holmes}\ \emph
  {et~al.}(2021{\natexlab{a}})\citenamefont {Holmes}, \citenamefont
  {Arrasmith}, \citenamefont {Yan}, \citenamefont {Coles}, \citenamefont
  {Albrecht},\ and\ \citenamefont {Sornborger}}]{holmes2021barren}%
  \BibitemOpen
  \bibfield  {author} {\bibinfo {author} {\bibfnamefont {Zo\"{e}}\ \bibnamefont
  {Holmes}}, \bibinfo {author} {\bibfnamefont {Andrew}\ \bibnamefont
  {Arrasmith}}, \bibinfo {author} {\bibfnamefont {Bin}\ \bibnamefont {Yan}},
  \bibinfo {author} {\bibfnamefont {Patrick~J.}\ \bibnamefont {Coles}},
  \bibinfo {author} {\bibfnamefont {Andreas}\ \bibnamefont {Albrecht}}, \ and\
  \bibinfo {author} {\bibfnamefont {Andrew~T}\ \bibnamefont {Sornborger}},\
  }\bibfield  {title} {\enquote {\bibinfo {title} {Barren plateaus preclude
  learning scramblers},}\ }\href {\doibase 10.1103/PhysRevLett.126.190501}
  {\bibfield  {journal} {\bibinfo  {journal} {Physical Review Letters}\
  }\textbf {\bibinfo {volume} {126}},\ \bibinfo {pages} {190501} (\bibinfo
  {year} {2021}{\natexlab{a}})}\BibitemShut {NoStop}%
\bibitem [{\citenamefont {Zhao}\ and\ \citenamefont
  {Gao}(2021)}]{zhao2021analyzing}%
  \BibitemOpen
  \bibfield  {author} {\bibinfo {author} {\bibfnamefont {Chen}\ \bibnamefont
  {Zhao}}\ and\ \bibinfo {author} {\bibfnamefont {Xiao-Shan}\ \bibnamefont
  {Gao}},\ }\bibfield  {title} {\enquote {\bibinfo {title} {Analyzing the
  barren plateau phenomenon in training quantum neural networks with the
  {ZX}-calculus},}\ }\href {\doibase 10.22331/q-2021-06-04-466} {\bibfield
  {journal} {\bibinfo  {journal} {{Quantum}}\ }\textbf {\bibinfo {volume}
  {5}},\ \bibinfo {pages} {466} (\bibinfo {year} {2021})}\BibitemShut {NoStop}%
\bibitem [{\citenamefont {Thanasilp}\ \emph {et~al.}(2024)\citenamefont
  {Thanasilp}, \citenamefont {Wang}, \citenamefont {Cerezo},\ and\
  \citenamefont {Holmes}}]{thanasilp2022exponential}%
  \BibitemOpen
  \bibfield  {author} {\bibinfo {author} {\bibfnamefont {Supanut}\ \bibnamefont
  {Thanasilp}}, \bibinfo {author} {\bibfnamefont {Samson}\ \bibnamefont
  {Wang}}, \bibinfo {author} {\bibfnamefont {M.}~\bibnamefont {Cerezo}}, \ and\
  \bibinfo {author} {\bibfnamefont {Zo{\"e}}\ \bibnamefont {Holmes}},\
  }\bibfield  {title} {\enquote {\bibinfo {title} {Exponential concentration in
  quantum kernel methods},}\ }\href {\doibase 10.1038/s41467-024-49287-w}
  {\bibfield  {journal} {\bibinfo  {journal} {Nature Communications}\ }\textbf
  {\bibinfo {volume} {15}},\ \bibinfo {pages} {5200} (\bibinfo {year}
  {2024})}\BibitemShut {NoStop}%
\bibitem [{\citenamefont {Holmes}\ \emph {et~al.}(2022)\citenamefont {Holmes},
  \citenamefont {Sharma}, \citenamefont {Cerezo},\ and\ \citenamefont
  {Coles}}]{holmes2021connecting}%
  \BibitemOpen
  \bibfield  {author} {\bibinfo {author} {\bibfnamefont {Zo{\"e}}\ \bibnamefont
  {Holmes}}, \bibinfo {author} {\bibfnamefont {Kunal}\ \bibnamefont {Sharma}},
  \bibinfo {author} {\bibfnamefont {M.}~\bibnamefont {Cerezo}}, \ and\ \bibinfo
  {author} {\bibfnamefont {Patrick~J}\ \bibnamefont {Coles}},\ }\bibfield
  {title} {\enquote {\bibinfo {title} {Connecting ansatz expressibility to
  gradient magnitudes and barren plateaus},}\ }\href {\doibase
  10.1103/PRXQuantum.3.010313} {\bibfield  {journal} {\bibinfo  {journal} {PRX
  Quantum}\ }\textbf {\bibinfo {volume} {3}},\ \bibinfo {pages} {010313}
  (\bibinfo {year} {2022})}\BibitemShut {NoStop}%
\bibitem [{\citenamefont {Cerezo}\ \emph {et~al.}(2021)\citenamefont {Cerezo},
  \citenamefont {Sone}, \citenamefont {Volkoff}, \citenamefont {Cincio},\ and\
  \citenamefont {Coles}}]{cerezo2020cost}%
  \BibitemOpen
  \bibfield  {author} {\bibinfo {author} {\bibfnamefont {M.}~\bibnamefont
  {Cerezo}}, \bibinfo {author} {\bibfnamefont {Akira}\ \bibnamefont {Sone}},
  \bibinfo {author} {\bibfnamefont {Tyler}\ \bibnamefont {Volkoff}}, \bibinfo
  {author} {\bibfnamefont {Lukasz}\ \bibnamefont {Cincio}}, \ and\ \bibinfo
  {author} {\bibfnamefont {Patrick~J}\ \bibnamefont {Coles}},\ }\bibfield
  {title} {\enquote {\bibinfo {title} {Cost function dependent barren plateaus
  in shallow parametrized quantum circuits},}\ }\href {\doibase
  10.1038/s41467-021-21728-w} {\bibfield  {journal} {\bibinfo  {journal}
  {Nature {C}ommunications}\ }\textbf {\bibinfo {volume} {12}},\ \bibinfo
  {pages} {1--12} (\bibinfo {year} {2021})}\BibitemShut {NoStop}%
\bibitem [{\citenamefont {Marrero}\ \emph {et~al.}(2021)\citenamefont
  {Marrero}, \citenamefont {Kieferov{\'a}},\ and\ \citenamefont
  {Wiebe}}]{marrero2020entanglement}%
  \BibitemOpen
  \bibfield  {author} {\bibinfo {author} {\bibfnamefont {Carlos~Ortiz}\
  \bibnamefont {Marrero}}, \bibinfo {author} {\bibfnamefont {M{\'a}ria}\
  \bibnamefont {Kieferov{\'a}}}, \ and\ \bibinfo {author} {\bibfnamefont
  {Nathan}\ \bibnamefont {Wiebe}},\ }\bibfield  {title} {\enquote {\bibinfo
  {title} {Entanglement-induced barren plateaus},}\ }\href {\doibase
  10.1103/PRXQuantum.2.040316} {\bibfield  {journal} {\bibinfo  {journal} {PRX
  Quantum}\ }\textbf {\bibinfo {volume} {2}},\ \bibinfo {pages} {040316}
  (\bibinfo {year} {2021})}\BibitemShut {NoStop}%
\bibitem [{\citenamefont {Patti}\ \emph {et~al.}(2021)\citenamefont {Patti},
  \citenamefont {Najafi}, \citenamefont {Gao},\ and\ \citenamefont
  {Yelin}}]{patti2020entanglement}%
  \BibitemOpen
  \bibfield  {author} {\bibinfo {author} {\bibfnamefont {Taylor~L}\
  \bibnamefont {Patti}}, \bibinfo {author} {\bibfnamefont {Khadijeh}\
  \bibnamefont {Najafi}}, \bibinfo {author} {\bibfnamefont {Xun}\ \bibnamefont
  {Gao}}, \ and\ \bibinfo {author} {\bibfnamefont {Susanne~F}\ \bibnamefont
  {Yelin}},\ }\bibfield  {title} {\enquote {\bibinfo {title} {Entanglement
  devised barren plateau mitigation},}\ }\href {\doibase
  10.1103/PhysRevResearch.3.033090} {\bibfield  {journal} {\bibinfo  {journal}
  {Physical Review Research}\ }\textbf {\bibinfo {volume} {3}},\ \bibinfo
  {pages} {033090} (\bibinfo {year} {2021})}\BibitemShut {NoStop}%
\bibitem [{\citenamefont {Wang}\ \emph {et~al.}(2021)\citenamefont {Wang},
  \citenamefont {Fontana}, \citenamefont {Cerezo}, \citenamefont {Sharma},
  \citenamefont {Sone}, \citenamefont {Cincio},\ and\ \citenamefont
  {Coles}}]{wang2020noise}%
  \BibitemOpen
  \bibfield  {author} {\bibinfo {author} {\bibfnamefont {Samson}\ \bibnamefont
  {Wang}}, \bibinfo {author} {\bibfnamefont {Enrico}\ \bibnamefont {Fontana}},
  \bibinfo {author} {\bibfnamefont {M.}~\bibnamefont {Cerezo}}, \bibinfo
  {author} {\bibfnamefont {Kunal}\ \bibnamefont {Sharma}}, \bibinfo {author}
  {\bibfnamefont {Akira}\ \bibnamefont {Sone}}, \bibinfo {author}
  {\bibfnamefont {Lukasz}\ \bibnamefont {Cincio}}, \ and\ \bibinfo {author}
  {\bibfnamefont {Patrick~J}\ \bibnamefont {Coles}},\ }\bibfield  {title}
  {\enquote {\bibinfo {title} {Noise-induced barren plateaus in variational
  quantum algorithms},}\ }\href {\doibase 10.1038/s41467-021-27045-6}
  {\bibfield  {journal} {\bibinfo  {journal} {Nature Communications}\ }\textbf
  {\bibinfo {volume} {12}},\ \bibinfo {pages} {1--11} (\bibinfo {year}
  {2021})}\BibitemShut {NoStop}%
\bibitem [{\citenamefont {Wang}\ \emph {et~al.}(2024)\citenamefont {Wang},
  \citenamefont {Czarnik}, \citenamefont {Arrasmith}, \citenamefont {Cerezo},
  \citenamefont {Cincio},\ and\ \citenamefont {Coles}}]{wang2021can}%
  \BibitemOpen
  \bibfield  {author} {\bibinfo {author} {\bibfnamefont {Samson}\ \bibnamefont
  {Wang}}, \bibinfo {author} {\bibfnamefont {Piotr}\ \bibnamefont {Czarnik}},
  \bibinfo {author} {\bibfnamefont {Andrew}\ \bibnamefont {Arrasmith}},
  \bibinfo {author} {\bibfnamefont {M.}~\bibnamefont {Cerezo}}, \bibinfo
  {author} {\bibfnamefont {Lukasz}\ \bibnamefont {Cincio}}, \ and\ \bibinfo
  {author} {\bibfnamefont {Patrick~J}\ \bibnamefont {Coles}},\ }\bibfield
  {title} {\enquote {\bibinfo {title} {Can error mitigation improve
  trainability of noisy variational quantum algorithms?}}\ }\href {\doibase
  10.22331/q-2024-03-14-1287} {\bibfield  {journal} {\bibinfo  {journal}
  {Quantum}\ }\textbf {\bibinfo {volume} {8}},\ \bibinfo {pages} {1287}
  (\bibinfo {year} {2024})}\BibitemShut {NoStop}%
\bibitem [{\citenamefont {Thanasilp}\ \emph {et~al.}(2023)\citenamefont
  {Thanasilp}, \citenamefont {Wang}, \citenamefont {Nghiem}, \citenamefont
  {Coles},\ and\ \citenamefont {Cerezo}}]{thanasilp2021subtleties}%
  \BibitemOpen
  \bibfield  {author} {\bibinfo {author} {\bibfnamefont {Supanut}\ \bibnamefont
  {Thanasilp}}, \bibinfo {author} {\bibfnamefont {Samson}\ \bibnamefont
  {Wang}}, \bibinfo {author} {\bibfnamefont {Nhat~Anh}\ \bibnamefont {Nghiem}},
  \bibinfo {author} {\bibfnamefont {Patrick}\ \bibnamefont {Coles}}, \ and\
  \bibinfo {author} {\bibfnamefont {M}~\bibnamefont {Cerezo}},\ }\bibfield
  {title} {\enquote {\bibinfo {title} {Subtleties in the trainability of
  quantum machine learning models},}\ }\href {\doibase
  10.1007/s42484-023-00103-6} {\bibfield  {journal} {\bibinfo  {journal}
  {Quantum Machine Intelligence}\ }\textbf {\bibinfo {volume} {5}},\ \bibinfo
  {pages} {21} (\bibinfo {year} {2023})}\BibitemShut {NoStop}%
\bibitem [{\citenamefont {Leone}\ \emph {et~al.}(2024)\citenamefont {Leone},
  \citenamefont {Oliviero}, \citenamefont {Cincio},\ and\ \citenamefont
  {Cerezo}}]{leone2022practical}%
  \BibitemOpen
  \bibfield  {author} {\bibinfo {author} {\bibfnamefont {Lorenzo}\ \bibnamefont
  {Leone}}, \bibinfo {author} {\bibfnamefont {Salvatore~FE}\ \bibnamefont
  {Oliviero}}, \bibinfo {author} {\bibfnamefont {Lukasz}\ \bibnamefont
  {Cincio}}, \ and\ \bibinfo {author} {\bibfnamefont {M}~\bibnamefont
  {Cerezo}},\ }\bibfield  {title} {\enquote {\bibinfo {title} {On the practical
  usefulness of the hardware efficient ansatz},}\ }\href {\doibase
  10.22331/q-2024-07-03-1395} {\bibfield  {journal} {\bibinfo  {journal}
  {Quantum}\ }\textbf {\bibinfo {volume} {8}},\ \bibinfo {pages} {1395}
  (\bibinfo {year} {2024})}\BibitemShut {NoStop}%
\bibitem [{\citenamefont {Li}\ \emph {et~al.}(2022)\citenamefont {Li},
  \citenamefont {Ye}, \citenamefont {Zhao},\ and\ \citenamefont
  {Wang}}]{li2022concentration}%
  \BibitemOpen
  \bibfield  {author} {\bibinfo {author} {\bibfnamefont {Guangxi}\ \bibnamefont
  {Li}}, \bibinfo {author} {\bibfnamefont {Ruilin}\ \bibnamefont {Ye}},
  \bibinfo {author} {\bibfnamefont {Xuanqiang}\ \bibnamefont {Zhao}}, \ and\
  \bibinfo {author} {\bibfnamefont {Xin}\ \bibnamefont {Wang}},\ }\bibfield
  {title} {\enquote {\bibinfo {title} {Concentration of data encoding in
  parameterized quantum circuits},}\ }\href {https://arxiv.org/abs/2206.08273}
  {\bibfield  {journal} {\bibinfo  {journal} {arXiv preprint arXiv:2206.08273}\
  } (\bibinfo {year} {2022})}\BibitemShut {NoStop}%
\bibitem [{\citenamefont {Napp}(2022)}]{napp2022quantifying}%
  \BibitemOpen
  \bibfield  {author} {\bibinfo {author} {\bibfnamefont {John}\ \bibnamefont
  {Napp}},\ }\bibfield  {title} {\enquote {\bibinfo {title} {Quantifying the
  barren plateau phenomenon for a model of unstructured variational
  ans\"{a}tze},}\ }\href {https://arxiv.org/abs/2203.06174} {\bibfield
  {journal} {\bibinfo  {journal} {arXiv preprint arXiv:2203.06174}\ } (\bibinfo
  {year} {2022})}\BibitemShut {NoStop}%
\bibitem [{\citenamefont {Pesah}\ \emph {et~al.}(2021)\citenamefont {Pesah},
  \citenamefont {Cerezo}, \citenamefont {Wang}, \citenamefont {Volkoff},
  \citenamefont {Sornborger},\ and\ \citenamefont {Coles}}]{pesah2020absence}%
  \BibitemOpen
  \bibfield  {author} {\bibinfo {author} {\bibfnamefont {Arthur}\ \bibnamefont
  {Pesah}}, \bibinfo {author} {\bibfnamefont {M.}~\bibnamefont {Cerezo}},
  \bibinfo {author} {\bibfnamefont {Samson}\ \bibnamefont {Wang}}, \bibinfo
  {author} {\bibfnamefont {Tyler}\ \bibnamefont {Volkoff}}, \bibinfo {author}
  {\bibfnamefont {Andrew~T}\ \bibnamefont {Sornborger}}, \ and\ \bibinfo
  {author} {\bibfnamefont {Patrick~J}\ \bibnamefont {Coles}},\ }\bibfield
  {title} {\enquote {\bibinfo {title} {Absence of barren plateaus in quantum
  convolutional neural networks},}\ }\href {\doibase
  10.1103/PhysRevX.11.041011} {\bibfield  {journal} {\bibinfo  {journal}
  {Physical Review X}\ }\textbf {\bibinfo {volume} {11}},\ \bibinfo {pages}
  {041011} (\bibinfo {year} {2021})}\BibitemShut {NoStop}%
\bibitem [{\citenamefont {Larocca}\ \emph
  {et~al.}(2022{\natexlab{b}})\citenamefont {Larocca}, \citenamefont {Sauvage},
  \citenamefont {Sbahi}, \citenamefont {Verdon}, \citenamefont {Coles},\ and\
  \citenamefont {Cerezo}}]{larocca2022group}%
  \BibitemOpen
  \bibfield  {author} {\bibinfo {author} {\bibfnamefont {Mart\'{\i}n}\
  \bibnamefont {Larocca}}, \bibinfo {author} {\bibfnamefont {Fr\'ed\'eric}\
  \bibnamefont {Sauvage}}, \bibinfo {author} {\bibfnamefont {Faris~M.}\
  \bibnamefont {Sbahi}}, \bibinfo {author} {\bibfnamefont {Guillaume}\
  \bibnamefont {Verdon}}, \bibinfo {author} {\bibfnamefont {Patrick~J.}\
  \bibnamefont {Coles}}, \ and\ \bibinfo {author} {\bibfnamefont
  {M.}~\bibnamefont {Cerezo}},\ }\bibfield  {title} {\enquote {\bibinfo {title}
  {Group-invariant quantum machine learning},}\ }\href {\doibase
  10.1103/PRXQuantum.3.030341} {\bibfield  {journal} {\bibinfo  {journal} {PRX
  Quantum}\ }\textbf {\bibinfo {volume} {3}},\ \bibinfo {pages} {030341}
  (\bibinfo {year} {2022}{\natexlab{b}})}\BibitemShut {NoStop}%
\bibitem [{\citenamefont {Tangpanitanon}\ \emph {et~al.}(2020)\citenamefont
  {Tangpanitanon}, \citenamefont {Thanasilp}, \citenamefont {Dangniam},
  \citenamefont {Lemonde},\ and\ \citenamefont
  {Angelakis}}]{tangpanitanon2020expressibility}%
  \BibitemOpen
  \bibfield  {author} {\bibinfo {author} {\bibfnamefont {Jirawat}\ \bibnamefont
  {Tangpanitanon}}, \bibinfo {author} {\bibfnamefont {Supanut}\ \bibnamefont
  {Thanasilp}}, \bibinfo {author} {\bibfnamefont {Ninnat}\ \bibnamefont
  {Dangniam}}, \bibinfo {author} {\bibfnamefont {Marc-Antoine}\ \bibnamefont
  {Lemonde}}, \ and\ \bibinfo {author} {\bibfnamefont {Dimitris~G}\
  \bibnamefont {Angelakis}},\ }\bibfield  {title} {\enquote {\bibinfo {title}
  {Expressibility and trainability of parametrized analog quantum systems for
  machine learning applications},}\ }\href {\doibase
  10.1103/PhysRevResearch.2.043364} {\bibfield  {journal} {\bibinfo  {journal}
  {Physical Review Research}\ }\textbf {\bibinfo {volume} {2}},\ \bibinfo
  {pages} {043364} (\bibinfo {year} {2020})}\BibitemShut {NoStop}%
\bibitem [{\citenamefont {Sharma}\ \emph {et~al.}(2022)\citenamefont {Sharma},
  \citenamefont {Cerezo}, \citenamefont {Cincio},\ and\ \citenamefont
  {Coles}}]{sharma2020trainability}%
  \BibitemOpen
  \bibfield  {author} {\bibinfo {author} {\bibfnamefont {Kunal}\ \bibnamefont
  {Sharma}}, \bibinfo {author} {\bibfnamefont {M.}~\bibnamefont {Cerezo}},
  \bibinfo {author} {\bibfnamefont {Lukasz}\ \bibnamefont {Cincio}}, \ and\
  \bibinfo {author} {\bibfnamefont {Patrick~J}\ \bibnamefont {Coles}},\
  }\bibfield  {title} {\enquote {\bibinfo {title} {Trainability of dissipative
  perceptron-based quantum neural networks},}\ }\href {\doibase
  10.1103/PhysRevLett.128.180505} {\bibfield  {journal} {\bibinfo  {journal}
  {Physical Review Letters}\ }\textbf {\bibinfo {volume} {128}},\ \bibinfo
  {pages} {180505} (\bibinfo {year} {2022})}\BibitemShut {NoStop}%
\bibitem [{\citenamefont {Rudolph}\ \emph {et~al.}(2021)\citenamefont
  {Rudolph}, \citenamefont {Sim}, \citenamefont {Raza}, \citenamefont
  {Stechly}, \citenamefont {McClean}, \citenamefont {Anschuetz}, \citenamefont
  {Serrano},\ and\ \citenamefont {Perdomo-Ortiz}}]{rudolph2021orqviz}%
  \BibitemOpen
  \bibfield  {author} {\bibinfo {author} {\bibfnamefont {Manuel~S}\
  \bibnamefont {Rudolph}}, \bibinfo {author} {\bibfnamefont {Sukin}\
  \bibnamefont {Sim}}, \bibinfo {author} {\bibfnamefont {Asad}\ \bibnamefont
  {Raza}}, \bibinfo {author} {\bibfnamefont {Michal}\ \bibnamefont {Stechly}},
  \bibinfo {author} {\bibfnamefont {Jarrod~R}\ \bibnamefont {McClean}},
  \bibinfo {author} {\bibfnamefont {Eric~R}\ \bibnamefont {Anschuetz}},
  \bibinfo {author} {\bibfnamefont {Luis}\ \bibnamefont {Serrano}}, \ and\
  \bibinfo {author} {\bibfnamefont {Alejandro}\ \bibnamefont {Perdomo-Ortiz}},\
  }\bibfield  {title} {\enquote {\bibinfo {title} {Orqviz: visualizing
  high-dimensional landscapes in variational quantum algorithms},}\ }\href
  {https://doi.org/10.48550/arXiv.2111.04695} {\bibfield  {journal} {\bibinfo
  {journal} {arXiv preprint arXiv:2111.04695}\ } (\bibinfo {year}
  {2021})}\BibitemShut {NoStop}%
\bibitem [{\citenamefont {Kieferova}\ \emph {et~al.}(2021)\citenamefont
  {Kieferova}, \citenamefont {Carlos},\ and\ \citenamefont
  {Wiebe}}]{kieferova2021quantum}%
  \BibitemOpen
  \bibfield  {author} {\bibinfo {author} {\bibfnamefont {Maria}\ \bibnamefont
  {Kieferova}}, \bibinfo {author} {\bibfnamefont {Ortiz~Marrero}\ \bibnamefont
  {Carlos}}, \ and\ \bibinfo {author} {\bibfnamefont {Nathan}\ \bibnamefont
  {Wiebe}},\ }\bibfield  {title} {\enquote {\bibinfo {title} {Quantum
  generative training using r\'{e}nyi divergences},}\ }\href {\doibase
  https://doi.org/10.48550/arXiv.2106.09567} {\bibfield  {journal} {\bibinfo
  {journal} {arXiv preprint arXiv:2106.09567}\ } (\bibinfo {year} {2021}),\
  https://doi.org/10.48550/arXiv.2106.09567}\BibitemShut {NoStop}%
\bibitem [{\citenamefont {Coopmans}\ and\ \citenamefont
  {Benedetti}(2024)}]{coopmans2023sample}%
  \BibitemOpen
  \bibfield  {author} {\bibinfo {author} {\bibfnamefont {Luuk}\ \bibnamefont
  {Coopmans}}\ and\ \bibinfo {author} {\bibfnamefont {Marcello}\ \bibnamefont
  {Benedetti}},\ }\bibfield  {title} {\enquote {\bibinfo {title} {On the sample
  complexity of quantum boltzmann machine learning},}\ }\href {\doibase
  10.1038/s42005-024-01763-x} {\bibfield  {journal} {\bibinfo  {journal}
  {Communications Physics}\ }\textbf {\bibinfo {volume} {7}},\ \bibinfo {pages}
  {274} (\bibinfo {year} {2024})}\BibitemShut {NoStop}%
\bibitem [{\citenamefont {Mohamed}\ and\ \citenamefont
  {Lakshminarayanan}(2016)}]{mohamed2016learning}%
  \BibitemOpen
  \bibfield  {author} {\bibinfo {author} {\bibfnamefont {Shakir}\ \bibnamefont
  {Mohamed}}\ and\ \bibinfo {author} {\bibfnamefont {Balaji}\ \bibnamefont
  {Lakshminarayanan}},\ }\bibfield  {title} {\enquote {\bibinfo {title}
  {Learning in implicit generative models},}\ }\href
  {https://arxiv.org/abs/1610.03483} {\bibfield  {journal} {\bibinfo  {journal}
  {arXiv preprint arXiv:1610.03483}\ } (\bibinfo {year} {2016})}\BibitemShut
  {NoStop}%
\bibitem [{\citenamefont {Benedetti}\ \emph {et~al.}(2019)\citenamefont
  {Benedetti}, \citenamefont {Garcia-Pintos}, \citenamefont {Perdomo},
  \citenamefont {Leyton-Ortega}, \citenamefont {Nam},\ and\ \citenamefont
  {Perdomo-Ortiz}}]{benedetti2019generative}%
  \BibitemOpen
  \bibfield  {author} {\bibinfo {author} {\bibfnamefont {Marcello}\
  \bibnamefont {Benedetti}}, \bibinfo {author} {\bibfnamefont {Delfina}\
  \bibnamefont {Garcia-Pintos}}, \bibinfo {author} {\bibfnamefont {Oscar}\
  \bibnamefont {Perdomo}}, \bibinfo {author} {\bibfnamefont {Vicente}\
  \bibnamefont {Leyton-Ortega}}, \bibinfo {author} {\bibfnamefont {Yunseong}\
  \bibnamefont {Nam}}, \ and\ \bibinfo {author} {\bibfnamefont {Alejandro}\
  \bibnamefont {Perdomo-Ortiz}},\ }\bibfield  {title} {\enquote {\bibinfo
  {title} {A generative modeling approach for benchmarking and training shallow
  quantum circuits},}\ }\href {\doibase 10.1038/s41534-019-0157-8} {\bibfield
  {journal} {\bibinfo  {journal} {npj Quantum Information}\ }\textbf {\bibinfo
  {volume} {5}},\ \bibinfo {pages} {45} (\bibinfo {year} {2019})}\BibitemShut
  {NoStop}%
\bibitem [{\citenamefont {Csiszar}(1975)}]{KLD}%
  \BibitemOpen
  \bibfield  {author} {\bibinfo {author} {\bibfnamefont {I.}~\bibnamefont
  {Csiszar}},\ }\bibfield  {title} {\enquote {\bibinfo {title} {{$I$-Divergence
  Geometry of Probability Distributions and Minimization Problems}},}\ }\href
  {\doibase 10.1214/aop/1176996454} {\bibfield  {journal} {\bibinfo  {journal}
  {The Annals of Probability}\ }\textbf {\bibinfo {volume} {3}},\ \bibinfo
  {pages} {146 -- 158} (\bibinfo {year} {1975})}\BibitemShut {NoStop}%
\bibitem [{\citenamefont {Gretton}\ \emph
  {et~al.}(2012{\natexlab{a}})\citenamefont {Gretton}, \citenamefont
  {Borgwardt}, \citenamefont {Rasch}, \citenamefont {Sch{{\"o}}lkopf},\ and\
  \citenamefont {Smola}}]{Gretton2012mmd}%
  \BibitemOpen
  \bibfield  {author} {\bibinfo {author} {\bibfnamefont {Arthur}\ \bibnamefont
  {Gretton}}, \bibinfo {author} {\bibfnamefont {Karsten~M.}\ \bibnamefont
  {Borgwardt}}, \bibinfo {author} {\bibfnamefont {Malte~J.}\ \bibnamefont
  {Rasch}}, \bibinfo {author} {\bibfnamefont {Bernhard}\ \bibnamefont
  {Sch{{\"o}}lkopf}}, \ and\ \bibinfo {author} {\bibfnamefont {Alexander}\
  \bibnamefont {Smola}},\ }\bibfield  {title} {\enquote {\bibinfo {title} {A
  kernel two-sample test},}\ }\href
  {http://jmlr.org/papers/v13/gretton12a.html} {\bibfield  {journal} {\bibinfo
  {journal} {Journal of Machine Learning Research}\ }\textbf {\bibinfo {volume}
  {13}},\ \bibinfo {pages} {723--773} (\bibinfo {year}
  {2012}{\natexlab{a}})}\BibitemShut {NoStop}%
\bibitem [{\citenamefont {Gili}\ \emph
  {et~al.}(2022{\natexlab{b}})\citenamefont {Gili}, \citenamefont {Mauri},\
  and\ \citenamefont {Perdomo-Ortiz}}]{gili2022evaluating}%
  \BibitemOpen
  \bibfield  {author} {\bibinfo {author} {\bibfnamefont {Kaitlin}\ \bibnamefont
  {Gili}}, \bibinfo {author} {\bibfnamefont {Marta}\ \bibnamefont {Mauri}}, \
  and\ \bibinfo {author} {\bibfnamefont {Alejandro}\ \bibnamefont
  {Perdomo-Ortiz}},\ }\bibfield  {title} {\enquote {\bibinfo {title}
  {Evaluating generalization in classical and quantum generative models},}\
  }\href {https://arxiv.org/abs/2201.08770} {\bibfield  {journal} {\bibinfo
  {journal} {arXiv preprint arXiv:2201.08770}\ } (\bibinfo {year}
  {2022}{\natexlab{b}})}\BibitemShut {NoStop}%
\bibitem [{\citenamefont {Cheng}\ \emph {et~al.}(2018)\citenamefont {Cheng},
  \citenamefont {Chen},\ and\ \citenamefont {Wang}}]{cheng2018information}%
  \BibitemOpen
  \bibfield  {author} {\bibinfo {author} {\bibfnamefont {Song}\ \bibnamefont
  {Cheng}}, \bibinfo {author} {\bibfnamefont {Jing}\ \bibnamefont {Chen}}, \
  and\ \bibinfo {author} {\bibfnamefont {Lei}\ \bibnamefont {Wang}},\
  }\bibfield  {title} {\enquote {\bibinfo {title} {Information perspective to
  probabilistic modeling: Boltzmann machines versus born machines},}\ }\href
  {https://www.mdpi.com/1099-4300/20/8/583} {\bibfield  {journal} {\bibinfo
  {journal} {Entropy}\ }\textbf {\bibinfo {volume} {20}},\ \bibinfo {pages}
  {583} (\bibinfo {year} {2018})}\BibitemShut {NoStop}%
\bibitem [{\citenamefont {Liu}\ and\ \citenamefont
  {Wang}(2018{\natexlab{a}})}]{Born_machine_Liu}%
  \BibitemOpen
  \bibfield  {author} {\bibinfo {author} {\bibfnamefont {Jin-Guo}\ \bibnamefont
  {Liu}}\ and\ \bibinfo {author} {\bibfnamefont {Lei}\ \bibnamefont {Wang}},\
  }\bibfield  {title} {\enquote {\bibinfo {title} {Differentiable learning of
  quantum circuit born machines},}\ }\href {\doibase
  10.1103/PhysRevA.98.062324} {\bibfield  {journal} {\bibinfo  {journal} {Phys.
  Rev. A}\ }\textbf {\bibinfo {volume} {98}},\ \bibinfo {pages} {062324}
  (\bibinfo {year} {2018}{\natexlab{a}})}\BibitemShut {NoStop}%
\bibitem [{\citenamefont {Smolensky}(1986)}]{RBM}%
  \BibitemOpen
  \bibfield  {author} {\bibinfo {author} {\bibfnamefont {P.}~\bibnamefont
  {Smolensky}},\ }\enquote {\bibinfo {title} {Information processing in
  dynamical systems: Foundations of harmony theory},}\ in\ \href
  {https://stanford.edu/~jlmcc/papers/PDP/Volume%201/Chap6_PDP86.pdf} {\emph
  {\bibinfo {booktitle} {Parallel Distributed Processing: Explorations in the
  Microstructure of Cognition, Vol. 1: Foundations}}}\ (\bibinfo  {publisher}
  {MIT Press},\ \bibinfo {address} {Cambridge, MA, USA},\ \bibinfo {year}
  {1986})\ p.\ \bibinfo {pages} {194–281}\BibitemShut {NoStop}%
\bibitem [{\citenamefont {Hinton}(2012)}]{hinton2012practical}%
  \BibitemOpen
  \bibfield  {author} {\bibinfo {author} {\bibfnamefont {Geoffrey~E}\
  \bibnamefont {Hinton}},\ }\bibfield  {title} {\enquote {\bibinfo {title} {A
  practical guide to training restricted boltzmann machines},}\ }\href
  {https://doi.org/10.1007/978-3-642-35289-8_32} {\bibfield  {journal}
  {\bibinfo  {journal} {Neural Networks: Tricks of the Trade: Second Edition}\
  ,\ \bibinfo {pages} {599--619}} (\bibinfo {year} {2012})}\BibitemShut
  {NoStop}%
\bibitem [{\citenamefont {Han}\ \emph {et~al.}(2018{\natexlab{a}})\citenamefont
  {Han}, \citenamefont {Wang}, \citenamefont {Fan}, \citenamefont {Wang},\ and\
  \citenamefont {Zhang}}]{MPS_born_machine}%
  \BibitemOpen
  \bibfield  {author} {\bibinfo {author} {\bibfnamefont {Zhao-Yu}\ \bibnamefont
  {Han}}, \bibinfo {author} {\bibfnamefont {Jun}\ \bibnamefont {Wang}},
  \bibinfo {author} {\bibfnamefont {Heng}\ \bibnamefont {Fan}}, \bibinfo
  {author} {\bibfnamefont {Lei}\ \bibnamefont {Wang}}, \ and\ \bibinfo {author}
  {\bibfnamefont {Pan}\ \bibnamefont {Zhang}},\ }\bibfield  {title} {\enquote
  {\bibinfo {title} {Unsupervised generative modeling using matrix product
  states},}\ }\href {\doibase 10.1103/PhysRevX.8.031012} {\bibfield  {journal}
  {\bibinfo  {journal} {Phys. Rev. X}\ }\textbf {\bibinfo {volume} {8}},\
  \bibinfo {pages} {031012} (\bibinfo {year} {2018}{\natexlab{a}})}\BibitemShut
  {NoStop}%
\bibitem [{\citenamefont {Cheng}\ \emph {et~al.}(2019)\citenamefont {Cheng},
  \citenamefont {Wang}, \citenamefont {Xiang},\ and\ \citenamefont
  {Zhang}}]{Cheng2019TTNBM}%
  \BibitemOpen
  \bibfield  {author} {\bibinfo {author} {\bibfnamefont {Song}\ \bibnamefont
  {Cheng}}, \bibinfo {author} {\bibfnamefont {Lei}\ \bibnamefont {Wang}},
  \bibinfo {author} {\bibfnamefont {Tao}\ \bibnamefont {Xiang}}, \ and\
  \bibinfo {author} {\bibfnamefont {Pan}\ \bibnamefont {Zhang}},\ }\bibfield
  {title} {\enquote {\bibinfo {title} {Tree tensor networks for generative
  modeling},}\ }\href {\doibase 10.1103/PhysRevB.99.155131} {\bibfield
  {journal} {\bibinfo  {journal} {Phys. Rev. B}\ }\textbf {\bibinfo {volume}
  {99}},\ \bibinfo {pages} {155131} (\bibinfo {year} {2019})}\BibitemShut
  {NoStop}%
\bibitem [{\citenamefont {Vieijra}\ \emph {et~al.}(2022)\citenamefont
  {Vieijra}, \citenamefont {Vanderstraeten},\ and\ \citenamefont
  {Verstraete}}]{vieijra2022PEPSBM}%
  \BibitemOpen
  \bibfield  {author} {\bibinfo {author} {\bibfnamefont {Tom}\ \bibnamefont
  {Vieijra}}, \bibinfo {author} {\bibfnamefont {Laurens}\ \bibnamefont
  {Vanderstraeten}}, \ and\ \bibinfo {author} {\bibfnamefont {Frank}\
  \bibnamefont {Verstraete}},\ }\bibfield  {title} {\enquote {\bibinfo {title}
  {Generative modeling with projected entangled-pair states},}\ }\href
  {https://doi.org/10.48550/arXiv.2202.08177} {\bibfield  {journal} {\bibinfo
  {journal} {arXiv preprint arXiv:2202.08177}\ } (\bibinfo {year}
  {2022})}\BibitemShut {NoStop}%
\bibitem [{\citenamefont {Wall}\ \emph {et~al.}(2021)\citenamefont {Wall},
  \citenamefont {Abernathy},\ and\ \citenamefont {Quiroz}}]{TNBM}%
  \BibitemOpen
  \bibfield  {author} {\bibinfo {author} {\bibfnamefont {Michael~L.}\
  \bibnamefont {Wall}}, \bibinfo {author} {\bibfnamefont {Matthew~R.}\
  \bibnamefont {Abernathy}}, \ and\ \bibinfo {author} {\bibfnamefont {Gregory}\
  \bibnamefont {Quiroz}},\ }\bibfield  {title} {\enquote {\bibinfo {title}
  {Generative machine learning with tensor networks: Benchmarks on near-term
  quantum computers},}\ }\href {\doibase 10.1103/PhysRevResearch.3.023010}
  {\bibfield  {journal} {\bibinfo  {journal} {Phys. Rev. Research}\ }\textbf
  {\bibinfo {volume} {3}},\ \bibinfo {pages} {023010} (\bibinfo {year}
  {2021})}\BibitemShut {NoStop}%
\bibitem [{\citenamefont {\v{C}epait\.e}\ \emph {et~al.}(2022)\citenamefont
  {\v{C}epait\.e}, \citenamefont {Coyle},\ and\ \citenamefont
  {Kashefi}}]{Born_machine_CV}%
  \BibitemOpen
  \bibfield  {author} {\bibinfo {author} {\bibfnamefont {Ieva}\ \bibnamefont
  {\v{C}epait\.e}}, \bibinfo {author} {\bibfnamefont {Brian}\ \bibnamefont
  {Coyle}}, \ and\ \bibinfo {author} {\bibfnamefont {Elham}\ \bibnamefont
  {Kashefi}},\ }\bibfield  {title} {\enquote {\bibinfo {title} {A continuous
  variable born machine},}\ }\href {\doibase
  https://doi.org/10.1007/s42484-022-00063-3} {\bibfield  {journal} {\bibinfo
  {journal} {Quantum Mach. Intell.}\ }\textbf {\bibinfo {volume} {4}},\
  \bibinfo {pages} {6} (\bibinfo {year} {2022})}\BibitemShut {NoStop}%
\bibitem [{\citenamefont {Benedetti}\ \emph {et~al.}(2021)\citenamefont
  {Benedetti}, \citenamefont {Coyle}, \citenamefont {Fiorentini}, \citenamefont
  {Lubasch},\ and\ \citenamefont {Rosenkranz}}]{Benedetti2021inference}%
  \BibitemOpen
  \bibfield  {author} {\bibinfo {author} {\bibfnamefont {Marcello}\
  \bibnamefont {Benedetti}}, \bibinfo {author} {\bibfnamefont {Brian}\
  \bibnamefont {Coyle}}, \bibinfo {author} {\bibfnamefont {Mattia}\
  \bibnamefont {Fiorentini}}, \bibinfo {author} {\bibfnamefont {Michael}\
  \bibnamefont {Lubasch}}, \ and\ \bibinfo {author} {\bibfnamefont {Matthias}\
  \bibnamefont {Rosenkranz}},\ }\bibfield  {title} {\enquote {\bibinfo {title}
  {Variational inference with a quantum computer},}\ }\href {\doibase
  10.1103/PhysRevApplied.16.044057} {\bibfield  {journal} {\bibinfo  {journal}
  {Phys. Rev. Appl.}\ }\textbf {\bibinfo {volume} {16}},\ \bibinfo {pages}
  {044057} (\bibinfo {year} {2021})}\BibitemShut {NoStop}%
\bibitem [{\citenamefont {Gili}\ \emph {et~al.}(2023)\citenamefont {Gili},
  \citenamefont {Sveistrys},\ and\ \citenamefont {Ballance}}]{non-linear_QCBM}%
  \BibitemOpen
  \bibfield  {author} {\bibinfo {author} {\bibfnamefont {Kaitlin}\ \bibnamefont
  {Gili}}, \bibinfo {author} {\bibfnamefont {Mykolas}\ \bibnamefont
  {Sveistrys}}, \ and\ \bibinfo {author} {\bibfnamefont {Chris}\ \bibnamefont
  {Ballance}},\ }\bibfield  {title} {\enquote {\bibinfo {title} {Introducing
  nonlinear activations into quantum generative models},}\ }\href {\doibase
  10.1103/PhysRevA.107.012406} {\bibfield  {journal} {\bibinfo  {journal}
  {Phys. Rev. A}\ }\textbf {\bibinfo {volume} {107}},\ \bibinfo {pages}
  {012406} (\bibinfo {year} {2023})}\BibitemShut {NoStop}%
\bibitem [{\citenamefont {Jerbi}\ \emph {et~al.}(2023)\citenamefont {Jerbi},
  \citenamefont {Fiderer}, \citenamefont {Poulsen~Nautrup}, \citenamefont
  {K{\"u}bler}, \citenamefont {Briegel},\ and\ \citenamefont
  {Dunjko}}]{jerbi2021quantum}%
  \BibitemOpen
  \bibfield  {author} {\bibinfo {author} {\bibfnamefont {Sofiene}\ \bibnamefont
  {Jerbi}}, \bibinfo {author} {\bibfnamefont {Lukas~J}\ \bibnamefont
  {Fiderer}}, \bibinfo {author} {\bibfnamefont {Hendrik}\ \bibnamefont
  {Poulsen~Nautrup}}, \bibinfo {author} {\bibfnamefont {Jonas~M.}\ \bibnamefont
  {K{\"u}bler}}, \bibinfo {author} {\bibfnamefont {Hans~J.}\ \bibnamefont
  {Briegel}}, \ and\ \bibinfo {author} {\bibfnamefont {Vedran}\ \bibnamefont
  {Dunjko}},\ }\bibfield  {title} {\enquote {\bibinfo {title} {Quantum machine
  learning beyond kernel methods},}\ }\href {\doibase
  10.1038/s41467-023-36159-y} {\bibfield  {journal} {\bibinfo  {journal}
  {Nature Communications}\ }\textbf {\bibinfo {volume} {14}},\ \bibinfo {pages}
  {517} (\bibinfo {year} {2023})}\BibitemShut {NoStop}%
\bibitem [{\citenamefont {Van Den~Oord}\ \emph {et~al.}(2016)\citenamefont {Van
  Den~Oord}, \citenamefont {Kalchbrenner},\ and\ \citenamefont
  {Kavukcuoglu}}]{Pixel_RNN}%
  \BibitemOpen
  \bibfield  {author} {\bibinfo {author} {\bibfnamefont {A\"{a}ron}\
  \bibnamefont {Van Den~Oord}}, \bibinfo {author} {\bibfnamefont {Nal}\
  \bibnamefont {Kalchbrenner}}, \ and\ \bibinfo {author} {\bibfnamefont
  {Koray}\ \bibnamefont {Kavukcuoglu}},\ }\bibfield  {title} {\enquote
  {\bibinfo {title} {Pixel recurrent neural networks},}\ }in\ \href
  {https://dl.acm.org/doi/10.5555/3045390.3045575} {\emph {\bibinfo {booktitle}
  {Proceedings of the 33rd International Conference on International Conference
  on Machine Learning - Volume 48}}}\ (\bibinfo {year} {2016})\ p.\ \bibinfo
  {pages} {1747–1756}\BibitemShut {NoStop}%
\bibitem [{\citenamefont {Rumelhart}\ and\ \citenamefont
  {McClelland}(1987)}]{RNN}%
  \BibitemOpen
  \bibfield  {author} {\bibinfo {author} {\bibfnamefont {David~E.}\
  \bibnamefont {Rumelhart}}\ and\ \bibinfo {author} {\bibfnamefont {James~L.}\
  \bibnamefont {McClelland}},\ }\enquote {\bibinfo {title} {Learning internal
  representations by error propagation},}\ in\ \href
  {https://ieeexplore.ieee.org/document/6302929} {\emph {\bibinfo {booktitle}
  {Parallel Distributed Processing: Explorations in the Microstructure of
  Cognition: Foundations}}}\ (\bibinfo {year} {1987})\ pp.\ \bibinfo {pages}
  {318--362}\BibitemShut {NoStop}%
\bibitem [{\citenamefont {Goodfellow}\ \emph {et~al.}(2014)\citenamefont
  {Goodfellow}, \citenamefont {Pouget-Abadie}, \citenamefont {Mirza},
  \citenamefont {Xu}, \citenamefont {Warde-Farley}, \citenamefont {Ozair},
  \citenamefont {Courville},\ and\ \citenamefont {Bengio}}]{GAN_Goodfellow}%
  \BibitemOpen
  \bibfield  {author} {\bibinfo {author} {\bibfnamefont {Ian}\ \bibnamefont
  {Goodfellow}}, \bibinfo {author} {\bibfnamefont {Jean}\ \bibnamefont
  {Pouget-Abadie}}, \bibinfo {author} {\bibfnamefont {Mehdi}\ \bibnamefont
  {Mirza}}, \bibinfo {author} {\bibfnamefont {Bing}\ \bibnamefont {Xu}},
  \bibinfo {author} {\bibfnamefont {David}\ \bibnamefont {Warde-Farley}},
  \bibinfo {author} {\bibfnamefont {Sherjil}\ \bibnamefont {Ozair}}, \bibinfo
  {author} {\bibfnamefont {Aaron}\ \bibnamefont {Courville}}, \ and\ \bibinfo
  {author} {\bibfnamefont {Yoshua}\ \bibnamefont {Bengio}},\ }\bibfield
  {title} {\enquote {\bibinfo {title} {Generative adversarial nets},}\ }in\
  \href
  {https://proceedings.neurips.cc/paper_files/paper/2014/file/5ca3e9b122f61f8f06494c97b1afccf3-Paper.pdf}
  {\emph {\bibinfo {booktitle} {Advances in Neural Information Processing
  Systems}}},\ Vol.~\bibinfo {volume} {27}\ (\bibinfo {year}
  {2014})\BibitemShut {NoStop}%
\bibitem [{\citenamefont {Csisz{\'a}r}(1967)}]{csiszar1967information}%
  \BibitemOpen
  \bibfield  {author} {\bibinfo {author} {\bibfnamefont {Imre}\ \bibnamefont
  {Csisz{\'a}r}},\ }\bibfield  {title} {\enquote {\bibinfo {title} {On
  information-type measure of difference of probability distributions and
  indirect observations},}\ }\href@noop {} {\bibfield  {journal} {\bibinfo
  {journal} {Studia Sci. Math. Hungar.}\ }\textbf {\bibinfo {volume} {2}},\
  \bibinfo {pages} {299--318} (\bibinfo {year} {1967})}\BibitemShut {NoStop}%
\bibitem [{\citenamefont {Kullback}\ and\ \citenamefont
  {Leibler}(1951)}]{kullback1951KLD}%
  \BibitemOpen
  \bibfield  {author} {\bibinfo {author} {\bibfnamefont {Solomon}\ \bibnamefont
  {Kullback}}\ and\ \bibinfo {author} {\bibfnamefont {Richard~A}\ \bibnamefont
  {Leibler}},\ }\bibfield  {title} {\enquote {\bibinfo {title} {On information
  and sufficiency},}\ }\href {\doibase 10.1214/aoms/1177729694} {\bibfield
  {journal} {\bibinfo  {journal} {The annals of mathematical statistics}\
  }\textbf {\bibinfo {volume} {22}} (\bibinfo {year} {1951}),\
  10.1214/aoms/1177729694}\BibitemShut {NoStop}%
\bibitem [{\citenamefont {Lin}(1991)}]{lin1991divergence}%
  \BibitemOpen
  \bibfield  {author} {\bibinfo {author} {\bibfnamefont {Jianhua}\ \bibnamefont
  {Lin}},\ }\bibfield  {title} {\enquote {\bibinfo {title} {Divergence measures
  based on the shannon entropy},}\ }\href {\doibase 10.1109/18.61115}
  {\bibfield  {journal} {\bibinfo  {journal} {IEEE Transactions on Information
  theory}\ }\textbf {\bibinfo {volume} {37}},\ \bibinfo {pages} {145--151}
  (\bibinfo {year} {1991})}\BibitemShut {NoStop}%
\bibitem [{\citenamefont {R{\'e}nyi}(1961)}]{renyi1961measures}%
  \BibitemOpen
  \bibfield  {author} {\bibinfo {author} {\bibfnamefont {Alfr{\'e}d}\
  \bibnamefont {R{\'e}nyi}},\ }\bibfield  {title} {\enquote {\bibinfo {title}
  {On measures of entropy and information},}\ }in\ \href
  {https://projecteuclid.org/ebooks/berkeley-symposium-on-mathematical-statistics-and-probability/On-Measures-of-Entropy-and-Information/chapter/On-Measures-of-Entropy-and-Information/bsmsp/1200512181}
  {\emph {\bibinfo {booktitle} {Proceedings of the fourth Berkeley symposium on
  mathematical statistics and probability, volume 1: contributions to the
  theory of statistics}}},\ Vol.~\bibinfo {volume} {4}\ (\bibinfo
  {organization} {University of California Press},\ \bibinfo {year} {1961})\
  pp.\ \bibinfo {pages} {547--562}\BibitemShut {NoStop}%
\bibitem [{\citenamefont {Gibbs}\ \emph {et~al.}(2022)\citenamefont {Gibbs},
  \citenamefont {Gili}, \citenamefont {Holmes}, \citenamefont {Commeau},
  \citenamefont {Arrasmith}, \citenamefont {Cincio}, \citenamefont {Coles},\
  and\ \citenamefont {Sornborger}}]{gibbs2021long}%
  \BibitemOpen
  \bibfield  {author} {\bibinfo {author} {\bibfnamefont {Joe}\ \bibnamefont
  {Gibbs}}, \bibinfo {author} {\bibfnamefont {Kaitlin}\ \bibnamefont {Gili}},
  \bibinfo {author} {\bibfnamefont {Zo{\"e}}\ \bibnamefont {Holmes}}, \bibinfo
  {author} {\bibfnamefont {Benjamin}\ \bibnamefont {Commeau}}, \bibinfo
  {author} {\bibfnamefont {Andrew}\ \bibnamefont {Arrasmith}}, \bibinfo
  {author} {\bibfnamefont {Lukasz}\ \bibnamefont {Cincio}}, \bibinfo {author}
  {\bibfnamefont {Patrick~J}\ \bibnamefont {Coles}}, \ and\ \bibinfo {author}
  {\bibfnamefont {Andrew}\ \bibnamefont {Sornborger}},\ }\bibfield  {title}
  {\enquote {\bibinfo {title} {Long-time simulations for fixed input states on
  quantum hardware},}\ }\href {\doibase 10.1038/s41534-022-00625-0} {\bibfield
  {journal} {\bibinfo  {journal} {npj Quantum Information}\ }\textbf {\bibinfo
  {volume} {8}},\ \bibinfo {pages} {135} (\bibinfo {year} {2022})}\BibitemShut
  {NoStop}%
\bibitem [{\citenamefont {Gibbs}\ \emph {et~al.}(2024)\citenamefont {Gibbs},
  \citenamefont {Holmes}, \citenamefont {Caro}, \citenamefont {Ezzell},
  \citenamefont {Huang}, \citenamefont {Cincio}, \citenamefont {Sornborger},\
  and\ \citenamefont {Coles}}]{gibbs2022dynamical}%
  \BibitemOpen
  \bibfield  {author} {\bibinfo {author} {\bibfnamefont {Joe}\ \bibnamefont
  {Gibbs}}, \bibinfo {author} {\bibfnamefont {Zoe}\ \bibnamefont {Holmes}},
  \bibinfo {author} {\bibfnamefont {Matthias~C}\ \bibnamefont {Caro}}, \bibinfo
  {author} {\bibfnamefont {Nicholas}\ \bibnamefont {Ezzell}}, \bibinfo {author}
  {\bibfnamefont {Hsin-Yuan}\ \bibnamefont {Huang}}, \bibinfo {author}
  {\bibfnamefont {Lukasz}\ \bibnamefont {Cincio}}, \bibinfo {author}
  {\bibfnamefont {Andrew~T}\ \bibnamefont {Sornborger}}, \ and\ \bibinfo
  {author} {\bibfnamefont {Patrick~J}\ \bibnamefont {Coles}},\ }\bibfield
  {title} {\enquote {\bibinfo {title} {Dynamical simulation via quantum machine
  learning with provable generalization},}\ }\href {\doibase
  10.1103/PhysRevResearch.6.013241} {\bibfield  {journal} {\bibinfo  {journal}
  {Physical Review Research}\ }\textbf {\bibinfo {volume} {6}},\ \bibinfo
  {pages} {013241} (\bibinfo {year} {2024})}\BibitemShut {NoStop}%
\bibitem [{\citenamefont {Caro}\ \emph {et~al.}(2023)\citenamefont {Caro},
  \citenamefont {Huang}, \citenamefont {Ezzell}, \citenamefont {Gibbs},
  \citenamefont {Sornborger}, \citenamefont {Cincio}, \citenamefont {Coles},\
  and\ \citenamefont {Holmes}}]{caro2022outofdistribution}%
  \BibitemOpen
  \bibfield  {author} {\bibinfo {author} {\bibfnamefont {Matthias~C}\
  \bibnamefont {Caro}}, \bibinfo {author} {\bibfnamefont {Hsin-Yuan}\
  \bibnamefont {Huang}}, \bibinfo {author} {\bibfnamefont {Nicholas}\
  \bibnamefont {Ezzell}}, \bibinfo {author} {\bibfnamefont {Joe}\ \bibnamefont
  {Gibbs}}, \bibinfo {author} {\bibfnamefont {Andrew~T}\ \bibnamefont
  {Sornborger}}, \bibinfo {author} {\bibfnamefont {Lukasz}\ \bibnamefont
  {Cincio}}, \bibinfo {author} {\bibfnamefont {Patrick~J}\ \bibnamefont
  {Coles}}, \ and\ \bibinfo {author} {\bibfnamefont {Zo{\"e}}\ \bibnamefont
  {Holmes}},\ }\bibfield  {title} {\enquote {\bibinfo {title}
  {Out-of-distribution generalization for learning quantum dynamics},}\ }\href
  {\doibase 10.1038/s41467-023-39381-w} {\bibfield  {journal} {\bibinfo
  {journal} {Nature Communications}\ }\textbf {\bibinfo {volume} {14}},\
  \bibinfo {pages} {3751} (\bibinfo {year} {2023})}\BibitemShut {NoStop}%
\bibitem [{\citenamefont {Volkoff}\ \emph {et~al.}(2021)\citenamefont
  {Volkoff}, \citenamefont {Holmes},\ and\ \citenamefont
  {Sornborger}}]{Volkoff2021Universal}%
  \BibitemOpen
  \bibfield  {author} {\bibinfo {author} {\bibfnamefont {Tyler}\ \bibnamefont
  {Volkoff}}, \bibinfo {author} {\bibfnamefont {Zo\"e}\ \bibnamefont {Holmes}},
  \ and\ \bibinfo {author} {\bibfnamefont {Andrew}\ \bibnamefont
  {Sornborger}},\ }\bibfield  {title} {\enquote {\bibinfo {title} {Universal
  compiling and (no-)free-lunch theorems for continuous-variable quantum
  learning},}\ }\href {\doibase 10.1103/PRXQuantum.2.040327} {\bibfield
  {journal} {\bibinfo  {journal} {PRX Quantum}\ }\textbf {\bibinfo {volume}
  {2}},\ \bibinfo {pages} {040327} (\bibinfo {year} {2021})}\BibitemShut
  {NoStop}%
\bibitem [{\citenamefont {Barenco}\ \emph {et~al.}(1997)\citenamefont
  {Barenco}, \citenamefont {Berthiaume}, \citenamefont {Deutsch}, \citenamefont
  {Ekert}, \citenamefont {Jozsa},\ and\ \citenamefont
  {Macchiavello}}]{barenco1997stabilization}%
  \BibitemOpen
  \bibfield  {author} {\bibinfo {author} {\bibfnamefont {Adriano}\ \bibnamefont
  {Barenco}}, \bibinfo {author} {\bibfnamefont {Andre}\ \bibnamefont
  {Berthiaume}}, \bibinfo {author} {\bibfnamefont {David}\ \bibnamefont
  {Deutsch}}, \bibinfo {author} {\bibfnamefont {Artur}\ \bibnamefont {Ekert}},
  \bibinfo {author} {\bibfnamefont {Richard}\ \bibnamefont {Jozsa}}, \ and\
  \bibinfo {author} {\bibfnamefont {Chiara}\ \bibnamefont {Macchiavello}},\
  }\bibfield  {title} {\enquote {\bibinfo {title} {Stabilization of quantum
  computations by symmetrization},}\ }\href
  {https://epubs.siam.org/doi/10.1137/S0097539796302452} {\bibfield  {journal}
  {\bibinfo  {journal} {SIAM Journal on Computing}\ }\textbf {\bibinfo {volume}
  {26}},\ \bibinfo {pages} {1541--1557} (\bibinfo {year} {1997})}\BibitemShut
  {NoStop}%
\bibitem [{\citenamefont {Garcia-Escartin}\ and\ \citenamefont
  {Chamorro-Posada}(2013)}]{garcia2013swap}%
  \BibitemOpen
  \bibfield  {author} {\bibinfo {author} {\bibfnamefont {Juan~Carlos}\
  \bibnamefont {Garcia-Escartin}}\ and\ \bibinfo {author} {\bibfnamefont
  {Pedro}\ \bibnamefont {Chamorro-Posada}},\ }\bibfield  {title} {\enquote
  {\bibinfo {title} {Swap test and hong-ou-mandel effect are equivalent},}\
  }\href {\doibase 10.1103/PhysRevA.87.052330} {\bibfield  {journal} {\bibinfo
  {journal} {Physical Review A}\ }\textbf {\bibinfo {volume} {87}},\ \bibinfo
  {pages} {052330} (\bibinfo {year} {2013})}\BibitemShut {NoStop}%
\bibitem [{\citenamefont {Lloyd}\ and\ \citenamefont
  {Weedbrook}(2018)}]{QGAN_Loyd}%
  \BibitemOpen
  \bibfield  {author} {\bibinfo {author} {\bibfnamefont {Seth}\ \bibnamefont
  {Lloyd}}\ and\ \bibinfo {author} {\bibfnamefont {Christian}\ \bibnamefont
  {Weedbrook}},\ }\bibfield  {title} {\enquote {\bibinfo {title} {Quantum
  generative adversarial learning},}\ }\href {\doibase
  10.1103/PhysRevLett.121.040502} {\bibfield  {journal} {\bibinfo  {journal}
  {Phys. Rev. Lett.}\ }\textbf {\bibinfo {volume} {121}},\ \bibinfo {pages}
  {040502} (\bibinfo {year} {2018})}\BibitemShut {NoStop}%
\bibitem [{\citenamefont {Zoufal}\ \emph {et~al.}(2019)\citenamefont {Zoufal},
  \citenamefont {Lucchi},\ and\ \citenamefont {Woerner}}]{QGAN_Zoufal}%
  \BibitemOpen
  \bibfield  {author} {\bibinfo {author} {\bibfnamefont {Christa}\ \bibnamefont
  {Zoufal}}, \bibinfo {author} {\bibfnamefont {Aur{\'e}lien}\ \bibnamefont
  {Lucchi}}, \ and\ \bibinfo {author} {\bibfnamefont {Stefan}\ \bibnamefont
  {Woerner}},\ }\bibfield  {title} {\enquote {\bibinfo {title} {Quantum
  generative adversarial networks for learning and loading random
  distributions},}\ }\href {https://www.nature.com/articles/s41534-019-0223-2}
  {\bibfield  {journal} {\bibinfo  {journal} {npj Quantum Information}\
  }\textbf {\bibinfo {volume} {5}},\ \bibinfo {pages} {103} (\bibinfo {year}
  {2019})}\BibitemShut {NoStop}%
\bibitem [{\citenamefont {Situ}\ \emph {et~al.}(2020)\citenamefont {Situ},
  \citenamefont {He}, \citenamefont {Wang}, \citenamefont {Li},\ and\
  \citenamefont {Zheng}}]{QGAN_classical}%
  \BibitemOpen
  \bibfield  {author} {\bibinfo {author} {\bibfnamefont {Haozhen}\ \bibnamefont
  {Situ}}, \bibinfo {author} {\bibfnamefont {Zhimin}\ \bibnamefont {He}},
  \bibinfo {author} {\bibfnamefont {Yuyi}\ \bibnamefont {Wang}}, \bibinfo
  {author} {\bibfnamefont {Lvzhou}\ \bibnamefont {Li}}, \ and\ \bibinfo
  {author} {\bibfnamefont {Shenggen}\ \bibnamefont {Zheng}},\ }\bibfield
  {title} {\enquote {\bibinfo {title} {Quantum generative adversarial network
  for generating discrete distribution},}\ }\href {\doibase
  https://doi.org/10.1016/j.ins.2020.05.127} {\bibfield  {journal} {\bibinfo
  {journal} {Information Sciences}\ }\textbf {\bibinfo {volume} {538}},\
  \bibinfo {pages} {193--208} (\bibinfo {year} {2020})}\BibitemShut {NoStop}%
\bibitem [{\citenamefont {Bravo-Prieto}\ \emph {et~al.}(2022)\citenamefont
  {Bravo-Prieto}, \citenamefont {Baglio}, \citenamefont {C{\`{e}}},
  \citenamefont {Francis}, \citenamefont {Grabowska},\ and\ \citenamefont
  {Carrazza}}]{style_qgan}%
  \BibitemOpen
  \bibfield  {author} {\bibinfo {author} {\bibfnamefont {Carlos}\ \bibnamefont
  {Bravo-Prieto}}, \bibinfo {author} {\bibfnamefont {Julien}\ \bibnamefont
  {Baglio}}, \bibinfo {author} {\bibfnamefont {Marco}\ \bibnamefont
  {C{\`{e}}}}, \bibinfo {author} {\bibfnamefont {Anthony}\ \bibnamefont
  {Francis}}, \bibinfo {author} {\bibfnamefont {Dorota~M.}\ \bibnamefont
  {Grabowska}}, \ and\ \bibinfo {author} {\bibfnamefont {Stefano}\ \bibnamefont
  {Carrazza}},\ }\bibfield  {title} {\enquote {\bibinfo {title} {Style-based
  quantum generative adversarial networks for {M}onte {C}arlo events},}\ }\href
  {\doibase 10.22331/q-2022-08-17-777} {\bibfield  {journal} {\bibinfo
  {journal} {{Quantum}}\ }\textbf {\bibinfo {volume} {6}},\ \bibinfo {pages}
  {777} (\bibinfo {year} {2022})}\BibitemShut {NoStop}%
\bibitem [{\citenamefont {Niu}\ \emph {et~al.}(2022)\citenamefont {Niu},
  \citenamefont {Zlokapa}, \citenamefont {Broughton}, \citenamefont {Boixo},
  \citenamefont {Mohseni}, \citenamefont {Smelyanskyi},\ and\ \citenamefont
  {Neven}}]{entangling_QGAN}%
  \BibitemOpen
  \bibfield  {author} {\bibinfo {author} {\bibfnamefont {Murphy~Yuezhen}\
  \bibnamefont {Niu}}, \bibinfo {author} {\bibfnamefont {Alexander}\
  \bibnamefont {Zlokapa}}, \bibinfo {author} {\bibfnamefont {Michael}\
  \bibnamefont {Broughton}}, \bibinfo {author} {\bibfnamefont {Sergio}\
  \bibnamefont {Boixo}}, \bibinfo {author} {\bibfnamefont {Masoud}\
  \bibnamefont {Mohseni}}, \bibinfo {author} {\bibfnamefont {Vadim}\
  \bibnamefont {Smelyanskyi}}, \ and\ \bibinfo {author} {\bibfnamefont
  {Hartmut}\ \bibnamefont {Neven}},\ }\bibfield  {title} {\enquote {\bibinfo
  {title} {Entangling quantum generative adversarial networks},}\ }\href
  {\doibase 10.1103/PhysRevLett.128.220505} {\bibfield  {journal} {\bibinfo
  {journal} {Phys. Rev. Lett.}\ }\textbf {\bibinfo {volume} {128}},\ \bibinfo
  {pages} {220505} (\bibinfo {year} {2022})}\BibitemShut {NoStop}%
\bibitem [{\citenamefont {Stilck~Fran{\c{c}}a}\ and\ \citenamefont
  {Garcia-Patron}(2021)}]{franca2020limitations}%
  \BibitemOpen
  \bibfield  {author} {\bibinfo {author} {\bibfnamefont {Daniel}\ \bibnamefont
  {Stilck~Fran{\c{c}}a}}\ and\ \bibinfo {author} {\bibfnamefont {Raul}\
  \bibnamefont {Garcia-Patron}},\ }\bibfield  {title} {\enquote {\bibinfo
  {title} {Limitations of optimization algorithms on noisy quantum devices},}\
  }\href {\doibase 10.1038/s41567-021-01356-3} {\bibfield  {journal} {\bibinfo
  {journal} {Nature Physics}\ }\textbf {\bibinfo {volume} {17}},\ \bibinfo
  {pages} {1221--1227} (\bibinfo {year} {2021})}\BibitemShut {NoStop}%
\bibitem [{\citenamefont {Liu}\ and\ \citenamefont
  {Wang}(2018{\natexlab{b}})}]{liu2018differentiable}%
  \BibitemOpen
  \bibfield  {author} {\bibinfo {author} {\bibfnamefont {Jin-Guo}\ \bibnamefont
  {Liu}}\ and\ \bibinfo {author} {\bibfnamefont {Lei}\ \bibnamefont {Wang}},\
  }\bibfield  {title} {\enquote {\bibinfo {title} {Differentiable learning of
  quantum circuit born machines},}\ }\href {\doibase
  10.1103/PhysRevA.98.062324} {\bibfield  {journal} {\bibinfo  {journal} {Phys.
  Rev. A}\ }\textbf {\bibinfo {volume} {98}},\ \bibinfo {pages} {062324}
  (\bibinfo {year} {2018}{\natexlab{b}})}\BibitemShut {NoStop}%
\bibitem [{\citenamefont {Chang}\ \emph {et~al.}(2000)\citenamefont {Chang},
  \citenamefont {Meade}, \citenamefont {Beasley},\ and\ \citenamefont
  {Sharaiha}}]{chang2000heuristics}%
  \BibitemOpen
  \bibfield  {author} {\bibinfo {author} {\bibfnamefont {T-J}\ \bibnamefont
  {Chang}}, \bibinfo {author} {\bibfnamefont {Nigel}\ \bibnamefont {Meade}},
  \bibinfo {author} {\bibfnamefont {John~E}\ \bibnamefont {Beasley}}, \ and\
  \bibinfo {author} {\bibfnamefont {Yazid~M}\ \bibnamefont {Sharaiha}},\
  }\bibfield  {title} {\enquote {\bibinfo {title} {Heuristics for cardinality
  constrained portfolio optimisation},}\ }\href {\doibase
  https://doi.org/10.1016/S0305-0548(99)00074-X} {\bibfield  {journal}
  {\bibinfo  {journal} {Computers \& Operations Research}\ }\textbf {\bibinfo
  {volume} {27}},\ \bibinfo {pages} {1271--1302} (\bibinfo {year}
  {2000})}\BibitemShut {NoStop}%
\bibitem [{\citenamefont {Alcazar}\ \emph {et~al.}(2021)\citenamefont
  {Alcazar}, \citenamefont {Vakili}, \citenamefont {Kalayci},\ and\
  \citenamefont {Perdomo-Ortiz}}]{alcazar2021geo}%
  \BibitemOpen
  \bibfield  {author} {\bibinfo {author} {\bibfnamefont {Javier}\ \bibnamefont
  {Alcazar}}, \bibinfo {author} {\bibfnamefont {Mohammad~Ghazi}\ \bibnamefont
  {Vakili}}, \bibinfo {author} {\bibfnamefont {Can~B}\ \bibnamefont {Kalayci}},
  \ and\ \bibinfo {author} {\bibfnamefont {Alejandro}\ \bibnamefont
  {Perdomo-Ortiz}},\ }\bibfield  {title} {\enquote {\bibinfo {title} {Geo:
  Enhancing combinatorial optimization with classical and quantum generative
  models},}\ }\href {https://arxiv.org/abs/2101.06250} {\bibfield  {journal}
  {\bibinfo  {journal} {arXiv preprint arXiv:2101.06250}\ } (\bibinfo {year}
  {2021})}\BibitemShut {NoStop}%
\bibitem [{\citenamefont {Li}\ \emph {et~al.}(2017)\citenamefont {Li},
  \citenamefont {Chang}, \citenamefont {Cheng}, \citenamefont {Yang},\ and\
  \citenamefont {P\'{o}czos}}]{li2017mmd}%
  \BibitemOpen
  \bibfield  {author} {\bibinfo {author} {\bibfnamefont {Chun-Liang}\
  \bibnamefont {Li}}, \bibinfo {author} {\bibfnamefont {Wei-Cheng}\
  \bibnamefont {Chang}}, \bibinfo {author} {\bibfnamefont {Yu}~\bibnamefont
  {Cheng}}, \bibinfo {author} {\bibfnamefont {Yiming}\ \bibnamefont {Yang}}, \
  and\ \bibinfo {author} {\bibfnamefont {Barnab\'{a}s}\ \bibnamefont
  {P\'{o}czos}},\ }\bibfield  {title} {\enquote {\bibinfo {title} {Mmd gan:
  Towards deeper understanding of moment matching network},}\ }in\ \href
  {https://proceedings.neurips.cc/paper/2017/file/dfd7468ac613286cdbb40872c8ef3b06-Paper.pdf}
  {\emph {\bibinfo {booktitle} {Proceedings of the 31st International
  Conference on Neural Information Processing Systems}}}\ (\bibinfo {address}
  {Red Hook, NY, USA},\ \bibinfo {year} {2017})\ p.\ \bibinfo {pages}
  {2200–2210}\BibitemShut {NoStop}%
\bibitem [{\citenamefont {Wang}\ \emph {et~al.}(2018)\citenamefont {Wang},
  \citenamefont {Sun},\ and\ \citenamefont {Halgamuge}}]{wang2018improving}%
  \BibitemOpen
  \bibfield  {author} {\bibinfo {author} {\bibfnamefont {Wei}\ \bibnamefont
  {Wang}}, \bibinfo {author} {\bibfnamefont {Yuan}\ \bibnamefont {Sun}}, \ and\
  \bibinfo {author} {\bibfnamefont {Saman}\ \bibnamefont {Halgamuge}},\
  }\bibfield  {title} {\enquote {\bibinfo {title} {Improving mmd-gan training
  with repulsive loss function},}\ }\href {https://arxiv.org/abs/1812.09916}
  {\bibfield  {journal} {\bibinfo  {journal} {arXiv preprint arXiv:1812.09916}\
  } (\bibinfo {year} {2018})}\BibitemShut {NoStop}%
\bibitem [{\citenamefont {Li}\ \emph {et~al.}(2018)\citenamefont {Li},
  \citenamefont {Pan}, \citenamefont {Wang},\ and\ \citenamefont
  {Kot}}]{Li2018domain}%
  \BibitemOpen
  \bibfield  {author} {\bibinfo {author} {\bibfnamefont {Haoliang}\
  \bibnamefont {Li}}, \bibinfo {author} {\bibfnamefont {Sinno~Jialin}\
  \bibnamefont {Pan}}, \bibinfo {author} {\bibfnamefont {Shiqi}\ \bibnamefont
  {Wang}}, \ and\ \bibinfo {author} {\bibfnamefont {Alex~C.}\ \bibnamefont
  {Kot}},\ }\bibfield  {title} {\enquote {\bibinfo {title} {Domain
  generalization with adversarial feature learning},}\ }in\ \href {\doibase
  10.1109/CVPR.2018.00566} {\emph {\bibinfo {booktitle} {2018 IEEE/CVF
  Conference on Computer Vision and Pattern Recognition}}}\ (\bibinfo {year}
  {2018})\ pp.\ \bibinfo {pages} {5400--5409}\BibitemShut {NoStop}%
\bibitem [{\citenamefont {Igel}\ \emph {et~al.}(2007)\citenamefont {Igel},
  \citenamefont {Hansen},\ and\ \citenamefont {Roth}}]{CMA-ES}%
  \BibitemOpen
  \bibfield  {author} {\bibinfo {author} {\bibfnamefont {Christian}\
  \bibnamefont {Igel}}, \bibinfo {author} {\bibfnamefont {Nikolaus}\
  \bibnamefont {Hansen}}, \ and\ \bibinfo {author} {\bibfnamefont {Stefan}\
  \bibnamefont {Roth}},\ }\bibfield  {title} {\enquote {\bibinfo {title}
  {{Covariance Matrix Adaptation for Multi-objective Optimization}},}\ }\href
  {\doibase 10.1162/evco.2007.15.1.1} {\bibfield  {journal} {\bibinfo
  {journal} {Evolutionary Computation}\ }\textbf {\bibinfo {volume} {15}},\
  \bibinfo {pages} {1--28} (\bibinfo {year} {2007})}\BibitemShut {NoStop}%
\bibitem [{\citenamefont {Letcher}\ \emph {et~al.}(2024)\citenamefont
  {Letcher}, \citenamefont {Woerner},\ and\ \citenamefont
  {Zoufal}}]{letcher2023tight}%
  \BibitemOpen
  \bibfield  {author} {\bibinfo {author} {\bibfnamefont {Alistair}\
  \bibnamefont {Letcher}}, \bibinfo {author} {\bibfnamefont {Stefan}\
  \bibnamefont {Woerner}}, \ and\ \bibinfo {author} {\bibfnamefont {Christa}\
  \bibnamefont {Zoufal}},\ }\bibfield  {title} {\enquote {\bibinfo {title}
  {Tight and efficient gradient bounds for parameterized quantum circuits},}\
  }\href {https://quantum-journal.org/papers/q-2024-09-25-1484/} {\bibfield
  {journal} {\bibinfo  {journal} {Quantum}\ }\textbf {\bibinfo {volume} {8}},\
  \bibinfo {pages} {1484} (\bibinfo {year} {2024})}\BibitemShut {NoStop}%
\bibitem [{\citenamefont {Porter}\ and\ \citenamefont
  {Thomas}(1956)}]{porter1956fluctuations}%
  \BibitemOpen
  \bibfield  {author} {\bibinfo {author} {\bibfnamefont {Charles~E}\
  \bibnamefont {Porter}}\ and\ \bibinfo {author} {\bibfnamefont {Robert~G}\
  \bibnamefont {Thomas}},\ }\bibfield  {title} {\enquote {\bibinfo {title}
  {Fluctuations of nuclear reaction widths},}\ }\href {\doibase
  10.1103/PhysRev.104.483} {\bibfield  {journal} {\bibinfo  {journal} {Physical
  Review}\ }\textbf {\bibinfo {volume} {104}},\ \bibinfo {pages} {483}
  (\bibinfo {year} {1956})}\BibitemShut {NoStop}%
\bibitem [{\citenamefont {Gretton}\ \emph
  {et~al.}(2012{\natexlab{b}})\citenamefont {Gretton}, \citenamefont
  {Sejdinovic}, \citenamefont {Strathmann}, \citenamefont {Balakrishnan},
  \citenamefont {Pontil}, \citenamefont {Fukumizu},\ and\ \citenamefont
  {Sriperumbudur}}]{Gretton2012optimal}%
  \BibitemOpen
  \bibfield  {author} {\bibinfo {author} {\bibfnamefont {Arthur}\ \bibnamefont
  {Gretton}}, \bibinfo {author} {\bibfnamefont {Dino}\ \bibnamefont
  {Sejdinovic}}, \bibinfo {author} {\bibfnamefont {Heiko}\ \bibnamefont
  {Strathmann}}, \bibinfo {author} {\bibfnamefont {Sivaraman}\ \bibnamefont
  {Balakrishnan}}, \bibinfo {author} {\bibfnamefont {Massimiliano}\
  \bibnamefont {Pontil}}, \bibinfo {author} {\bibfnamefont {Kenji}\
  \bibnamefont {Fukumizu}}, \ and\ \bibinfo {author} {\bibfnamefont
  {Bharath~K.}\ \bibnamefont {Sriperumbudur}},\ }\bibfield  {title} {\enquote
  {\bibinfo {title} {Optimal kernel choice for large-scale two-sample tests},}\
  }in\ \href
  {https://proceedings.neurips.cc/paper_files/paper/2012/file/dbe272bab69f8e13f14b405e038deb64-Paper.pdf}
  {\emph {\bibinfo {booktitle} {Advances in Neural Information Processing
  Systems}}}\ (\bibinfo {year} {2012})\BibitemShut {NoStop}%
\bibitem [{\citenamefont {Sutherland}\ \emph {et~al.}(2016)\citenamefont
  {Sutherland}, \citenamefont {Tung}, \citenamefont {Strathmann}, \citenamefont
  {De}, \citenamefont {Ramdas}, \citenamefont {Smola},\ and\ \citenamefont
  {Gretton}}]{sutherland2016generative}%
  \BibitemOpen
  \bibfield  {author} {\bibinfo {author} {\bibfnamefont {Danica~J}\
  \bibnamefont {Sutherland}}, \bibinfo {author} {\bibfnamefont {Hsiao-Yu}\
  \bibnamefont {Tung}}, \bibinfo {author} {\bibfnamefont {Heiko}\ \bibnamefont
  {Strathmann}}, \bibinfo {author} {\bibfnamefont {Soumyajit}\ \bibnamefont
  {De}}, \bibinfo {author} {\bibfnamefont {Aaditya}\ \bibnamefont {Ramdas}},
  \bibinfo {author} {\bibfnamefont {Alex}\ \bibnamefont {Smola}}, \ and\
  \bibinfo {author} {\bibfnamefont {Arthur}\ \bibnamefont {Gretton}},\
  }\bibfield  {title} {\enquote {\bibinfo {title} {Generative models and model
  criticism via optimized maximum mean discrepancy},}\ }\href
  {https://arxiv.org/abs/1611.04488} {\bibfield  {journal} {\bibinfo  {journal}
  {arXiv preprint arXiv:1611.04488}\ } (\bibinfo {year} {2016})}\BibitemShut
  {NoStop}%
\bibitem [{\citenamefont {Garreau}\ \emph {et~al.}(2017)\citenamefont
  {Garreau}, \citenamefont {Jitkrittum},\ and\ \citenamefont
  {Kanagawa}}]{garreau2017large}%
  \BibitemOpen
  \bibfield  {author} {\bibinfo {author} {\bibfnamefont {Damien}\ \bibnamefont
  {Garreau}}, \bibinfo {author} {\bibfnamefont {Wittawat}\ \bibnamefont
  {Jitkrittum}}, \ and\ \bibinfo {author} {\bibfnamefont {Motonobu}\
  \bibnamefont {Kanagawa}},\ }\bibfield  {title} {\enquote {\bibinfo {title}
  {Large sample analysis of the median heuristic},}\ }\href
  {https://arxiv.org/abs/1707.07269} {\bibfield  {journal} {\bibinfo  {journal}
  {arXiv preprint arXiv:1707.07269}\ } (\bibinfo {year} {2017})}\BibitemShut
  {NoStop}%
\bibitem [{\citenamefont {Leadbeater}\ \emph {et~al.}(2021)\citenamefont
  {Leadbeater}, \citenamefont {Sharrock}, \citenamefont {Coyle},\ and\
  \citenamefont {Benedetti}}]{leadbeater2021fdivergence}%
  \BibitemOpen
  \bibfield  {author} {\bibinfo {author} {\bibfnamefont {Chiara}\ \bibnamefont
  {Leadbeater}}, \bibinfo {author} {\bibfnamefont {Louis}\ \bibnamefont
  {Sharrock}}, \bibinfo {author} {\bibfnamefont {Brian}\ \bibnamefont {Coyle}},
  \ and\ \bibinfo {author} {\bibfnamefont {Marcello}\ \bibnamefont
  {Benedetti}},\ }\bibfield  {title} {\enquote {\bibinfo {title} {F-divergences
  and cost function locality in generative modelling with quantum circuits},}\
  }\href {\doibase 10.3390/e23101281} {\bibfield  {journal} {\bibinfo
  {journal} {Entropy}\ }\textbf {\bibinfo {volume} {23}},\ \bibinfo {pages}
  {1281} (\bibinfo {year} {2021})}\BibitemShut {NoStop}%
\bibitem [{\citenamefont {Khatri}\ \emph {et~al.}(2019)\citenamefont {Khatri},
  \citenamefont {LaRose}, \citenamefont {Poremba}, \citenamefont {Cincio},
  \citenamefont {Sornborger},\ and\ \citenamefont {Coles}}]{khatri2019quantum}%
  \BibitemOpen
  \bibfield  {author} {\bibinfo {author} {\bibfnamefont {Sumeet}\ \bibnamefont
  {Khatri}}, \bibinfo {author} {\bibfnamefont {Ryan}\ \bibnamefont {LaRose}},
  \bibinfo {author} {\bibfnamefont {Alexander}\ \bibnamefont {Poremba}},
  \bibinfo {author} {\bibfnamefont {Lukasz}\ \bibnamefont {Cincio}}, \bibinfo
  {author} {\bibfnamefont {Andrew~T}\ \bibnamefont {Sornborger}}, \ and\
  \bibinfo {author} {\bibfnamefont {Patrick~J}\ \bibnamefont {Coles}},\
  }\bibfield  {title} {\enquote {\bibinfo {title} {Quantum-assisted quantum
  compiling},}\ }\href {\doibase 10.22331/q-2019-05-13-140} {\bibfield
  {journal} {\bibinfo  {journal} {Quantum}\ }\textbf {\bibinfo {volume} {3}},\
  \bibinfo {pages} {140} (\bibinfo {year} {2019})}\BibitemShut {NoStop}%
\bibitem [{\citenamefont {Sweke}\ \emph {et~al.}(2020)\citenamefont {Sweke},
  \citenamefont {Wilde}, \citenamefont {Meyer}, \citenamefont {Schuld},
  \citenamefont {F{\"a}hrmann}, \citenamefont {Meynard-Piganeau},\ and\
  \citenamefont {Eisert}}]{sweke2020stochastic}%
  \BibitemOpen
  \bibfield  {author} {\bibinfo {author} {\bibfnamefont {Ryan}\ \bibnamefont
  {Sweke}}, \bibinfo {author} {\bibfnamefont {Frederik}\ \bibnamefont {Wilde}},
  \bibinfo {author} {\bibfnamefont {Johannes~Jakob}\ \bibnamefont {Meyer}},
  \bibinfo {author} {\bibfnamefont {Maria}\ \bibnamefont {Schuld}}, \bibinfo
  {author} {\bibfnamefont {Paul~K}\ \bibnamefont {F{\"a}hrmann}}, \bibinfo
  {author} {\bibfnamefont {Barth{\'e}l{\'e}my}\ \bibnamefont
  {Meynard-Piganeau}}, \ and\ \bibinfo {author} {\bibfnamefont {Jens}\
  \bibnamefont {Eisert}},\ }\bibfield  {title} {\enquote {\bibinfo {title}
  {Stochastic gradient descent for hybrid quantum-classical optimization},}\
  }\href {\doibase 10.22331/q-2020-08-31-314} {\bibfield  {journal} {\bibinfo
  {journal} {Quantum}\ }\textbf {\bibinfo {volume} {4}},\ \bibinfo {pages}
  {314} (\bibinfo {year} {2020})}\BibitemShut {NoStop}%
\bibitem [{\citenamefont {Holmes}\ \emph {et~al.}(2023)\citenamefont {Holmes},
  \citenamefont {Coble}, \citenamefont {Sornborger},\ and\ \citenamefont
  {Suba{\c{s}}{\i}}}]{holmes2023nonlinear}%
  \BibitemOpen
  \bibfield  {author} {\bibinfo {author} {\bibfnamefont {Zo{\"e}}\ \bibnamefont
  {Holmes}}, \bibinfo {author} {\bibfnamefont {Nolan~J}\ \bibnamefont {Coble}},
  \bibinfo {author} {\bibfnamefont {Andrew~T}\ \bibnamefont {Sornborger}}, \
  and\ \bibinfo {author} {\bibfnamefont {Yi{\u{g}}it}\ \bibnamefont
  {Suba{\c{s}}{\i}}},\ }\bibfield  {title} {\enquote {\bibinfo {title}
  {Nonlinear transformations in quantum computation},}\ }\href {\doibase
  10.1103/PhysRevResearch.5.013105} {\bibfield  {journal} {\bibinfo  {journal}
  {Physical Review Research}\ }\textbf {\bibinfo {volume} {5}},\ \bibinfo
  {pages} {013105} (\bibinfo {year} {2023})}\BibitemShut {NoStop}%
\bibitem [{\citenamefont {Gily{\'e}n}\ \emph {et~al.}(2019)\citenamefont
  {Gily{\'e}n}, \citenamefont {Su}, \citenamefont {Low},\ and\ \citenamefont
  {Wiebe}}]{gilyen2019quantum}%
  \BibitemOpen
  \bibfield  {author} {\bibinfo {author} {\bibfnamefont {Andr{\'a}s}\
  \bibnamefont {Gily{\'e}n}}, \bibinfo {author} {\bibfnamefont {Yuan}\
  \bibnamefont {Su}}, \bibinfo {author} {\bibfnamefont {Guang~Hao}\
  \bibnamefont {Low}}, \ and\ \bibinfo {author} {\bibfnamefont {Nathan}\
  \bibnamefont {Wiebe}},\ }\bibfield  {title} {\enquote {\bibinfo {title}
  {Quantum singular value transformation and beyond: exponential improvements
  for quantum matrix arithmetics},}\ }in\ \href {\doibase
  10.1145/3313276.3316366} {\emph {\bibinfo {booktitle} {Proceedings of the
  51st Annual ACM SIGACT Symposium on Theory of Computing}}}\ (\bibinfo {year}
  {2019})\ pp.\ \bibinfo {pages} {193--204}\BibitemShut {NoStop}%
\bibitem [{\citenamefont {Martyn}\ \emph {et~al.}(2021)\citenamefont {Martyn},
  \citenamefont {Rossi}, \citenamefont {Tan},\ and\ \citenamefont
  {Chuang}}]{martyn2021grand}%
  \BibitemOpen
  \bibfield  {author} {\bibinfo {author} {\bibfnamefont {John~M.}\ \bibnamefont
  {Martyn}}, \bibinfo {author} {\bibfnamefont {Zane~M.}\ \bibnamefont {Rossi}},
  \bibinfo {author} {\bibfnamefont {Andrew~K.}\ \bibnamefont {Tan}}, \ and\
  \bibinfo {author} {\bibfnamefont {Isaac~L.}\ \bibnamefont {Chuang}},\
  }\bibfield  {title} {\enquote {\bibinfo {title} {Grand unification of quantum
  algorithms},}\ }\href {\doibase 10.1103/PRXQuantum.2.040203} {\bibfield
  {journal} {\bibinfo  {journal} {PRX Quantum}\ }\textbf {\bibinfo {volume}
  {2}},\ \bibinfo {pages} {040203} (\bibinfo {year} {2021})}\BibitemShut
  {NoStop}%
\bibitem [{\citenamefont {Pierini}\ and\ \citenamefont
  {Zhang}(2020)}]{ECAT_data}%
  \BibitemOpen
  \bibfield  {author} {\bibinfo {author} {\bibfnamefont {Maurizio}\
  \bibnamefont {Pierini}}\ and\ \bibinfo {author} {\bibfnamefont {Matt}\
  \bibnamefont {Zhang}},\ }\bibfield  {title} {\enquote {\bibinfo {title}
  {{CLIC Calorimeter 3D images: Electron showers at Fixed Angle}},}\ }\href
  {\doibase 10.5281/zenodo.3603122} {\  (\bibinfo {year} {2020}),\
  10.5281/zenodo.3603122}\BibitemShut {NoStop}%
\bibitem [{\citenamefont {Agostinelli}\ \emph {et~al.}(2003)\citenamefont
  {Agostinelli}, \citenamefont {Allison}, \citenamefont {Amako}, \citenamefont
  {Apostolakis}, \citenamefont {Araujo}, \citenamefont {Arce}, \citenamefont
  {Asai}, \citenamefont {Axen}, \citenamefont {Banerjee}, \citenamefont
  {Barrand}, \citenamefont {Behner}, \citenamefont {Bellagamba}, \citenamefont
  {Boudreau}, \citenamefont {Broglia}, \citenamefont {Brunengo}, \citenamefont
  {Burkhardt}, \citenamefont {Chauvie}, \citenamefont {Chuma}, \citenamefont
  {Chytracek}, \citenamefont {Cooperman}, \citenamefont {Cosmo}, \citenamefont
  {Degtyarenko}, \citenamefont {Dell'Acqua}, \citenamefont {Depaola},
  \citenamefont {Dietrich}, \citenamefont {Enami}, \citenamefont {Feliciello},
  \citenamefont {Ferguson}, \citenamefont {Fesefeldt}, \citenamefont {Folger},
  \citenamefont {Foppiano}, \citenamefont {Forti}, \citenamefont {Garelli},
  \citenamefont {Giani}, \citenamefont {Giannitrapani}, \citenamefont {Gibin},
  \citenamefont {{Gómez Cadenas}}, \citenamefont {González}, \citenamefont
  {{Gracia Abril}}, \citenamefont {Greeniaus}, \citenamefont {Greiner},
  \citenamefont {Grichine}, \citenamefont {Grossheim}, \citenamefont
  {Guatelli}, \citenamefont {Gumplinger}, \citenamefont {Hamatsu},
  \citenamefont {Hashimoto}, \citenamefont {Hasui}, \citenamefont {Heikkinen},
  \citenamefont {Howard}, \citenamefont {Ivanchenko}, \citenamefont {Johnson},
  \citenamefont {Jones}, \citenamefont {Kallenbach}, \citenamefont {Kanaya},
  \citenamefont {Kawabata}, \citenamefont {Kawabata}, \citenamefont {Kawaguti},
  \citenamefont {Kelner}, \citenamefont {Kent}, \citenamefont {Kimura},
  \citenamefont {Kodama}, \citenamefont {Kokoulin}, \citenamefont {Kossov},
  \citenamefont {Kurashige}, \citenamefont {Lamanna}, \citenamefont {Lampén},
  \citenamefont {Lara}, \citenamefont {Lefebure}, \citenamefont {Lei},
  \citenamefont {Liendl}, \citenamefont {Lockman}, \citenamefont {Longo},
  \citenamefont {Magni}, \citenamefont {Maire}, \citenamefont {Medernach},
  \citenamefont {Minamimoto}, \citenamefont {{Mora de Freitas}}, \citenamefont
  {Morita}, \citenamefont {Murakami}, \citenamefont {Nagamatu}, \citenamefont
  {Nartallo}, \citenamefont {Nieminen}, \citenamefont {Nishimura},
  \citenamefont {Ohtsubo}, \citenamefont {Okamura}, \citenamefont {O'Neale},
  \citenamefont {Oohata}, \citenamefont {Paech}, \citenamefont {Perl},
  \citenamefont {Pfeiffer}, \citenamefont {Pia}, \citenamefont {Ranjard},
  \citenamefont {Rybin}, \citenamefont {Sadilov}, \citenamefont {{Di Salvo}},
  \citenamefont {Santin}, \citenamefont {Sasaki}, \citenamefont {Savvas},
  \citenamefont {Sawada}, \citenamefont {Scherer}, \citenamefont {Sei},
  \citenamefont {Sirotenko}, \citenamefont {Smith}, \citenamefont {Starkov},
  \citenamefont {Stoecker}, \citenamefont {Sulkimo}, \citenamefont {Takahata},
  \citenamefont {Tanaka}, \citenamefont {Tcherniaev}, \citenamefont {{Safai
  Tehrani}}, \citenamefont {Tropeano}, \citenamefont {Truscott}, \citenamefont
  {Uno}, \citenamefont {Urban}, \citenamefont {Urban}, \citenamefont {Verderi},
  \citenamefont {Walkden}, \citenamefont {Wander}, \citenamefont {Weber},
  \citenamefont {Wellisch}, \citenamefont {Wenaus}, \citenamefont {Williams},
  \citenamefont {Wright}, \citenamefont {Yamada}, \citenamefont {Yoshida},\
  and\ \citenamefont {Zschiesche}}]{Geant4}%
  \BibitemOpen
  \bibfield  {author} {\bibinfo {author} {\bibfnamefont {S.}~\bibnamefont
  {Agostinelli}}, \bibinfo {author} {\bibfnamefont {J.}~\bibnamefont
  {Allison}}, \bibinfo {author} {\bibfnamefont {K.}~\bibnamefont {Amako}},
  \bibinfo {author} {\bibfnamefont {J.}~\bibnamefont {Apostolakis}}, \bibinfo
  {author} {\bibfnamefont {H.}~\bibnamefont {Araujo}}, \bibinfo {author}
  {\bibfnamefont {P.}~\bibnamefont {Arce}}, \bibinfo {author} {\bibfnamefont
  {M.}~\bibnamefont {Asai}}, \bibinfo {author} {\bibfnamefont {D.}~\bibnamefont
  {Axen}}, \bibinfo {author} {\bibfnamefont {S.}~\bibnamefont {Banerjee}},
  \bibinfo {author} {\bibfnamefont {G.}~\bibnamefont {Barrand}}, \bibinfo
  {author} {\bibfnamefont {F.}~\bibnamefont {Behner}}, \bibinfo {author}
  {\bibfnamefont {L.}~\bibnamefont {Bellagamba}}, \bibinfo {author}
  {\bibfnamefont {J.}~\bibnamefont {Boudreau}}, \bibinfo {author}
  {\bibfnamefont {L.}~\bibnamefont {Broglia}}, \bibinfo {author} {\bibfnamefont
  {A.}~\bibnamefont {Brunengo}}, \bibinfo {author} {\bibfnamefont
  {H.}~\bibnamefont {Burkhardt}}, \bibinfo {author} {\bibfnamefont
  {S.}~\bibnamefont {Chauvie}}, \bibinfo {author} {\bibfnamefont
  {J.}~\bibnamefont {Chuma}}, \bibinfo {author} {\bibfnamefont
  {R.}~\bibnamefont {Chytracek}}, \bibinfo {author} {\bibfnamefont
  {G.}~\bibnamefont {Cooperman}}, \bibinfo {author} {\bibfnamefont
  {G.}~\bibnamefont {Cosmo}}, \bibinfo {author} {\bibfnamefont
  {P.}~\bibnamefont {Degtyarenko}}, \bibinfo {author} {\bibfnamefont
  {A.}~\bibnamefont {Dell'Acqua}}, \bibinfo {author} {\bibfnamefont
  {G.}~\bibnamefont {Depaola}}, \bibinfo {author} {\bibfnamefont
  {D.}~\bibnamefont {Dietrich}}, \bibinfo {author} {\bibfnamefont
  {R.}~\bibnamefont {Enami}}, \bibinfo {author} {\bibfnamefont
  {A.}~\bibnamefont {Feliciello}}, \bibinfo {author} {\bibfnamefont
  {C.}~\bibnamefont {Ferguson}}, \bibinfo {author} {\bibfnamefont
  {H.}~\bibnamefont {Fesefeldt}}, \bibinfo {author} {\bibfnamefont
  {G.}~\bibnamefont {Folger}}, \bibinfo {author} {\bibfnamefont
  {F.}~\bibnamefont {Foppiano}}, \bibinfo {author} {\bibfnamefont
  {A.}~\bibnamefont {Forti}}, \bibinfo {author} {\bibfnamefont
  {S.}~\bibnamefont {Garelli}}, \bibinfo {author} {\bibfnamefont
  {S.}~\bibnamefont {Giani}}, \bibinfo {author} {\bibfnamefont
  {R.}~\bibnamefont {Giannitrapani}}, \bibinfo {author} {\bibfnamefont
  {D.}~\bibnamefont {Gibin}}, \bibinfo {author} {\bibfnamefont {J.J.}\
  \bibnamefont {{Gómez Cadenas}}}, \bibinfo {author} {\bibfnamefont
  {I.}~\bibnamefont {González}}, \bibinfo {author} {\bibfnamefont
  {G.}~\bibnamefont {{Gracia Abril}}}, \bibinfo {author} {\bibfnamefont
  {G.}~\bibnamefont {Greeniaus}}, \bibinfo {author} {\bibfnamefont
  {W.}~\bibnamefont {Greiner}}, \bibinfo {author} {\bibfnamefont
  {V.}~\bibnamefont {Grichine}}, \bibinfo {author} {\bibfnamefont
  {A.}~\bibnamefont {Grossheim}}, \bibinfo {author} {\bibfnamefont
  {S.}~\bibnamefont {Guatelli}}, \bibinfo {author} {\bibfnamefont
  {P.}~\bibnamefont {Gumplinger}}, \bibinfo {author} {\bibfnamefont
  {R.}~\bibnamefont {Hamatsu}}, \bibinfo {author} {\bibfnamefont
  {K.}~\bibnamefont {Hashimoto}}, \bibinfo {author} {\bibfnamefont
  {H.}~\bibnamefont {Hasui}}, \bibinfo {author} {\bibfnamefont
  {A.}~\bibnamefont {Heikkinen}}, \bibinfo {author} {\bibfnamefont
  {A.}~\bibnamefont {Howard}}, \bibinfo {author} {\bibfnamefont
  {V.}~\bibnamefont {Ivanchenko}}, \bibinfo {author} {\bibfnamefont
  {A.}~\bibnamefont {Johnson}}, \bibinfo {author} {\bibfnamefont {F.W.}\
  \bibnamefont {Jones}}, \bibinfo {author} {\bibfnamefont {J.}~\bibnamefont
  {Kallenbach}}, \bibinfo {author} {\bibfnamefont {N.}~\bibnamefont {Kanaya}},
  \bibinfo {author} {\bibfnamefont {M.}~\bibnamefont {Kawabata}}, \bibinfo
  {author} {\bibfnamefont {Y.}~\bibnamefont {Kawabata}}, \bibinfo {author}
  {\bibfnamefont {M.}~\bibnamefont {Kawaguti}}, \bibinfo {author}
  {\bibfnamefont {S.}~\bibnamefont {Kelner}}, \bibinfo {author} {\bibfnamefont
  {P.}~\bibnamefont {Kent}}, \bibinfo {author} {\bibfnamefont {A.}~\bibnamefont
  {Kimura}}, \bibinfo {author} {\bibfnamefont {T.}~\bibnamefont {Kodama}},
  \bibinfo {author} {\bibfnamefont {R.}~\bibnamefont {Kokoulin}}, \bibinfo
  {author} {\bibfnamefont {M.}~\bibnamefont {Kossov}}, \bibinfo {author}
  {\bibfnamefont {H.}~\bibnamefont {Kurashige}}, \bibinfo {author}
  {\bibfnamefont {E.}~\bibnamefont {Lamanna}}, \bibinfo {author} {\bibfnamefont
  {T.}~\bibnamefont {Lampén}}, \bibinfo {author} {\bibfnamefont
  {V.}~\bibnamefont {Lara}}, \bibinfo {author} {\bibfnamefont {V.}~\bibnamefont
  {Lefebure}}, \bibinfo {author} {\bibfnamefont {F.}~\bibnamefont {Lei}},
  \bibinfo {author} {\bibfnamefont {M.}~\bibnamefont {Liendl}}, \bibinfo
  {author} {\bibfnamefont {W.}~\bibnamefont {Lockman}}, \bibinfo {author}
  {\bibfnamefont {F.}~\bibnamefont {Longo}}, \bibinfo {author} {\bibfnamefont
  {S.}~\bibnamefont {Magni}}, \bibinfo {author} {\bibfnamefont
  {M.}~\bibnamefont {Maire}}, \bibinfo {author} {\bibfnamefont
  {E.}~\bibnamefont {Medernach}}, \bibinfo {author} {\bibfnamefont
  {K.}~\bibnamefont {Minamimoto}}, \bibinfo {author} {\bibfnamefont
  {P.}~\bibnamefont {{Mora de Freitas}}}, \bibinfo {author} {\bibfnamefont
  {Y.}~\bibnamefont {Morita}}, \bibinfo {author} {\bibfnamefont
  {K.}~\bibnamefont {Murakami}}, \bibinfo {author} {\bibfnamefont
  {M.}~\bibnamefont {Nagamatu}}, \bibinfo {author} {\bibfnamefont
  {R.}~\bibnamefont {Nartallo}}, \bibinfo {author} {\bibfnamefont
  {P.}~\bibnamefont {Nieminen}}, \bibinfo {author} {\bibfnamefont
  {T.}~\bibnamefont {Nishimura}}, \bibinfo {author} {\bibfnamefont
  {K.}~\bibnamefont {Ohtsubo}}, \bibinfo {author} {\bibfnamefont
  {M.}~\bibnamefont {Okamura}}, \bibinfo {author} {\bibfnamefont
  {S.}~\bibnamefont {O'Neale}}, \bibinfo {author} {\bibfnamefont
  {Y.}~\bibnamefont {Oohata}}, \bibinfo {author} {\bibfnamefont
  {K.}~\bibnamefont {Paech}}, \bibinfo {author} {\bibfnamefont
  {J.}~\bibnamefont {Perl}}, \bibinfo {author} {\bibfnamefont {A.}~\bibnamefont
  {Pfeiffer}}, \bibinfo {author} {\bibfnamefont {M.G.}\ \bibnamefont {Pia}},
  \bibinfo {author} {\bibfnamefont {F.}~\bibnamefont {Ranjard}}, \bibinfo
  {author} {\bibfnamefont {A.}~\bibnamefont {Rybin}}, \bibinfo {author}
  {\bibfnamefont {S.}~\bibnamefont {Sadilov}}, \bibinfo {author} {\bibfnamefont
  {E.}~\bibnamefont {{Di Salvo}}}, \bibinfo {author} {\bibfnamefont
  {G.}~\bibnamefont {Santin}}, \bibinfo {author} {\bibfnamefont
  {T.}~\bibnamefont {Sasaki}}, \bibinfo {author} {\bibfnamefont
  {N.}~\bibnamefont {Savvas}}, \bibinfo {author} {\bibfnamefont
  {Y.}~\bibnamefont {Sawada}}, \bibinfo {author} {\bibfnamefont
  {S.}~\bibnamefont {Scherer}}, \bibinfo {author} {\bibfnamefont
  {S.}~\bibnamefont {Sei}}, \bibinfo {author} {\bibfnamefont {V.}~\bibnamefont
  {Sirotenko}}, \bibinfo {author} {\bibfnamefont {D.}~\bibnamefont {Smith}},
  \bibinfo {author} {\bibfnamefont {N.}~\bibnamefont {Starkov}}, \bibinfo
  {author} {\bibfnamefont {H.}~\bibnamefont {Stoecker}}, \bibinfo {author}
  {\bibfnamefont {J.}~\bibnamefont {Sulkimo}}, \bibinfo {author} {\bibfnamefont
  {M.}~\bibnamefont {Takahata}}, \bibinfo {author} {\bibfnamefont
  {S.}~\bibnamefont {Tanaka}}, \bibinfo {author} {\bibfnamefont
  {E.}~\bibnamefont {Tcherniaev}}, \bibinfo {author} {\bibfnamefont
  {E.}~\bibnamefont {{Safai Tehrani}}}, \bibinfo {author} {\bibfnamefont
  {M.}~\bibnamefont {Tropeano}}, \bibinfo {author} {\bibfnamefont
  {P.}~\bibnamefont {Truscott}}, \bibinfo {author} {\bibfnamefont
  {H.}~\bibnamefont {Uno}}, \bibinfo {author} {\bibfnamefont {L.}~\bibnamefont
  {Urban}}, \bibinfo {author} {\bibfnamefont {P.}~\bibnamefont {Urban}},
  \bibinfo {author} {\bibfnamefont {M.}~\bibnamefont {Verderi}}, \bibinfo
  {author} {\bibfnamefont {A.}~\bibnamefont {Walkden}}, \bibinfo {author}
  {\bibfnamefont {W.}~\bibnamefont {Wander}}, \bibinfo {author} {\bibfnamefont
  {H.}~\bibnamefont {Weber}}, \bibinfo {author} {\bibfnamefont {J.P.}\
  \bibnamefont {Wellisch}}, \bibinfo {author} {\bibfnamefont {T.}~\bibnamefont
  {Wenaus}}, \bibinfo {author} {\bibfnamefont {D.C.}\ \bibnamefont {Williams}},
  \bibinfo {author} {\bibfnamefont {D.}~\bibnamefont {Wright}}, \bibinfo
  {author} {\bibfnamefont {T.}~\bibnamefont {Yamada}}, \bibinfo {author}
  {\bibfnamefont {H.}~\bibnamefont {Yoshida}}, \ and\ \bibinfo {author}
  {\bibfnamefont {D.}~\bibnamefont {Zschiesche}},\ }\bibfield  {title}
  {\enquote {\bibinfo {title} {Geant4—a simulation toolkit},}\ }\href
  {\doibase https://doi.org/10.1016/S0168-9002(03)01368-8} {\bibfield
  {journal} {\bibinfo  {journal} {Nuclear Instruments and Methods in Physics
  Research Section A: Accelerators, Spectrometers, Detectors and Associated
  Equipment}\ }\textbf {\bibinfo {volume} {506}},\ \bibinfo {pages} {250--303}
  (\bibinfo {year} {2003})}\BibitemShut {NoStop}%
\bibitem [{\citenamefont {Schuld}\ \emph {et~al.}(2019)\citenamefont {Schuld},
  \citenamefont {Bergholm}, \citenamefont {Gogolin}, \citenamefont {Izaac},\
  and\ \citenamefont {Killoran}}]{quantum_grad}%
  \BibitemOpen
  \bibfield  {author} {\bibinfo {author} {\bibfnamefont {Maria}\ \bibnamefont
  {Schuld}}, \bibinfo {author} {\bibfnamefont {Ville}\ \bibnamefont
  {Bergholm}}, \bibinfo {author} {\bibfnamefont {Christian}\ \bibnamefont
  {Gogolin}}, \bibinfo {author} {\bibfnamefont {Josh}\ \bibnamefont {Izaac}}, \
  and\ \bibinfo {author} {\bibfnamefont {Nathan}\ \bibnamefont {Killoran}},\
  }\bibfield  {title} {\enquote {\bibinfo {title} {Evaluating analytic
  gradients on quantum hardware},}\ }\href {\doibase
  10.1103/PhysRevA.99.032331} {\bibfield  {journal} {\bibinfo  {journal} {Phys.
  Rev. A}\ }\textbf {\bibinfo {volume} {99}},\ \bibinfo {pages} {032331}
  (\bibinfo {year} {2019})}\BibitemShut {NoStop}%
\bibitem [{\citenamefont {Kingma}\ and\ \citenamefont {Ba}(2015)}]{adam}%
  \BibitemOpen
  \bibfield  {author} {\bibinfo {author} {\bibfnamefont {Diederik~P.}\
  \bibnamefont {Kingma}}\ and\ \bibinfo {author} {\bibfnamefont {Jimmy}\
  \bibnamefont {Ba}},\ }\bibfield  {title} {\enquote {\bibinfo {title} {Adam:
  {A} method for stochastic optimization},}\ }in\ \href
  {http://arxiv.org/abs/1412.6980} {\emph {\bibinfo {booktitle} {3rd
  International Conference on Learning Representations, {ICLR} 2015, San Diego,
  CA, USA, May 7-9, 2015, Conference Track Proceedings}}}\ (\bibinfo {year}
  {2015})\BibitemShut {NoStop}%
\bibitem [{\citenamefont {Han}\ \emph {et~al.}(2018{\natexlab{b}})\citenamefont
  {Han}, \citenamefont {Wang}, \citenamefont {Fan}, \citenamefont {Wang},\ and\
  \citenamefont {Zhang}}]{Han2018TNBM}%
  \BibitemOpen
  \bibfield  {author} {\bibinfo {author} {\bibfnamefont {Zhao-Yu}\ \bibnamefont
  {Han}}, \bibinfo {author} {\bibfnamefont {Jun}\ \bibnamefont {Wang}},
  \bibinfo {author} {\bibfnamefont {Heng}\ \bibnamefont {Fan}}, \bibinfo
  {author} {\bibfnamefont {Lei}\ \bibnamefont {Wang}}, \ and\ \bibinfo {author}
  {\bibfnamefont {Pan}\ \bibnamefont {Zhang}},\ }\bibfield  {title} {\enquote
  {\bibinfo {title} {Unsupervised generative modeling using matrix product
  states},}\ }\href {\doibase 10.1103/PhysRevX.8.031012} {\bibfield  {journal}
  {\bibinfo  {journal} {Phys. Rev. X}\ }\textbf {\bibinfo {volume} {8}},\
  \bibinfo {pages} {031012} (\bibinfo {year} {2018}{\natexlab{b}})}\BibitemShut
  {NoStop}%
\bibitem [{\citenamefont {Grant}\ \emph {et~al.}(2019)\citenamefont {Grant},
  \citenamefont {Wossnig}, \citenamefont {Ostaszewski},\ and\ \citenamefont
  {Benedetti}}]{grant2019initialization}%
  \BibitemOpen
  \bibfield  {author} {\bibinfo {author} {\bibfnamefont {Edward}\ \bibnamefont
  {Grant}}, \bibinfo {author} {\bibfnamefont {Leonard}\ \bibnamefont
  {Wossnig}}, \bibinfo {author} {\bibfnamefont {Mateusz}\ \bibnamefont
  {Ostaszewski}}, \ and\ \bibinfo {author} {\bibfnamefont {Marcello}\
  \bibnamefont {Benedetti}},\ }\bibfield  {title} {\enquote {\bibinfo {title}
  {An initialization strategy for addressing barren plateaus in parametrized
  quantum circuits},}\ }\href {\doibase 10.22331/q-2019-12-09-214} {\bibfield
  {journal} {\bibinfo  {journal} {Quantum}\ }\textbf {\bibinfo {volume} {3}},\
  \bibinfo {pages} {214} (\bibinfo {year} {2019})}\BibitemShut {NoStop}%
\bibitem [{\citenamefont {Sauvage}\ \emph {et~al.}(2021)\citenamefont
  {Sauvage}, \citenamefont {Sim}, \citenamefont {Kunitsa}, \citenamefont
  {Simon}, \citenamefont {Mauri},\ and\ \citenamefont
  {Perdomo-Ortiz}}]{sauvage2021flip}%
  \BibitemOpen
  \bibfield  {author} {\bibinfo {author} {\bibfnamefont {Frederic}\
  \bibnamefont {Sauvage}}, \bibinfo {author} {\bibfnamefont {Sukin}\
  \bibnamefont {Sim}}, \bibinfo {author} {\bibfnamefont {Alexander~A}\
  \bibnamefont {Kunitsa}}, \bibinfo {author} {\bibfnamefont {William~A}\
  \bibnamefont {Simon}}, \bibinfo {author} {\bibfnamefont {Marta}\ \bibnamefont
  {Mauri}}, \ and\ \bibinfo {author} {\bibfnamefont {Alejandro}\ \bibnamefont
  {Perdomo-Ortiz}},\ }\bibfield  {title} {\enquote {\bibinfo {title} {Flip: A
  flexible initializer for arbitrarily-sized parametrized quantum circuits},}\
  }\href {https://arxiv.org/abs/2103.08572} {\bibfield  {journal} {\bibinfo
  {journal} {arXiv preprint arXiv:2103.08572}\ } (\bibinfo {year}
  {2021})}\BibitemShut {NoStop}%
\bibitem [{\citenamefont {Liu}\ \emph {et~al.}(2023)\citenamefont {Liu},
  \citenamefont {Sun}, \citenamefont {Wu}, \citenamefont {Han},\ and\
  \citenamefont {Guo}}]{liu2023mitigating}%
  \BibitemOpen
  \bibfield  {author} {\bibinfo {author} {\bibfnamefont {Huan-Yu}\ \bibnamefont
  {Liu}}, \bibinfo {author} {\bibfnamefont {Tai-Ping}\ \bibnamefont {Sun}},
  \bibinfo {author} {\bibfnamefont {Yu-Chun}\ \bibnamefont {Wu}}, \bibinfo
  {author} {\bibfnamefont {Yong-Jian}\ \bibnamefont {Han}}, \ and\ \bibinfo
  {author} {\bibfnamefont {Guo-Ping}\ \bibnamefont {Guo}},\ }\bibfield  {title}
  {\enquote {\bibinfo {title} {Mitigating barren plateaus with
  transfer-learning-inspired parameter initializations},}\ }\href {\doibase
  10.1088/1367-2630/acb58e} {\bibfield  {journal} {\bibinfo  {journal} {New
  Journal of Physics}\ }\textbf {\bibinfo {volume} {25}},\ \bibinfo {pages}
  {013039} (\bibinfo {year} {2023})}\BibitemShut {NoStop}%
\bibitem [{\citenamefont {Cheng}\ \emph {et~al.}(2022)\citenamefont {Cheng},
  \citenamefont {Khosla}, \citenamefont {Self}, \citenamefont {Lin},
  \citenamefont {Li}, \citenamefont {Medina},\ and\ \citenamefont
  {Kim}}]{cheng2022clifford}%
  \BibitemOpen
  \bibfield  {author} {\bibinfo {author} {\bibfnamefont {MH}~\bibnamefont
  {Cheng}}, \bibinfo {author} {\bibfnamefont {KE}~\bibnamefont {Khosla}},
  \bibinfo {author} {\bibfnamefont {CN}~\bibnamefont {Self}}, \bibinfo {author}
  {\bibfnamefont {M}~\bibnamefont {Lin}}, \bibinfo {author} {\bibfnamefont
  {BX}~\bibnamefont {Li}}, \bibinfo {author} {\bibfnamefont {AC}~\bibnamefont
  {Medina}}, \ and\ \bibinfo {author} {\bibfnamefont {MS}~\bibnamefont {Kim}},\
  }\bibfield  {title} {\enquote {\bibinfo {title} {Clifford circuit
  initialisation for variational quantum algorithms},}\ }\href
  {https://arxiv.org/abs/2207.01539} {\bibfield  {journal} {\bibinfo  {journal}
  {arXiv preprint arXiv:2207.01539}\ } (\bibinfo {year} {2022})}\BibitemShut
  {NoStop}%
\bibitem [{\citenamefont {Mitarai}\ \emph {et~al.}(2022)\citenamefont
  {Mitarai}, \citenamefont {Suzuki}, \citenamefont {Mizukami}, \citenamefont
  {Nakagawa},\ and\ \citenamefont {Fujii}}]{mitarai2022quadratic}%
  \BibitemOpen
  \bibfield  {author} {\bibinfo {author} {\bibfnamefont {Kosuke}\ \bibnamefont
  {Mitarai}}, \bibinfo {author} {\bibfnamefont {Yasunari}\ \bibnamefont
  {Suzuki}}, \bibinfo {author} {\bibfnamefont {Wataru}\ \bibnamefont
  {Mizukami}}, \bibinfo {author} {\bibfnamefont {Yuya~O}\ \bibnamefont
  {Nakagawa}}, \ and\ \bibinfo {author} {\bibfnamefont {Keisuke}\ \bibnamefont
  {Fujii}},\ }\bibfield  {title} {\enquote {\bibinfo {title} {Quadratic
  clifford expansion for efficient benchmarking and initialization of
  variational quantum algorithms},}\ }\href {\doibase
  10.1103/PhysRevResearch.4.033012} {\bibfield  {journal} {\bibinfo  {journal}
  {Physical Review Research}\ }\textbf {\bibinfo {volume} {4}},\ \bibinfo
  {pages} {033012} (\bibinfo {year} {2022})}\BibitemShut {NoStop}%
\bibitem [{\citenamefont {Rudolph}\ \emph {et~al.}(2023)\citenamefont
  {Rudolph}, \citenamefont {Miller}, \citenamefont {Motlagh}, \citenamefont
  {Chen}, \citenamefont {Acharya},\ and\ \citenamefont
  {Perdomo-Ortiz}}]{rudolph2022synergy}%
  \BibitemOpen
  \bibfield  {author} {\bibinfo {author} {\bibfnamefont {Manuel~S}\
  \bibnamefont {Rudolph}}, \bibinfo {author} {\bibfnamefont {Jacob}\
  \bibnamefont {Miller}}, \bibinfo {author} {\bibfnamefont {Danial}\
  \bibnamefont {Motlagh}}, \bibinfo {author} {\bibfnamefont {Jing}\
  \bibnamefont {Chen}}, \bibinfo {author} {\bibfnamefont {Atithi}\ \bibnamefont
  {Acharya}}, \ and\ \bibinfo {author} {\bibfnamefont {Alejandro}\ \bibnamefont
  {Perdomo-Ortiz}},\ }\bibfield  {title} {\enquote {\bibinfo {title}
  {Synergistic pretraining of parametrized quantum circuits via tensor
  networks},}\ }\href {\doibase 10.1038/s41467-023-43908-6} {\bibfield
  {journal} {\bibinfo  {journal} {Nature Communications}\ }\textbf {\bibinfo
  {volume} {14}},\ \bibinfo {pages} {8367} (\bibinfo {year}
  {2023})}\BibitemShut {NoStop}%
\bibitem [{\citenamefont {Schatzki}\ \emph {et~al.}(2024)\citenamefont
  {Schatzki}, \citenamefont {Larocca}, \citenamefont {Nguyen}, \citenamefont
  {Sauvage},\ and\ \citenamefont {Cerezo}}]{schatzki2022theoretical}%
  \BibitemOpen
  \bibfield  {author} {\bibinfo {author} {\bibfnamefont {Louis}\ \bibnamefont
  {Schatzki}}, \bibinfo {author} {\bibfnamefont {Martin}\ \bibnamefont
  {Larocca}}, \bibinfo {author} {\bibfnamefont {Quynh~T}\ \bibnamefont
  {Nguyen}}, \bibinfo {author} {\bibfnamefont {Frederic}\ \bibnamefont
  {Sauvage}}, \ and\ \bibinfo {author} {\bibfnamefont {Marco}\ \bibnamefont
  {Cerezo}},\ }\bibfield  {title} {\enquote {\bibinfo {title} {Theoretical
  guarantees for permutation-equivariant quantum neural networks},}\ }\href
  {\doibase 10.1038/s41534-024-00804-1} {\bibfield  {journal} {\bibinfo
  {journal} {npj Quantum Information}\ }\textbf {\bibinfo {volume} {10}},\
  \bibinfo {pages} {12} (\bibinfo {year} {2024})}\BibitemShut {NoStop}%
\bibitem [{\citenamefont {Nguyen}\ \emph {et~al.}(2024)\citenamefont {Nguyen},
  \citenamefont {Schatzki}, \citenamefont {Braccia}, \citenamefont {Ragone},
  \citenamefont {Coles}, \citenamefont {Sauvage}, \citenamefont {Larocca},\
  and\ \citenamefont {Cerezo}}]{nguyen2022atheory}%
  \BibitemOpen
  \bibfield  {author} {\bibinfo {author} {\bibfnamefont {Quynh~T.}\
  \bibnamefont {Nguyen}}, \bibinfo {author} {\bibfnamefont {Louis}\
  \bibnamefont {Schatzki}}, \bibinfo {author} {\bibfnamefont {Paolo}\
  \bibnamefont {Braccia}}, \bibinfo {author} {\bibfnamefont {Michael}\
  \bibnamefont {Ragone}}, \bibinfo {author} {\bibfnamefont {Patrick~J.}\
  \bibnamefont {Coles}}, \bibinfo {author} {\bibfnamefont {Fr\'ed\'eric}\
  \bibnamefont {Sauvage}}, \bibinfo {author} {\bibfnamefont {Mart\'{\i}n}\
  \bibnamefont {Larocca}}, \ and\ \bibinfo {author} {\bibfnamefont
  {M.}~\bibnamefont {Cerezo}},\ }\bibfield  {title} {\enquote {\bibinfo {title}
  {Theory for equivariant quantum neural networks},}\ }\href {\doibase
  10.1103/PRXQuantum.5.020328} {\bibfield  {journal} {\bibinfo  {journal} {PRX
  Quantum}\ }\textbf {\bibinfo {volume} {5}},\ \bibinfo {pages} {020328}
  (\bibinfo {year} {2024})}\BibitemShut {NoStop}%
\bibitem [{\citenamefont {Ragone}\ \emph {et~al.}(2022)\citenamefont {Ragone},
  \citenamefont {Nguyen}, \citenamefont {Schatzki}, \citenamefont {Braccia},
  \citenamefont {Larocca}, \citenamefont {Sauvage}, \citenamefont {Coles},\
  and\ \citenamefont {Cerezo}}]{ragone2022representation}%
  \BibitemOpen
  \bibfield  {author} {\bibinfo {author} {\bibfnamefont {Michael}\ \bibnamefont
  {Ragone}}, \bibinfo {author} {\bibfnamefont {Quynh~T.}\ \bibnamefont
  {Nguyen}}, \bibinfo {author} {\bibfnamefont {Louis}\ \bibnamefont
  {Schatzki}}, \bibinfo {author} {\bibfnamefont {Paolo}\ \bibnamefont
  {Braccia}}, \bibinfo {author} {\bibfnamefont {Martin}\ \bibnamefont
  {Larocca}}, \bibinfo {author} {\bibfnamefont {Frederic}\ \bibnamefont
  {Sauvage}}, \bibinfo {author} {\bibfnamefont {Patrick~J.}\ \bibnamefont
  {Coles}}, \ and\ \bibinfo {author} {\bibfnamefont {M.}~\bibnamefont
  {Cerezo}},\ }\bibfield  {title} {\enquote {\bibinfo {title} {Representation
  theory for geometric quantum machine learning},}\ }\href {\doibase
  10.48550/arXiv.2210.07980} {\bibfield  {journal} {\bibinfo  {journal} {arXiv
  preprint arXiv:2210.07980}\ } (\bibinfo {year} {2022}),\
  10.48550/arXiv.2210.07980}\BibitemShut {NoStop}%
\bibitem [{\citenamefont {Meyer}\ \emph {et~al.}(2023)\citenamefont {Meyer},
  \citenamefont {Mularski}, \citenamefont {Gil-Fuster}, \citenamefont {Mele},
  \citenamefont {Arzani}, \citenamefont {Wilms},\ and\ \citenamefont
  {Eisert}}]{meyer2022exploiting}%
  \BibitemOpen
  \bibfield  {author} {\bibinfo {author} {\bibfnamefont {Johannes~Jakob}\
  \bibnamefont {Meyer}}, \bibinfo {author} {\bibfnamefont {Marian}\
  \bibnamefont {Mularski}}, \bibinfo {author} {\bibfnamefont {Elies}\
  \bibnamefont {Gil-Fuster}}, \bibinfo {author} {\bibfnamefont {Antonio~Anna}\
  \bibnamefont {Mele}}, \bibinfo {author} {\bibfnamefont {Francesco}\
  \bibnamefont {Arzani}}, \bibinfo {author} {\bibfnamefont {Alissa}\
  \bibnamefont {Wilms}}, \ and\ \bibinfo {author} {\bibfnamefont {Jens}\
  \bibnamefont {Eisert}},\ }\bibfield  {title} {\enquote {\bibinfo {title}
  {Exploiting symmetry in variational quantum machine learning},}\ }\href
  {\doibase 10.1103/PRXQuantum.4.010328} {\bibfield  {journal} {\bibinfo
  {journal} {PRX Quantum}\ }\textbf {\bibinfo {volume} {4}},\ \bibinfo {pages}
  {010328} (\bibinfo {year} {2023})}\BibitemShut {NoStop}%
\bibitem [{\citenamefont {Boixo}\ \emph {et~al.}(2018)\citenamefont {Boixo},
  \citenamefont {Isakov}, \citenamefont {Smelyanskiy}, \citenamefont {Babbush},
  \citenamefont {Ding}, \citenamefont {Jiang}, \citenamefont {Bremner},
  \citenamefont {Martinis},\ and\ \citenamefont
  {Neven}}]{boixo2018characterizing}%
  \BibitemOpen
  \bibfield  {author} {\bibinfo {author} {\bibfnamefont {Sergio}\ \bibnamefont
  {Boixo}}, \bibinfo {author} {\bibfnamefont {Sergei~V}\ \bibnamefont
  {Isakov}}, \bibinfo {author} {\bibfnamefont {Vadim~N}\ \bibnamefont
  {Smelyanskiy}}, \bibinfo {author} {\bibfnamefont {Ryan}\ \bibnamefont
  {Babbush}}, \bibinfo {author} {\bibfnamefont {Nan}\ \bibnamefont {Ding}},
  \bibinfo {author} {\bibfnamefont {Zhang}\ \bibnamefont {Jiang}}, \bibinfo
  {author} {\bibfnamefont {Michael~J}\ \bibnamefont {Bremner}}, \bibinfo
  {author} {\bibfnamefont {John~M}\ \bibnamefont {Martinis}}, \ and\ \bibinfo
  {author} {\bibfnamefont {Hartmut}\ \bibnamefont {Neven}},\ }\bibfield
  {title} {\enquote {\bibinfo {title} {Characterizing quantum supremacy in
  near-term devices},}\ }\href {\doibase 0.1038/s41567-018-0124-x} {\bibfield
  {journal} {\bibinfo  {journal} {Nature Physics}\ }\textbf {\bibinfo {volume}
  {14}},\ \bibinfo {pages} {595--600} (\bibinfo {year} {2018})}\BibitemShut
  {NoStop}%
\bibitem [{\citenamefont {Abramowitz}(1972)}]{abramowitz1972abramowitz}%
  \BibitemOpen
  \bibfield  {author} {\bibinfo {author} {\bibfnamefont {Milton}\ \bibnamefont
  {Abramowitz}},\ }\bibfield  {title} {\enquote {\bibinfo {title} {Abramowitz
  and stegun: Handbook of mathematical functions},}\ }\href@noop {} {\bibfield
  {journal} {\bibinfo  {journal} {US Department of Commerce}\ }\textbf
  {\bibinfo {volume} {10}} (\bibinfo {year} {1972})}\BibitemShut {NoStop}%
\bibitem [{\citenamefont {Mullane}(2020)}]{mullane2020sampling}%
  \BibitemOpen
  \bibfield  {author} {\bibinfo {author} {\bibfnamefont {Sean}\ \bibnamefont
  {Mullane}},\ }\bibfield  {title} {\enquote {\bibinfo {title} {Sampling random
  quantum circuits: a pedestrian's guide},}\ }\href
  {https://arxiv.org/abs/2007.07872} {\bibfield  {journal} {\bibinfo  {journal}
  {arXiv preprint arXiv:2007.07872}\ } (\bibinfo {year} {2020})}\BibitemShut
  {NoStop}%
\bibitem [{\citenamefont {de~Guise}\ \emph {et~al.}(2018)\citenamefont
  {de~Guise}, \citenamefont {Di~Matteo},\ and\ \citenamefont
  {S{\'a}nchez-Soto}}]{de2018simple}%
  \BibitemOpen
  \bibfield  {author} {\bibinfo {author} {\bibfnamefont {Hubert}\ \bibnamefont
  {de~Guise}}, \bibinfo {author} {\bibfnamefont {Olivia}\ \bibnamefont
  {Di~Matteo}}, \ and\ \bibinfo {author} {\bibfnamefont {Luis~L}\ \bibnamefont
  {S{\'a}nchez-Soto}},\ }\bibfield  {title} {\enquote {\bibinfo {title} {Simple
  factorization of unitary transformations},}\ }\href@noop {} {\bibfield
  {journal} {\bibinfo  {journal} {Physical Review A}\ }\textbf {\bibinfo
  {volume} {97}},\ \bibinfo {pages} {022328} (\bibinfo {year}
  {2018})}\BibitemShut {NoStop}%
\bibitem [{\citenamefont {Holmes}\ \emph
  {et~al.}(2021{\natexlab{b}})\citenamefont {Holmes}, \citenamefont
  {Arrasmith}, \citenamefont {Yan}, \citenamefont {Coles}, \citenamefont
  {Albrecht},\ and\ \citenamefont {Sornborger}}]{holmes2020barren}%
  \BibitemOpen
  \bibfield  {author} {\bibinfo {author} {\bibfnamefont {Zo{\"e}}\ \bibnamefont
  {Holmes}}, \bibinfo {author} {\bibfnamefont {Andrew}\ \bibnamefont
  {Arrasmith}}, \bibinfo {author} {\bibfnamefont {Bin}\ \bibnamefont {Yan}},
  \bibinfo {author} {\bibfnamefont {Patrick~J.}\ \bibnamefont {Coles}},
  \bibinfo {author} {\bibfnamefont {Andreas}\ \bibnamefont {Albrecht}}, \ and\
  \bibinfo {author} {\bibfnamefont {Andrew~T}\ \bibnamefont {Sornborger}},\
  }\bibfield  {title} {\enquote {\bibinfo {title} {Barren plateaus preclude
  learning scramblers},}\ }\href {\doibase 10.1103/PhysRevLett.126.190501}
  {\bibfield  {journal} {\bibinfo  {journal} {Physical Review Letters}\
  }\textbf {\bibinfo {volume} {126}},\ \bibinfo {pages} {190501} (\bibinfo
  {year} {2021}{\natexlab{b}})}\BibitemShut {NoStop}%
\bibitem [{\citenamefont {Roberts}\ and\ \citenamefont
  {Yoshida}(2017)}]{roberts2017chaos}%
  \BibitemOpen
  \bibfield  {author} {\bibinfo {author} {\bibfnamefont {Daniel~A}\
  \bibnamefont {Roberts}}\ and\ \bibinfo {author} {\bibfnamefont {Beni}\
  \bibnamefont {Yoshida}},\ }\bibfield  {title} {\enquote {\bibinfo {title}
  {Chaos and complexity by design},}\ }\href {\doibase 10.1007/JHEP04(2017)121}
  {\bibfield  {journal} {\bibinfo  {journal} {Journal of High Energy Physics}\
  }\textbf {\bibinfo {volume} {2017}},\ \bibinfo {pages} {121} (\bibinfo {year}
  {2017})}\BibitemShut {NoStop}%
\bibitem [{\citenamefont {Mele}(2024)}]{mele2023introduction}%
  \BibitemOpen
  \bibfield  {author} {\bibinfo {author} {\bibfnamefont {Antonio~Anna}\
  \bibnamefont {Mele}},\ }\bibfield  {title} {\enquote {\bibinfo {title}
  {Introduction to haar measure tools in quantum information: A beginner's
  tutorial},}\ }\href {\doibase 10.22331/q-2024-05-08-1340} {\bibfield
  {journal} {\bibinfo  {journal} {Quantum}\ }\textbf {\bibinfo {volume} {8}},\
  \bibinfo {pages} {1340} (\bibinfo {year} {2024})}\BibitemShut {NoStop}%
\bibitem [{\citenamefont {Amin}\ \emph {et~al.}(2018)\citenamefont {Amin},
  \citenamefont {Andriyash}, \citenamefont {Rolfe}, \citenamefont
  {Kulchytskyy},\ and\ \citenamefont {Melko}}]{QBM_amin}%
  \BibitemOpen
  \bibfield  {author} {\bibinfo {author} {\bibfnamefont {Mohammad~H.}\
  \bibnamefont {Amin}}, \bibinfo {author} {\bibfnamefont {Evgeny}\ \bibnamefont
  {Andriyash}}, \bibinfo {author} {\bibfnamefont {Jason}\ \bibnamefont
  {Rolfe}}, \bibinfo {author} {\bibfnamefont {Bohdan}\ \bibnamefont
  {Kulchytskyy}}, \ and\ \bibinfo {author} {\bibfnamefont {Roger}\ \bibnamefont
  {Melko}},\ }\bibfield  {title} {\enquote {\bibinfo {title} {Quantum boltzmann
  machine},}\ }\href {\doibase 10.1103/PhysRevX.8.021050} {\bibfield  {journal}
  {\bibinfo  {journal} {Phys. Rev. X}\ }\textbf {\bibinfo {volume} {8}},\
  \bibinfo {pages} {021050} (\bibinfo {year} {2018})}\BibitemShut {NoStop}%
\bibitem [{\citenamefont {Commeau}\ \emph {et~al.}(2020)\citenamefont
  {Commeau}, \citenamefont {Cerezo}, \citenamefont {Holmes}, \citenamefont
  {Cincio}, \citenamefont {Coles},\ and\ \citenamefont
  {Sornborger}}]{commeau2020variational}%
  \BibitemOpen
  \bibfield  {author} {\bibinfo {author} {\bibfnamefont {Benjamin}\
  \bibnamefont {Commeau}}, \bibinfo {author} {\bibfnamefont {M.}~\bibnamefont
  {Cerezo}}, \bibinfo {author} {\bibfnamefont {Zo{\"e}}\ \bibnamefont
  {Holmes}}, \bibinfo {author} {\bibfnamefont {Lukasz}\ \bibnamefont {Cincio}},
  \bibinfo {author} {\bibfnamefont {Patrick~J.}\ \bibnamefont {Coles}}, \ and\
  \bibinfo {author} {\bibfnamefont {Andrew}\ \bibnamefont {Sornborger}},\
  }\bibfield  {title} {\enquote {\bibinfo {title} {Variational {H}amiltonian
  diagonalization for dynamical quantum simulation},}\ }\href
  {https://arxiv.org/abs/2009.02559} {\bibfield  {journal} {\bibinfo  {journal}
  {arXiv preprint arXiv:2009.02559}\ } (\bibinfo {year} {2020})}\BibitemShut
  {NoStop}%
\end{thebibliography}%

\setcounter{theorem}{0}
\setcounter{proposition}{1}
\setcounter{corollary}{1}

\clearpage
\newpage
\onecolumngrid

\setcounter{section}{0}
\vspace{0.5in}
\begin{center}
	{\Large \bf Supplementary Information} 
\end{center}

\section{Supplementary Note - Technical nuances in explicit and implicit losses}\label{sec:technical_nuances}

In this work, we have argued that there are two main classes of loss functions for generative modelling tasks. First are \textit{explicit} losses which have the form 
\begin{equation}\label{eq:explicitloss-app}
   \LC_{\rm expl}(\thv) := \sum_{\vec{x}_1 ... \vec{x}_{r}} f\Big(p(\xv_1), ..., p(\xv_r), \qth(\xv_{1}), ..., \qth(\xv_{r}) \Big) 
   \, ,
\end{equation}
where $f(\cdot)$ is a function that depends on the target probabilities $p(\xv_i)$ and model probabilities $ \qth(\xv_{i})$ for data variables $\xv_i \in \XC$ with $i = 1, \, ... \, , r$ (but not the data samples themselves). 
The other are \textit{implicit} loss functions which are expressed as
\begin{equation}\label{eq:implicitloss-app}
    \LC_{\rm impl}(\thv) := \mathbb{E}_{\xv_1, ..., \xv_r \sim \{p, q_{\thv}\}} \, g(\xv_1, ..., \xv_r) \, ,
\end{equation}
where $g(\xv_1, ..., \xv_r)$ is some function that depends on the data (but not probabilities), and an expectation is over data variables $\xv_1, ..., \xv_r$ sampled either from the data distribution $p$ or the model distribution $\qth$.

While almost all practical loss functions for generative models fall either into the definitions of explicit or implicit losses (see Sec.~\ref{sec:losses}), there exists a technical caveat which allows loss functions to be classified both as explicit and implicit. 
To provide a rather general example, we consider a general loss of the form
\begin{align}
    \LC(\thv) &= \sum_{\vec{x},\vec{y} \in \XC} \qth(\vec{x})p(\vec{y}) g(\cdot) \label{eq:genlossversion1} \\
    &= \Ebb_{\vec{x} \sim \qth,\vec{y}\sim p}[g(\cdot)] \label{eq:genlossversion2}
\end{align}
where the function $g$ is for now left arbitrary. By comparing Eq.~\eqref{eq:genlossversion1} and Eq.~\eqref{eq:explicitloss-app}, we can see that for $\LC(\thv)$ to be an explicit loss $g$ cannot depend on $\xv, \yv$. Conversely, by comparing Eq.~\eqref{eq:genlossversion2} and Eq.~\eqref{eq:implicitloss-app}, for $\LC(\thv)$ to be implicit, $g$ cannot depend on $p(\xv)$ or $\qth(\yv)$ . Thus, it appears impossible for such a function to be both explicit and implicit. However, taking $g$ as the Kronecker delta $\delta_{\xv,\yv}$ allows one to straddle the definitions. That is, for $g = \delta_{\xv,\yv}$, $\LC(\thv)$ is implicit as it can still be written as an average over samples in Eq.~\eqref{eq:genlossversion2} but also we can write $\LC(\thv) = \sum_{\vec{x} \in \XC} \qth(\vec{x})p(\vec{x})$ which is of the form of an explicit loss given in Eq.~\eqref{eq:explicitloss-app}.

This case is not just purely hypothetical and arises, for example, for the MMD loss, Eq.~\eqref{eq:mmd-loss-general}, with the kernel $K(\xv, \yv) = \delta_{\xv,\yv}$. Here we have 
\begin{align}
    \LC(\thv)
    =& \sum_{\vec{x},\vec{y} \in \XC} \qth(\vec{x})\qth(\vec{y}) \delta_{\xv,\yv}
     - 2 \sum_{\vec{x},\vec{y}  \in \XC} \qth(\vec{x})p(\vec{y}) \delta_{\xv,\yv}
     + \sum_{\vec{x},\vec{y} \in \XC} p(\vec{x})p(\vec{y}) \delta_{\xv,\yv} \nonumber \\
    = & \sum_{\vec{x}\in \XC} \qth(\vec{x})^2 - 2\qth(\vec{x})p(\vec{x}) + p(\vec{x})^2 \nonumber \\
    = & \sum_{\vec{x}\in \XC} \left( \qth(\vec{x}) - p(\vec{x}) \right)^2 \;, \label{eq:kronecker_transformation}
\end{align}
which takes the form of a pairwise explicit loss (see Eq.~\eqref{eq:pairwiseexplicitloss}).
Consequently, the MMD is always an implicit loss function, but with this particular choice of kernel, it can additionally gain an explicit character. Interestingly, this loss is now very related to the TVD loss in Eq.~\eqref{eq:loss_TV} for which there exists no implicit form. 

This example highlights that the distinction between explicit and implicit losses become non-mutually exclusive for implicit losses formulated using delta functions.
Instead of adding this exception into either of the definitions of explicit or implicit losses, we acknowledge and embrace its existence. Through the steps outlined in Eq.~\eqref{eq:kronecker_transformation}, it may be possible to transform an explicit function exactly or approximately into a form which falls under the definition of an implicit function, which could then seemingly be used with implicit models. This may include functions whose Taylor series expansion in the probabilities converges quickly and can thus be approximated with a finite number of terms. Any term with a positive integer power in the probabilities can then be transformed into an expectation over samples by leveraging the procedure outlined above. Unfortunately, we are not yet aware of any practical examples. In fact, we show that a function which can be classified as a pairwise explicit function directly suffers from untrainability with the conventional measurement strategy. There may however be cases where a quantum strategy can be found that leverages this $\delta$-trick.

\section{Supplementary Note - Analysis on pairwise explicit loss functions}

Here we detail the proofs of our no-go results regarding the pairwise explicit loss function. In particular, Theorem~\ref{thm:explicit-loss}, which concerns the concentration of a statistical estimate of a pairwise explicit loss, is proved in Supplementary Note~\ref{app:pairwise-loss-concentration}. In Supplementary Note~\ref{app:pairwise-no-overlap}, we show that a generative model with polynomial support but which has no inductive bias that aligns with the target distribution is untrainable. Finally, we use the Theorem~\ref{thm:explicit-loss} to derive Corollary~\ref{coro:untrain} in Supplementary Note~\ref{app:pairwise-untrainability}, showing the untrainability of the pairwise explicit loss.

For convenience, we recall relevant terms and notations. 
We are interested in training the model probabilities $\qth(\xv)$ to match some unknown target distribution $p(\xv)$ by minimizing the \textit{pairwise explicit loss} of the form
\begin{align}
    \LC(\thv) = \sum_{\xv} f(p(\xv),\qth(\xv)) \;,
\end{align}
where $f(\cdot)$ is some arbitrary function that measure the similarity between $p(\xv)$ and $\qth(\xv)$. As a reminder, this pairwise form encompasses the broad range of practical loss functions including the famous KL divergence, JS divergence, total variational distance as well as classical fidelity. See Sec.~\ref{sec:losses} for details.

Given the training dataset $\Tilde{P}$ and model samples $\Tilde{Q}_{\thv}$ corresponding to the empirical probabilities $\pt(\xv)$ and $\qtth(\xv)$ respectively, the statistical estimate of the loss function can be written as
\begin{align}
    \Tilde{\LC}(\thv) = \sum_{\xv } f(\pt(\xv), \qtth(\xv)) \;.
\end{align}

Crucially, we emphasise that, for the large system's size, the number of training data and model samples can scale at most polynomially with the number of qubits i.e., $M = |\Tilde{P}|, N = |\Tilde{Q}_{\thv}| \in \OC(\poly(n))$.

\subsection{Concentration of the pairwise explicit loss function: Proof of Theorem~\ref{thm:explicit-loss}}\label{app:pairwise-loss-concentration}

In this section, we rigorously prove Theorem~\ref{thm:explicit-loss}, which shows the concentration of the statistical estimate of the pairwise explicit loss. 
For convenience, the theorem is recalled below.

\begin{theorem}[Concentration of pairwise explicit loss for concentrated models]\label{thm:explicit-loss-appendix}
Consider the loss function of the form in Eq.~\eqref{eq:pairwiseexplicitloss}. 
Assume that for all bitstrings in the training dataset, $\xv \in \Tilde{P}$, the quantum generative model $\qth(\xv)$ exponentially concentrates towards some exponentially small value (as defined in Definition~\ref{def:exp-concentration}). Suppose that $N \in \OC(\poly(n))$ samples are collected from the quantum model corresponding to the set of sampled bitstrings $\Tilde{Q}_{\thv}$, and that the training dataset $\tilde{P}$ contains $M \in \OC(\poly(n))$ samples.
We define the fixed point of the loss as
\begin{align}
    \LC_0(\Tilde{P}, \Tilde{Q}_{\thv}) = \sum_{\xv \in \PC} f(\pt(\xv), 0) + \sum_{\xv \in \QC_{\thv}} f(0, \qtth(\xv)) \;,
\end{align}
with $\PC$ (and $\QC_{\thv}$) being a set of \textit{unique} bitstrings in $\Tilde{P}$ (and $\Tilde{Q}_{\thv}$). Then, the probability that the estimated value $\Tilde{\LC}(\thv)$ is equal to $\LC_0(\Tilde{P}, \Tilde{Q}_{\thv})$ is exponentially close to 1, i.e.,
\begin{align}
    {\rm Pr}_{\Tilde{Q}_{\thv},\thv}[\Tilde{\LC}(\thv) = \LC_0(\Tilde{P}, \Tilde{Q}_{\thv})] \geq 1 - \delta \;,
\end{align}
with $\delta \in \OC\left(\frac{\poly(n)}{c^n}\right)$ for some $c > 1$. 
\end{theorem}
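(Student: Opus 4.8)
The plan is to show that, with probability exponentially close to one, the set of measured bitstrings $\Tilde{Q}_{\thv}$ is disjoint from the training set $\Tilde{P}$, because precisely in that case the estimate $\Tilde{\LC}(\thv)$ collapses onto the claimed fixed point $\LC_0$. The first step is to make this reduction exact. Assuming $f(0,0)=0$ (a mild normalization that must hold for every loss of interest, since otherwise $\Tilde\LC(\thv)=\sum_{\xv}f(\pt(\xv),\qtth(\xv))$ would pick up a divergent contribution from the exponentially many zero-support bitstrings), the only bitstrings contributing to $\Tilde{\LC}(\thv)$ lie in $\PC\cup\QC_{\thv}$. If $\PC\cap\QC_{\thv}=\emptyset$, then each contributing bitstring has either $\qtth(\xv)=0$ (for $\xv\in\PC$) or $\pt(\xv)=0$ (for $\xv\in\QC_{\thv}$), so that $\Tilde{\LC}(\thv)=\sum_{\xv\in\PC}f(\pt(\xv),0)+\sum_{\xv\in\QC_{\thv}}f(0,\qtth(\xv))=\LC_0(\Tilde{P},\Tilde{Q}_{\thv})$. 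It therefore suffices to lower bound $\Pro[\PC\cap\QC_{\thv}=\emptyset]$.

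The second step bounds the overlap probability at a fixed ``good'' parameter value. For a single training bitstring $\xv\in\PC$, the probability that none of the $N$ independent model samples equals $\xv$ is $(1-\qth(\xv))^N$, so by a union bound over $M=|\PC|$ training bitstrings together with the elementary inequality $1-(1-x)^N\leq Nx$, the probability of at least one collision is at most $\sum_{\xv\in\PC}N\qth(\xv)\leq MN\max_{\xv\in\PC}\qth(\xv)$. Since $M,N\in O(\poly(n))$, the collision probability is governed entirely by the largest model probability over the training set.

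The third step invokes the concentration hypothesis to control $\max_{\xv\in\PC}\qth(\xv)$. In the deterministic case the bound $\qth(\xv)\leq\mu+\beta\in O(1/b^n)$ holds for every $\thv$, so the collision probability is immediately $O(\poly(n)/b^n)$ and the proof is complete. In the probabilistic case I would fix a threshold $\delta'=1/a^n$ with $1<a<\sqrt{b}$ and apply Eq.~\eqref{eq:def-prob-concentration} together with a union bound over the $M$ training bitstrings to conclude that $\qth(\xv)\leq\mu+\delta'$ simultaneously for all $\xv\in\PC$, except with probability at most $M\beta/(\delta')^2\in O(\poly(n)(a^2/b)^n)$. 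Conditioned on this ``typical-$\thv$'' event one has $\max_{\xv\in\PC}\qth(\xv)\in O(1/a^n)$, so the conditional collision probability is $O(\poly(n)/a^n)$. A final union bound over the two failure events (atypical $\thv$, and collision given typical $\thv$) yields the stated bound, with the rate $c=\min(a,\,b/a^2)>1$.

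The main obstacle is the joint treatment of the two independent sources of randomness---the random parameters $\thv$ and the finite-shot sampling producing $\Tilde{Q}_{\thv}$---in the probabilistic-concentration regime: the per-bitstring probabilities $\qth(\xv)$ are themselves random, so one cannot substitute a fixed value into the collision bound directly. The resolution is to carve off a typical-$\thv$ event on which all relevant probabilities are uniformly exponentially small, and then to tune $\delta'$ so that the atypicality probability $\propto(a^2/b)^n$ and the conditional collision probability $\propto 1/a^n$ both decay exponentially at once. One should also check that the polynomial prefactors $M$, $N$, and the union-bound multiplicities do not spoil exponential smallness, which they do not since $\poly(n)/c^n\in O(\poly(n)/(c')^n)$ for any $1<c'<c$.
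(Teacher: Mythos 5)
Your proof is correct, and its backbone coincides with the paper's: both arguments reduce the event $\{\Tilde{\LC}(\thv)=\LC_0(\Tilde{P},\Tilde{Q}_{\thv})\}$ to the disjointness event $\PC\cap\QC_{\thv}=\emptyset$, and both then split the randomness into a ``typical-$\thv$'' part and a finite-shot sampling part. Where you genuinely differ is in how the parameter randomness is handled. The paper conditions on the aggregate quantity $s=\sum_{\xv\in\PC}\qth(\xv)$, writes the success probability as $\int_0^1 (1-s)^N\,{\rm Pr}_{\thv}[s]\,ds$, applies Chebyshev to $s$ at the threshold $\sqrt{\sigma_s}$, and must therefore bound the mean and variance of the sum, which it does by controlling the covariances $\Cov_{\thv}[\qth(\xv),\qth(\vec{x'})]$ via Cauchy--Schwarz and AM--GM. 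You instead stay at the level of individual bitstrings: a union bound over $\PC$ combined with $1-(1-x)^N\leq Nx$ controls the collision probability for a fixed typical $\thv$, and a second union bound applied to the tail-bound form of Definition~\ref{def:exp-concentration} at the tuned threshold $\delta'=a^{-n}$ with $1<a<\sqrt{b}$ controls the atypical event, giving the rate $c=\min(a,b/a^2)>1$. Your route is more elementary (no conditioning integral, no covariance estimates) and works directly from the probabilistic-concentration definition as stated, whereas the paper's proof effectively recasts that definition as a per-bitstring mean/variance bound; the price is the extra bookkeeping of the free parameter $a$, while the paper's single Chebyshev application on the aggregate yields its explicit expression for $\delta$ in terms of $\max_{\xv\in\PC}\mu(\xv)$ and $\max_{\xv\in\PC}\sigma(\xv)$ in one pass. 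Your explicit normalization $f(0,0)=0$ is also a good catch: the paper uses it tacitly when identifying the full sum over $\XC$ with the two restricted sums defining $\LC_0$, and without it the fixed point would not even be well defined.
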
 
\begin{proof}

The statistical estimate $\Tilde{\LC}(\thv)$ is equal to $\LC_0(\Tilde{P}, \Tilde{Q}_{\thv}) = \sum_{\xv \in \PC} f(\pt(\xv), 0) + \sum_{\xv \in \QC_{\thv}} f(0, \qtth(\xv))$ when there is no overlap between $\Tilde{P}$ and $\Tilde{Q}_{\thv}$ i.e., $ \Tilde{P} \cap \Tilde{Q}_{\thv} = \{ \}$. The proof of the theorem is equivalent to proving that, after taking $N$ measurement shots, the probability of not obtaining any bitstrings in the training dataset is exponentially close to 1. The probability that there is no overlap between $\Tilde{P}$ and $\Tilde{Q}_{\thv}$ is the probability that $\qtth(\xv)=0$ for all $\xv\in\PC$ which is also equivalent to ${\rm Pr}\left[\sum_{x \in \PC} \qtth(\xv) = 0\right]$ as $\qtth(\xv)\geq 0$ for all $\xv\in\XC$. So, we have
\begin{align}
     {\rm Pr}_{\Tilde{Q}_{\thv},\thv}[\Tilde{\LC}(\thv) = \LC_0(\Tilde{P}, \Tilde{Q}_{\thv})] = & {\rm Pr}_{\Tilde{Q}_{\thv},\thv}\left[\sum_{x \in \PC} \qtth(\xv) = 0\right] \\
     = & \int_{0}^1 {\rm Pr}_{\Tilde{Q}_{\thv}}\left[ \sum_{\xv \in \PC} \qtth(\xv) = 0 \Bigg| \sum_{\xv \in \PC} \qth(\xv) = s \right] {\rm Pr}_{\thv} \left[\sum_{\xv \in \PC}\qth(\xv) = s\right] ds \\
     = & \int_{0}^1 (1-s)^N {\rm Pr}_{\thv}\left[\sum_{\xv \in \PC}\qth(\xv) = s\right] ds \\
     \geq & \int_{\mu_s - \sqrt{\sigma_s}}^{\mu_s+\sqrt{\sigma_s}} (1-s)^N {\rm Pr}_{\thv}\left[\sum_{\xv \in \PC}\qth(\xv) = s\right] ds \\
     \geq & (1-(\mu_s + \sqrt{\sigma_s}))^N \int_{\mu_s - \sqrt{\sigma_s}}^{\mu_s+\sqrt{\sigma_s}} {\rm Pr}_{\thv}\left[\sum_{\xv \in \PC}\qth(\xv) = s\right] ds \\
     \geq & (1-(\mu_s + \sqrt{\sigma_s}))^N (1 - \sigma_s) \;. \label{eq:proof-thm1-last}
\end{align}
In the second equality Bayes' theorem is used to introduce the conditional probability of non-overlap between model samples and the training dataset for given $s = \sum_{\xv \in \PC} \qth(\xv)$ and the marginal probability is obtained by summing over all possible values of $\sum_{\xv \in \PC} \qth(\xv)$. The third equality uses the independence of each model sample and the fact that the probability that one drawn model bitstring is not in the training dataset is given by $(1-s)$. The first inequality is due to restricting the integration range (as the integrand is always greater than zero) with $\mu_s$ and $\sigma^2_s$ being the mean and variance of $\sum_{\xv \in \PC} \qth(\xv)$ over $\thv$. The second inequality is due to taking the maximum value of $s$ and thus the minimum value of $(1-s)^N$ within the integration range. To see how to reach the last line, we invoke Chebyshev's inequality on $\sum_{\xv \in \PC} \qth(\xv)$
\begin{align}
    {\rm Pr}_{\thv}\left[\bigg|\sum_{\xv \in \PC} \qth(\xv) - \mu_s \bigg| \geq k \sigma_s \right] \leq \frac{1}{k^2} \;. 
\end{align}
By specifying $k = 1/\sqrt{\sigma_s}$ and inverting the inequality, we have
\begin{align}
    {\rm Pr}_{\thv}\left[\bigg|\sum_{\xv \in \PC} \qth(\xv) - \mu_s \bigg| \leq \sqrt{ \sigma_s} \right] \geq 1 - \sigma_s\;.
\end{align}
With $\int_a^b \text{Pr}[x] dx = \text{Pr}[a\leq x \leq b]$, we then get Eq.~\eqref{eq:proof-thm1-last}.

To further bound the probability, we show that $\mu_s$ and $\sigma_s^2$ are exponentially small in the number of qubits. First, consider the mean
\begin{align}
    \mu_s = & \Ebb_{\thv}\left[ \sum_{\xv \in \PC} \qth(\xv)\right]  \\
    = & \sum_{\xv \in \PC} \mu(\xv) \\
    \leq & N_p {\rm max}_{\xv \in \PC} [\mu(\xv)] \;, \label{eq:proof-thm1-mean}
\end{align}
with $N_p = |\PC| \leq M\in \OC(\poly(n))$ and $\mu(\xv)$ being the average of $\qth(\xv)$ over $\thv$. As ${\rm max}_{\xv \in \PC} [\mu(\xv)] \in \OC(1/b^n)$ for some $b>1$ (due to the assumption that the fixed points are exponentially small), this leads to $\mu_s \in \OC(\poly(n)/b^n)$. 

\medskip

Then, the variance can be upper bounded as
\begin{align}
    \sigma_s^2 = & \Var_{\thv} \left[ \sum_{\xv \in \PC} \qth(\xv)\right] \\
     = &  \sum_{\xv \in \PC} \Var_{\thv}\left[\qth(\xv) \right] +  \sum_{\substack{\xv,\vec{x'} \in \PC \\ \xv\neq \vec{x'}}}\Cov_{\thv}\left[\qth(\xv), \qth(\vec{x'}) \right] \\
    \leq &  \sum_{\xv \in \PC} \Var_{\thv}\left[\qth(\xv) \right] + \sum_{\substack{\xv,\vec{x'} \in \PC \\ \xv\neq \vec{x'}}} \sqrt{ \Var_{\thv}\left[\qth(\xv) \right]  \Var_{\thv}\left[\qth(\vec{x'}) \right]} \\
    \leq & \sum_{\xv \in \PC} \Var_{\thv}\left[\qth(\xv) \right] + \sum_{\substack{\xv,\vec{x'} \in \PC \\ \xv\neq \vec{x'}}}  \frac{\Var_{\thv}\left[\qth(\xv) \right]  + \Var_{\thv}\left[\qth(\vec{x'}) \right]}{2} \\
    = & N_p \sum_{\xv \in \PC}\Var_{\thv}[\qth(\xv)] \\
    \leq & N_p^2 {\rm max}_{\xv \in \PC} [\sigma^2(\xv)] \;, \label{eq:proof-thm1-var}
\end{align}
where in the first inequality we have used Cauchy-Schwarz, and the second inequality is the inequality of arithmetic and geometric means $\sqrt{xy} \leq (x+y)/2$ for $x,y>0$. $\sigma^2(\xv)$ is the variance of $\qth(\xv)$ over $\thv$. 
Finally, assuming that $\qth(\vec{x'})$ exponentially concentrates over all bitstrings in the dataset, we have $\sigma_s \in \OC(\poly(n)/b'^n)$ for some $b'>1$.

Now, we are ready to continue from Eq.~\eqref{eq:proof-thm1-last}.
\begin{align}
    {\rm Pr}_{\Tilde{Q}_{\thv},\thv}[\Tilde{\LC}(\thv) = \LC_0(\Tilde{P}, \Tilde{Q}_{\thv})]  \geq & (1-(\mu_s + \sqrt{\sigma_s}))^N (1 - \sigma_s) \\
    \geq & (1-N(\mu_s + \sqrt{\sigma_s})) (1 - \sigma_s) \\
    \geq & 1 - (N(\mu_s + \sqrt{\sigma_s}) + \sigma_s) \\
    \geq & 1 - \left( N \left(N_p{\rm max}_{\xv \in \PC} [\mu(\xv)] + \sqrt{N_p {\rm max}_{\xv \in \PC} [\sigma(\xv)]}\right) + N_p {\rm max}_{\xv \in \PC} [\sigma(\xv)]\right) \\
    = & 1 - \delta \;,
\end{align}
where in the second inequality we use Bernoulli's inequality, the third inequality is due to dropping the positive term when expanding the product, and in the last inequality we use Eq.~\eqref{eq:proof-thm1-mean} and Eq.~\eqref{eq:proof-thm1-var}. In the last line, we denote $\delta = N \left(N_p{\rm max}_{\xv \in \PC} [\mu(\xv)] + \sqrt{N_p {\rm max}_{\xv \in \PC} [\sigma(\xv)]}\right) + N_p {\rm max}_{\xv \in \PC} [\sigma(\xv)]$ and given the polynomial scaling of $N$ and $N_p$ as well as the exponential scaling of ${\rm max}_{\xv \in \PC} [\mu(\xv)]$ and ${\rm max}_{\xv \in \PC} [\sigma(\xv)]$, we have
\begin{align}
    \delta \in \OC\left(\frac{\poly(n)}{c^n}\right) \;,
\end{align}
for some $c > 1$. This completes the proof of the theorem.
\end{proof}

\subsection{No overlap for a model without inductive bias and with polynomial support}
\label{app:pairwise-no-overlap}

Theorem~\ref{thm:explicit-loss-appendix} applies for any parameterized circuit such that the model probabilities are concentrated over the bitstrings in the training set. This can occur if the model probabilities are exponentially concentrated over \textit{all} bitstrings in $\XC$. This is typically true for any unstructured circuit, such as those given in Proposition~\ref{prop:circuit-example}. However, the requirement that the model probabilities are concentrated over the bitstrings in the training set (but not neccesarily all bitstrings) is much weaker. Crucially, this happens even when the model has polynomial support on the space of bitstrings (see Fig.~\ref{fig:nooverlap}). As an example, we show that if the model does not impose a strong inductive bias that aligns with the target distribution, it is generally unlikely that samples from the model have any overlap with the training dataset, which in turn results in the model becoming untrainable. Specifically, we prove the following Supplemental Proposition.
\begin{supplemental_proposition}[Concentration of pairwise explicit losses for models lacking inductive bias]
    Consider the scenario where the generative model has polynomial support which is chosen at random.
    The probability that these bitstrings are not in the training set $\Tilde{P}$ is exponentially close to 1. That is, with probability
    \begin{align}
        {\rm Pr}_{\Tilde{Q}_{\thv},\thv}[\Tilde{\LC}(\thv) = \LC_0(\Tilde{P}, \Tilde{Q}_{\thv})] \geq 1 - \delta' \;,\; \delta' \in \OC\left(\frac{\poly(n)}{c'^n}\right) \;,
    \end{align}
    for some $c' > 1$, and $ \LC_0(\Tilde{P}, \Tilde{Q}_{\thv})$ the statistical estimate of the loss as defined Eq.~\eqref{eq:fixedpoint}.
\end{supplemental_proposition}

\begin{proof}
Denote $N_\PC$ and $N_\QC$ as the number of non-zero probabilities in the training and model distributions, respectively, i.e., their support. Given that the model support is chosen randomly, there are $\binom{2^n - N_\PC}{N_\QC}$ ways of picking non-zero model probabilities such that the supports do not overlap. Then, the probability that there is no overlap between the two distributions is
\begin{align}
    {\rm Pr}_{\Tilde{Q}_{\thv},\thv}[\Tilde{\LC}(\thv) = \LC_0(\Tilde{P}, \Tilde{Q}_{\thv})] & = \frac{\binom{2^n - N_\PC}{N_\QC}}{\binom{2^n }{N_\QC}} \\
    & = \frac{(2^n - N_\QC)\times ... \times (2^n - N_\QC - N_\PC + 1)}{2^n\times ... \times (2^n - N_\PC+1)}\\
    & = \prod_{k=0}^{N_\PC-1} \frac{2^n - N_\QC - k}{2^n - k}\\
    & = \prod_{k=0}^{N_\PC-1}\left(1-\frac{N_\QC}{2^n-k}\right) \\
    & \geq \left(1-\frac{N_\QC}{2^n-N_\PC+1}\right)^{N_\PC} \\
    & \geq 1-\frac{N_\QC N_\PC}{2^n-N_\PC+1} \;,
\end{align}
where the first lower bound is due to taking the smallest term in the product and the last inequality is due to the Bernoulli's inequality $(1-x)^n \geq 1 - nx$. For the polynomial supports $N_\QC , N_\PC \in \OC(\poly(n))$, the lower bound becomes exponentially close to $1$, which completes the proof.

\end{proof}

This Supplemental Proposition highlights that the fundamental problem underlying exponential concentration is the miss-alignment of the model probabilities and the training data. Any randomly chosen quantum circuit ansatz with random parametrization is bound to fail with an explicit loss because of the exponentially large space of bitstrings. 

A concrete example that falls short is the case of near-identity initialization of the quantum circuit $U(\thv)$ of a QCBM, which corresponds to a near-zero initialization of the parameter vector $\thv$. While in the context of VQA-type problems it has been shown to mitigate vanishing gradients at the initial training step~\cite{grant2019initialization}, this strategy induces only significant probabilities on a polynomial number of bitstrings and thus leads to exponential concentration for general datasets. The reason is that the all-zero bitstring and bitstrings that are few bit-flips away from it have no \textit{a priori} reason to be relevant to the modelling task. 

A minimal expansion of the near-identity initialization, which appears to introduce inductive bias but does not necessarily so, is to initialize the quantum circuit model in one of the training states. That is, one sets $|\psi(\thv) \rangle = |\xv_0\rangle$ in an attempt to avoid loss concentration around the initialization. While this does in fact exhibit initial gradients towards $\qth(\xv_0) = \pt(\xv_0)$, the chance that the model then contains a non-vanishing probability on a significant number of other samples $\xv \neq \xv_0$ is still low (or likely exponentially low). A possible exception would be a dataset exhibiting cluster behaviour in the space of bitstrings. Then one could indeed initialize in and around the centroid bitstring and likely achieve improved performance across various metrics.

\subsection{Untrainability of the pairwise explicit loss function: Proof of Corollary~\ref{coro:untrain}}\label{app:pairwise-untrainability}

In this sub-section, we provide the proof of Corollary~\ref{coro:untrain}, which shows that the concentration of statistical estimate of the pairwise explicit loss makes such losses untrainable.

\begin{corollary}[Untrainability of the pairwise explicit loss function]
Under the same conditions as in Theorem~\ref{thm:explicit-loss}, the probability that the difference between the two statistical estimates of the loss function at $\thv_1$ and $\thv_2$ does not contain any information about the training distribution is exponentially close to 1. Particularly, we have
\begin{align}
    {\rm Pr}_{\Tilde{Q}_{\thv},\thv}[\Tilde{\LC}(\thv_1) - \Tilde{\LC}(\thv_2) = \Delta\LC_0(\Tilde{Q}_{\thv_1},\Tilde{Q}_{\thv_2})]\geq 1 - 2\delta \; ,
\end{align}
with $\delta \in\OC\left(\frac{\poly(n)}{c^n}\right)$ for some $c > 1$, $\Tilde{Q}_{\thv_1}$ (and $\Tilde{Q}_{\thv_2}$) is a set of sampling bitstrings obtained from the quantum generative model at the parameter value $\thv_1$ (and $\thv_2$), as well as
\begin{align}
    \Delta \LC_0(\Tilde{Q}_{\thv_1},\Tilde{Q}_{\thv_2}) = \sum_{\xv \in \QC_{\thv_1}} f(0,\Tilde{q}_{\thv_1}(\xv)) - \sum_{\xv \in \QC_{\thv_2}} f(0,\Tilde{q}_{\thv_2}(\xv)) \;,
\end{align}
with $\QC_{\thv_1}$ (and $\QC_{\thv_2}$) being a set of unique bitstrings in $\Tilde{Q}_{\thv_1}$ (and $\Tilde{Q}_{\thv_2}$). Crucially, $\Delta \LC_0(\Tilde{Q}_{\thv_1},\Tilde{Q}_{\thv_2})$ does not depend on any $\pt(\xv) \in \Tilde{P}$.
\end{corollary}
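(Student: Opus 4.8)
The plan is to obtain this statement as a near-immediate consequence of Theorem~\ref{thm:explicit-loss}, applied separately at the two parameter values $\thv_1$ and $\thv_2$, followed by a union bound and a cancellation of the training-data-dependent term. First I would invoke Theorem~\ref{thm:explicit-loss} at $\thv_1$: since the hypotheses (exponential concentration of $\qth(\xv)$ over the training dataset, together with $N, M \in \OC(\poly(n))$) are assumed to carry over, with probability at least $1 - \delta$ over $\Tilde{Q}_{\thv_1}$ and $\thv_1$ one has
\begin{align}
\Tilde{\LC}(\thv_1) = \LC_0(\Tilde{P}, \Tilde{Q}_{\thv_1}) = \sum_{\xv \in \PC} f(\pt(\xv), 0) + \sum_{\xv \in \QC_{\thv_1}} f(0, \Tilde{q}_{\thv_1}(\xv)) \;.
\end{align}
Applying the same theorem at $\thv_2$ yields the analogous identity for $\Tilde{\LC}(\thv_2)$, again with probability at least $1 - \delta$.

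Next I would combine the two events via a union bound. Let $E_i$ denote the event that $\Tilde{\LC}(\thv_i)$ equals its fixed point $\LC_0(\Tilde{P}, \Tilde{Q}_{\thv_i})$. Each failure event $\overline{E_i}$ has probability at most $\delta$, so ${\rm Pr}[\overline{E_1} \cup \overline{E_2}] \leq 2\delta$ and both identities hold simultaneously with probability at least $1 - 2\delta$. On this high-probability event I would subtract the two expressions. The crucial observation is that both fixed points share the identical training-data term $\sum_{\xv \in \PC} f(\pt(\xv), 0)$, because the training set $\Tilde{P}$ (hence $\PC$ and the empirical probabilities $\pt$) is the same in both cases and is independent of $\thv$. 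This term therefore cancels, leaving
\begin{align}
\Tilde{\LC}(\thv_1) - \Tilde{\LC}(\thv_2) = \sum_{\xv \in \QC_{\thv_1}} f(0, \Tilde{q}_{\thv_1}(\xv)) - \sum_{\xv \in \QC_{\thv_2}} f(0, \Tilde{q}_{\thv_2}(\xv)) = \Delta\LC_0(\Tilde{Q}_{\thv_1}, \Tilde{Q}_{\thv_2}) \;,
\end{align}
which manifestly depends only on the model samples and not on any $\pt(\xv)$. Since Theorem~\ref{thm:explicit-loss} guarantees $\delta \in \OC(\poly(n)/c^n)$ for some $c > 1$, this establishes the claimed bound.

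The main obstacle here is conceptual rather than technical: the content of the corollary is not a new computation but the recognition that the only $\thv$-dependent part of the concentrated loss resides entirely in the model-sample term, which carries no information about the target. The one point requiring care is the bookkeeping of the joint randomness over $(\thv_1, \Tilde{Q}_{\thv_1})$ and $(\thv_2, \Tilde{Q}_{\thv_2})$: one must verify that Theorem~\ref{thm:explicit-loss} applies validly at each parameter value and that the two failure probabilities are simply summed, which is legitimate since we only need an upper bound on the union and require no independence between $E_1$ and $E_2$. The cancellation of the target-dependent term then renders the untrainability transparent—at two arbitrary parameters the difference of loss estimates is blind to $p$, so no finite-difference or gradient signal can steer the model toward the target distribution.
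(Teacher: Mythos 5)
Your proposal is correct and follows essentially the same route as the paper: both apply Theorem~\ref{thm:explicit-loss} at each of $\thv_1$ and $\thv_2$, observe that the training-data term $\sum_{\xv \in \PC} f(\pt(\xv),0)$ cancels in the difference of the two fixed points so that only $\Delta\LC_0(\Tilde{Q}_{\thv_1},\Tilde{Q}_{\thv_2})$ remains, and combine the two failure probabilities into $2\delta$. The only minor difference is the bookkeeping of that last step: the paper invokes independence of the two sampling events and bounds $(1-\delta)^2 \geq 1-2\delta$, whereas you use a union bound, which yields the same $1-2\delta$ without needing any independence assumption.
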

\begin{proof}
We first note that
\begin{align}
    \LC_0(\Tilde{P}, \Tilde{Q}_{\thv_1}) - \LC_0(\Tilde{P}, \Tilde{Q}_{\thv_2}) = & \left(\sum_{\xv \in \Tilde{P}} f(\pt(\xv), 0) + \sum_{\xv \in \Tilde{Q}_{\thv_1}} f(0,\Tilde{q}_{\thv_1}(\xv))\right) - \left(\sum_{\xv \in \Tilde{P}} f(\pt(\xv), 0) + \sum_{\xv \in \Tilde{Q}_{\thv_2}} f(0, \Tilde{q}_{\thv_2}(\xv)) \right) \\
    = & \sum_{\xv \in \Tilde{Q}_{\thv_1}} f(0,\Tilde{q}_{\thv_1}(\xv)) - \sum_{\xv \in \Tilde{Q}_{\thv_2}} f(0,\Tilde{q}_{\thv_2}(\xv)) \\
    = & \Delta \LC_0(\Tilde{Q}_{\thv_1},\Tilde{Q}_{\thv_2}) \;.
\end{align}

As estimating the loss function at $\thv_1$ and $\thv_2$ are two independent events for $|\Tilde{Q}_{\thv_1}|, |\Tilde{Q}_{\thv_1}| \in \OC(\poly(n))$, we have
\begin{align}
    {\rm Pr}_{\Tilde{Q}_{\thv},\thv}[\Tilde{\LC}(\thv_1) - \Tilde{\LC}(\thv_2) = \Delta\LC_0(\Tilde{Q}_{\thv_1},\Tilde{Q}_{\thv_2})] = &  {\rm Pr}_{\Tilde{Q}_{\thv},\thv}[\Tilde{\LC}(\thv_1) = \LC_0(\Tilde{P}, \Tilde{Q}_{\thv_1}) \cap \Tilde{\LC}(\thv_2) = \LC_0(\Tilde{P}, \Tilde{Q}_{\thv_2}) ] \\
    = & {\rm Pr}_{\Tilde{Q}_{\thv},\thv}[\Tilde{\LC}(\thv_1) = \LC_0(\Tilde{P}, \Tilde{Q}_{\thv_1})]\cdot {\rm Pr}_{\Tilde{Q}_{\thv},\thv}[\Tilde{\LC}(\thv_2) = \LC_0(\Tilde{P}, \Tilde{Q}_{\thv_2})] \\
    \geq & (1-\delta)(1 - \delta) \\
    \geq & 1 - 2 \delta \;,
\end{align}
where the second equality is due to the independence of two events, the first inequality is from applying Theorem~\ref{thm:explicit-loss} and the last line is from dropping the $\delta^2$ term.
\end{proof}

\section{Supplementary Note - Exact variance of KLD -- large variance is not everything}\label{ap:KLD_exact}
In this section, we show that the variance of the exact KL-divergence can be polynomially large even when model probabilities exhibit exponential concentration. Instead, the scaling of the variance depends on the number of supports of the training/target distribution. This is summarised by Proposition 1 in the main text which we repeat here for convenience. 
\setcounter{proposition}{0}
\begin{proposition}\label{prop:exact-kl-var}
Consider the KLD loss as defined in Eq.~\eqref{eq:KLD-loss}. Assume access to the exact target distribution $p(\xv)$ and the model distribution $\qt(\xv)$. Then, we have
\begin{itemize}
    \item For deep (Haar random) parametrized circuit $U(\thv)$, the variance of the loss scales asymptotically ($2^n \gg 1$) as
    \begin{align}
        \Var_{\thv}[\LC^{\rm KLD}(\thv)] = \frac{\pi^2}{6}\sum_{\xv} p^2(\xv) \;.
    \end{align}
    \item  For a random tensor product circuit $U(\thv) = \bigotimes_{i=1}^n U_i(\theta_i)$ where $U_i(\theta_i)$ is a random single-qubit unitary, the variance of the loss scales asymptotically as
    \begin{align}
        \Var_{\thv}[\LC^{\rm KLD}(\thv)] = n-\frac{\pi^2}{6}\sum_{\xv, \xv'} p(\xv) p(\xv')\norm{\xv-\xv'}_H \;, 
    \end{align}
    where $\| \cdot \|_{\rm H}$ is the Hamming distance.
\end{itemize}
\end{proposition}

\begin{proof} The variance of KLD loss can be expressed as
\begin{align}
    \Var_{\thv}[\LC^{\rm KLD}(\thv)] = & \Var_{\thv}\left[ \sum_{\xv} p(\xv) \log\left( \frac{p(\xv)}{\qth(\xv) }\right) \right] \\
    = & \Var_{\thv}\left[ \sum_{\xv} p(\xv) \log\left( \qth(\xv) \right) \right] \\
    = & \sum_{\xv} p^2(\xv) \Var_{\thv} [\log\left( \qth(\xv) \right)] + \sum_{\substack{\xv, \xv' \\  \xv \neq \xv' }} p(\xv) p(\xv') \Cov_{\thv}\left[\log\left( \qth(\xv) \right), \log\left( \qth(\xv') \right)  \right] \;, \label{eq:var-kld-proof-key}
\end{align}
where the second equality is from $\Var_{\alpha}[a A(\alpha) + b] = a^2 \Var_{\alpha} [A(\alpha)]$ with some constants $a, b$ and in the third equality we use the identity $\Var_{\alpha}[\sum_i A_i (\alpha)] = \sum_{i,j} \Cov_{\alpha}[A_i(\alpha), A_j(\alpha)] = \sum_{i} \Var_{\alpha}[A_i(\alpha)] + \sum_{i \neq j}\Cov_{\alpha}[A_i(\alpha), A_j(\alpha)]$. 

To compute each term in Eq.~\eqref{eq:var-kld-proof-key}, we consider two different types of parametrized circuits: (i) the Haar random parametrized circuit and (ii) the tensor product circuit. 

\bigskip

\noindent\textit{\underline{(i) Haar random parametrized circuit.}} Consider an ensemble of parametrized unitaries $\mathbb{U}_{\thv} = \{U(\thv) \}_{\thv}$ that forms a Haar random ensemble. In this scenario with $N_{\rm dim} = 2^n \gg 1$, the distribution of output probabilities $\qth(\xv)$ follows the so-called Porter-Thomas distribution ${\rm Pr}(q := \qth(\xv)) = N_{\rm dim} e^{-N_{\rm dim} q}$~\cite{boixo2018characterizing}. One can see that distribution is not exactly normalized $\int_{0}^1 {\rm Pr}(q) dq = \int_{0}^1 N_{\rm dim} e^{-N_{\rm dim}q} dq = 1 - e^{- N_{\rm dim}}$ and has a correction term that is exponentially small in the dimension of the Hilbert space (i.e., doubly exponential in the number of qubits $n$).

In this regime, we can perform integration over the Porter-Thomas distribution instead of over the parameter distribution to compute an average associated with the model probability. That is, for some function $f(\cdot)$, we have 
\begin{align}
    \Ebb_{\thv} [f(\qth(\xv))] = \int_{\thv} d\thv {\rm Pr}(\thv) f(\qth(\xv)) = \int_{0}^1 dq {\rm Pr}(q) f(q) \;.
\end{align}

Now, consider the first term in Eq.~\eqref{eq:var-kld-proof-key}. The average of $\log\left( \qth(\xv) \right)$ can be written as (with $u= N_{\rm dim} q$)
\begin{align}
\Ebb_{\thv}[\log(\qth(\xv))] = & \int_{0}^1 dq N_{\rm dim} e^{- N_{\rm dim} q} \log(q) \\
= & \int_{0}^{N_{\rm dim}}e^{-u}  \log(u) du - \log(N_{\rm dim}) \int_{0}^{N_{\rm dim}}e^{-u} du \\
= & \int_{0}^{\infty} e^{-u} \log(u) du - \int_{N_{\rm dim}}^{\infty} e^{-u} \log(u) du  - \log(N_{\rm dim}) (1-e^{-N_{\rm dim}}) \\
= & \int_{0}^{\infty} e^{-u} \log(u) du - \left( - e^{-u} \log(u) \bigg\rvert_{u = N_{\rm dim}}^{\infty}  + \int_{N_{\rm dim}}^{\infty} \frac{e^{-u}}{u} du \right)   - \log(N_{\rm dim}) (1-e^{-N_{\rm dim}}) \\
= & - (\gamma + \log(N_{\rm dim}) + E_1(N_{\rm dim})) \;,
\end{align}
where $\gamma = - \int_{0}^{\infty} e^{-u} \log(u) du$ is the Euler Mascheroni constant and $E_1(N_{\rm dim}) = \int_{N_{\rm dim}}^\infty \frac{e^{-u}}{u} du$ is the exponential integral. In the above derivations, the second equality is by the change of variable $u = N_{\rm dim} q$, the fourth equality is due to partial integration on the middle term. 

Similarly, the second moment of $\log\left( \qth(\xv) \right)$ is of the form (again, with $u = N_{\rm dim} q$)
\begin{align}
    \Ebb_{\thv}[\log^2(\qth(\xv))] = & \int_{0}^1 dq N_{\rm dim} e^{- N_{\rm dim} q} \log^2(q)  \\
    = & \int_{0}^{N_{\rm dim}} e^{-u} \log^2(u) du - 2 \log(N_{\rm dim})\int_{0}^{N_{\rm dim}} e^{- u} \log(u) du + \log^2(N_{\rm dim}) \int_{0}^{N_{\rm dim}} e^{-u} du \\
    = & \int_{0}^{\infty} e^{-u} \log^2(u) du  - \int_{N_{\rm dim}}^{\infty} e^{-u} \log^2(u) du - 2 \log(N_{\rm dim})\int_{0}^{N_{\rm dim}} e^{- u} \log(u) du \nonumber \\
    & + \log^2(N_{\rm dim})(1 - e^{- N_{\rm dim}})\\
    = &  \gamma^2 + \frac{\pi^2}{6} - \int_{N_{\rm dim}}^{\infty} e^{-u} \log^2(u) du - 2 \log(N_{\rm dim})\int_{0}^{N_{\rm dim}} e^{- u} \log(u) du + \log^2(N_{\rm dim})(1 - e^{- N_{\rm dim}})\;,
\end{align}
where $\int_{0}^{\infty} e^{-u} \log^2(u) du  = \gamma^2 + \frac{\pi^2}{6}$ as a result of the direct computation and $\int_{0}^{N_{\rm dim}} e^{- u} \log(u) du$ can handled in an identical way to the average shown above. For $N_{\rm dim} \gg 1$, we have $\int_{N_{\rm dim}}^{\infty} e^{-u} \log^2(u) du \approx e^{- N_{\rm dim}} \log^2(N_{\rm dim})$ which can be verified using the L'H\^{o}pital's rule
\begin{align}
    \lim\limits_{N_{\rm dim} \to \infty}
    \frac{e^{-N_{\rm dim}} \log^2 N_{\rm dim}}{\int_{N_{\rm dim}}^{\infty} e^{-u} \log^2(u) du}
    = \lim\limits_{N_{\rm dim}\to \infty} \frac{- e^{-N_{\rm dim}} \log^2 N_{\rm dim} + e^{-N_{\rm dim}} \frac{2\log N_{\rm dim}}{N_{\rm dim}}}{- e^{-N_{\rm dim}} \log^2(N_{\rm dim})} =  \lim\limits_{N_{\rm dim}\to \infty}1- \frac{2}{N_{\rm dim} \log N_{\rm dim}} = 1 \;.
\end{align}
Additionally, in this regime the exponential integral approaches (e.g., see Ref.~\cite{abramowitz1972abramowitz})
\begin{align}
    \lim\limits_{N_{\rm dim} \to \infty} E_1(N_{\rm dim}) =  \lim\limits_{N_{\rm dim} \to \infty} \int_{N_{\rm dim}}^\infty \frac{e^{-u}}{u} du \approx \frac{e^{-N_{\rm dim}}}{N_{\rm dim}} \;.
\end{align}
Hence, for $N_{\rm dim} \gg 1$, the variance of $\log(\qth(\xv))$ scales as
\begin{align}\label{eq:proof-var-kl-haar-log-term}
    \Var_{\thv} [\log(\qth(\xv))] = \frac{\pi^2}{6} \;.
\end{align}

Next in the menu, we show that the covariance terms in Eq.~\eqref{eq:var-kld-proof-key} disappear. In other words, $\qth(\xv)$ and $\qth(\xv')$ are independent and their joint distribution respects the product formulae. To see this, we follow a similar approach in Ref.~\cite{boixo2018characterizing} which derives the Porter Thomas distribution for a random quantum state. We write a quantum state in the computational basis as $|\psi \rangle = \sum_{\xv} (a_{\xv} + i b_{\xv}) |\xv \rangle$. When $|\psi\rangle$ is uniformly chosen at random from the Hilbert space, there is only the normalization constrain on the probabilities i.e., $\sum_{\xv} q(\xv) = \sum_{\xv} (a^2_{\xv} + b^2_{\xv}) = 1$. The joint distribution of probabilities $q$ and $q'$ can be derived by a ratio of the volume of the Hilbert space corresponding to states with $q$ and $q'$, $\vol_{q,q'}$ to the total volume of all states, $\vol_{\rm tot}$. More precisely, we can use delta functions to select the relevant states and compute the associated volumes and the joint distribution as
\begin{align}
    {\rm Pr}(q, q') = \frac{\vol_{q,q'}}{\vol_{\rm tot}} \;,
\end{align}
with 
\begin{align}
    \vol_{q,q'} = & \int_{-\infty}^{\infty} \prod_{\xv} d a_{\xv} d b_{\xv} \delta\left(1 - \sum_{\xv} (a^2_{\xv} + b^2_{\xv}) \right) \delta(a_{\xv_0}^2 + b_{\xv_0}^2 - q) \delta(a_{\xv_1}^2 + b_{\xv_1}^2-q') \;, \\
    \vol_{\rm tot} = & \int_{-\infty}^{\infty} \prod_{\xv} d a_{\xv} d b_{\xv} \delta\left(1 - \sum_{\xv} (a^2_{\xv} + b^2_{\xv}) \right) \;,
\end{align}
where $\delta\left(1 - \sum_{\xv} (a^2_{\xv} + b^2_{\xv}) \right)$ enforces the normalization condition, and $\delta(a_{\xv_0}^2 + b_{\xv_0}^2 - q)$ as well $\delta(a_{\xv_1}^2 + b_{\xv_1}^2-q')$ correspond to select states that include $q$ and $q'$ probabilities, respectively. By expressing a delta function as $\delta(\omega) = \frac{1}{2\pi}\int_{-\infty}^\infty dt e^{i \omega t} $, the numerator can be expressed as
\begin{align}
    \vol_{q,q'} = & \int_{-\infty}^{\infty} \prod_{\xv} d a_{\xv} d b_{\xv} \left(\frac{1}{2\pi} \int_{-\infty}^\infty dt e^{it (\sum_{\xv} (a^2_{\xv} + b^2_{\xv}) - 1) } \right) \left(\frac{1}{2\pi} \int_{-\infty}^\infty ds e^{i s (a_{\xv_0}^2 + b_{\xv_0}^2 - q) }\right) \left(\frac{1}{2\pi} \int_{-\infty}^\infty dw e^{i w (a_{\xv_1}^2 + b_{\xv_1}^2 - q') }\right) \\
    = & \frac{1}{(2\pi)^3} \left[\int_{-\infty}^\infty dt e^{-it} \left( \int_{-\infty}^\infty d a_{\xv} d b_{\xv} e^{it (a_{\xv}^2 + b_{\xv}^2)} \right)^{N_{\rm dim}-2} \right] \cdot \left[\int_{-\infty}^\infty ds e^{-isq} \int_{-\infty}^\infty da_{\xv_0}db_{\xv_0} e^{i(s+t)(a_{\xv_0}^2 + b_{\xv_0}^2)} \right] \nonumber \\
    & \;\; \cdot \left[\int_{-\infty}^\infty dw e^{-iwq'} \int_{-\infty}^\infty da_{\xv_1}db_{\xv_1} e^{i(w+t)(a_{\xv_1}^2 + b_{\xv_1}^2)} \right] \\
    = & \frac{1}{(2\pi)^3} \int_{-\infty}^\infty dt e^{-it} \left( \frac{i\pi}{t}\right)^{N_{\rm dim} - 2} \int_{-\infty}^{\infty} ds e^{-isq} \left(\frac{i \pi}{t+s}\right) \int_{-\infty}^{\infty} dw e^{-iwq'} \left(\frac{i \pi}{t+w}\right) \\
    = & \frac{(i\pi)^{N_{\rm dim}}}{(2\pi)^3} (-2\pi i)^2\int_{-\infty}^\infty dt \frac{e^{-it(1-q-q')}}{t^{N_{\rm dim} - 2}} \\
    =& \frac{(i\pi)^{N_{\rm dim}}}{(2\pi)^3} (-2\pi i)^3 \frac{(-i(1-q-q'))^{N_{\rm dim}-3}}{(N_{\rm dim}-3)!} \\
    =& \frac{\pi^{N_{\rm dim}}(1-q-q')^{N_{\rm dim}-3}}{(N_{\rm dim}-3)!} \;,
\end{align}
where we recognise Gaussian integration in the third equality i.e., $\int_{-\infty}^\infty dz e^{i\alpha z^2} = \sqrt{\frac{i\pi}{\alpha}}$. In the fourth equality, we apply the Cauchy residue theorem for the simple pole to obtain $\int_{-\infty}^\infty dz \frac{e^{-izp}}{t+z} = -2\pi i e^{itp}$. In the fifth equality, the integration of a higher order pole can be computed by adding infinitesimally small imaginary offset
\begin{align}
    \int_{-\infty}^\infty dz \frac{e^{-iz}}{z^N} = \lim\limits_{\epsilon \to 0 } \int_{-\infty + i \epsilon}^{\infty + i \epsilon} dz \frac{e^{-iz}}{z^N} = 
    \frac{-2 \pi i }{ (N-1)!} \left( \frac{d^N}{dz^N}e^{-iz} \right)\bigg\rvert_{z = 0} = \frac{-2\pi i }{(N-1)!}  \cdot (-i)^{N-1} \;,
\end{align}
We refer the readers to Ref.~\cite{boixo2018characterizing, mullane2020sampling} for more details of the complex integration.

Similarly, the total volume of all states is of the form
\begin{align}
    \vol_{\rm tot} = & \frac{1}{2\pi} \int_{-\infty}^\infty dt e^{-it} \left( \int_{-\infty}^\infty d a_{\xv} d b_{\xv} e^{it (a_{\xv}^2 + b_{\xv}^2)} \right)^{N_{\rm dim}} \\
    = & \frac{\pi^{N_{\rm dim}}}{(N_{\rm dim}-1)!} \;.
\end{align}

For large $N_{\rm dim} \gg 1$, the joint distribution of model probabilities can then be expressed as
\begin{align}
    {\rm Pr}(q, q') = (N_{\rm dim} - 1)(N_{\rm dim} - 2)(1 - q - q')^{N_{\rm dim}-3} \approx N_{\rm dim}^2 e^{-N_{\rm dim}(q + q')} = {\rm Pr}(q) {\rm Pr}(q') \;,
\end{align}
which leads to the vanishing of covariance terms in Eq.~\eqref{eq:var-kld-proof-key}
\begin{align}
    \Cov_{\thv}\left[\log\left( \qth(\xv) \right), \log\left( \qth(\xv') \right)  \right] = 0 \;.
\end{align}

Together with Eq~\eqref{eq:proof-var-kl-haar-log-term}, the variance in Eq.~\eqref{eq:var-kld-proof-key} in the regime of large $N_{\rm dim}$ can be written as
\begin{align}
    \Var_{\thv}[\LC^{\rm KLD}(\thv)] =  \frac{\pi^2}{6}\sum_{\xv} p^2(\xv) \;.
\end{align}
We remind the reader again that this result is valid in the large $N_{\rm dim}$ limit and for the uncorrelated distributions the error we are making from this approximation scales at least polynomially in $N_{\rm dim}$ i.e. exponentially with $n$. 
This completes the proof of the first part.

\bigskip

\noindent\textit{\underline{(ii) Tensor product circuit.}} Consider the parametrized circuit of the form $U(\thv)= \bigotimes_{i=1}^{n} U_i(\theta_i) $ where $U_i(\theta_i)$ is an arbitrary single-qubit unitary acting on qubit $i$. In this scenario, the model probability is a product of one-bit marginal probabilities $\qth(\xv) = \prod_{i=1}^n q_{\theta_i}(x_i)$ where $q_{\theta_i}(x_i) = \Tr[U_i(\theta_i) \rho_0^{(i)}U^\dagger_i(\theta_i) |x_i \rangle\langle x_i |]$ with $\rho_0^{(i)}$ being a reduced initial state on the qubit $i$, leading to $\log(\qth(\xv)) = \sum_{i=1}^n \log(q_{\theta_i}(x_i))$ . In addition, these one-bit marginal probabilities are independent which result in 
\begin{align}
    \Cov_{\thv}[\log(q_{\theta_i}(x_i)), q_{\theta_j}(x_j)] = 0 \;,
\end{align}
for $i\neq j$. With these, the variance can be simplified as follows 
\begin{align} 
    \Var_{\thv}[\LC^{\rm KLD}(\thv)] 
    = & \sum_{\xv} p^2(\xv) \sum_{i=1}^n \Var_{\theta_i}[\log(q_{\theta_i}(x_i))] + \sum_{\substack{\xv, \xv' \\  \xv \neq \xv' }} p(\xv) p(\xv') \sum_{i=1}^n \Cov_{\theta_i} [\log(q_i(\theta_i)(x_i)), \log(q_i(\theta_i)(x'_i))] \\
    = & \sum_{\xv, \xv'} p(\xv) p(\xv') \sum_{i=1}^n \Cov_{\theta_i}[\log\left( q_{\theta_i}(x_i) \right), \log\left( q_{\theta_i}(x'_i) \right)] \;. \label{eq:proof-kld-tensor-var}
\end{align}
Here, we consider a single-qubit unitary parametrized by Euler angles i.e., $\theta_i = (a_i, b_i, c_i) \in [0,2\pi] \times [0, \pi] \times [0, 2\pi]$ of the form~\cite{de2018simple}
\begin{align}
    U(\theta_i) = \begin{pmatrix}
 e^{i\frac{a_i + c_i}{2}} \cos \frac{b_i}{2} & e^{i\frac{a_i - c_i}{2}} \sin \frac{b_i}{2} \\
 -e^{-i\frac{a_i - c_i}{2}} \sin \frac{b_i}{2}  & e^{-i\frac{a_i + c_i}{2}} \cos \frac{b_i}{2} \;.
\end{pmatrix} \;\;.
\end{align}
This results in the one-bit marginal probability for $x_i=0,1$ as
\begin{align}
    q_{\theta_i}(0) = \cos^2\left( \frac{b_i}{2}\right) \;\;,\;\;  q_{\theta_i}(1) =\sin^2\left( \frac{b_i}{2}\right) \;,
\end{align}
and the Haar measure in this case is $d\mu_2 := {\rm Pr(\theta_i)} d\theta_i = \sin(b_i) db_i da_i dc_i$.
As $q_{\theta_i}(x_i)$ only depends on $b_i$, then we need to compute average over $b_i$ such that for any function $f(b_i)$ the average is given by $\Ebb_{b_i}[f(b_i)]= \frac{1}{2}\int_{0}^\pi db_i \sin(bi)f(b_i)$ where the factor $1/2$ is to ensure normalisation (i.e. $\Ebb_{b_i}[1]=1$). We give the following integrals that are used to compute the KLD variance
\begin{align}
    &\Ebb_{\theta_i}[\log(q_{\theta_i}(0))]=\int_{0}^\pi dx \sin(x) \log(\cos(x/2)) = -1 \;, \\
    &\Ebb_{\theta_i}[\log(q_{\theta_i}(1))]=\int_{0}^\pi dx \sin(x) \log(\sin(x/2)) = -1\;,    \\ 
      &\Ebb_{\theta_i}[\log^2(q_{\theta_i}(0))]=2\int_{0}^\pi dx \sin(x) \log^2(\cos(x/2)) = 2 \;, \\
   & \Ebb_{\theta_i}[\log^2(q_{\theta_i}(1))]=2\int_{0}^\pi dx \sin(x) \log^2(\sin(x/2)) = 2 \;, \\
   & \Ebb_{\theta_i}[\log(q_{\theta_i}(0))\log(q_{\theta_i}(1))] = 2\int_{0}^\pi dx \sin(x) \log(\cos(x/2))\log(\sin(x/2)) = 2-\frac{\pi^2}{6}\;,
\end{align}
which in turn leads to
\begin{align}
    & \Var_{\theta_i} [\log(q_{\theta_i}(0))] =  \Var_{\theta_i} [\log(q_{\theta_i}(1))] = 1 \;, \label{eq:proof-kld-tensor-var-bit} \\
    & \Cov_{\theta_i}[\log(q_{\theta_i}(0)), \log(q_{\theta_i}(1))] = 1-\frac{\pi^2}{6} \approx -0.645\label{eq:proof-kld-tensor-cov-bit} \;. 
\end{align}
Inserting this into \eqref{eq:proof-kld-tensor-var} leads to

\begin{align}
    \Var_{\thv}[\LC^{\rm KLD}(\thv)] &= \sum_{\xv, \xv'} p(\xv) p(\xv') \sum_{i=1}^n \left( \delta_{x_i,x_i'} + (1-\delta_{x_i,x_i'})(1-\frac{\pi^2}{6})\right)  \\
    &= \sum_{\xv, \xv'} p(\xv) p(\xv') \sum_{i=1}^n \left(1-\frac{\pi^2}{6}(1-\delta_{x_i,x_i'})\right) \\
    &= n-\frac{\pi^2}{6}\sum_{\xv, \xv'} p(\xv) p(\xv')\sum_{i=1}^n |x_i-x_i'| \\ 
    &= n-\frac{\pi^2}{6}\sum_{\xv, \xv'} p(\xv) p(\xv')\norm{\xv-\xv'}_H \;, 
\end{align}
where the third equality uses $|x_i-x_i'|=1-\delta_{x_i,x_i'}$ for $x_i,x_i'\in \{0,1\}$ and in the fourth equality we introduce the hamming distance between two bit-string i.e. $\sum_{i=1}^n|x_i-x_i'|=\norm{\xv-\xv'}_H$. This completes the proof of the proposition.

\end{proof}

As a final comment on the significance of Proposition~\ref{prop:exact-kl-var}, we remark that it is likely that if $\log(\qth(\xv))$ could be measured directly from quantum computers (and not as a post process of samples obtained from quantum computers), this would lead to a model with substantial (at worst polynomially vanishing) loss variances. Hence, this suggests a new (if very ambitious!) research opportunity to circumvent barren plateaus. Lastly, even with the log function measured directly and/or a direct access to the model probabilities, one could still face untrainability induced by a target distribution. For a deep parametrized circuit this happens when the purity of the target distribution becomes exponentially small i.e., $\sum_{\xv} p^2(\xv) \in \OC(b^{-n})$ for some $b>1$. The last panel in Fig~\ref{fig:var-kld-exact} numerically supports this phenomenon for the cardinality dataset which has exponentially many support for deep circuits. This presents a target-oriented barrier to trainability similar to Ref.~\cite{holmes2020barren}.

\section{Supplementary Note - Analysis on the MMD loss function}

In this section, we provide analysis on the MMD loss functions, including detailed proofs as well as further discussion of our analytical results in the main text. Specifically, Supplementary Note~\ref{app:mmd-k-body-observable} shows how the MMD loss can be viewed as the expectation of an observable and analyzes how its properties depend on the bandwidth $\sigma$. A detailed analysis of the MMD loss landscape for a tensor product QCBM is provided in Supplementary Note~\ref{app:mmd-var}. The trainability results are extended to a general Pauli rotation QCBM in Supplementary Note~\ref{appx:mmd-general-pauli}. Lastly, in Supplementary Note~\ref{app:mmd-faithfulness}, we investigate the suitability of the MMD for learning global properties of a target distribution depending on our choice in bandwidth parameter.

For convenience, we start by recalling that the MMD loss is of the form
\begin{align}
    \LC_{\rm MMD}(\thv)
    =& \sum_{\vec{x},\vec{y} \in \XC} \qth(\vec{x})\qth(\vec{y}) K(\vec{x},\vec{y})
     - 2 \sum_{\vec{x},\vec{y}  \in \XC} \qth(\vec{x})p(\vec{y}) K(\vec{x},\vec{y}) + \sum_{\vec{x},\vec{y} \in \XC} p(\vec{x})p(\vec{y}) K(\vec{x},\vec{y}) \;, 
\end{align}
with the classical Gaussian kernel
\begin{align}
    K_\sigma (\vec{x},\vec{y})  & = e^{-\frac{\| \vec{x} - \vec{y}\|^2_2}{2\sigma}} \\
    & = \prod_{i=1}^n e^{-\frac{(x_i - y_i)^2}{2\sigma}}   \;,
\end{align}
where $\|. \|_2$ is the 2-norm, $\sigma > 0$ is the so-called \textit{bandwidth} parameter, and $x_i, y_i$ are the value of bit $i$ in bitstrings $\xv, \yv$ (of length $n$), respectively.

\subsection{MMD as an observable} \label{app:mmd-k-body-observable}
In this section, we explain how the MMD loss function can be seen as an expectation value of some observable and analyse how the observable behaves for different values of the kernel bandwidth. 

We start by noting that each term in the MMD loss function can be seen as the expectation value of an observable
\begin{align} \label{eq:MMD-term-rho-rho-prime}
    \MC(\rho, \rho') = \Tr[  O^{(\sigma)}_{\rm MMD} (\rho \otimes \rho')] \;, \;
\end{align}
with the MMD observable defined as
\begin{equation}  \label{eq:MMD_operator_general_form}
    O^{(\sigma)}_{\rm MMD} := \sum_{\xv,\vec{y}}  K_{\sigma}(\xv,\yv) |\xv \rangle \langle \xv | \otimes |\yv \rangle \langle \yv | \;,
\end{equation}
which acts on $2n$ qubits. 
To obtain each term in the MMD, $\rho$ and $\rho'$ can be either the quantum state of our QCBM model $\rho_{\thv} = |\psi(\thv)\rangle\langle \psi(\thv)|$ and/or the quantum state corresponding to the training data $\rho_{\tilde{p}}$ such that $\pt(\xv) = \Tr[\rho_{\tilde{p}} |\xv \rangle \langle \xv |]$.
In particular, for computing the first MMD term, we have both  $\rho = \rho' = \rho_{\thv}$ and, for computing the cross-term, we have $\rho ' = \rho_{\tilde{p}}$ instead, and for the final term $\rho = \rho' = \rho_{\tilde{p}}$.

The MMD observable $ O^{(\sigma)}_{\rm MMD} $ can be rewritten in Pauli basis using $|\xv \rangle \langle \xv| = \bigotimes_{i=1}^n | x_i \rangle \langle x_i | =  \bigotimes_{i=1}^n \frac{1}{2}(\id_i+(-1)^{x_i}Z_i)$ for the first $n$ qubits and $|\vec{y} \rangle \langle \vec{y}| = \bigotimes_{i=1}^n | y_{i} \rangle \langle y_{i} | =  \bigotimes_{i=1}^n \frac{1}{2}(\id_{n+i}+(-1)^{y_{i}}Z_{n+i})$ for the last $n$ qubits, leading to 
\begin{align}
    O^{(\sigma)}_{\rm MMD} &=\sum_{\xv,\yv}\bigotimes_{i=1}^n\left[\left(\frac{\id_i+(-1)^{x_i}Z_i}{2}\right)\otimes\left(\frac{\id_{n+i}+(-1)^{y_{i}}Z_{n+i}}{2}\right)\exp\left(-\frac{(x_i-y_{i})^2}{2\sigma}\right)\right]\\
    &=\bigotimes_{i=1}^n\sum_{x_i,y_{i}}\left[\left(\frac{\id_i+(-1)^{x_i}Z_i}{2}\right)\otimes\left(\frac{\id_{n+i}+(-1)^{y_{i}}Z_{n+i}}{2}\right)\exp\left(-\frac{(x_i-y_{i})^2}{2\sigma}\right)\right]\\
    &=\bigotimes_{i=1}^n\left[(1 - p_\sigma)\id_i\otimes\id_{n+i}+ p_\sigma Z_i\otimes Z_{n+i}\right] \\
    &=\sum_{A\subseteq \NC}(1-p_\sigma)^{n-|A|}p_\sigma^{|A|}\bigotimes_{i\in A}(Z_i\otimes Z_{n+i}) \label{eq:mmd-obs-binom-0}\\
    & = \sum_{l = 0}^n \binom{n}{l} (1-p_\sigma)^{n-l}p_\sigma^{l} D_{2l} \;, \label{eq:mmd-obs-binom}
\end{align}
where we denote 
\begin{align} \label{eq:p_sigma_appx}
    p_\sigma = (1 - e^{-1/(2\sigma)})/2 \;,
\end{align}
ranging between $0$ and $1/2$. The second equality is obtained using $\sum_{\xv,\yv} \bigotimes_{i=1}^n h(x_i, y_{i}) = \bigotimes_{i=1}^n \sum_{x_i,y_{i}} h(x_i, y_{i})$ and the third one by explicitly summing over the $x_i$ and $y_{i}$, each of which has two possible values of $0$ and $1$. To obtain the fourth inequality we expand the tensor product out explicitly and introduce the notation $A$ to denote the set of all possible subsets of the indices $1, ..., n$. That is, 
\begin{equation} \label{eq:A}
    A\subseteq \NC = \{1,2,...,n\} \, .
\end{equation}
For example for $n=3$, we have
\begin{align}
    A \in \{ \{ \}, \{1\}, \{2\},\{3\}, \{1,2\}, \{1,3\},\{2,3\},\{1,2,3\} \}\;.
\end{align}
In the last step, we denote 
\begin{align} \label{eq:d_2l_appx}
    D_{2l} = \frac{1}{\binom{n}{l}} \sum_{\substack{A\subseteq\NC \\ |A| = l}}\, \bigotimes_{i\in A}(Z_i\otimes Z_{n+i}) \;,
\end{align}
which is a normalized sum of Pauli strings, each of which contains Pauli-Z operators of length $2l$, i.e., $2l$-body interactions.

Interestingly, $p_\sigma$ can be seen as the probability of assigning a pair of single-qubit Pauli-Z operators to a Pauli string. 
As a consequence, the coefficient $w_\sigma(l)$ follows a binomial distribution,
\begin{align}\label{eq:w_sigma_appx}
    w_\sigma(l) = \binom{n}{l} (1-p_\sigma)^{n-l}p_\sigma^{l} \;,
\end{align}
and can be interpreted as the probability of having $D_{2l}$ i.e., all possible Pauli-strings with $2l$ Pauli-Z operators. 
By adopting this Monte Carlo sampling perspective, the MMD observable $O^{(\sigma)}_{\rm MMD}$ can be constructed as a sampling average where the operator $D_{2l}$ is sampled with the probability $w_\sigma(l)$. This allows us to analyze the dominant terms in $O^{(\sigma)}_{\rm MMD}$ by using standard properties of the binomial distribution. For example, the largest $w_\sigma(l)$ (i.e. the mode of the distribution) occurs for 
\begin{align}
    (n+1)p_\sigma - 1 \leq l_{\rm max} = \argmax(w_\sigma(l)) \leq (n+1)p_\sigma \; .
\end{align}
We note that $w_\sigma(l)$ is monotonically increasing for $l < l_{\rm max}$ and is monotonically decreasing for $l > l_{\rm max}$. The average (i.e. mean) bodyness of the MMD observable is given by 
\begin{align}\label{eq:app_mean_body}
    \Ebb_{l \sim w_\sigma(l)}[ 2l ] = 2n p_\sigma \;,
\end{align}
and its variance is
\begin{align}\label{eq:app_var_body}
    \Var_{l \sim w_\sigma(l)}[2l] = 4 n p_\sigma(1 - p_\sigma)  \;.
\end{align}

\medskip

\noindent We are now ready to investigate how $O^{(\sigma)}_{\rm MMD}$ depends on $\sigma$. 

\medskip

\noindent (i) \underline{For a constant bandwidth $\sigma \in \OC(1)$:} We have the following proposition.
\setcounter{proposition}{2}
\begin{proposition}[MMD is global with a constant bandwidth]
For $\sigma\in \OC(1)$, the average bodyness of the MMD operator containing Pauli terms with weight $w_\sigma(l)$ is
\begin{equation}
     \Ebb_{l \sim w_\sigma(l)}[2l] \in \Theta(n) \, .
\end{equation}
Similarly, the variance in the bodyness is given by 
\begin{equation}
   \Var_{l \sim w_\sigma(l)}[2l] \in \Theta(n) \, .
\end{equation}
\end{proposition}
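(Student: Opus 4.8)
The plan is to reduce both claims directly to the exact closed-form expressions for the mean and variance of the bodyness already established in Eqs.~\eqref{eq:app_mean_body} and~\eqref{eq:app_var_body}, namely $\Ebb_{l \sim w_\sigma(l)}[2l] = 2np_\sigma$ and $\Var_{l \sim w_\sigma(l)}[2l] = 4np_\sigma(1-p_\sigma)$. Since both expressions are linear in $n$ with prefactors depending only on $p_\sigma$, the entire problem collapses to showing that $p_\sigma \in \Theta(1)$ whenever $\sigma \in \OC(1)$; that is, I must argue that $p_\sigma$ is bounded above and below by strictly positive constants independent of $n$.

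First I would note from Eq.~\eqref{eq:p_sigma_appx} that $p_\sigma = (1 - e^{-1/(2\sigma)})/2$ is a strictly increasing function of $1/\sigma$ taking values in $(0, 1/2)$ for any finite $\sigma > 0$; in particular $p_\sigma < 1/2$ unconditionally, which supplies the upper bound. For the lower bound I would invoke the hypothesis $\sigma \in \OC(1)$: there exists a constant $\sigma_{\max}$, independent of $n$, with $\sigma \leq \sigma_{\max}$. Then $1/(2\sigma) \geq 1/(2\sigma_{\max})$, so $e^{-1/(2\sigma)} \leq e^{-1/(2\sigma_{\max})} < 1$, and hence $p_\sigma \geq (1 - e^{-1/(2\sigma_{\max})})/2 =: c_0 > 0$. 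Together these give $c_0 \leq p_\sigma \leq 1/2$, i.e. $p_\sigma \in \Theta(1)$.

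With $p_\sigma$ pinned between two positive constants, both conclusions follow immediately. For the mean, $\Ebb_{l \sim w_\sigma(l)}[2l] = 2np_\sigma$ satisfies $2c_0 n \leq \Ebb_{l \sim w_\sigma(l)}[2l] \leq n$, so it is $\Theta(n)$. For the variance, since $p_\sigma \in [c_0, 1/2]$ we also have $1 - p_\sigma \in [1/2, 1-c_0]$, so the product $p_\sigma(1-p_\sigma)$ lies between the positive constants $c_0/2$ and $1/4$; consequently $\Var_{l \sim w_\sigma(l)}[2l] = 4np_\sigma(1-p_\sigma) \in \Theta(n)$ as well.

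The argument is essentially elementary once the closed forms are in hand, so there is no genuine technical obstacle; the only point requiring care is the interpretation of the asymptotic notation. Specifically, $\OC(1)$ by itself only bounds $\sigma$ from above, and it is precisely this upper bound on $\sigma$ that yields the lower bound on $p_\sigma$ (since smaller $\sigma$ pushes $p_\sigma$ toward $1/2$), while the unconditional inequality $p_\sigma < 1/2$ handles the upper bound. I would therefore be explicit that $\sigma$ is treated as fixed (or at most bounded above) as $n \to \infty$, so that $p_\sigma$ is genuinely $n$-independent and the $\Theta(\cdot)$ statements carry their intended meaning.
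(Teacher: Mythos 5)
Your proof is correct and takes essentially the same route as the paper: both reduce the claim to the closed forms $\Ebb_{l \sim w_\sigma(l)}[2l] = 2np_\sigma$ and $\Var_{l \sim w_\sigma(l)}[2l] = 4np_\sigma(1-p_\sigma)$ and then control $p_\sigma$. If anything, you are more careful than the paper, whose one-line proof merely asserts $p_\sigma, (1-p_\sigma) \in \OC(1)$ (literally only an upper bound), whereas you correctly identify that the $\Theta(n)$ conclusion requires $p_\sigma \in \Theta(1)$ and derive the needed lower bound $p_\sigma \geq \bigl(1 - e^{-1/(2\sigma_{\max})}\bigr)/2 > 0$ from the hypothesis $\sigma \leq \sigma_{\max}$.
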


\begin{proof}
    By considering that $p_\sigma,$ $(1-p_\sigma) \in \OC(1)$ for $\sigma\in\OC(1)$ in conjunction with Eqs.~\eqref{eq:app_mean_body} and~\eqref{eq:app_var_body}, we have the scaling of the average bodyness and its variance as claimed.
\end{proof}

Together, this implies that, on average, we are likely to sample $D_{2l}$ with $l \in \OC(n)$. 
Now, we show that the contribution from low-body terms is negligible in this regime. Consider the sum of probabilities up to $k$ bodies (such that $k < l_{\rm max}$) 
\begin{align}
    \sum_{l = 0}^{k} w_\sigma(l) & = \sum_{l = 0}^k \binom{n}{l} (1 - p_\sigma)^{n - l} p_\sigma^l \\
    & \leq (k + 1) \binom{n}{k} (1 - p_\sigma)^{n - k} p_\sigma^k \\
    & \leq (k + 1) \left(\frac{n e}{k} \right)^k (1 - p_\sigma)^{n - k} p_\sigma^k \;,
\end{align}
where, in the first inequality we take the maximum value of the sum as each term in the sum is monotonically increasing before $l_{\rm max} \in \OC(n)$, the second inequality is due to $\binom{n}{k} \leq \left( \frac{ne}{k}\right)^{k}$. It is straight forward to see that, for $k \in \OC(1)$,
\begin{align}
    \sum_{l = 0}^{k} w_\sigma(l) \in \OC(\poly(n)/b^n) \;,
\end{align}
for some $b > 0$. 
Altogether, for $\sigma \in \OC(1)$, the MMD observable is global and the contribution from low-body interactions is exponentially suppressed in the number of qubits. 

\medskip

\noindent (ii) \underline{For a linearly-scaled bandwidth $\sigma \in \Theta(n)$:} The situation here is opposite to what we have in the case (i). First we note that 
\begin{align}
    p(\sigma) = & \frac{1 - \left( 1 - \frac{1}{2\sigma} + \OC\left(\frac{1}{\sigma^2} \right) \right)}{2} \\
    = & \frac{1}{4\sigma} + \OC\left( \frac{1}{\sigma^2}\right) \\
    \in & \OC\left( \frac{1}{n}\right)
\end{align}
with $\sigma\in\Theta(n)$. Thus in this limit the average and variance of the bodyness of the MMD operator are given by 
\begin{align}
    \Ebb_{l \sim w_\sigma(l)}[ 2l ]  &\in \OC(1)\\
    \Var_{l \sim w_\sigma(l)}[2l]  &\in \OC(1) \;.
\end{align} 
Intuitively, this implies the MMD observable is largely composed of low-body contributions in this bandwidth regime. As a consequence, when computing the MMD loss, the contribution from global terms are negligible. This notion is formalized in Proposition~\ref{prop:mmd-k-body} in the main text, which is proven below.

\begin{proposition}[MMD consists largely of low-body terms for $\sigma \in \Theta(n)$]
Let $\Tilde{\LC}^{(\sigma, k)}_{\rm MMD}(\thv)$ be a truncated MMD loss with a truncated operator
$\tilde{O}^{(\sigma, k)}_{\rm MMD}$ that contains up to the $2k$-body interactions in $O^{(\sigma)}_{\rm MMD}$,
\begin{align} 
     \tilde{O}^{(\sigma,k)}_{\rm MMD} :=  \sum_{l=0}^k w_{\sigma}(l) D_{2l} \; ,
\end{align}
where $w_{\sigma}(l)$ are Bernoulli-distributed weights defined in Eq.~\eqref{eq:mmd_weight}. For $\sigma \in \Theta(n)$, the difference between the exact and local approximation of the loss is bounded as
\begin{align}
    |\LC^{(\sigma)}_{\rm MMD}(\thv) - \Tilde{\LC}^{(\sigma, k)}_{\rm MMD}(\thv)| \leq  \epsilon(k) \;,
\end{align} 
with
\begin{align}
    \epsilon(k) \in \OC\left( n (c/k)^k\right) \;,
\end{align}
for some positive constant $c$. 
\end{proposition}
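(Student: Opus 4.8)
The plan is to reduce the claimed bound to a tail estimate for the binomial weights $w_\sigma(l)$ and then control that tail using the asymptotics of $p_\sigma$ in the regime $\sigma\in\Theta(n)$. The key observation is that both the exact and the truncated loss are assembled from the \emph{same} three trace terms $\MC(\rho,\rho')=\Tr[O^{(\sigma)}_{\rm MMD}(\rho\otimes\rho')]$ with $\rho,\rho'\in\{\rho_\thv,\rho_{\pt}\}$, so their difference is supported entirely on the discarded high-body levels. Concretely I would first write
\[
  \LC^{(\sigma)}_{\rm MMD}(\thv)-\tilde{\LC}^{(\sigma,k)}_{\rm MMD}(\thv)=\sum_{l=k+1}^n w_\sigma(l)\,t_l,
\]
where $t_l:=\Tr[D_{2l}(\rho_\thv\otimes\rho_\thv)]-2\Tr[D_{2l}(\rho_\thv\otimes\rho_{\pt})]+\Tr[D_{2l}(\rho_{\pt}\otimes\rho_{\pt})]$ collects the three contributions at bodyness $2l$.

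The second step is to bound $|t_l|$ by a constant, uniformly in $l$ and $\thv$. By its definition in Eq.~\eqref{eq:d_2l_appx}, each $D_{2l}$ is a convex combination of tensor products of Pauli-$Z$ operators, so $\|D_{2l}\|_\infty\leq 1$; since $\rho\otimes\rho'$ is a density operator, $|\Tr[D_{2l}(\rho\otimes\rho')]|\leq\|D_{2l}\|_\infty\leq 1$, and hence $|t_l|\leq 4$. This collapses the analytic content of the proposition to the purely combinatorial estimate $|\LC^{(\sigma)}_{\rm MMD}-\tilde{\LC}^{(\sigma,k)}_{\rm MMD}|\leq 4\sum_{l=k+1}^n w_\sigma(l)$.

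Finally I would estimate the binomial tail using the Taylor expansion of $p_\sigma$ already carried out above, which gives $np_\sigma\in\OC(1)$ for $\sigma\in\Theta(n)$; fix a constant $c_0$ with $np_\sigma\leq c_0$. Using the standard bound $\binom{n}{l}\leq(ne/l)^l$ together with $(1-p_\sigma)^{n-l}\leq 1$ yields the per-term estimate $w_\sigma(l)\leq(nep_\sigma/l)^l\leq(c/l)^l$ with $c:=ec_0$. Since $(c/l)^l$ is monotonically decreasing for $l>c/e$, once $k$ lies past the mode $l_{\rm max}\approx np_\sigma\in\OC(1)$ the tail is dominated by its leading term, so summing at most $n$ terms gives $\sum_{l=k+1}^n w_\sigma(l)\leq n\,(c/(k+1))^{k+1}\leq n\,(c/k)^k$. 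Absorbing the factor $4$ into the constant then delivers $\epsilon(k)\in\OC(n(c/k)^k)$.

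I expect the last step to be the main obstacle: one must verify that $k$ is in the decreasing regime of the binomial weights before replacing the tail sum by $n$ copies of its first term, which requires $k>l_{\rm max}$. This is automatic as $k$ grows (since $l_{\rm max}\in\OC(1)$ for $\sigma\in\Theta(n)$) but must be stated carefully to justify the monotonicity. The operator-norm bound and the reduction to the tail are routine; the genuine ingredients are the asymptotic $np_\sigma\in\OC(1)$ and the super-geometric decay of $(c/l)^l$ that makes the tail collapse onto its leading term.
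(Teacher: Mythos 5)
Your proposal is correct and follows essentially the same route as the paper's proof: both reduce the error to the binomial tail $4\sum_{l=k+1}^n w_\sigma(l)$ via operator-norm bounds on the discarded terms, bound each weight by $(nep_\sigma/l)^l$ using $\binom{n}{l}\leq(ne/l)^l$ and $p_\sigma\leq 1/(4\sigma)$, and then control the tail by $n$ times its leading term once $k$ exceeds an $\OC(1)$ threshold (the paper phrases this as $k\geq n/(4\sigma)$ after locating the maximum of $(ne/(4\sigma x))^x$ by calculus, which matches your monotonicity condition $k>c/e$). The caveat you flag is handled identically in the paper and is harmless for the asymptotic claim.
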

\begin{proof}
Let $\rho_{\thv} = |\psi(\thv) \rangle\langle \psi(\thv) |$ be the quantum state of our QCBM and $\rho_{\tilde{p}}$ with $\pt(\xv) = \Tr[\rho_{\tilde{p}} |\xv\rangle\langle\xv|]$ be the quantum associated with the training data. We then have 
\begin{align}
     |\LC^{(\sigma)}_{\rm MMD}(\thv) - \Tilde{\LC}^{(\sigma,k)}_{\rm MMD}(\thv) | = &\left|\Tr\left[ \left(O^{(\sigma)}_{\rm MMD} - \tilde{O}^{(\sigma,k)}_{\rm MMD} \right) (\rho_{\thv} \otimes \rho_{\thv})\right] - 2 \Tr\left[ \left(O^{(\sigma)}_{\rm MMD} - \tilde{O}^{(\sigma,k)}_{\rm MMD} \right) (\rho_{\thv} \otimes \rho_{\tilde{p}})\right]\right. \nonumber \\
     & \left. + \Tr\left[ \left(O^{(\sigma)}_{\rm MMD} - \tilde{O}^{(\sigma,k)}_{\rm MMD} \right) (\rho_{\tilde{p}} \otimes \rho_{\tilde{p}})\right] \right| \\ 
     \leq & \left|\Tr\left[ \left(O^{(\sigma)}_{\rm MMD} - \tilde{O}^{(\sigma,k)}_{\rm MMD} \right) (\rho_{\thv} \otimes \rho_{\thv})\right]\right| + 2 \left|\Tr\left[ \left(O^{(\sigma)}_{\rm MMD} - \tilde{O}^{(\sigma,k)}_{\rm MMD} \right) (\rho_{\thv} \otimes \rho_{\tilde{p}})\right]\right| \nonumber \\
     & + \left| \Tr\left[ \left(O^{(\sigma)}_{\rm MMD} - \tilde{O}^{(\sigma,k)}_{\rm MMD} \right) (\rho_{\tilde{p}} \otimes \rho_{\tilde{p}})\right] \right| \\
     \leq & \left\|O^{(\sigma)}_{\rm MMD} - \tilde{O}^{(\sigma,k)}_{\rm MMD}\right\|_{\infty} \| \rho_{\thv} \otimes \rho_{\thv} \|_{1} + 2\left\|O^{(\sigma)}_{\rm MMD} - \tilde{O}^{(\sigma,k)}_{\rm MMD}\right\|_{\infty} \| \rho_{\thv} \otimes \rho_{\tilde{p}} \|_{1} \nonumber \\
     & + \left\|O^{(\sigma)}_{\rm MMD} - \tilde{O}^{(\sigma,k)}_{\rm MMD}\right\|_{\infty} \| \rho_{\tilde{p}} \otimes \rho_{\tilde{p}} \|_{1} \\
     = & 4 \left\|O^{(\sigma)}_{\rm MMD} - \tilde{O}^{(\sigma,k)}_{\rm MMD}\right\|_{\infty} \\
     = & 4 \left\| \sum_{l = k + 1}^n w_\sigma(l) D_{2l} \right\|_{\infty} \\
     \leq & 4 \sum_{l = k + 1}^n \binom{n}{l} (1 - p_\sigma)^{n-l} p_{\sigma}^l \\
     \leq & 4 \sum_{l = k + 1}^n \binom{n}{l}\left(\frac{1 - e^{-1/2\sigma}}{2} \right)^l \\
     \leq &  4 \sum_{l = k + 1}^n \left( \frac{ne}{4l\sigma} \right)^l \;, \label{eq:MMD_truncation_error_upbound_general}
\end{align}
where the first inequality is due to the triangle inequality, the second inequality is due to H\"{o}lder's inequality, the second equality is that the $1$-norm of the quantum state is $1$ (density operators have trace $1$), the third inequality uses triangle inequality and the fact that the infinity norm of Pauli operators is $1$, the fourth inequality is from $1-p_\sigma \leq 1$, and in the last inequality we use $e^{-x} \geq 1 - x$ together with $\binom{n}{l} \leq \left( \frac{ne}{l}\right)^l$. 

To further upper bound the truncation error consider $f(x)=\left(\frac{ne}{4\sigma x}\right)^x$ for $x > 0$. We notice that $f'(x) = f(x)\left[\ln\left(\frac{ne}{4\sigma x} \right) - 1\right]$ which leads to the maximum of $f(x)$ at $x^*=n/(4\sigma)$. This leads to
\begin{equation}
    \sum_{l=k+1}^n\left(\frac{ne}{4\sigma l}\right)^l\leq (n-k)\left(\frac{ne}{4\sigma k^*}\right)^{k^*}\;,
\end{equation}
where $k^*=\max(k,n/4\sigma)$. Finally, if we assume that $\sigma \in \Theta(n)$  and if $k \geq n/(4\sigma)$, then we obtain
\begin{align}
\epsilon(k) \in\OC\left(n\left(\frac{c}{k}\right)^k\right)\;
\end{align}
where $c=\frac{ne}{4\sigma} \in \OC(1)$. This completes the proof.  
\end{proof}

\subsection{MMD variance for a tensor product ansatz}\label{app:mmd-var}
Here we analyse the scaling of the MMD loss variance for a tensor product ansatz. In particular, we derive Theorem~\ref{thm:mmd-sigma} and provide further discussion.

The variance of the MMD can be computed as
\begin{align}
    \Var_{\thv} [\LC_{\rm MMD}(\thv)] = & \Var_{\thv} \left[ \sum_{\vec{x},\vec{y} \in \XC} \qth(\vec{x})\qth(\vec{y}) K(\vec{x},\vec{y})
     - 2 \sum_{\vec{x},\vec{y}  \in \XC} \qth(\vec{x})p(\vec{y}) K(\vec{x},\vec{y}) + \sum_{\vec{x},\vec{y} \in \XC} p(\vec{x})p(\vec{y}) K(\vec{x},\vec{y})\right] \\
     = & \Var_{\thv} [\KC_{q, q}(\thv)] + 4 \Var_{\thv} [\KC_{p,q}(\thv)] - 4 \Cov_{\thv} [\KC_{q, q}(\thv), \KC_{p, q}(\thv)] \;, \label{eq:var-mmd-general-form}
\end{align}
where we have used $\Var[X + Y] = \Var[X] + \Var[Y] + 2 \Cov[X,Y]$ and  $\Var[X + c] = \Var[X]$ for any random variables $X, Y$ and some constant $c$. We also introduce the shorthand notation of the first and second terms in the MMD loss as
\begin{align}
    \KC_{q, q}(\thv) = \sum_{\vec{x},\vec{y} \in \XC} \qth(\vec{x})\qth(\vec{y}) K(\vec{x},\vec{y}) \;,
\end{align}
and
\begin{align}
    \KC_{p, q}(\thv) = \sum_{\vec{x},\vec{y}  \in \XC} \qth(\vec{x})p(\vec{y}) K(\vec{x},\vec{y}) \;.
\end{align}

Throughout this sub-section, we consider the QCBM that is comprised of the tensor product ansatz which is of the form
\begin{align}
    U(\thv) = \bigotimes_{i=1}^n U_i (\thv_i) \;,
\end{align}
with $U_i (\thv_i)$ being a single-qubit random unitary acting on qubit $i$ such that its ensemble of over $\thv_i$ i.e., $\{U_i (\thv_i)\}_{\thv_i}$ forms the single-qubit Haar random ensemble. The model probability of measuring a bitstring $\xv$ can be expressed as
\begin{align} 
    \qth(\xv) & = \Tr\left[ U(\thv)|\vec{0}\rangle\langle\vec{0}| U^\dagger(\thv) |\vec{x}\rangle\langle\vec{x}| \right] \\
    & = \prod_{i=1}^n \Tr\left[ U_i(\thv_i)|0_i\rangle\langle 0_i|  U_i^\dagger(\thv_i) |x_i\rangle\langle x_i|\right] \; . \label{eq:prob-model-product}
\end{align}
where we use $(A \otimes B)(C \otimes D) = AC \otimes BD$ and $\Tr[A \otimes B] = \Tr[A] \Tr[B]$.

\subsubsection{Preliminaries: Haar integration and Pauli operators}
Crucially, as the rotation angles $\thv_i$ are independent and $\{U_i(\thv_i)\}_{\thv_i}$ is a single-qubit Haar random ensemble, averaging over $\thv_i$ is equivalent to averaging over the single-qubit Haar ensemble. Hence, we can invoke Haar integration to perform an average over randomly initialized parameters $\thv_i$ on each individual qubit. As an example, consider the average of the probability $\qth(\xv)$ over single qubit Haar random product states
\begin{align} \label{eq:1-design-integration-p(x)}
    \Ebb_{\thv} [\qth(\xv)] = & \int dU(\thv)\Tr\left[ U(\thv)|\vec{0}\rangle\langle\vec{0}| U^\dagger(\thv) |\vec{x}\rangle\langle\vec{x}| \right] \\
    = & \prod_{i=1}^n \int dU_i(\thv_i)\Tr\left[ U_i(\thv_i)|0_i\rangle\langle 0_i|  U_i^\dagger(\thv_i) |x_i\rangle\langle x_i|\right] \\
    = & \prod_{i=1}^n \frac{\Tr[|0_i\rangle\langle 0_i|]\Tr[|x_i\rangle\langle x_i|]}{2} \\
    = & \frac{1}{2^n} \;,
\end{align}
where we used the Haar integral formula $\int dV V M V^\dagger = \Tr[M]/d_V$ (with $d_V$ as dimension of $V$). The Haar integration for the higher moments can be done in a similar manner. Here, we recall some useful single-qubit Haar integration formulae (see, for example, Eq.~(2.26) in Ref.~\cite{roberts2017chaos})
\begin{align} \label{eq:1-design-int-1-qubit}
\int dV V^{\otimes 1} | 0 \rangle\langle 0| ^{\otimes 1} V^{\dagger \otimes 1} &= \frac{1}{2} \id   \, \\
\label{eq:2-design-int-1-qubit}
    \int dV V^{\otimes 2} | 0 \rangle\langle 0| ^{\otimes 2} V^{\dagger \otimes 2} &= \frac{1}{6} (\id \otimes \id + S_{12})  \, \\
    \int dV V^{\otimes 3} | 0 \rangle\langle 0| ^{\otimes 3} V^{\dagger \otimes 3} &= \frac{1}{24}(\id \otimes \id \otimes\id + S_{12} + S_{13} + S_{23} + S_{23}S_{12} + S_{23}S_{13}) \;, \label{eq:3-design-int-1-qubit} \\
    \int dV V^{\otimes 4} | 0 \rangle\langle 0| ^{\otimes 4} V^{\dagger \otimes 4} &= \frac{1}{120}(\id \otimes \id \otimes\id \otimes \id + S_{12} + S_{13} + S_{14} + S_{23} + S_{24} + S_{34} + S_{34}S_{12} + S_{24}S_{13} + S_{23}S_{14} \nonumber
    \\ &+ S_{23}S_{12} + S_{24}S_{12} + S_{23}S_{13} + S_{34}S_{13} + S_{24}S_{14} + S_{34}S_{23} + S_{34}S_{24} + S_{34}S_{41}  \nonumber \\ 
    &+ S_{34}S_{23}S_{12} + S_{34}S_{24}S_{12} + S_{24}S_{23}S_{13} + S_{24}S_{34}S_{13} +S_{23}S_{34}S_{14} + S_{23}S_{24}S_{14} ) \label{eq:4-design-int-1-qubit}
\end{align}
where $S_{lk}$ is the swap operator between systems $l$ and $k$.

In addition, we will use the following lemma for the variance of an arbitrary operator $O$ in the Pauli basis over random product states.
\setcounter{lemma}{0}
\begin{lemma}\label{lemma:var-o-pauli}
Consider an arbitrary observable $O$ decomposed into the Pauli basis
\begin{align}
    O=\sum_{\sigma\in\mathfrak{p}_n}\lm_{\sigma}\sigma,
\end{align}
where the weights $\lm_\sigma$ are real constants and $\mathfrak{p}_n=\{\id,X,Y,Z\}^{\otimes n}$ is the Pauli ensemble on $n$ qubits.
The variance of $O$ over single qubit Haar random product states is given by
\begin{equation} \label{eq:product_state_variance_pauli_sum}
    \Var_{|\psi\rangle \sim \text{Haar}_1^{\otimes n}}[O]=\sum_{\sigma\in\mathfrak{p}_n\backslash {\mathbb{1}}^{\otimes n}}\frac{\lambda_\sigma^2}{3^{|s(\sigma)|}}\;,
\end{equation}
where $s(\sigma)$ is the subset of qubits on which $\sigma$ acts non trivially and $|s(\sigma)|$ is a cardinality of $s(\sigma)$.
\end{lemma}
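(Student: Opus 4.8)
The plan is to compute the first and second moments of $\langle O\rangle_\psi := \bra{\psi} O \ket{\psi}$ directly, exploiting the tensor-product structure of $\ket{\psi} = \bigotimes_{i=1}^n \ket{\psi_i}$ together with the orthogonality of Pauli operators under the trace inner product. Writing each Pauli string as $\sigma = \bigotimes_{i=1}^n \sigma_i$ with $\sigma_i \in \{\id, X, Y, Z\}$, the expectation factorizes, $\langle\sigma\rangle_\psi = \prod_{i=1}^n \bra{\psi_i}\sigma_i\ket{\psi_i}$, and since the $\ket{\psi_i}$ are independent each moment reduces to a product of single-qubit Haar integrals.

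First I would handle the mean. By the single-qubit first-moment integral Eq.~\eqref{eq:1-design-int-1-qubit}, one has $\Ebb[\bra{\psi_i}\sigma_i\ket{\psi_i}] = \tfrac12\Tr[\sigma_i]$, which equals $1$ for $\sigma_i=\id$ and vanishes for every nontrivial Pauli. Hence $\Ebb[\langle\sigma\rangle_\psi] = \delta_{\sigma,\,\id^{\otimes n}}$, so that $\Ebb[\langle O\rangle_\psi] = \lambda_{\id^{\otimes n}}$ and therefore $\Ebb[\langle O\rangle_\psi]^2 = \lambda_{\id^{\otimes n}}^2$.

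Next I would compute the second moment. Writing $\langle\sigma\rangle_\psi \langle\tau\rangle_\psi = \Tr[(\sigma\otimes\tau)\,\rho_\psi^{\otimes 2}]$ with $\rho_\psi = \dya{\psi}$, the average over the product Haar ensemble becomes $\Ebb[\langle\sigma\rangle_\psi\langle\tau\rangle_\psi] = \prod_{i=1}^n \Tr\big[(\sigma_i\otimes\tau_i)\,\Ebb[\rho_{\psi_i}^{\otimes 2}]\big]$. Invoking the single-qubit $2$-design integral Eq.~\eqref{eq:2-design-int-1-qubit}, namely $\Ebb[\rho_{\psi_i}^{\otimes 2}] = \tfrac16(\id\otimes\id + S)$ with $S$ the two-copy swap, each factor reduces to $\tfrac16\big(\Tr[\sigma_i]\Tr[\tau_i] + \Tr[\sigma_i\tau_i]\big)$. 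Using $\Tr[\id]=2$, $\Tr[P]=0$ for $P\neq\id$, and $\Tr[\sigma_i\tau_i] = 2\delta_{\sigma_i,\tau_i}$, this per-qubit factor equals $1$ when $\sigma_i=\tau_i=\id$, equals $1/3$ when $\sigma_i=\tau_i\neq\id$, and vanishes whenever $\sigma_i\neq\tau_i$. Consequently the product over all qubits is nonzero only if $\sigma=\tau$, in which case it equals $3^{-|s(\sigma)|}$, giving $\Ebb[\langle O\rangle_\psi^2] = \sum_{\sigma} \lambda_\sigma^2\, 3^{-|s(\sigma)|}$.

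Subtracting the squared mean then yields $\Var[O] = \sum_\sigma \lambda_\sigma^2\, 3^{-|s(\sigma)|} - \lambda_{\id^{\otimes n}}^2$, and since the removed term is exactly the $\sigma=\id^{\otimes n}$ contribution (which has $|s(\sigma)|=0$), this collapses to the claimed sum over $\sigma\neq\id^{\otimes n}$. The whole argument is essentially bookkeeping driven by the two low-order Haar integrals; the only point requiring care is confirming that the cross terms with $\sigma\neq\tau$ drop out, which I expect to be the main (if minor) obstacle. This follows cleanly from the vanishing of the per-qubit factor as soon as $\sigma_i\neq\tau_i$ on a single site, i.e. directly from Pauli orthogonality under the trace.
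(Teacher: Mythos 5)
Your proof is correct, and it takes a somewhat different route from the paper's for the key step. The mean computation is identical (per-qubit first-moment Haar integration, killing all non-identity strings). For the second moment, however, the paper does not factorize the two-copy expectation directly: it instead invokes an external result (Lemma B.3 of the out-of-distribution generalization paper it cites) giving $\Ebb\bigl[\langle O\rangle_\psi^2\bigr]=6^{-n}\sum_{A\subseteq\NC}\Tr[O_A^2]$ with $O_A=\Tr_{\bar A}[O]$, then evaluates each $\Tr[O_A^2]$ in the Pauli basis (tracking which strings survive the partial trace) and finally resums over subsets $A$ via a binomial identity to reach $\sum_\sigma \lambda_\sigma^2\,3^{-|s(\sigma)|}$. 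Your argument reaches the same intermediate expression in one pass: writing $\langle\sigma\rangle_\psi\langle\tau\rangle_\psi=\Tr[(\sigma\otimes\tau)\rho_\psi^{\otimes2}]$, using the single-qubit $2$-design formula $\Ebb[\rho_{\psi_i}^{\otimes2}]=\tfrac16(\id\otimes\id+S)$ and the swap trick $\Tr[(\sigma_i\otimes\tau_i)S]=\Tr[\sigma_i\tau_i]$, and checking the three per-qubit cases ($1$, $1/3$, $0$) so that only diagonal terms $\sigma=\tau$ survive with weight $3^{-|s(\sigma)|}$. Your case analysis is right, including the subtle sub-case where $\sigma_i,\tau_i$ are distinct nontrivial Paulis (their product is a traceless third Pauli up to phase). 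What your approach buys is self-containedness and brevity — no appeal to the partial-trace lemma and no subset resummation; what the paper's buys is reuse of a more general tool that also handles expectations of operators not presented in Pauli form. Both are sound, and both reduce ultimately to the same two low-order single-qubit Haar integrals.
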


\begin{proof}
We consider the arbitrary observable $O$ which can be deomposed into the Pauli basis as
\begin{equation} \label{eq:pauli_decomposition}
    O=\sum_{\sigma\in\mathfrak{p}_n}\lm_{\sigma}\sigma,
\end{equation}
where the weights $\lm_\sigma$ are real constants and $\mathfrak{p}_n=\{\id,X,Y,Z\}^{\otimes n}$ is the Pauli ensemble on $n$ qubits. We denote $s(\sigma)$ as a support of $\sigma$ which is a subset of qubits that $\sigma$ acts non-trivially on and $|s(\sigma)|$ as a cardinality of $s(\sigma)$\footnote{As an example, for $\sigma = X \otimes \id \otimes \id \otimes Z \otimes Y$, we have $s(\sigma) = \{ 1, 4,5 \}$ with $|s(\sigma)| = 3$.}.

The variance of $O$ over single qubit Haar random product states $|\psi\rangle \sim \text{Haar}_1^{\otimes n}$ is of the form
\begin{align}
    \Var_{|\psi\rangle \sim \text{Haar}_1^{\otimes n}}[O] & = \mathbb{E}_{\ket{\psi}\sim\text{Haar}_1^{\otimes n}}\left[ |\bra{\psi}O\ket{\psi}|^2 \right] - \left( \mathbb{E}_{\ket{\psi}\sim\text{Haar}_1^{\otimes n}}\left[ |\bra{\psi}O\ket{\psi}| \right]\right)^2 \\
    & = \langle O^2 \rangle_{|\psi\rangle \sim\text{Haar}_1^{\otimes n}} - \langle O \rangle^2_{|\psi\rangle \sim\text{Haar}_1^{\otimes n}} \;.
\end{align}

First, we consider the average of $O$ over single qubit Haar random product states.
\begin{align}
    \langle O \rangle_{|\psi\rangle \sim\text{Haar}_1^{\otimes n}} = & \sum_{\sigma\in\mathfrak{p}_n}\lm_{\sigma} \langle \sigma \rangle_{|\psi\rangle \sim\text{Haar}_1^{\otimes n}} \\
    = & \sum_{\sigma\in\mathfrak{p}_n}\lm_{\sigma} \prod_{i=1}^n \langle \sigma_i \rangle_{\ket{\psi_i} \sim \text{Haar}_1}\\
    = & \sum_{\sigma\in\mathfrak{p}_n}\lm_{\sigma} \prod_{i=1}^n \delta(\sigma_i = \id) \\
    = & \lm_{\id^{\otimes n}} \;,
\end{align}
where, in the third equality, we use the Haar integration formula in Eq.~\eqref{eq:1-design-int-1-qubit} together with the fact that all single-qubit Pauli matrices are traceless, and we denote $\delta(\sigma_i = \id) = 1$ if $\sigma_i = \id$ (otherwise, $\delta(\sigma_i = \id) = 0$ ).

\medskip

Now, we consider the second-moment of $O$ over a random product state. 
To evaluate this we follow the proof of Lemma B.3 in Appendix B in Ref.~\cite{caro2022outofdistribution} to integrate over the random product states but replace the unitary $\tilde{U}^\dagger W\hat{U}$ in Ref.~\cite{caro2022outofdistribution} with an observable $O$. This directly leads to 
\begin{align} \label{eq:second_moment_haar_product_from_ODG_paper}
    \langle O^2\rangle=\frac{1}{6^n}\sum_{A\subseteq \NC}\Tr[O_A^2]\;,
\end{align}
where $O_A=\Tr_{\bar{A}}[O]$ is the partial trace of $O$ over all qubits except those in $A\subseteq \NC = \{1, 2, ..., n\}$. We recall that $A$ is also defined in Eq.~\eqref{eq:A}.

Consider a given subset of qubits $A$. We first notice that, for a given $\sigma = \bigotimes_{i = 1}^n \sigma_i$, we have the partial trace of the Pauli string over $A$ as 
\begin{align}
    \sigma_A = & \Tr_{\bar A} \left[ \bigotimes_{i = 1}^n \sigma_i \right] \\
    = & \left[ \prod_{i \notin A} \Tr[\sigma_i] \right] \cdot\left[ \bigotimes_{i \in A} \sigma_i \right]  \\
    = & 2^{n - |A|} \delta(s(\sigma) \subseteq A) \bigotimes_{i \in A} \sigma_i  \;,
\end{align}
where we denote $\delta(s(\sigma) \subseteq A) = 1$ if $s(\sigma)  \subseteq A$ and $\delta(s(\sigma) \subseteq A) = 0$ if $s(\sigma) \nsubseteq A$, which is a direct consequence of the Pauli matrices being traceless i.e., $\Tr[\sigma_i] \neq 0$ only if $\sigma_i = \id$. Importantly, $\sigma_A \neq 0$ only if the part that $\sigma$ acts non-trivially is a subset of $A$.

Now, we consider
\begin{align}
    \Tr[O_A^2] & = \Tr_{A} \left[ \Tr_{\bar A}[O] \cdot \Tr_{\bar A}[O]\right] \\
    & = \Tr\left[ \left(\sum_{\sigma\in\mathfrak{p}_n}\lm_{\sigma} 2^{n - |A|}\delta(s(\sigma) \subseteq A) \bigotimes_{i \in A} \sigma_i\right) \left(\sum_{\sigma'\in\mathfrak{p}_n}\lm_{\sigma'}  2^{n - |A|}\delta(s(\sigma ') \subseteq A) \bigotimes_{i \in A} \sigma'_i\right)\right] \\
    & = \sum_{\sigma, \sigma' \in \mathfrak{p}_n} \lm_{\sigma}\lm_{\sigma'} 2^{2(n - |A|)} \delta(s(\sigma) \subseteq A) \delta(s(\sigma') \subseteq A) \left( \prod_{i \in A}\Tr[\sigma_i\sigma'_i] \right)\\
    & = \sum_{\sigma, \sigma' \in \mathfrak{p}_n} \lm_{\sigma}\lm_{\sigma'} 2^{2(n - |A|)} \delta(s(\sigma) \subseteq A) \delta(s(\sigma') \subseteq A)  \left( 2^{|A|} \prod_{i\in A}\delta(\sigma_i = \sigma'_i)\right)\\
    & = \sum_{\sigma \in  \mathfrak{p}_n} \lm_{\sigma}^2 2^{2n - |A|} \delta(s(\sigma) \subseteq A) \;,
\end{align}
where the third equality is due to $\Tr[\left(\bigotimes_{i \in A} \sigma_i\right)\left(\bigotimes_{i \in A} \sigma'_i\right) ] = \prod_{i \in A} \Tr[\sigma_i \sigma'_i]$, and the fourth equality is due to $\Tr[\sigma_i \sigma'_i] = 2 \delta(\sigma_i = \sigma'_i)$ with $\delta(\sigma_i = \sigma'_i)=1$ if $\sigma_i = \sigma'_i$ and $\delta(\sigma_i = \sigma'_i)=0$, otherwise. 
In the last equality, we notice that the condition that $\sigma$ acts non-trivially only on $A$ (i.e., $\delta(s(\sigma) \in A)$ together with the reduced Pauli strings on $A$ are the same for $\sigma$ and $\sigma'$ (i.e., $\prod_{i \in A}\delta(\sigma_i = \sigma'_i)$ ) implies that $\sigma = \sigma'$ for the term to be non-zero, reducing the double sum to the single sum. 

We are ready to continue with
\begin{align}
    \langle O^2\rangle = & \frac{1}{6^n} \sum_{A\subseteq \NC} \sum_{\sigma \in  \mathfrak{p}_n} \lm_{\sigma}^2 2^{2n - |A|} \delta(s(\sigma) \subseteq A) \\
    = & \left( \frac{2}{3}\right)^n \sum_{\sigma \in  \mathfrak{p}_n}  \lm_{\sigma}^2 \sum_{A\subseteq \NC} 2 ^{ - |A|} \delta(s(\sigma) \subseteq A) \\
    = & \left( \frac{2}{3}\right)^n \sum_{\sigma \in  \mathfrak{p}_n}  \lm_{\sigma}^2 \sum_{A' \subseteq \NC \backslash s(\sigma)} 2 ^{ - |s(\sigma)| - |A'|} \\
    = & \left( \frac{2}{3}\right)^n \sum_{\sigma \in  \mathfrak{p}_n}  \lm_{\sigma}^2 2^{- |s(\sigma)|}\sum_{|A'| = 0}^{n - |s(\sigma)|} \binom{n - |s(\sigma)|}{|A'|}2 ^{ - |A'|} \\
    = & \left( \frac{2}{3}\right)^n \sum_{\sigma \in  \mathfrak{p}_n}  \lm_{\sigma}^2 2^{- |s(\sigma)|} \left( \frac{3}{2}\right)^{n - |s(\sigma)|} \\
    = & \sum_{\sigma \in \mathfrak{p}_n} \frac{\lm_{\sigma}^2}{3^{|s(\sigma)|}} \;,
\end{align}
where the third equality is due to the fact that the terms do not vanish only when $s(\sigma) \in A$ and therefore we only have to sum over $A$ that contain $s(\sigma)$. The latter is equivalent to summing $A'$ where $A = A' \cup s(\sigma)$ over $\NC \backslash s(\sigma)$. In fourth equality, we replace the sum over $A'$ by a sum over $|A'|$ and counted the number of ensembles of size $|A'|$ in $\NC\backslash s(\sigma)$ (which is of size $n-|s(\sigma)|$). In the fifth equality, we recognised a binomial sum.

Lastly, we have the variance of the form
\begin{align}
    \Var_{|\psi\rangle \sim \text{Haar}_1^{\otimes n}}[O] & = \sum_{\sigma \in \mathfrak{p}_n} \frac{\lm_{\sigma}^2}{3^{|s(\sigma)|}} - \left(  \lm_{\id^{\otimes n}}\right)^2 \\
    & = \sum_{\sigma \in \mathfrak{p}_n \backslash \id^{\otimes n}} \frac{\lm_{\sigma}^2}{3^{|s(\sigma)|}} \;,
\end{align}
where the sum in the last line excludes the identity term. This completes the proof of the lemma.
\end{proof}

\subsubsection{Generic form of the MMD variance for a tensor product ansatz}

We now give a generic expression of the variance of the MMD loss for an arbitrary bandwidth, which is stated in the following proposition

\begin{supplemental_proposition}\label{sup-prop:mmd-var-tensor}
    Consider the MMD loss function $\LC^{(\sigma)}_{\rm MMD}(\thv)$ as defined in Eq.~\eqref{eq:mmd-loss-general}, which uses the classical Gaussian kernel as defined in Eq.~\eqref{eq:gaussian-kernel} with the bandwidth $\sigma$, and a quantum generative model that is comprised of a tensor-product ansatz $U = \bigotimes_i^n U_i(\theta_i)$ with $\{ U_i(\theta_i) \}_{\theta_i}$ a single-qubit Haar random ensemble for all $i$. Given a training dataset $\Tilde{P}$, we have that the variance of the MMD loss over parameters $\thv$ is
    \begin{align}\label{eq:var-mmd-tensor}
        \Var_{\thv}[\LC^{(\sigma)}_{\rm MMD}(\thv)] = B_\sigma + 4 C_\sigma(\Tilde{P}) \;,
    \end{align}
    with
    \begin{align} \label{eq:MMD_First_Term_Variance}
        B_\sigma =  \left[ \frac{7+6 e^{-1/2\sigma}+2e^{-1/\sigma}}{15}\right]^n - \left[\frac{4+4e^{-1/2\sigma}+e^{-1/\sigma}}{9}\right]^n \;,
    \end{align}
    and
    \begin{align}\label{eq:MMD_cross_terms_variance_with_Pauli_Decomposition}
         C_\sigma(\tilde{P}) =  \sum_{\substack{A \subseteq \NC \\ A\neq \{\}}} (1-p_\sigma)^{2(n-|A|)}\left( \frac{p^2_\sigma}{3}\right)^{|A|} z^2_A(\tilde{P})\;,
    \end{align}
    where $p_\sigma = (1 - e^{-\frac{1}{2\sigma}})/2$, $\NC = \{1,2,...,n\}$, $z_A(\tilde{P}) = \Tr\left[ (\bigotimes_{i \in A} Z_i )\rho_{\tilde{p}} \right] = \sum_{\yv} \pt(\yv) (-1)^{\sum_{i \in A}y_i }$ with $\rho_{\tilde{p}}$ being the quantum state corresponding to the training data such that $\pt(\xv) = \Tr[\rho_{\tilde{p}}|\xv\rangle\langle \xv|]$.
    The sum in Eq.~\eqref{eq:MMD_cross_terms_variance_with_Pauli_Decomposition} is over all possible subsets of $\NC$ excluding the empty set $\{ \}$. 
\end{supplemental_proposition}

We remark that $B_\sigma$ and $C_\sigma(\Tilde{P})$ are the variances of the first term $\KC_{q,q}(\thv)$ and the second term $\KC_{p,q}(\thv)$ in the MMD, respectively, while we found the covariance term to vanish. The dependence on the training data is encoded in $z_A(\Tilde{P})$ which ranges between $-1$ and $1$.
Lastly, the exact formula of the MMD variance has been found to be consistent with the numerical simulation up to $n=1000$ in Fig.~\ref{fig:mmd_var_training}.

\begin{proof}
There are three main steps in our proof. (i) computing the variance of the first term $\KC_{q,q}(\thv)$, (ii) computing the variance of the second term $\KC_{p,q}(\thv)$ and, lastly, (iii) showing that the covariance between the two terms is zero.
    
\medskip

(i) \underline{Computing the variance of $\KC_{q,q}(\thv)$:}
\begin{align}
    \Var_{\thv}[\KC_{q,q}(\thv)] =& \Ebb_{\thv}[\KC^2_{q,q}(\thv)] - (\Ebb_{\thv}[\KC_{q,q}(\thv)])^2 \\
    = & \Ebb_{\thv}\left[ \left( \sum_{\vec{x},\vec{y} } \qth(\vec{x})\qth(\vec{y}) K(\vec{x},\vec{y})\right)^2\right] -   \left(\Ebb_{\thv}\left[  \sum_{\vec{x},\vec{y} } \qth(\vec{x})\qth(\vec{y}) K(\vec{x},\vec{y})\right]\right)^2 \\
    = & \sum_{\xv,\yv,\vec{x'},\vec{y'}} \Ebb_{\thv}[\qth(\xv)\qth(\yv)\qth(\vec{x'})\qth(\vec{y'})]K(\xv,\yv)K(\vec{x'},\vec{y'}) - \left( \sum_{\xv,\yv} \Ebb_{\thv}[\qth(\xv)\qth(\yv)] K(\xv,\yv)\right)^2 \;. \label{eq:pf-var-1-term}
\end{align}
We now can express each individual model probability as in Eq.~\eqref{eq:prob-model-product} and then average over the parameters $\thv$. This requires us to perform Haar integration for the first and second terms in Eq.~\eqref{eq:pf-var-1-term}, respectively. 

First, consider
\begin{align}
    \sum_{\xv, \yv} \Ebb_{\thv} [\qth(\xv) \qth(\yv)] K(\xv,\yv) 
    & =  \sum_{\xv, \yv} \prod_{i=1}^n  \int dU_i(\thv_i) \Tr\left[ (U_i(\thv_i))^{\otimes 2}|0_i\rangle\langle 0_i|^{\otimes 2} (U_i^\dagger(\thv_i))^{\otimes 2} (|x_i\rangle\langle x_i|\otimes |y_i\rangle\langle y_i| ) \right]  K(\xv,\yv) \\
    & =  \sum_{\xv, \yv} \prod_{i=1}^n   \Tr\left[ \left(\int dU_i(\thv_i) (U_i(\thv_i))^{\otimes 2}|0_i\rangle\langle 0_i|^{\otimes 2} (U_i^\dagger(\thv_i))^{\otimes 2}\right) (|x_i\rangle\langle x_i|\otimes |y_i\rangle\langle y_i| ) \right]  K(\xv,\yv) \\
    & = \sum_{\xv, \yv} \prod_{i=1}^n \Tr\left[ \left( \frac{\id \otimes \id + S_{12}}{6}\right)(|x_i\rangle\langle x_i|\otimes |y_i\rangle\langle y_i| )  \right] e^{- \frac{(x_i - y_i)^2}{2\sigma}} \\
    & = \sum_{\xv, \yv} \prod_{i=1}^n \left( \frac{1 + \delta_{x_i,y_i}}{6}\right)e^{- \frac{(x_i - y_i)^2}{2\sigma}} \\
    & = \sum_{\xv} \prod_{i=1}^n \left[\left( \frac{1 + \delta_{x_i,0}}{6}\right)e^{- \frac{(x_i)^2}{2\sigma}} +  \left( \frac{1 + \delta_{x_i,1}}{6}\right)e^{- \frac{(x_i - 1)^2}{2\sigma}}\right] \\
    & =  \left( \frac{2 + e^{-1/2\sigma}}{3}\right)^n \;,
\end{align}
where, in the third equality, we use Eq.~\eqref{eq:2-design-int-1-qubit}, and, in the fifth equality as well as in the last equality, we use the identity $\sum_{\xv}\prod_{i=1}^n h_i (x_i) = \prod_{i=1}^n(h_i (0) + h_i (1))$. 

\medskip

Similarly, the first term in Eq.~\eqref{eq:pf-var-1-term} can be computed via Haar integration using Eq.~\eqref{eq:4-design-int-1-qubit} and repeatedly applying the identity $\sum_{\xv}\prod_{i=1}^n h_i (x_i) = \prod_{i=1}^n(h_i (0) + h_i (1))$, leading to
\begin{align}
    \sum_{\xv,\yv,\vec{x'},\vec{y'}} \Ebb_{\thv}[\qth(\xv)\qth(\yv)\qth(\vec{x'})\qth(\vec{y'})]K(\xv,\yv)K(\vec{x'},\vec{y'})  = \left( \frac{7+ 6 e^{-1/2\sigma} + 2 e^{-1/\sigma}}{15}\right)^n \;.
\end{align}

Altogether, we have the variance of the first MMD term as
\begin{align}
    \Var_{\thv}[\KC_{q,q}(\thv)] = \left( \frac{7+ 6 e^{-1/2\sigma} + 2 e^{-1/\sigma}}{15}\right)^n - \left( \frac{4+ 4 e^{-1/2\sigma} + e^{-1/\sigma}}{9}\right)^n \;. \label{eq:var-mmd-tensor-first}
\end{align}

(ii) \underline{Computing the variance of $\KC_{p,q}(\thv)$:}
There are two alternative ways of doing this, leading to two equivalent expressions of the variance of $\KC_{p,q}(\thv)$. First is the same approach used in (i). Alternatively, we can interpret the middle term as an expectation value of an observable $O^{(\sigma)}_{\Tilde{p}} (\Tilde{P}) = \sum_{\xv} \lambda_{\xv}(\Tilde{P}) |\xv\rangle\langle\xv|$ with $\lambda_{\xv}(\Tilde{P}) = \sum_{\yv} p(\yv) K(\xv, \yv)$ and then use Lemma~\ref{lemma:var-o-pauli}. That is, $\KC_{p,q}(\thv) = \Tr [U(\thv) |\vec{0} \rangle\langle \vec{0} | U^\dagger(\thv) O^{(\sigma)}_{\Tilde{p}}(\Tilde{P}) ]$. Transforming $O^{(\sigma)}_{\Tilde{p}}$ into the Pauli basis with $|\xv\rangle\langle\xv| = \bigotimes_{i=1}^n |x_i\rangle\langle x_i | = \bigotimes_{i=1}^n \frac{1}{2}(\id_i + (-1)^{x_i} Z_i)$ leads to
\begin{align}\label{eq:obs-mmd-middle-pauli}
    O^{(\sigma)}_{\Tilde{p}} = \sum_{A \subseteq \NC }(1-p_\sigma)^{n - |A|}p_\sigma^{|A|} z_A(\Tilde{P}) \bigotimes_{i\in A}Z_i \;,
\end{align}
where $\NC=\{1,2,...,n\}$, $p_\sigma = (1 - e^{-\frac{1}{2\sigma}})/2$ and we denote 
\begin{align} \label{eq:z_a_def}
    z_A(\tilde{P}) & = \Tr[\left(\bigotimes_{i\in A} Z_i \right) \rho_{\tilde{p}}] \\
    & = \sum_{\yv} \pt(\yv) (-1)^{\sum_{i \in A}y_i } \;,
\end{align}
where $\rho_{\tilde{p}}$ is the quantum state associated with the training data with $\pt(\xv) = \Tr[\rho_{\tilde{p}} |\xv\rangle\langle\xv|]$~\footnote{Equivalently, one can get to this reduced MMD observable by tracing out half of the qubits that $\rho_{\tilde{p}}$ acts on in $O^{(\sigma)}_{\rm MMD}$ in Eq.~\eqref{eq:mmd-obs-binom}. That is, $O^{(\sigma)}_{\Tilde{p}} = \Tr_1\left[(\id \otimes \rho_{\Tilde{p}})O^{(\sigma)}_{\rm MMD}\right]$.}. 

By using Lemma~\ref{lemma:var-o-pauli}, the variance of the middle term with $O^{(\sigma)}_{\Tilde{p}}$ expressed in the Pauli basis is of the form
\begin{align}
    \Var_{\thv}[\KC_{p,q.}(\thv)] = \sum_{\substack{A \subseteq \NC \\  A\neq \{\}}} (1-p_\sigma)^{2(n-|A|)}\left( \frac{p^2_\sigma}{3}\right)^{|A|} z^2_A(\tilde{P}) \;. \label{eq:var-mmd-tensor-second}
\end{align}
Notice that the sum now excludes the empty set $\{ \}$. 
We can see that $z_A(\tilde{P})$ encodes information about the target distribution.

\medskip

(iii) \underline{Computing the covariance between $\KC_{p,p}(\thv)$ and $\KC_{p,q}(\thv)$:} By direct computation as in (i), we have

\begin{align}
    \Cov_{\thv}[\KC_{q,q}(\thv), \KC_{p,q}(\thv)] & = \Ebb_{\thv}[\KC_{q,q}(\thv)\KC_{p,q}(\thv)] - \Ebb_{\thv}[\KC_{q,q}(\thv)] \Ebb_{\thv}[\KC_{p,q}(\thv)] \\
    & = \sum_{\xv,\yv,\vec{x'},\vec{y'}} p(\vec{y'}) \bigg( \Ebb_{\thv}[\qth(\xv)\qth(\yv)\qth(\vec{y'})] - \Ebb_{\thv}[\qth(\xv)\qth(\yv)] \Ebb_{\thv}[\qth(\vec{x'})] \bigg) K(\xv,\yv)K(\vec{x'},\vec{y'})  \\
    & = 0 \;, \label{eq:cov-mmd-tensor}
\end{align}
where the last equality follows from
\begin{align}
   \sum_{\xv,\yv}\bigg( \Ebb_{\thv}[\qth(\xv)\qth(\yv)\qth(\vec{y'})] - \Ebb_{\thv}[\qth(\xv)\qth(\yv)] \Ebb_{\thv}[\qth(\vec{x'})] \bigg) K(\xv,\yv) =0
\end{align}
which holds for any $\vec{x'}$ and $\vec{y'}$ from Eq.~\eqref{eq:1-design-integration-p(x)}, Eq.~\eqref{eq:2-design-int-1-qubit} and Eq.~\eqref{eq:3-design-int-1-qubit}.
\medskip

By substituting Eq.~\eqref{eq:var-mmd-tensor-first}, Eq.~\eqref{eq:var-mmd-tensor-second} and Eq.~\eqref{eq:cov-mmd-tensor} back into the MMD variance expression in Eq.~\eqref{eq:var-mmd-general-form}, the proof is completed.
\end{proof}

\subsubsection{Variance scaling and trainability of MMD}\label{app:mmd_var_theorem}

We now analyze how the scaling of the variance depends on the bandwidth $\sigma$. 
To demonstrate the presence of loss concentration, it is sufficient to show that the variance of the whole MMD loss has an exponentially small upper bound. We show that this happens when the bandwidth is constant and independent of the number of qubits, i.e., $\sigma \in \OC(1)$.
On the other hand, to establish trainability it is crucial to accurately measure all individual terms in the MMD loss. More precisely, we require both first and second MMD terms to have at least a polynomially large variance. We argue that this can be achieved by using bandwidth that scales as $\sigma \in \Theta(n)$. 

\medskip

A formal version of Theorem~\ref{thm:mmd-sigma} is stated below. 
\begin{theorem}[Product ansatz trainability of MMD, formal]\label{thm:mmd-sigma-formal}
Consider the MMD loss function $\LC^{(\sigma)}_{\rm MMD}(\thv)$ as defined in Eq.~\eqref{eq:mmd-loss-implicit}, which uses the classical Gaussian kernel as defined in Eq.~\eqref{eq:gaussian-kernel} with the bandwidth $\sigma>0$, and a quantum circuit generative model that is comprised of a tensor-product ansatz $U = \bigotimes_i^n U_i(\theta_i)$ with $\{ U_i(\theta_i) \}_{\theta_i}$ being single-qubit (Haar) random unitaries. Given a training dataset $\Tilde{P}$, the asymptotic scaling of the variance of the MMD loss depends on the value of $\sigma$. 

\medskip
    \noindent For $\sigma \in \OC(1)$, we have
    \begin{align}
        \Var_{\thv}[\LC^{(\sigma)}_{\rm MMD}(\thv)] \in \OC(1/b^n) \;,
    \end{align}
    with some $b>1$.

\medskip
    
    \noindent On the other hand, according to Supplemental Proposition~\ref{sup-prop:mmd-var-tensor} and for $\sigma \in \Theta(n)$, we have
    \begin{align}
        \Var_{\thv}[\LC^{(\sigma)}_{\rm MMD}(\thv)] = B_\sigma + 4C_\sigma(\Tilde{P})  \;, 
    \end{align}
    where the variance of the first term $B_{\sigma}$ is lower-bounded as 
    \begin{align}
        B_{\sigma} \in \Omega(1/n) \;,
    \end{align}
    as well as, the variance of the second term $C_{\sigma}(\Tilde{P})$ is lower-bounded as 
    \begin{align}
        C_{\sigma}(\Tilde{P}) \in \Omega(1/\poly(n)) \;,
    \end{align}
    provided that
    \begin{align}
        \sum_{\substack{A \subseteq \NC \\  A\neq \{\}, |A| \leq k }} z^2_A(\Tilde{P}) \in \Omega(1/\poly(n)) \;,
    \end{align}
    with $k \in \OC(1)$, $\NC = \{1,...,n\}$ and $z_{A}(\Tilde{P}) = \sum_{\yv} \pt(\yv) (-1)^{\sum_{i \in A}y_i}$ which encodes the information about the training data. 
\end{theorem}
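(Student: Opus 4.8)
The plan is to start from the exact variance in Supplemental Proposition~\ref{sup-prop:mmd-var-tensor}, $\Var_{\thv}[\LC^{(\sigma)}_{\rm MMD}(\thv)] = B_\sigma + 4C_\sigma(\Tilde{P})$, and analyse the two nonnegative summands separately in each bandwidth regime; since the covariance already vanishes, it suffices to control $B_\sigma$ and $C_\sigma(\Tilde{P})$ independently. Throughout I would write $a := e^{-1/2\sigma}$ and $\epsilon := 1-a$, so that $p_\sigma = \epsilon/2$, and record that the two bases in $B_\sigma$, namely $b_1(a) = (7+6a+2a^2)/15$ and $b_2(a) = (4+4a+a^2)/9$, satisfy $b_1-1 = \tfrac{2}{15}(a-1)(a+4)$ and $b_2-1 = \tfrac{1}{9}(a-1)(a+5)$. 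Hence both are strictly below $1$ precisely when $a<1$, and both equal $1$ at $a=1$.

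For the regime $\sigma \in \OC(1)$, $a$ is bounded away from $1$ and $p_\sigma$ away from $0$ by constants. Then $B_\sigma \leq b_1^n$ with $b_1 \leq b_1(a_0) < 1$, so $B_\sigma \in \OC(1/b^n)$. For $C_\sigma$ I would use $|z_A|\leq 1$ together with the factorisation $\sum_{A\subseteq\NC}(1-p_\sigma)^{2(n-|A|)}(p_\sigma^2/3)^{|A|} = [(1-p_\sigma)^2+p_\sigma^2/3]^n$, so that $C_\sigma \leq [(1-p_\sigma)^2+p_\sigma^2/3]^n$. Since $(1-t)^2+t^2/3 = 1-2t+\tfrac43 t^2 < 1$ for all $t\in(0,1/2]$, the base is a constant strictly below $1$ once $p_\sigma$ is bounded below, giving exponential decay. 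Both pieces are therefore $\OC(1/b^n)$, and so is their sum.

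For the regime $\sigma\in\Theta(n)$ the core of the argument is the lower bound on $B_\sigma$, where $\epsilon = 1-e^{-1/2\sigma} = \Theta(1/n)$ is small. Expanding the bases around $a=1$ gives $b_1 = 1 - \tfrac{2\epsilon}{3} + \tfrac{2\epsilon^2}{15}$ and $b_2 = 1 - \tfrac{2\epsilon}{3} + \tfrac{\epsilon^2}{9}$; the linear terms coincide, so the gap is purely second order, $b_1-b_2 = \epsilon^2/45 = \Omega(1/n^2)>0$. I would then use $B_\sigma = b_1^n - b_2^n = (b_1-b_2)\sum_{j=0}^{n-1}b_1^j b_2^{n-1-j} \geq (b_1-b_2)\,n\,b_2^{n-1}$, and bound $b_2^{n-1} \geq (1-\tfrac{2\epsilon}{3})^{n-1} \geq e^{-\frac{4}{3}\epsilon(n-1)} = \Omega(1)$ because $\epsilon(n-1) = \OC(1)$. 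Combining the $\Omega(1/n^2)$ gap, the factor $n$, and the constant lower bound on $b_2^{n-1}$ yields $B_\sigma \in \Omega(1/n)$. This step is the main obstacle: both $b_1^n$ and $b_2^n$ converge to the \emph{same} nonzero constant, so a crude estimate only gives $B_\sigma = o(1)$; the delicate point is establishing the exact cancellation of the $\OC(\epsilon)$ terms and then verifying that the $n$-fold sum in the factorisation amplifies the resulting second-order gap to $\Theta(1/n)$ rather than washing it out, which requires careful control of the remainders in the expansion around $a=1$.

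Finally, for $C_\sigma(\Tilde{P})$ in this regime every term is a nonnegative multiple of $z_A^2(\Tilde{P})$, so I may discard all subsets with $|A|>k$ and lower bound by the low-order part. For $|A|\leq k$ with $k = \OC(1)$ and $p_\sigma = \Theta(1/n)$, the weight $(1-p_\sigma)^{2(n-|A|)}(p_\sigma^2/3)^{|A|}$ is decreasing in $|A|$ (consecutive ratio $(p_\sigma^2/3)(1-p_\sigma)^{-2} = \Theta(1/n^2)<1$), hence minimised at $|A|=k$, where $(1-p_\sigma)^{2(n-k)} = \Omega(1)$ and $(p_\sigma^2/3)^{k} = \Omega(1/n^{2k})$. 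Pulling this common factor out gives $C_\sigma \geq \Omega(1/n^{2k})\sum_{\emptyset \neq A,\,|A|\leq k} z_A^2(\Tilde{P})$, which is $\Omega(1/\poly(n))$ under the stated hypothesis on the low-order marginals, since $n^{2k}$ is polynomial for constant $k$. Adding the two lower bounds completes the $\Theta(n)$ case.
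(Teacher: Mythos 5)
Your proposal is correct, and at the strategic level it mirrors the paper's proof: both start from the exact decomposition $\Var_{\thv}[\LC^{(\sigma)}_{\rm MMD}(\thv)] = B_\sigma + 4C_\sigma(\Tilde{P})$ of Supplemental Proposition~1, prove exponential upper bounds on each summand for $\sigma\in\OC(1)$ (your binomial-resummation bound $C_\sigma \leq [(1-p_\sigma)^2+p_\sigma^2/3]^n$ is literally the paper's $[(1+e^{-1/2\sigma}+e^{-1/\sigma})/3]^n$ in disguise), and for $\sigma\in\Theta(n)$ lower bound $C_\sigma$ by truncating to $|A|\leq k$ and pulling out the minimal weight at $|A|=k$, exactly as the paper does. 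Where you genuinely depart from the paper is the execution of the delicate $B_\sigma\in\Omega(1/n)$ step. The paper expands $e^{-1/2\sigma}$ and $e^{-1/\sigma}$ in powers of $1/\sigma$ with explicit Taylor-remainder inequalities ($1-x+x^2/2-x^3/6 \leq e^{-x} \leq 1-x+x^2/2$), applies $(1+a)^n-(1+b)^n \geq n(a-b)$, imposes a side condition $\sigma > 5.5$, and invokes an auxiliary lemma on $(1-1/x)^x$ to control the prefactor. Your substitution $\epsilon = 1-e^{-1/2\sigma}$ instead renders both bases \emph{exact} quadratic polynomials, $b_1 = 1-\tfrac{2\epsilon}{3}+\tfrac{2\epsilon^2}{15}$ and $b_2 = 1-\tfrac{2\epsilon}{3}+\tfrac{\epsilon^2}{9}$, so the cancellation of the linear terms and the gap $b_1-b_2 = \epsilon^2/45$ are algebraic identities with no remainders to track; the telescoping factorization $b_1^n-b_2^n = (b_1-b_2)\sum_{j=0}^{n-1}b_1^j b_2^{n-1-j} \geq (b_1-b_2)\,n\,b_2^{n-1}$ together with the elementary bound $1-x\geq e^{-2x}$ (valid for small $x$, hence asymptotically) then replaces both the paper's difference inequality and its Lemma~2. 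The two routes buy different things: the paper's version gives explicit, quantitatively trackable constants at finite $\sigma$, while yours is shorter, avoids side conditions on $\sigma$, and makes transparent that the $\Theta(1/n)$ scaling comes from an exact second-order gap amplified by the factor $n$ — the precise point you correctly flag as the crux.
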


\begin{proof}
We consider the scaling of the MMD variance in Eq.~\eqref{eq:var-mmd-tensor} for two scenarios of the bandwidth values. 

\medskip

(i) \underline{For $\sigma\in \OC(1)$:} We show that both $B_\sigma$ and $C_\sigma(\Tilde{P})$ in the MMD variance are exponentially small. First, from Eq.~\eqref{eq:MMD_First_Term_Variance} we know that
\begin{align}
     B_\sigma \leq  \left[ \frac{7+6 e^{-1/2\sigma}+2e^{-1/\sigma}}{15}\right]^n \;.
\end{align}
If $\sigma\in\OC(1)$, then $\tilde{B}_\sigma$ is exponentially decreasing as $e^{-1/2\sigma}, e^{-1/\sigma} \in \OC(1)$.

For the second MMD term, Eq.~\eqref{eq:MMD_cross_terms_variance_with_Pauli_Decomposition}, we upper bound as
\begin{align}
     C_\sigma(\tilde{P}) &\leq \sum_{\substack{A \subseteq \NC \\  A\neq \{\}}} (1-p_\sigma)^{2(n-|A|)}\left( \frac{p^2_\sigma}{3}\right)^{|A|} \\
     & =  \sum_{|A|=1}^n  \binom{n}{|A|} (1-p_\sigma)^{2(n-|A|)}\left( \frac{p^2_\sigma}{3}\right)^{|A|}\\
     & \leq \sum_{|A|= 0}^n  \binom{n}{|A|} (1-p_\sigma)^{2(n-|A|)}\left( \frac{p^2_\sigma}{3}\right)^{|A|}\\
     & = \left( (1-p_\sigma)^2 + \frac{p_\sigma^2}{3}\right)^n \\
     & = \left[ \frac{1+e^{-1/2\sigma} + e^{-1/\sigma}}{3}\right]^n \;,
\end{align}
where the first inequality is due to $z^2_A(\tilde{P}) \leq 1$ , the first equality leverages that the expression only depends on the support of $A$ rather than $A$ itself, the final inequality is due to including the empty set in the sum (i.e., $|A| = 0$), and in the second equality we use the binomial sum formula. In the final line we use the definition of $p_\sigma$ from Eq.~\eqref{eq:p_sigma_appx}. Similarly to the first term, if $\sigma \in \OC(1)$, the upper bound decays exponentially with $n$. 

Therefore, when $\sigma \in \OC(1)$, the variance of the MMD loss scales as
\begin{align}
    \Var_{\thv} [\LC_{\rm MMD}(\thv)] \in \OC(1/b^n) \; ,
\end{align}
for some $b > 0$.

\medskip

(ii) \underline{For $\sigma\in \Theta(n)$:}  
Consider the variance of the first MMD term
\begin{align}
    B_\sigma = & \left[ \frac{7+ 6 e^{-1/2\sigma} + 2 e^{-1/\sigma}}{15}\right]^n - \left[ \frac{4+ 4 e^{-1/2\sigma} + e^{-1/\sigma}}{9}\right]^n \\
    \geq & \left[ \frac{7+ 6 \left( 1 - \frac{1}{2\sigma} + \frac{1}{8 \sigma^2} - \frac{1}{48 \sigma^3}\right) + 2 \left( 1 - \frac{1}{\sigma} + \frac{1}{2\sigma^2} - \frac{1}{6\sigma^3}\right)}{15}\right]^n 
    - \left[ \frac{4+ 4 \left( 1 - \frac{1}{2\sigma} + \frac{1}{8 \sigma^2} \right) + \left( 1 - \frac{1}{\sigma} + \frac{1}{2\sigma^2} \right)}{9}\right]^n \\
    = & \left[ 1 - \frac{1}{3\sigma} +\frac{7}{60 \sigma^2} - \frac{11}{360 \sigma^3} \right]^n - \left[ 1 - \frac{1}{3\sigma} +\frac{1}{9 \sigma^2}  \right]^n  \\
    = & \left(1 - \frac{1}{3\sigma} \right)^n \left[ \left[ 1 + \frac{\frac{7}{60 \sigma^2} - \frac{11}{360 \sigma^3}}{1 - \frac{1}{3\sigma}} \right]^n -  \left[ 1 + \frac{\frac{1}{9 \sigma^2}}{1 - \frac{1}{3\sigma}} \right]^n\right] \\
    \geq & \left(1 - \frac{1}{3\sigma} \right)^{n-1} \left( \frac{n}{180\sigma^2}\right) \left( 1 - \frac{11}{2\sigma} \right) \\
    = &  \left( \frac{n}{180\sigma^2}\right) \left( 1 - \frac{11}{2\sigma} \right) \left[\left(1 - \frac{1}{3\sigma} \right)^{3\sigma}\right]^{(n-1)/3\sigma} \;,
\end{align}
where in the first inequality we use $1 - x + x^2/2 - x^3/6 \leq e^{-x} \leq 1 - x + x^2/2 $, the second inequality is due to $(1 + a)^n - (1 + b)^n \geq n(a-b)$ for positive $a,b$ and $a>b$. This is satisfied when $\sigma > 5.5$, which is the case for sufficiently large $n$. We note that $(n-1)/3\sigma \in \OC(1)$. To proceed further, we consider the following lemma.
\begin{lemma}\label{lemma:bound-1/e}
    The lower bound of $f(x) = (1 - 1/x)^x$ with $1 < |x|$ is given by
    \begin{align}
        f(x) \geq \frac{1}{e} \left( 1 - \sum_{j = 1}^\infty \frac{1}{(j+1)x^j} \right) \;.
    \end{align}
\end{lemma}
\begin{proof}
We consider
    \begin{align}
    f(x) & = \exp(x \ln(1 - 1/x)) \\
    & = \exp\left(x \left(\sum_{j=1}^\infty -\frac{1}{j x^j}\right)\right)\\
    & = \frac{1}{e} \cdot  \exp\left(\sum_{j=1}^\infty -\frac{1}{(j+1) x^j}\right) \\
    & \geq \frac{1}{e} \left( 1 - \sum_{j=1}^\infty \frac{1}{(j+1) x^j}\right) \;,
\end{align}
where the second equality is due to the Taylor expansion of $\ln(1 - 1/x)$ which converges for $1 < |x|$, the inequality is by using $e^{-y} \geq 1 - y$.
\end{proof}
By using Lemma~\ref{lemma:bound-1/e}, we have the following lower bound
\begin{align}
    B_\sigma & \geq \left( \frac{n}{180\sigma^2}\right) \left( 1 - \frac{11}{2\sigma} \right) \left[\frac{1}{e}\left( 1 - \sum_{j=1}^\infty \frac{1}{(j+1) (3\sigma)^j}\right)\right]^{(n-1)/3\sigma} \;,
\end{align}
which implies that $B_\sigma \in \Omega(1/n)$ for $\sigma \in \Omega(n)$

\medskip

Similarly, for the second term, we have
\begin{align}
    C_\sigma(\tilde{P}) & = \sum_{\substack{A \subseteq \NC \\ ; A\neq \{\}}} (1-p_\sigma)^{2(n-|A|)}\left( \frac{p^2_\sigma}{3}\right)^{|A|} z^2_A(\tilde{P}) \\
    & = \left[\frac{1+2e^{-1/2\sigma}+e^{-1/\sigma}}{4}\right]^n \sum_{\substack{A \subseteq \NC \\ ; A\neq \{\}}}\left(\frac{\tanh^2(1/4\sigma)}{3}\right)^{|A|}z^2_A(\tilde{P}) \\
    & \geq  \left[\frac{1+2\left( 1 - \frac{1}{2\sigma} \right)+\left( 1 - \frac{1}{\sigma}\right)}{4}\right]^n \sum_{\substack{A \subseteq \NC \\ ; A\neq \{\} }} \left( \frac{\frac{1}{16\sigma^2}\left(1- \frac{1}{24\sigma^2}\right)}{3}\right)^{|A|}  z_A^2(\Tilde{P}) \\
    & \geq \left[ \left( 1 - \frac{1}{2\sigma} \right)^{2\sigma}\right]^{\frac{n}{2\sigma}} \left( \frac{1}{48\sigma^2}\left(1- \frac{1}{24\sigma^2}\right)\right) ^{k} \sum_{\substack{A \subseteq \NC \\ ; A\neq \{\}, |A| \leq k }} z^2_A(\Tilde{P}) \\
    & \geq \left[ \frac{1}{e}\left( 1 - \sum_{j=1}^\infty \frac{1}{(j+1) (2\sigma)^j}\right)\right]^{\frac{n}{2\sigma}} \left( \frac{1}{48\sigma^2}\left(1- \frac{1}{24\sigma^2}\right)\right) ^{k} \sum_{\substack{A \subseteq \NC \\ ; A\neq \{\}, |A| \leq k }} z^2_A(\Tilde{P}) \; ,
\end{align}
where $\tanh(1/4\sigma) = p_\sigma/(1 - p_\sigma)$, the first inequality is due to $e^{-x} \geq 1 - x$ and $\tanh(x) \geq x - x^3/3$ (for positive $x$), and in the second inequality we truncate the sum at $|A| = k$ and taking $|A| = k$ for all terms within the truncated sum. Finally, in the last inequality, we note that $\frac{n}{2\sigma} \in \OC(1)$ and use Lemma~\ref{lemma:bound-1/e}.    

Altogether, taking $k \in \OC(1)$ and assuming 
\begin{align}
    \sum_{\substack{A \subseteq \NC \\ ; A\neq \{\}, |A| \leq k }} z^2_A(\Tilde{P}) \in \Omega(1/\poly(n)) \; ,
\end{align}
the MMD variance is lower bounded as
\begin{align}
    \Var_{\thv} [\LC_{\rm MMD}(\thv)] \in \Omega(1/n)\;,
\end{align}
with the desired scaling of $B_\sigma$ and $C_\sigma(\Tilde{P})$. This completes the proof of the theorem.
    
\end{proof}

\begin{figure}[h]
    \centering
    \includegraphics[width=0.55\linewidth]{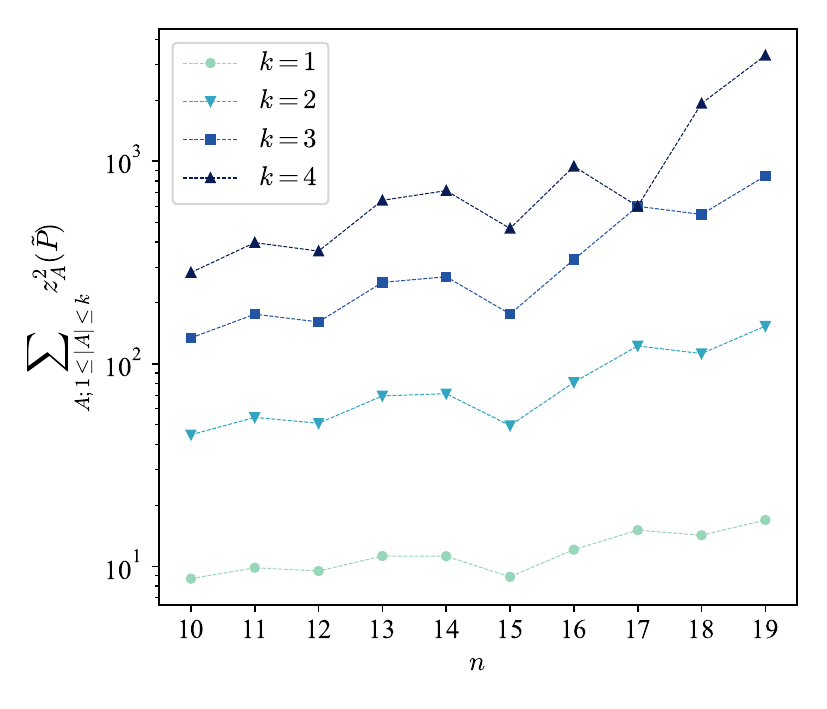}
    \caption{We plot $\sum_{A \subseteq \NC} z_A^2(\Tilde{P})$ such that $|A| \leq k$ and $A\neq \{ \}$ as a function of $n$ and shows different $k$ values. The datasets considered here corresponds to the real high energy physics dataset used in Sec.~\ref{sec:train-hea-dataset}. Values of $n=10,11,...,19$ and $k=1,2,3,4$ are presented. The quantity does not vanish with the increasing number of qubits, satisfying the data-dependence assumption made in Theorem~\ref{thm:mmd-sigma-formal}.}
    \label{fig:avg_za_cern}
\end{figure}

We showed that the variance of the MMD cross terms $C_\sigma(\Tilde{P})$ decays at most polynomially in $n$ provided that the sum of $z_A^2(\tilde{P})$ terms for $|A|\in \OC(1)$ and $A\neq \{\,\}$ is at least polynomially small in $n$, i.e., not exponentially small. Here, we comment on this assumption.

First, we recall the definition of $z_{A}(\Tilde{P})$ from Eq.~\eqref{eq:z_a_def}
\begin{align} 
    z_A(\tilde{P}) & = \Tr[\left(\bigotimes_{i\in A} Z_i \right) \rho_{\tilde{p}}] \\
    & = \sum_{\yv} \pt(\yv) (-1)^{\sum_{i \in A}y_i } \;,
\end{align}
with $\rho_{\tilde{p}}$ being the quantum state associated with the training data distribution $\pt(\xv) = \Tr[\rho_{\tilde{p}} |\xv\rangle\langle\xv|]$. 
Importantly, $z_{A}(\Tilde{P})$ encodes the correlation of training data on the subset $A$ of the bitstring (with $A$ defined in Eq.~\eqref{eq:A}), which can be interpreted as an average parity over $A$. In other words, its purpose is for the model to learn the same expectation on the operator $Z_A$ as the dataset.

The magnitude of $\sum_{A} z^2_{A}(\Tilde{P})$ depends on the provided training dataset, and if the variance of the cross term vanishes due to the data-dependence, this is because it is required for a faithfulness of the loss function. We note, however, that we do not expect this to occur in practice because partical datasets are likely exhibit significant correlations, and also due to the assumption of polynomial dataset sizes, i.e., $\pt(\xv) \in \Omega(1/\poly(n))$ for $\xv \in \Tilde{P}$. 
To emphasize this point, we analyzed the dataset from HEP colliders experiments used throughout Section~\ref{sec:train-hea-dataset}. Fig.~\ref{fig:avg_za_cern} numerically shows the sum of the first $z^2_{A}(\Tilde{P})$ terms for $1\leq |A|\leq k$ as a function of $n$ and for $k=1,\, 2,\, 3$ and $4$. We see that these terms for the low-body interactions do not disappear as the number of qubits increases and, in fact, they increase with $n$ instead. This provides a practical example that the data-dependence assumption used in Theorem~\ref{thm:mmd-sigma-formal} is satisfied in practice.

\subsection{MMD variance for a generic Pauli rotation ans\"{a}tze}\label{appx:mmd-general-pauli}
In this sub-section, we present a formal version of Theorem~\ref{thm:mmd-train-general} and provide details of the proof. 
First, we recall that the MMD observable in Eq.~\eqref{eq:mmd-obs-binom-0} is of the form
\begin{align}
    O^{(\sigma)}_{\rm MMD} & = \sum_{A\subseteq \NC}(1-p_\sigma)^{n-|A|}p_\sigma^{|A|}\bigotimes_{i\in A}(Z_i\otimes Z_{n+i}) \\
    & = \sum_{A\subseteq \NC} c_{\sigma}(A) Z_A \otimes  Z_{A} \label{eq:reminder-MMD-observable}
\end{align}
where we denote $A$ as the set of all possible
subsets of the indices $1, ..., n$ as defined in Eq.~\eqref{eq:A}, and in the second equality we introduce some shorthand notations $c_{\sigma}(A) = (1-p_\sigma)^{n-|A|}p_\sigma^{|A|}$ and $Z_A = \bigotimes_{i\in A}Z_i$. 
Also, each term in the MMD loss function can be expressed as some expectation of the form
\begin{align} 
    \MC(\rho, \rho') = \Tr[  O^{(\sigma)}_{\rm MMD} (\rho \otimes \rho')] \;,
\end{align}
and the MMD loss function is of the form
\begin{align}\label{eq:mmd-reminder}
     \LC_{\rm MMD}(\thv) =  \MC(\rho_{\thv}, \rho_{\thv}) - 2  \MC(\rho_{\thv}, \rho_{\pt}) + \MC(\rho_{\pt}, \rho_{\pt})  \;,
\end{align}
where $\rho_{\thv} = U(\thv) \rho_0 U^\dagger(\thv)$ and $\rho_{\pt}$ is a quantum state such that $\pt(\xv) = \Tr[\rho_{\pt} |\xv\rangle\langle \xv| ]$. For those who want more refresher on the MMD observable, we refer to Supplementary Note~\ref{app:mmd-k-body-observable}. 

We consider a generic Pauli rotation ans\"{a}tze of the form 
\begin{align}\label{eq:mmd-generic-ansatze}
    U_{\rm PQC}(\thv) = U(\alv) U_{\rm tensor} (\btv) \;,
\end{align}
where $U(\alv) = \prod_{k=1}^M e^{- i \alpha_k G_k /2} V_k$ such that the generators $\{ G_k \}_{k=1}^M$ are some $n$-qubit Pauli strings $G_k \in  \pauli := \{ \id, X, Y, Z \}^n$ and $\{ V_k\}_{k=1}^M$ is a set of non-parametrized Clifford gates, and $ U_{\rm tensor} (\btv) = \bigotimes_{i=1}^n U_i(\beta_i)$ is a layer of Haar random single qubit rotations $U_i(\beta_i)$. In addition, we assume all parameters are uncorrelated.

Before showing the main theoretical results, we remarks that any arbitrary Pauli-string $B \in \pauli := \{\id, X, Y, Z\}^{\otimes n}$ remains a single Pauli-string when being back-propagated with $U(\alv_{c})$ where $\alv = \alv_{c} \in \{ 0, \pi/2\}^M$ is a Clifford angle. That is, we have
\begin{align}
    B_{\alv_{c}} = U^\dagger(\alv_{c}) B U(\alv_{c}) \in \pauli = \{\id, X, Y, Z\}^{\otimes n} \;.
\end{align}
We refer to Lemma~\ref{lemma:remain-pauli-clifford} for more details and proof. 

Now, denote $\sup(B_{\alv_{c}})$ as the set of qubits that $B_{\alv_{c}}$ acts non-trivially on. The light cone of $B_{\alv_{c}}$ is defined as the cardinality of $\sup(B_{\alv_{c}})$ i.e., $\Delta_{B}(\alv_{c}) = | \sup(B_{\alv_{c}}) |$. For example, if $B_{\alv_{c}}$ corresponds to $Z\otimes \id^{n-4}\otimes X \otimes Y \otimes Z$, then we have $\Delta_{B}(\alv_{c}) = 4$. Notice the light cone defined this way depends on the Clifford angle $\alv_{c}$. We then introduce the concept of an average light cone
\begin{align}\label{eq:avg-light-cone}
    \Delta_{B}^{\rm (avg)} = \Ebb_{\alv_{c}\sim\DC_{c}} [\Delta_{B}(\alv_{c})] \;, 
\end{align}
where $\DC_{c}$ is a uniform distribution over the Clifford parameters. That is, $\Ebb_{\alv_{c} \sim \DC_{c}}[\cdot] = \frac{1}{2^M} \sum_{\alv_{c} \in \{0, \pi/2\}^M} [\cdot]$. The average light cone intuitively tells us on average how many qubits are involved by the end of the evolution of the circuit $U(\alv)$. 

\subsubsection{Summary of the key technical results}
We now present the generic expression of the variance of the MMD loss with general Pauli rotation ans\"{a}tze $U_{\rm PQC}(\thv)$ for an arbitrary band-width.
\begin{supplemental_proposition}\label{sup-prop:mmd-variance-pauli-rotations}
Consider the MMD loss function as defined in Eq.~\eqref{eq:mmd-loss-implicit} using the Gaussian kernel in Eq.~\eqref{eq:gaussian-kernel} and a QCBM composed of a parametrized Pauli rotation circuit as defined in Eq.~\eqref{eq:mmd-generic-ansatze}. Then, we have the variance of the MMD loss scales as
\begin{align}
    \Var_{\thv}[\LC^{(\sigma)}_{\rm MMD}(\thv)] = \Var_{\thv} \left[\MC(\rho_{\thv},\rho_{\thv})\right] + 4  \Var_{\thv} \left[\MC(\rho_{\thv},\rho_{\pt})\right] - 4 \Cov_{\thv}\left[\MC(\rho_{\thv},\rho_{\thv}), \MC(\rho_{\thv},\rho_{\pt}) \right]\;, \label{eq:sup-prop-mmd-variance-pauli-rotations}
\end{align}
with
\begin{align}
    &\Var_{\thv} \left[\MC(\rho_{\thv},\rho_{\pt})\right] = \sum_{\substack{A \subseteq \NC \\ A\neq \{\; \} }} \left( c_{\sigma}(A) z_A(\tilde{P}) \right)^2 \Ebb_{\alv_{c} \sim \DC_c}\left( \frac{1}{3}\right)^{\Delta_{Z_A} (\alv_{c})} \; ,\\
    &\Cov_{\thv}\left[\MC(\rho_{\thv},\rho_{\thv}), \MC(\rho_{\thv},\rho_{\pt}) \right]  = 0  \;, \\
    &\Var_{\thv} \left[\MC(\rho_{\thv},\rho_{\thv})\right]  \geq \Ebb_{(A, \alv_{c}), (A', \alv'_{c}) \sim \tilde{\DC}_{c}(\sigma)} \left(\frac{1}{3}\right)^{\Delta_{Z_{A}}(\alv_{c}) + \Delta_{Z_{A'}}(\alv'_{c})}\left[\left(\frac{9}{5}\right)^{\left| \sup(Z_{A, \alv_{c}}) \cap \sup(Z_{A', \alv'_{c}})\right|}\left(\frac{1}{3}\right)^{n(Z_{A,\alv_{c}},Z_{A',\alv'_{c}})}-1\right] \geq 0
\end{align}
where $z_A(\tilde{P}) = \Tr\left[Z_A \rho_{\pt}\right]$, $\alv_{c}$ are Clifford angles i.e., $\alv_{c} \in \{0, \pi/2\}^M$, $\DC_c$ is a uniform distribution over these Clifford angles, $Z_{A, \alv_{c}} = U^{\dagger}(\alv_{c}) Z_A U(\alv_{c}) $ is guaranteed to be a single Pauli-string i.e., $Z_{A, \alv_{c}} = \bigotimes_{i=1}^n Z^{(i)}_{A, \alv_{c}} \in \{ \id, X ,Y ,Z\}^{\otimes n}$ (which is $Z_A$ back-propagated with $U(\alv_{c})$) together with $\sup(Z_{A, \alv_{c}})$ as a set of qubits that $Z_{A, \alv_{c}}$ acts non-trivially on and $\Delta_{Z_A}(\alv_{c})$ as the light cone. 

In addition, $\tilde{\DC}_{c}(\sigma)$ is the probability distribution that $A$ is chosen with probability $c_{\sigma}(A)$ (from the set $\NC$) and independently $\alv_{c}$ is uniformly chosen from $\{0,\pi/2 \}^{M}$, as well as $n(Z_{A, \alv_{c}}, Z_{A', \alv'_{c}})$ is the number of qubits acted by different non-trivial single-qubit Pauli observables $Z^{(i)}_{A, \alv_{c}}$ and $Z^{(i)}_{A', \alv'_{c}}$ such that $Z^{(i)}_{A, \alv_{c}} \neq Z^{(i)}_{A', \alv'_{c}} \neq \id$.
\end{supplemental_proposition}

We remark that although some proof techniques to compute the variance of the overlap term are similar to Ref.~\cite{letcher2023tight}, the main technical challenges are to analytically show that the covariance between two terms vanish and compute the exact form of the variance lower bound of the purity term. Now, we recall that for the linear bandwidth $\sigma \in \Theta(n)$ the MMD observable behaves effectively as a few-body observable (see Proposition~\ref{prop:mmd-k-body}) which allows for some initial trainability guarantee of the QCBM. The formal version of Theorem~\ref{thm:mmd-train-general} is presented below.

\begin{theorem}[Pauli rotation ansatz MMD trainability, formal]\label{thm:mmd-train-general-appx}
Under the same assumptions as in Supplemental Proposition~\ref{sup-prop:mmd-variance-pauli-rotations} together with the linear bandwidth $\sigma \in \Theta(n)$, as long as there exists some $A \in \NC$ with $|A| \in \OC(\log(n))$ such that the average light cone of some $Z_A$ back-propagated with $U(\vec{\alpha}_{c})$ as defined in Eq.~\eqref{eq:avg-light-cone} remains at most in the order of $\log(n)$ qubits, i.e.,
\begin{align}
   \Delta_{Z_A}^{\rm (avg)} \in \OC(\log(n))  \;,
\end{align}
and as long as $z^2_A(\tilde{P})$, which encodes the information about the training distribution, decays at worst polynomially, i.e., 
\begin{align}
    z^2_A(\Tilde{P}) \in \Omega(1/\poly(n)) \;,
\end{align}
then the QCBM is trainable in the sense that
\begin{align}
    \Var_{\thv} [\LC^{(\sigma)}_{\rm MMD}(\thv)] \in \Omega(1/\poly(n)) \;.
\end{align}
\end{theorem}
We note that, similar to the discussion at the end of the previous sub-section, the condition on $z_{A}(\tilde{P})$ is mild and is expected to hold generally in practice as long as few-body marginals of the training distribution does not look too uniform.  

The theorem provides a generic criteria in practice for determining when an ansatz does not exhibit a barren plateau.  An example of an ansatz that satisfies the few-body light cone condition is a shallow depth circuit with nearest neighbour connectivity (which could be either hardware efficient or problem inspired). Crucially, we emphasise that this light cone argument goes beyond the 2-design assumption and is expected to work even more generally to any ansatz that may not even be in the Pauli rotation form.

Now, we provide details on how to numerically and efficiently estimate this average light cone of the Pauli rotations in practice. For a given Pauli $Z_A$, we can apply Monte Carlo sampling to estimate the average light cone over the uniform distribution of the Clifford angles. Particularly, we do the following 
\begin{enumerate}
    \item Uniformly sample a Clifford angle $\alv_{c}$,
    \item Work out the light cone of $Z_A$ back-propagated with $U(\alv_{c})$.
\end{enumerate}
This latter step is efficient since for each layer of Pauli-rotation and a fixed Clifford gate it always map a Pauli string to only a single Pauli string (i.e., Lemma~\ref{lemma:remain-pauli-clifford}) and the computational cost of this step scales at worst polynomially with $M$. With $N_{\rm sample}$ repetitions of steps $1.$ and $2.$, the empirical estimate of the average light converges at the rate $\OC\left(1/\sqrt{N_{\rm sample}}\right)$. 

Lastly, to derive the polynomial guarantee on the MMD variance, we simply throw away the purity term which is always guaranteed to be non-negative.  However, in practice, we expect the term to contribute to the whole MMD variance and does not generally vanish. Particularly, since the key quantity $z^2_{A}(\thv)$ of the purity also has the polynomially large variance with the few-body light cone, this intuitively implies the non-vanishing contribution of the purity term to the whole MMD variance. Thus, the theoretical lower bound here is expected to be sub-optimal, yet it is sufficient to show the trainability guarantee.

\subsubsection{Preliminaries: useful identities}
In this subsection, we introduce the following useful mathematical statements. We encourage the readers to only skim through this part and revisit when going through the proof of the main theoretical statements in the next subsections.  

\begin{lemma}\label{lemma:remain-pauli-clifford}
Consider a parametrized ansatz of the form $U(\alv) = \prod_{k=1}^M e^{- i \alpha_k G_k /2} V_k$ such that the generators $\{ G_k \}_{k=1}^M$ are some $n$-qubit Pauli strings $G_k \in \pauli = \{ \id, X, Y, Z \}^{\otimes n}$ and $\{ V_k\}_{k=1}^M$ is a set of non-parametrized Clifford gates, as well as all parameters are uncorrelated. Then, for the Clifford angles $\alv_c := \alv \in \{ 0, \pi/2 \}^M$ and a Pauli string $B$, the effective observable from back propagating $U(\alv_c)$ remains a single Pauli string (up to a sign factor). That is, we have
\begin{align}
    B_{\alv_{c}} = U^\dagger(\alv_{c}) B U(\alv_{c}) \in \pm\pauli \;.
\end{align}
\end{lemma}
\begin{proof}
Let us first consider a simpler ansatz of the form
\begin{align}
    U(\alpha) = e^{- i \alpha G /2} V \;,
\end{align}
where the generator is some Pauli-string $G \in \{ \id, X, Y, Z \}^{\otimes n}$ and $V$ is some Clifford gate. Then, we have
\begin{align}
    B_{\alpha} =& U^\dagger(\alpha) B U(\alpha) \\
    =& V^\dagger \left[ \cos(\alpha/2) \id + i \sin(\alpha/2) G \right] B \left[ \cos(\alpha/2) \id - i \sin(\alpha/2) G \right] V  \\
    =& \cos^2(\alpha/2) V^\dagger B V +  \sin^2(\alpha/2) V^\dagger G B G V + \sin(\alpha) V^\dagger \left( \frac{i}{2} [G,B]\right)V \;,
\end{align}
where we use the identity $e^{-i\alpha G/2} = \cos(\alpha/2) \id - i \sin(\alpha/2) B$. Since both $G$ and $B$ are Pauli strings, they either commute or anti-commute with one another, which leads to
\begin{align} \label{eq:lemma3-Pauli-first-order}
B_\alpha  = \left\{
\begin{array}{ll}
    V^\dagger B V & \mbox{ if } \comm{B}{G}=0 \;, \\ [.161cm]
    \cos(\alpha) V^\dagger B V +   \sin(\alpha) V^\dagger \left( \frac{i}{2} [G,B]\right)V & \mbox{ if } \{B,G\}=0\;.
\end{array}
\right.
\end{align}
From this, we can see that averaging over $\alpha$ leads to a vanishing expectation (i.e. first order moment vanishes if the Pauli generator of the rotation anti-commutes with the Pauli it acts on).
Now, in the case of $\{\sigma,P\}=0$, we have the following for the Clifford angles i.e., $\alpha \in \{ 0, \pi/2 \}$
\begin{align} 
B_{\alpha_c}  = \left\{
\begin{array}{ll}
    V^\dagger B V & \mbox{ if } \alpha_c =0 \;, \\ [.161cm]
    V^\dagger \left( \frac{i}{2} [G,B]\right)V & \mbox{ if }  \alpha_c = \pi/2\;.
\end{array}
\right.
\end{align}
Crucially, $\frac{i}{2} [G,B]$ remains a Pauli string. Notice that this is true up to a sign factor, but it will not impact our results in the following. In addition, since $V$ is a Clifford gate, it maps a Pauli-string to another Pauli-string, which implies 
\begin{align}
     V^\dagger B V \;,\;  V^\dagger \left( \frac{i}{2} [G,B]\right)V \in  \pm \{ \id, X, Y, Z \}^{\otimes n} \;.
\end{align}
That is, for a Clifford angle, the effective observable remains a single Pauli string. We can iteratively apply the result obtained from this simple ansatz to a more general ansatz of the form $U(\alv) = \prod_{k=1}^M e^{- i \alpha_k G_k /2} V_k$ when the parameter settings are all Clifford angles $\alv_c := \alv \in \{ 0, \pi/2 \}^M$, since the effective observable for each layer is a single Pauli string. This completes the proof of the lemma. 
\end{proof}

\begin{supplemental_proposition}\label{sup-prop:second-third-moments}
Consider a parametrized ansatz of the form $U(\alv) = \prod_{k=1}^M e^{- i \alpha_k G_k /2} V_k$ such that the generators $\{ G_k \}_{k=1}^M$ are some $n$-qubit Pauli strings $G_k \in \{ \id, X, Y, Z \}^n$ and $\{ V_k\}_{k=1}^M$ is a set of non-parametrized Clifford gates, as well as another tensor product ansatz of the form $U_{\rm tensor}(\btv) = \bigotimes_{i=1}^n U_i(\beta_i)$. Also, consider Pauli strings $B$ and $B'$ which are not identities i.e., $B, B' \subseteq \{ \id, X, Y, Z\}^{\otimes n} \backslash \id^{\otimes n}$. Then, if all parameters are uncorrelated, we have the following quantities related to higher moments expressed as
\begin{align}
\Ebb_{\alv} (U^\dagger(\alv))^{\otimes 2} B^{\otimes 2}(U(\alv))^{\otimes 2} = & \Ebb_{\alv_{c}\sim \DC_{c}} \left(B_{\alv_{c}} \otimes B_{\alv_{c}}\right) \label{eq:sup-prop-second-moment-clifford-only} \\
    \Ebb_{\alv, \btv} \Tr\left[ (U_{\rm tensor}(\btv))^{\otimes 2} \rho_0^{\otimes 2} (U_{\rm tensor}^\dagger(\btv))^{\otimes 2} (U^\dagger(\alv))^{\otimes 2} (B \otimes B')(U(\alv))^{\otimes 2}\right] = & \delta_{B, B'} \Ebb_{\alv_{c}\sim\DC_{c}}\left( \frac{1}{3} \right)^{\Delta_B(\alv_{c})} \label{eq:sup-prop-second-moment}\\
    \Ebb_{\alv,\btv} \Tr\left[ (U_{\rm tensor}(\btv))^{\otimes 3} \rho_0^{\otimes 3} (U_{\rm tensor}^\dagger(\btv))^{\otimes 3} (U^\dagger(\alv))^{\otimes 3} (B^{\otimes 2} \otimes  B')(U(\alv))^{\otimes 3}\right] = & 0 \label{eq:sup-prop-third-moment}\;, 
\end{align}
where $\delta_{B,B'}$ is the Kronecker delta of $B$ and $B'$, $\alv_{c} \in \{0, \pi/2\}^M$ is a Clifford angle, $B_{\alv_{c}} = U^\dagger(\alv_{c}) B U(\alv_{c})$, $\Delta_{B}(\alv_{c})$ is a light cone associated with $B_{\alv_{c}}$, and $\DC_{c}$ is a uniform distribution over all possible $\alv_{c}$.
\end{supplemental_proposition}

\begin{proof}
In the proof, we use a vectorization formalism and refer the readers who are not familiar to, for example, Sec.~5 of Ref.~\cite{mele2023introduction}. 

\bigskip

\noindent\textbf{1.~Second moment.} 

\medskip

\noindent\textit{\underline{1.1~Simple ansatz.}} We begin with a simpler ansatz of the form 
\begin{align}\label{eq:simple-ansazte-simple}
    U(\alpha) = e^{- i \alpha G /2} V \;,
\end{align}
where the generator is some Pauli-string $G \in \{ \id, X, Y, Z \}^{\otimes n}$ and $V$ is some Clifford gate. Denote the following notation
\begin{align}\label{eq:super-operator-t-design}
    \UC_G^{(t)}(\alpha) = & \left(e^{i\alpha G/2}\right)^{\otimes t} \otimes \left(e^{- i\alpha G^T/2}\right)^{\otimes t} \;. 
\end{align}

The vectorized form of $\Ebb_{\alpha} (U^\dagger(\alpha))^{\otimes 2} (B \otimes B') (U(\alpha))^{\otimes 2}$ can be expressed as
\begin{align}
    \Ebb_{\alpha} |(U^\dagger(\alpha))^{\otimes 2} (B \otimes B') (U(\alpha))^{\otimes 2} \rrangle = &  \Ebb_{\alpha}(U^\dagger(\alpha))^{\otimes 2} \otimes (U^T(\alpha))^{\otimes 2} | B \otimes B' \rrangle \\
    = & \Ebb_{\alpha} \left(\left(V^\dagger\right)^{\otimes 2} \otimes \left(V^T\right)^{\otimes 2}   \right)  \left( \left(e^{i\alpha G/2}\right)^{\otimes 2} \otimes \left(e^{- i\alpha G^T/2}\right)^{\otimes 2} \right) | B \otimes B' \rrangle \\
    = & \left(\left(V^\dagger\right)^{\otimes 2} \otimes \left(V^T\right)^{\otimes 2}   \right) \Ebb_{\alpha}\left[ \UC_G^{(2)}(\alpha) \right]| B \otimes B' \rrangle \;,
\end{align}
where the first equality is due to the transpose-trick $|ABC \rrangle = A\otimes C^T |B\rrangle$, and the second equality is due to $(A \otimes B)(C \otimes D) = (AC) \otimes (BD)$ and linearity of the expectation. To proceed, there are three different cases to consider together with using $e^{-i\alpha G/2} = \cos(\alpha/2) \id - i \sin(\alpha/2) G$, leading to

\medskip

\noindent \textit{\underline{1.1.(i)~If $G$ does not commute with $B$ and $B'$ i.e. $\{G,B\} = \{ G, B'\} = 0$}}, we have
\begin{align}
    \Ebb_{\alpha} |(U^\dagger(\alpha))^{\otimes 2} (B \otimes B') (U(\alpha))^{\otimes 2} \rrangle  & = \frac{| V^\dagger B V \otimes V^\dagger B' V \rrangle  + | V^\dagger (i G B) V \otimes V^\dagger (i G B') V \rrangle }{2} \\
    & = \left(\left(V^\dagger\right)^{\otimes 2} \otimes \left(V^T\right)^{\otimes 2}   \right)\left(\frac{\UC^{(2)}_G(0) + \UC^{(2)}_G(\pi/2)}{2}\right) |B \otimes B' \rrangle \;. \label{eq:proof-sup-proop-simple-ansatze1}
\end{align}

\medskip

\noindent \textit{\underline{1.1.(ii)~If $G$ commutes with both $B$ and $B'$ i.e., $[G, B] = [G, B'] = 0$}}, we have
\begin{align}
    \Ebb_{\alpha} |(U^\dagger(\alpha))^{\otimes 2} (B \otimes B') (U(\alpha))^{\otimes 2} \rrangle & = | V^\dagger B V \otimes V^\dagger B' V \rrangle \\
    & =  \left(\left(V^\dagger\right)^{\otimes 2} \otimes \left(V^T\right)^{\otimes 2}   \right)\left(\frac{\UC^{(2)}_G(0) + \UC^{(2)}_G(\pi/2)}{2} \right) |B \otimes B' \rrangle \;,\label{eq:proof-sup-proop-simple-ansatze2}
\end{align}
where, to express in the same form in the previous case, we add the action of $\UC^{(2)}_G(\pi/2)$ on $|B \otimes B' \rrangle$ which simply acts as an identity in this case (due to the commutation assumption).

\medskip

Crucially, note that $iGB$ and $iGB'$ are simply Pauli-strings i.e., the result of applying $\UC^{(2)}_G(\pi/2)$ on to $|B\otimes B' \rrangle$ in the 1.1.(i) case. Since $V$ is a Clifford gate which transforms a Pauli-string to another Pauli-string, we have that Eq.~\eqref{eq:proof-sup-proop-simple-ansatze1} results in an equally weighted sum of two Pauli-strings, and Eq.~\eqref{eq:proof-sup-proop-simple-ansatze2} results in a single Pauli-string.

\medskip

\noindent \textit{\underline{1.1.(iii)~If $G$ commutes with only either $B$ or $B'$ e.g., $\{G, B\} = 0$ and $[G, B'] = 0$ (or vice versa)}}, we have
\begin{align}
    \Ebb_{\alpha} |(U^\dagger(\alpha))^{\otimes 2} (B \otimes B') (U(\alpha))^{\otimes 2} \rrangle  & =   \Ebb_{\alpha} |(U^\dagger(\alpha)\otimes \id) (B \otimes B') (U(\alpha)\otimes \id) \rrangle\\
    & = 0 \;. \label{eq:proof-sup-proop-simple-ansatze3}
\end{align}
where the first equality is by the assumption that $[G, B'] = 0$ which simplifies the calculation to compute the first moment with respect to $B$, resulting in zero when averaging over the full angle range (see Eq.~\eqref{eq:lemma3-Pauli-first-order}). The same calculation can be done for the case of $[G,B]=0$ and $\{ G, B'\} \neq 0$.

\bigskip

\noindent\underline{\textit{1.2~Generic Pauli ansatz.}} Now, we extend these findings to an ansatz (with uncorrelated parameters) of the form 
\begin{align} \label{eq:proof-sup-prop-generic-ansatze}
    U(\alv) U_{\rm tensor}(\btv) = \left(\prod_{k=1}^M e^{- i \alpha_k G_k /2} V_k\right) \left(\bigotimes_{i=1}^n U_i(\beta_i)\right) \;,
\end{align}
such that the generators $\{ G_k \}_{k=1}^M$ are some $n$-qubit Pauli strings $G_k \in \{ \id, X, Y, Z \}^n$ and $\{ V_k\}_{k=1}^M$ is a set of non-parametrized Clifford gates, as well as all $U_i(\beta_i)$ is a parametrized single-qubit Haar random on the qubit $i$.

\medskip

\noindent \underline{\textit{1.2.(i)~The scenario when $B = B'$.}} The vectorized form of $\Ebb_{\alv}(U^\dagger(\alv))^{\otimes 2} (B \otimes B) (U(\alv))^{\otimes 2}$ can be expressed as
\begin{align}
    \Ebb_{\alv}| (U^\dagger(\alv))^{\otimes 2} (B \otimes B) (U(\alv))^{\otimes 2} \rrangle = &  \Ebb_{\alv} \left[(U^\dagger(\alv))^{\otimes 2} \otimes (U^T(\alv))^{\otimes 2} \right] | B \otimes B \rrangle \\
    = & \Ebb_{\alv}\left[ \prod_{k=M}^1 \left(\left(V^\dagger_k\right)^{\otimes 2} \otimes \left(V^T_k\right)^{\otimes 2}  \right) \left( \left(e^{i\alpha_k G_k/2}\right)^{\otimes 2} \otimes \left(e^{- i\alpha_k G^T_k/2}\right)^{\otimes 2} \right) \right]| B \otimes B \rrangle \\
    = & \prod_{k=M}^1\left(V^\dagger_k\right)^{\otimes 2} \otimes \left(V^T_k\right)^{\otimes 2}   \Ebb_{\alpha_k} \left[ \UC^{(2)}_{G_k}(\alpha_k)\right]  | B \otimes B \rrangle \\
    = & \left(\prod_{k=M-1}^1 \left(V^\dagger_k\right)^{\otimes 2} \otimes \left(V^T_k\right)^{\otimes 2}   \Ebb_{\alpha_k} \left[ \UC^{(2)}_{G_k}(\alpha_k)\right]\right) \cdot \left(\frac{\UC^{(2)}_{G_M}(0) + \UC^{(2)}_{G_M}(\pi/2)}{2} \right) |B \otimes B \rrangle\\
    = & \frac{1}{2^M} \sum_{\alv_{c} \in \{ 0,\pi/2 \}^M} \left[ \prod_{k=M}^1 \left(V^\dagger_k\right)^{\otimes 2} \otimes \left(V^T_k\right)^{\otimes 2} \UC^{(2)}_{G_k}(\alpha^{(k)}_{c})\right] |B\otimes B \rrangle \\
    = & \frac{1}{2^M}  \sum_{\alv_{c} \in \{ 0,\pi/2 \}^M}  |B_{\alv_{c}} \otimes B_{\alv_{c}} \rrangle \\
    = & \Ebb_{\alv_{c}\sim \DC_{c}}  |B_{\alv_{c}} \otimes B_{\alv_{c}} \rrangle
\end{align}
where in the second line we introduce the notation $\prod_{k=M}^1 a_k = a_1 a_2 ... a_{M-1}a_M$, the third equality is due to parameters being uncorrelated and by using the notation $\UC^{(2)}_{G_{k}}(\alpha_k)$ introduced in Eq.~\eqref{eq:super-operator-t-design}. The fourth equality is by either Eq.~\eqref{eq:proof-sup-proop-simple-ansatze1} if $\{B, G_M\} = 0$ or Eq.~\eqref{eq:proof-sup-proop-simple-ansatze2} if $[B,G_M] = 0$, which happily results in the same expression. The fifth equality is the result of recursively applying what we just do for all layers. Crucially, thanks to Lemma~\ref{lemma:remain-pauli-clifford}, each term in the sum correspond to two copies of a single Pauli string. In the last equality, we simply use the definition of uniformly average over the Clifford angles i.e., $\Ebb_{\alv_{c} \sim \DC_{c}} [\cdot] = \frac{1}{2^M} \sum_{\alv_{c} \in \{ 0, \pi/2\}^M} [\cdot]$. We note that this proves Eq.~\eqref{eq:sup-prop-second-moment-clifford-only} in the supplemental proposition.

Now, we can do the average over the final tensor-product layer $U_{\rm tensor}(\btv)=\bigotimes_{i=1}^nU_i(\beta_i)$  by expressing $B_{\alv_{c}} = \bigotimes_{i=1}^n B^{(i)}_{\alv_{c}}$ with $B^{(i)}_{\alv_{c}} \in \pm\{\id, X, Y, Z\}$ and performing Haar integration qubit-wise, leading to
\begin{align}
     \Ebb_{\alv, \btv} \Tr\left[ (U_{\rm tensor}(\btv))^{\otimes 2} \rho_0^{\otimes 2} (U_{\rm tensor}^\dagger(\btv))^{\otimes 2} (U^\dagger(\alv))^{\otimes 2} B^{\otimes 2}(U(\alv))^{\otimes 2}\right] = & \Ebb_{\btv} \Tr\left[ (U_{\rm tensor}(\btv))^{\otimes 2} \rho_0^{\otimes 2}  (U_{\rm tensor}^\dagger(\btv))^{\otimes 2} \Ebb_{\alv_{c}\sim \DC_{c}} \left( B_{\alv_{c}} ^{\otimes 2} \right)\right] \\
     = & \Ebb_{\alv_{c} \sim \DC_{c}} \prod_{i=1}^n \Ebb_{\beta_i} \Tr\left[ (U_i(\beta_i))^{\otimes 2} \rho_0^{(i)} (U_i^\dagger(\beta_i))^{ \otimes2}B_{\alv_{c}}^{(i)\otimes 2} \right] \\
     = & \Ebb_{\alv_{c} \sim \DC_{c}} \left(\frac{1}{3}\right)^{\Delta_B(\alv_{c})} \;,
\end{align}
where in the last equality, the Haar integration from Eq.~\eqref{eq:2-design-int-1-qubit} gives $1/3$ if $B^{(i)}_{\alv_{c}} \neq \id$ (otherwise, it gives $1$).

\medskip

\noindent\underline{\textit{1.2.(ii)~The scenario when $B \neq B'$.}} 
For two initial Pauli strings $B$ and $B'$ such that $B\neq B'$, they evolve into two different Pauli-strings $B_{\alv_{c}}$ and $B'_{\alv_{c}}$ with the same $U(\alv_{c})$. This can be trivially seen since $B_{\alv_{c}} = U^\dagger(\alv_{c}) B U(\alv_{c})$ and $B'_{\alv_{c}} = U^\dagger(\alv_{c}) B' U(\alv_{c})$ they can evolve to the same Pauli-string only if $B = B'$. Hence, the average over $\alv$ in the vectorized form can be expressed as
\begin{align}
    \Ebb_{\alv}| (U^\dagger(\alv))^{\otimes 2} (B \otimes B') (U(\alv))^{\otimes 2} \rrangle = & \frac{1}{2^M} \sum_{\alv_{c} \in \{0, \pi/2\}^M} c_{B,B'}(\alv_{c}) |B_{\alv_{c}} \otimes B'_{\alv_{c}} \rrangle
\end{align}
with $B_{\alv_{c}} \neq B'_{\alv_{c}}$. The coefficient $c_{B, B'}(\alv_{c})$ takes into account the fact that we could have some generator that commutes with only one of the two evolving Pauli-string (i.e., the 1.1.(iii) case described in the simple ansatz scenario above). $c_{B, B'}(\alv_{c})$ 
takes the value of either $1$ if there is no generator that commutes with only one evolving Pauli-string, or $0$ if there exists a generator that commutes with only one evolving Pauli-string (due to Eq.~\eqref{eq:proof-sup-proop-simple-ansatze3}). 

Crucially, averaging over the tensor-product ansatz $U_{\rm tensor}(\btv)$ ensures that the contribution of any pair of different Pauli-strings vanishes. That is, for any two different Pauli-strings $B_1$ and $B_2$ such that $B_1 \neq B_2$, we have
\begin{align}
    \Ebb_{\btv} \Tr\left[ (U_{\rm tensor}(\btv))^{\otimes 2} \rho_0^{\otimes 2}  (U_{\rm tensor}^\dagger(\btv))^{\otimes 2}  \left( B_1 \otimes B_2 \right)\right] &=  \prod_{i=1}^n \Ebb_{\beta_i}\Tr\left[ (U_i(\beta_i))^{\otimes 2} \rho_0^{(i) \otimes 2} (U_i^\dagger(\beta_i))^{\otimes2} \left( B_1^{(i)} \otimes B_2^{(i)} \right) \right] \\
    &= 0 \;,
\end{align}
which is from the direct computation of Haar integration with Eq.~\eqref{eq:2-design-int-1-qubit}.

From the 1.2.(ii) case, we have that for different initial Pauli-strings $B$ and $B'$ we have
\begin{align}
    \Ebb_{\alv, \btv} \Tr\left[ (U_{\rm tensor}(\btv))^{\otimes 2} \rho_0^{\otimes 2} (U_{\rm tensor}^\dagger(\btv))^{\otimes 2} (U^\dagger(\alv))^{\otimes 2} (B\otimes B')(U(\alv))^{\otimes 2}\right] = 0 \;.
\end{align}
\bigskip
Together from the 1.2.(i) and 1.2.(ii) cases, the proof of Eq.~\eqref{eq:sup-prop-second-moment} is complete.

\bigskip

\noindent\textbf{2.~Third moment.} We take the same strategy as in the second moment.

\medskip

\noindent\textit{\underline{2.1~Simple ansatz.}} We consider the same simple ansatz of the form in Eq.~\eqref{eq:simple-ansazte-simple}. Then, the vectorized form of $\Ebb_{\alpha} (U^\dagger(\alpha))^{\otimes 3} (B^{\otimes 2} \otimes B') (U(\alpha))^{\otimes 3}$ can be expressed as
\begin{align}
    \Ebb_{\alpha} |(U^\dagger(\alpha))^{\otimes 3} (B^{\otimes 2} \otimes B') (U(\alpha))^{\otimes 3} \rrangle = &  \Ebb_{\alpha}(U^\dagger(\alpha))^{\otimes 3} \otimes (U^T(\alpha))^{\otimes 3} | B^{\otimes 2} \otimes B' \rrangle \\
    = & \Ebb_{\alpha} \left(\left(V^\dagger\right)^{\otimes 3} \otimes \left(V^T\right)^{\otimes 3}   \right)  \left( \left(e^{i\alpha G/2}\right)^{\otimes 3} \otimes \left(e^{- i\alpha G^T/2}\right)^{\otimes 3} \right) | B^{\otimes 2} \otimes B' \rrangle \\
    =  & \left(\left(V^\dagger\right)^{\otimes 3} \otimes \left(V^T\right)^{\otimes 3}   \right) \Ebb_{\alpha}\left[ \UC_G^{(3)}(\alpha) \right]| B^{\otimes 2} \otimes B' \rrangle \;,
\end{align}
where the first equality is due to the transpose-trick $|ABC \rrangle = A\otimes C^T |B\rrangle$, and the second equality is due to $(A \otimes B)(C \otimes D) = (AC) \otimes (BD)$ and linearity of the expectation. To proceed, there are three different cases to consider together with using $e^{-i\alpha G/2} = \cos(\alpha/2) \id - i \sin(\alpha/2) G$, which lead to

\medskip

\noindent\textit{\underline{2.1.(i)~If $G$ does not commute with $B'$ i.e. $\{G,B'\} = 0$}}, we further have two sub-scenarios including (i).a for $[G, B] = 0$ and (i).b for $\{G, B\} = 0$. Either sub-scenario results in zero when averaging over i.e.,
\begin{align}
     \Ebb_{\alpha} |(U^\dagger(\alpha))^{\otimes 3} (B^{\otimes 2} \otimes B') (U(\alpha))^{\otimes 3} \rrangle = 0 \;.
\end{align}
Particularly, in the (i).a sub-scenario with $[G, B] = 0$, the computation reduces to compute the first moment which results in zero. On the other hand, in the (i).b sub-scenario with $\{ G, B\} = 0$, we have
\begin{align}
    \Ebb_{\alpha} |(U^\dagger(\alpha))^{\otimes 3} (B^{\otimes 2} \otimes B') (U(\alpha))^{\otimes 3} \rrangle
    =\; &  \Ebb_{\alpha}[\cos^3(\alpha)]|B^{\otimes 2}\otimes B'\rrangle + \Ebb_{\alpha}[\sin^2(\alpha)\cos(\alpha)] |(iG B)^{ \otimes 2} \otimes B' \rrangle \\ \nonumber
    &+ \Ebb_{\alpha}[\cos^2(\alpha)\sin(\alpha)]\left( | B \otimes iG B \otimes B' \rrangle + | i G B \otimes B \otimes B' \rrangle \right) \\ \nonumber
    &+ \Ebb_{\alpha}[\cos^2(\alpha)\sin(\alpha)] | B^{\otimes 2} \otimes iGB \rrangle  
    + \Ebb_{\alpha}[\sin^3(\alpha)]| (iGB)^{\otimes2} \otimes iGB' \rrangle  \\ \nonumber
    &+\Ebb_{\alpha}[\cos(\alpha)\sin^2(\alpha)]
    \left( | B\otimes iGB \otimes iGB' \rrangle + | iGB \otimes B \otimes iGB' \rrangle \right) \\
    = \;& 0 \;.
\end{align}

Indeed, this (i).b sub-scenario is the only scenario that is of particular interest regarding the third moment. Other scenarios reduce themselves to either compute the first or second moments over the parameters $\alv$, as we already saw in the (i).a sub-scenario.

\medskip

\noindent\textit{\underline{2.1.(ii)~If $G$ commutes with $B'$ and does not commute with $B$ i.e., $[G,B'] = 0$ and $\{G, B\} = 0$}}, this reduces to compute the second moment (see the case 1.1.(i) above i.e., Eq.~\eqref{eq:proof-sup-proop-simple-ansatze1}), leading to

\begin{align}
     \Ebb_{\alpha} |(U^\dagger(\alpha))^{\otimes 3} (B^{\otimes 2} \otimes B') (U(\alpha))^{\otimes 3} \rrangle =  \left(\left(V^\dagger\right)^{\otimes 2} \otimes \left(V^T\right)^{\otimes 2}   \right)\left(\frac{\UC^{(2)}_G(0) + \UC^{(2)}_G(\pi/2)}{2}\right) |B \otimes B \rrangle \otimes \left( V^\dagger \otimes V^T \right) |B'\rrangle \;.
\end{align}

\medskip

\noindent\textit{\underline{2.1.(iii)~If $G$ commutes with $B'$ and commutes with $B$ i.e., $[G, B'] = 0$ and $[G, B] = 0$,}} this reduces to the parameterised part of the circuit completely commuting with the observables, left them unchanged. The result can also be written in the same form as in the previous case (see the case 1.1.(ii) above i.e., Eq.~\eqref{eq:proof-sup-proop-simple-ansatze2})

\begin{align}
     \Ebb_{\alpha} |(U^\dagger(\alpha))^{\otimes 3} (B^{\otimes 2} \otimes B') (U(\alpha))^{\otimes 3} \rrangle & = \left(V^\dagger\right)^{\otimes 3} \otimes \left(V^T\right)^{\otimes 3}  | B^{\otimes 2} \otimes B'\rrangle \\
     & =  \left(\left(V^\dagger\right)^{\otimes 2} \otimes \left(V^T\right)^{\otimes 2}   \right)\left(\frac{\UC^{(2)}_G(0) + \UC^{(2)}_G(\pi/2)}{2}\right) |B \otimes B \rrangle \otimes \left( V^\dagger \otimes V^T \right) |B'\rrangle \;,
\end{align}
where $\UC^{(2)}_G(\pi/2)$ simply commutes through and effectively acts as an identity. 

\bigskip

\noindent\underline{\textit{2.2~Generic Pauli ansatz.}} Now, we generalises our findings to Pauli rotation ans\"{a}tze (with uncorrelated parameters) of the form in Eq.~\eqref{eq:proof-sup-prop-generic-ansatze}. 

To analytically show that the third moment vanishes, we can keep applying the results in the simple ansatz scenario to each layer. We notice that if the 2.1.(i) case in the simple ansatz above happens the third moment of the generic Pauli rotation ansatz simply vanishes. For example, this happens when $B'$ does not commute with $G_M$. More generally, if there exists at least one generator that does not commute with the evolving Pauli-string with the initial $B'$ at that particular layer, then this implies the 2.1.(i) case happens. 

In the event that we never satisfy the 2.1.(i) case this means the evolving $B'$ always commutes with the generator. In other words, the evolving $B'$ does not contribute at all to the third moment regarding $U(\alv)$. By repeatedly applying the 2.1.(ii) and 2.1.(iii) cases, the evolution of $B^{\otimes 2}$ can be obtained to be identical to the 1.2.(i) case in the second moment above. Thus, the vectorized form of the average over $U(\alv)$ is
\begin{align}\label{eq:proof-sup-third-moment-gone0}
    \Ebb_{\alv} |(U^\dagger(\alv))^{\otimes 3} (B^{\otimes 2} \otimes B') (U(\alv))^{\otimes 3} \rrangle &= \Ebb_{\alv_{c}\sim \DC_{c}}  |B_{\alv_{c}} \otimes B_{\alv_{c}} \rrangle \otimes |B'_{V} \rrangle \;,
\end{align}
with $|B'_{V}\rrangle = \prod_{k=M}^1 (V^\dagger_{k} \otimes V^T_{k}) |B'\rrangle$ which is still a single Pauli-string due to the Clifford gates $\{V_{k}\}_{k=1}^M$. 

Now, the average over the tensor-product ansatz $U_{\rm tensor}(\btv)$ ensures the third moment disappears. Since for any two different Pauli-strings $B_1$ and $B_2$ that are not identity such that $B_1 \neq B_2$, we have 
\begin{align}
    \Ebb_{\btv} \Tr\left[ (U_{\rm tensor}(\btv))^{\otimes 3} \rho_0^{\otimes 3}  (U_{\rm tensor}^\dagger(\btv))^{\otimes 3}  \left( B_1^{\otimes 2} \otimes B_2 \right)\right] &=  \prod_{i=1}^n \Ebb_{\beta_i}\Tr\left[ U_i(\beta_i)^{\otimes 3} \rho_0^{(i) \otimes 3} U_i^{\dagger \otimes 3}(\beta_i) \left( B_1^{(i) \otimes 2} \otimes B_2^{(i)} \right) \right] \\
    &= 0 \;, \label{eq:proof-sup-third-moment-gone}
\end{align}
which is due to the direct computation of Haar integration using Eq.~\eqref{eq:3-design-int-1-qubit}. Due to the linearity of average and by applying Eq.~\eqref{eq:proof-sup-third-moment-gone} to each term in Eq.~\eqref{eq:proof-sup-third-moment-gone0}, we have the third moment to vanish in this scenario. 

Hence, altogether the third moment vanishes in all scenarios
\begin{align}
    \Ebb_{\alv,\btv} \Tr\left[ (U_{\rm tensor}(\btv))^{\otimes 3} \rho_0^{\otimes 3} (U_{\rm tensor}^\dagger(\btv))^{\otimes 3} (U^\dagger(\alv))^{\otimes 3} (B^{\otimes 2} \otimes  B')(U(\alv))^{\otimes 3}\right] =  0  \;.
\end{align}
This completes the proof of the supplemental proposition.

\end{proof}

\begin{lemma}\label{lemma:decompose-general-variance}
Consider a function $f(\thv) := f(\vec{\alpha}, \vec{\beta})$ with $\thv := (\vec{\alpha}, \vec{\beta})$. The variance of the function can be decomposed as
\begin{align}
    \Var_{\thv}[f(\thv)] =   \Ebb_{\alv}[\Var_{\btv} f(\alv, \btv) ] + \Var_{\vec{\alpha}}[\Ebb_{\vec{\beta}} f(\alv, \btv)]\;.
\end{align}
\end{lemma}
\begin{proof} 
We can express the variance of $f(\thv)$ as
\begin{align}
\Var_{\thv} [f(\thv)] = & \Var_{\alv, \btv} [f(\alv, \btv)] \\
= & \Ebb_{\alv, \btv} [f^2(\alv, \btv)] - (\Ebb_{\alv, \btv} [f(\alv, \btv)])^2 \\
= & \Ebb_{\alv, \btv} [f^2(\alv, \btv)] + \left( \Ebb_{\alv} \left[  \Ebb_{\btv} \left[f(\alv, \btv)\right]^2\right] - \Ebb_{\alv} \left[  \Ebb_{\btv} \left[f(\alv, \btv)\right]^2\right] \right) - (\Ebb_{\alv, \btv} [f(\alv, \btv)])^2 \\
= & \Ebb_{\alv}\left(\Ebb_{\btv}[f^2(\alv,\btv)] - \Ebb_{\btv}[f(\alv,\btv)]^2 \right) + \left( \Ebb_{\alv}\left[\Ebb_{\btv}[f(\alv,\btv)]^2\right] - (\Ebb_{\alv}[\Ebb_{\btv}f(\alv, \btv)])^2 \right) \\
= & \Ebb_{\alv} [\Var_{\btv} f(\alv,\btv)] + \Var_{\alv}[\Ebb_{\btv} f(\alv, \btv)] \;,
\end{align}
where in the third equality we add and subtract $\Ebb_{\alv} \left[  \Ebb_{\btv} \left[f(\alv, \btv)\right]^2\right]$ and to reach the last equality we simply use the definition of the variance. This completes the proof of the lemma.
\end{proof}

\subsubsection{Proof of Supplemental Proposition~\ref{sup-prop:mmd-variance-pauli-rotations}}

\begin{proof}
The variance of the MMD loss in Eq.~\eqref{eq:mmd-reminder} can be written as
\begin{align}
    \Var_{\thv} [\LC_{\rm MMD}(\thv)] = & \Var_{\thv} \left[ \MC(\rho_{\thv}, \rho_{\thv}) - 2  \MC(\rho_{\thv}, \rho_{\pt}) + \MC(\rho_{\pt}, \rho_{\pt}) \right] \\
     = & \Var_{\thv} [ \MC(\rho_{\thv}, \rho_{\thv}] + 4 \Var_{\thv} [\MC(\rho_{\thv}, \rho_{\pt})] - 4 \Cov_{\thv} [ \MC(\rho_{\thv}, \rho_{\thv}), \MC(\rho_{\thv}, \rho_{\pt})] \;, \label{eq:proof-general-mmd-1}
\end{align}
where we have used $\Var[X + Y] = \Var[X] + \Var[Y] + 2 \Cov[X,Y]$ and  $\Var[X + c] = \Var[X]$ for any random variables $X, Y$ and some constant $c$.

\medskip

Our strategy is to tackle Eq.~\eqref{eq:proof-general-mmd-1} term by term. 

\medskip

\noindent\textit{\underline{(i) the variance of the cross term $\MC(\rho_{\thv}, \rho_{\pt})$.}} The cross term can be expressed as  
\begin{align}
    \MC(\rho_{\thv}, \rho_{\pt}) = &  \Tr[  O^{(\sigma)}_{\rm MMD} (\rho_{\thv} \otimes \rho_{\pt})] \\
    = & \Tr\left[ \sum_{A\subseteq \NC} c_{\sigma}(A) Z_A \otimes Z_{A}  (\rho_{\thv} \otimes \rho_{\pt}) \right] \\
    = &  \sum_{A\subseteq \NC} c_{\sigma}(A) z_A(\tilde{P}) z_A(\thv) \;,
\end{align}
where $z_A(\tilde{P}) = \Tr[\rho_{\pt}  Z_A ]$ and $z_A(\thv) = \Tr[U_{\rm PQC}(\thv) \rho_0 U_{\rm PQC}^\dagger(\thv) Z_A]$. Then, we can compute the variance as
\begin{align}
    \Var_{\thv}\left[ \MC(\rho_{\thv}, \rho_{\pt}) \right] = & \Var_{\thv} \left[ \sum_{A\subseteq \NC } c_{\sigma}(A) z_A(\tilde{P}) z_A(\thv) \right] \\
    = & \sum_{A \subseteq \NC} (c_{\sigma}(A) z_A(\tilde{P}))^2 \Var_{\thv}\left[ z_A(\thv) \right] + \sum_{\substack{A, A' \subseteq \NC \\  A \neq A'  }}  c_{\sigma}(A) c_{\sigma} (A') z_A(\tilde{P}) z_{A'}(\tilde{P}) \Cov_{\thv}[z_{A}(\thv), z_{A'}(\thv)] \;, \label{eq:proof-sup-key-var-cross}
\end{align}
which is due to a standard formula of variance of a sum i.e., $\Var\left[ \sum_{i} X_i \right] = \sum_{i,j} \Cov[X_i, X_j]$. Notice that the identity term, $A=\{\;\}$, does not contribute to the variance as it is a constant. Indeed, in this case we have $z_A(\thv)=1$. So, we are only considering $A,A'\neq \{\;\}$ in the following (i.e. we can remove cases where $Z_A=\id$ or $Z_{A'}=\id$ in previous equation). First, we consider the average term
\begin{align}
   \Ebb_{\thv}[z_A(\thv)] =&  \Ebb_{\alv, \btv}\Tr\left[ U_{\rm tensor}(\btv) \rho_0 U_{\rm tensor}^\dagger(\btv) \left(  U^\dagger(\alv) Z_A U(\alv)\right) \right] \\
   = &  \Ebb_{\btv}\Tr\left[ U_{\rm tensor}(\btv) \rho_0 U_{\rm tensor}^\dagger(\btv) \Ebb_{\alv}\left(  U^\dagger(\alv) Z_A U(\alv)\right) \right] \\
   =& 0 \;,
\end{align}
where we average over the full angle range and hence the first moment disappears in this case. 

Next, we consider the covariance term when $A \neq A'$
\begin{align}
    \Cov_{\thv}[z_A(\thv), z_{A'}(\thv)] =& \Ebb_{\thv} [z_A(\thv) z_{A'}(\thv)] \\
    = & \Ebb_{\alv, \btv} \Tr\left[ (U_{\rm tensor}(\btv))^{\otimes 2} \rho_0^{\otimes 2} (U_{\rm tensor}^\dagger(\btv))^{\otimes 2} (U^\dagger(\alv))^{\otimes 2} (Z_A \otimes Z_{A'})(U(\alv))^{\otimes 2}\right] \\
    = & 0 \;,
\end{align}
where to reach the last equality we apply Eq.~\eqref{eq:sup-prop-second-moment} from Supplemental Proposition~\ref{sup-prop:second-third-moments}. 

Lastly, we compute the variance terms.
\begin{align}
    \Var_{\thv}[z_A(\thv)] = & \Ebb_{\thv} [z^2_A(\thv)] \\
    = & \Ebb_{\alv, \btv} \Tr\left[ (U_{\rm tensor}(\btv))^{\otimes 2} \rho_0^{\otimes 2} (U_{\rm tensor}^\dagger(\btv))^{\otimes 2} (U^\dagger(\alv))^{\otimes 2} Z_A ^{\otimes 2}(U(\alv))^{\otimes 2}\right] \\
    = & \Ebb_{\alv_{c} \sim \DC_{c}} \left( \frac{1}{3}\right)^{\Delta_{Z_A}(\alv_{c})} \;,
\end{align}
where similarly to reach the last equality we apply Eq.~\eqref{eq:sup-prop-second-moment} from Supplemental Proposition~\ref{sup-prop:second-third-moments}. 

By substituting these back into Eq.~\eqref{eq:proof-sup-key-var-cross}, we obtain the variance of the cross term as
\begin{align}
    \Var_{\thv} \left[\MC(\rho_{\thv},\rho_{\pt})\right] = \sum_{\substack{A \subseteq \NC \\ A\neq \{\;\}}} \left( c_{\sigma}(A) z_A(\tilde{P}) \right)^2 \Ebb_{\alv_{c} \sim \DC_M}\left( \frac{1}{3}\right)^{\Delta_{Z_A} (\alv_{c})} \;.
\end{align}

\bigskip

\noindent\textit{\underline{(ii) The covariance between the purity term $\MC(\rho_{\thv}, \rho_{\thv})$ and the cross term $\MC(\rho_{\thv}, \rho_{\pt})$}} is of the form
\begin{align}
    \Cov_{\thv} [ \MC(\rho_{\thv}, \rho_{\thv}), \MC(\rho_{\thv}, \rho_{\pt})] &= \Cov_{\thv}\left[ \sum_{A\subseteq \NC} c_{\sigma}(A) z^2_A(\thv), \sum_{A' \subseteq \NC} c_{\sigma}(A') z_{A'}(\tilde{P}) z_{A'}(\thv)\right] \\
    & = \sum_{A,A' \subseteq \NC} c_{\sigma}(A) c_{\sigma}(A') z_{A'}(\tilde{P}) \Cov_{\thv}\left[ z_A^2(\thv), z_{A'}(\thv) \right] \;,
\end{align}
There are 4 different cases to consider. We note that when $Z_A = Z_{A'} = \id$ we have that $z_A(\thv) = z_{A'}(\thv) = \Tr[U_{\rm PQC}(\thv) \rho_0 U^\dagger(\thv) \id] = 1$. 

For $Z_A \neq \id$ and $Z_{A'} \neq \id$, we have
\begin{align}
    \Cov_{\thv}\left[ z_A^2(\thv), z_{A'}(\thv) \right] & = \Ebb_{\thv}\left[ z_A^2(\thv) z_{A'}(\thv) \right] - \Ebb_{\thv}\left[ z_A^2(\thv)\right] \Ebb_{\thv}\left[ z_{A'}(\thv) \right] \\
    & = \Ebb_{\thv}\left[ z_A^2(\thv) z_{A'}(\thv) \right] \\
    & = \Ebb_{\thv}\Tr\left[(U_{\rm tensor}(\btv))^{\otimes 3} \rho_0^{\otimes 3}  (U^\dagger_{\rm tensor}(\btv))^{\otimes 3} (U^{\dagger}(\alv))^{\otimes 3}  (Z_A^{\otimes 2} \otimes Z_{A'}) (U(\alv))^{\otimes 3}   \right] \\
    & = \Ebb_{\btv}\Tr\left[(U_{\rm tensor}(\btv))^{\otimes 3} \rho_0^{\otimes 3}  (U^\dagger_{\rm tensor}(\btv))^{\otimes 3} \Ebb_{\alv} \left( (U^{\dagger}(\alv))^{\otimes 3} (Z_A^{\otimes 2} \otimes Z_{A'}) (U(\alv))^{\otimes 3}  \right) \right] \\
    & = 0 \;,
\end{align}
where the last equality is due to Eq.~\eqref{eq:sup-prop-third-moment} from Supplemental Proposition~\ref{sup-prop:second-third-moments}.

For $Z_A = \id$ and $Z_{A'} \neq \id$, we have
\begin{align}
    \Cov_{\thv}\left[ z_A^2(\thv), z_{A'}(\thv) \right] & = \Ebb_{\thv}\left[ z_A^2(\thv) z_{A'}(\thv) \right] - \Ebb_{\thv}\left[ z_A^2(\thv)\right] \Ebb_{\thv}\left[ z_{A'}(\thv) \right] \\
    & = \Ebb_{\thv}\left[ z_{A'}(\thv) \right] - \Ebb_{\thv}\left[ z_{A'}(\thv) \right]  \\
    & = 0\;. 
\end{align}

For $Z_A \neq \id$ and $Z_{A'} = \id$, we have
\begin{align}
    \Cov_{\thv}\left[ z_A^2(\thv), z_{A'}(\thv) \right] & = \Ebb_{\thv}\left[ z_A^2(\thv) z_{A'}(\thv) \right] - \Ebb_{\thv}\left[ z_A^2(\thv)\right] \Ebb_{\thv}\left[ z_{A'}(\thv) \right] \\
    & = \Ebb_{\thv}\left[ z^2_{A}(\thv) \right] - \Ebb_{\thv}\left[ z^2_{A}(\thv) \right]  \\
    & = 0 \;.
\end{align}

For $Z_A = \id$ and $Z_{A'} = \id$, we have 
\begin{align}
    \Cov_{\thv}\left[ z_A^2(\thv), z_{A'}(\thv) \right] & = \Ebb_{\thv}\left[ z_A^2(\thv) z_{A'}(\thv) \right] - \Ebb_{\thv}\left[ z_A^2(\thv)\right] \Ebb_{\thv}\left[ z_{A'}(\thv) \right] \\
    & = 1 - 1  \\
    & = 0 \;.
\end{align}

Hence, altogether we conclude in all cases that the covariance between the cross term and the purity term vanishes
\begin{align}
    \Cov_{\thv} [ \MC(\rho_{\thv}, \rho_{\thv}), \MC(\rho_{\thv}, \rho_{\pt})] = 0 \;.
\end{align}

\bigskip

\noindent\textit{\underline{(iii) The variance of the purity term $\MC(\rho_{\thv}, \rho_{\thv})$.}} We remind that $\thv = (\alv, \btv)$ and apply Lemma~\ref{lemma:decompose-general-variance} to obtain the following lower bound
\begin{align}
    \Var_{\thv} [\MC(\rho_{\thv}, \rho_{\thv})] & =  \Var_{\btv} \left[ \Ebb_{\alv} \MC(\rho_{\thv}, \rho_{\thv})  \right] + \Ebb_{\btv}\left[ \Var_{\alv}  \MC(\rho_{\thv}, \rho_{\thv}) \right]  \\
    & \geq  \Var_{\btv} \left[ \Ebb_{\alv} \MC(\rho_{\thv}, \rho_{\thv})  \right] \;, \label{eq:proof-sup-prop-purity0}
\end{align}
where we throw away the second non-negative term.

\medskip

Now, we consider the central object $\Ebb_{\alv} \MC(\rho_{\thv}, \rho_{\thv})$ which can be expressed as
\begin{align}
    \Ebb_{\alv} \MC(\rho_{\thv}, \rho_{\thv}) & = \Ebb_{\alv} \sum_{A \subseteq \NC} c_{\sigma}(A) (z_A(\thv))^2 \\
    & = \sum_{A \subseteq \NC} c_{\sigma}(A) \Tr\left[ U^{\otimes 2}_{\rm tensor}(\btv) \rho_0^{\otimes 2} U^{\dagger \otimes 2}_{\rm tensor}(\btv) \Ebb_{\alv} (U^{\dagger \otimes 2}(\alv) Z_A^{\otimes 2} U^{\otimes 2}(\alv)) \right] \\
    & = \sum_{A \subseteq \NC} c_{\sigma}(A)\Tr\left[  U^{\otimes 2}_{\rm tensor}(\btv) \rho_0^{\otimes 2} U^{\dagger \otimes 2}_{\rm tensor}(\btv) \Ebb_{\alv_{c} \sim \DC_{c}}[ Z_{A, \alv_{c}} \otimes Z_{A, \alv_{c}} ] \right] \\
    & = \sum_{A \subseteq \NC} c_{\sigma}(A) \Ebb_{\alv_{c} \sim \DC_{c}} \Tr\left[ \rho_{\btv}^{\otimes 2}  Z^{\otimes 2}_{A, \alv_{c}} \right] \\
    & = \Ebb_{(A , \alv_{c}) \sim \tilde{\DC}_{c}(\sigma)} \Tr\left[ \rho_{\btv}^{\otimes 2}  Z^{\otimes 2}_{A, \alv_{c}}  \right] \;, \label{eq:proof-sup-prop-purity1}
\end{align}
where the third equality is due to Eq.~\eqref{eq:sup-prop-second-moment-clifford-only} in Supplemental Proposition~\ref{sup-prop:second-third-moments} and in the fourth equality we pull out the expectation and introduce the shorthand $\rho_{\btv} = U_{\rm tensor}(\btv)\rho_0 U^\dagger_{\rm tensor}(\btv)$. To reach the last equality, we notice that $c_{\sigma}(A) = (1-p_\sigma)^{n-|A|}p_\sigma^{|A|}$ (see e.g., Eq.~\eqref{eq:reminder-MMD-observable})) can be interpreted as a probability of sampling $A$, and $\sum_{A \subseteq \NC} c_{\sigma}(A) = 1$. Hence, $\sum_{A\subseteq \NC} c_{\sigma}(A) \Ebb_{\alv_{c} \sim \DC_{c}} [\cdot] = \Ebb_{(A , \alv_{c}) \sim \tilde{\DC}_{c}(\sigma)}$ where $\tilde{\DC}_{c}(\sigma)$ is the probability distribution that $A$ is chosen with probability $c_{\sigma}(A)$ (from the set $\NC$) and independently $\alv_{c}$ is uniformly chosen from $\{0,\pi/2 \}^{M}$.

By plugging Eq.~\eqref{eq:proof-sup-prop-purity1} back into Eq.~\eqref{eq:proof-sup-prop-purity0}, we have
\begin{align}
    \Var_{\thv} [\MC(\rho_{\thv}, \rho_{\thv})] \geq\; & \Var_{\btv} \left[ \Ebb_{(A , \alv_{c}) \sim \tilde{\DC}_{c}(\sigma)}  \Tr\left[ \rho_{\btv}^{\otimes 2}  Z^{\otimes 2}_{A, \alv_{c}}  \right]\right] \\
    =\;&  \Ebb_{(A, \alv_{c}), (A', \alv'_{c}) \sim \tilde{\DC}_{c}(\sigma)}   \Cov_{\btv}\left[  \Tr\left[ \rho_{\btv}^{\otimes 2}  Z^{\otimes 2}_{A, \alv_{c}}  \right] ,  \Tr\left[ \rho_{\btv}^{\otimes 2}  Z^{\otimes 2}_{A', \alv'_{c}}  \right]\right] \\
    =\;& \Ebb_{(A, \alv_{c}), (A', \alv'_{c}) \sim \tilde{\DC}_{c}(\sigma)} \left[ \Ebb_{\btv}\Tr\left[ \rho_{\btv}^{\otimes 4}  (Z^{\otimes 2}_{A, \alv_{c}} \otimes  Z^{\otimes 2}_{A',\alv'_{c}})\right]  - \Ebb_{\btv} \Tr\left[ \rho_{\btv}^{\otimes 2}  Z^{\otimes 2}_{A, \alv_{c}}  \right]\Ebb_{\btv} \Tr\left[ \rho_{\btv}^{\otimes 2}  Z^{\otimes 2}_{A', \alv'_{c}}  \right]\right] \\
    = \;&  \Ebb_{(A, \alv_{c}), (A', \alv'_{c}) \sim \tilde{\DC}_{c}(\sigma)} \left[  \prod_{i=1}^n \Ebb_{\beta_i}\Tr\left[ \rho_{\beta_i}^{\otimes 4}  (Z^{(i) \otimes 2}_{A, \alv_{c}} \otimes  Z^{(i) \otimes 2}_{A',\alv'_{c}})\right]- \prod_{i=1}^n  \Ebb_{\beta_i} \Tr\left[ \rho_{\beta_i}^{\otimes 2}  Z^{(i)\otimes 2}_{A, \alv_{c}}  \right]\Ebb_{\beta_i} \Tr\left[ \rho_{\beta_i}^{\otimes 2}  Z^{(i)\otimes 2}_{A, \alv_{c}}  \right] \right] \;, \label{eq:proof-sup-purity-key} 
\end{align}
where the second line is due to $\Var[\sum_i X_i] = \sum_{i,j} \Cov[X_i, X_j]$.

To further proceed, we have to carry out these Haar integration in Eq.~\eqref{eq:proof-sup-purity-key}.

For the second order moment, we remind that the following relation holds through the Haar integration.
\begin{align}
    \Ebb_{\beta_i} \Tr\left[ \rho_{\beta_i}^{\otimes 2}  Z^{(i)\otimes 2}_{A, \alv_{c}}  \right] = \left\{
\begin{array}{ll}
    1/3 & \mbox{ if } Z^{(i)}_{A,\alv_{c}} \neq \id \;, \\ [.161cm]
   1 & \mbox{ if } Z^{(i)}_{A,\alv_{c}} = \id \;.
\end{array}
\right. \label{eq:proof-purity-start}
\end{align}

For the fourth order moment, we first notice that if either $Z^{(i)}_{A,\alv_{c}}=\id$ or $Z^{(i)}_{A',\alv'_{c}}=\id$, we do not need to compute 4-design integration. Indeed, we have that $\Tr[\rho_{\beta_i}^{\otimes 4} Z^{(i)\otimes 2}_{A,\alv_{c}}\otimes \id^{\otimes 2} ]=\Tr[\rho_{\beta_i}^{\otimes 2} Z^{(i) \otimes 2}_{A,\alv_{c}} ]$. Now, for the cases where neither $Z^{(i)}_{A,\alv_{c}}$ nor $Z^{(i)}_{A',\alv'_{c}}$ is $\id$, we use Eq.~\eqref{eq:4-design-int-1-qubit} which leads to
\begin{align}
  \Ebb_{\beta_i}\Tr[\rho_{\beta_i}^{\otimes 4} (Z^{(i) \otimes 2}_{A,\alv_{c}} \otimes Z^{(i)}_{A',\alv'_{c}}) ]= \left\{
\begin{array}{ll}
    1/5 & \mbox{ if } Z^{(i)}_{A,\alv_{c}} = Z^{(i)}_{A',\alv'_{c}} \;, \\ [.161cm]
    1/15 & \mbox{ if } Z^{(i)}_{A,\alv_{c}} \neq  Z^{(i)}_{A',\alv'_{c}}\;.
\end{array}
\right.
\end{align}

\medskip

Given Pauli-strings $Z_{A,\alv_{c}} = \bigotimes_{i=1}^n Z^{(i)}_{A,\alv_{c}}$ and $Z_{A',\alv'_{c}} = \bigotimes_{i=1}^n Z^{(i)}_{A',\alv'_{c}}$, we do a qubit-wise comparison to see which case for each qubit in these following four different cases it is in
\begin{enumerate}
    \item  $ Z^{(i)}_{A,\alv_{c}} =  Z^{(i)}_{A',\alv'_{c}} = \id \;.$
    \item $ Z^{(i)}_{A,\alv_{c}} =  Z^{(i)}_{A,\alv_{c}} \neq \id \;.$
    \item $ Z^{(i)}_{A,\alv_{c}} \neq  Z^{(i)}_{A',\alv'_{c}} \neq \id \;.$ 
    \item  $ Z^{(i)}_{A,\alv_{c}} \neq  Z^{(i)}_{A',\alv'_{c}}$ and one of them is an identity.
\end{enumerate}
Let us denote $n_{11}$ as the number of qubits in the 1. case, $n_{zz}$ as the number of qubits in the 2. case, $n_{zz'}$ as the number of qubits in the 3. case, and $n_{1z}$ as the number of qubits in the 4. case. Notice that $n = n_{11} + n_{zz} + n_{zz'} + n_{1z}$. In addition, the relations to the sets of qubits that $Z_{A,\alv_{c}}$ and $Z_{A',\alv'_{c}}$ act non-trivially on i.e., $\sup(Z_{A,\alv_{c}})$ and $\sup(Z_{A',\alv'_{c}})$, as well as their associated light cones are as follow:
\begin{align}
    n_{zz} + n_{zz'}& = | \sup(Z_{A, \alv_{c}}) \cap \sup(Z_{A', \alv'_{c}})| \;, \\ 
    n_{1z} & = |\sup(Z_{A, \alv_{c}}) \cup \sup(Z_{A', \alv'_{c}})| - | \sup(Z_{A, \alv_{c}}) \cap \sup(Z_{A', \alv'_{c}})| \\
    &= \Delta_{Z_{A}}(\alv_{c}) + \Delta_{Z_{A'}}(\alv'_{c}) - 2 | \sup(Z_{A, \alv_{c}}) \cap \sup(Z_{A', \alv'_{c}})| \;.\label{eq:proof-purity-end}
\end{align}

Using these equations from Eq.~\eqref{eq:proof-purity-start} to Eq.~\eqref{eq:proof-purity-end}, we can compute the coveriance term in Eq.~\eqref{eq:proof-sup-purity-key} as follows
\begin{align}
    \Cov_{\btv}\left[  \Tr\left[ \rho_{\btv}^{\otimes 2}  Z^{\otimes 2}_{A, \alv_{c}}  \right] ,  \Tr\left[ \rho_{\btv}^{\otimes 2}  Z^{\otimes 2}_{A', \alv'_{c}}  \right]\right] 
 & = \prod_{i=1}^n \Ebb_{\beta_i}\Tr\left[ \rho_{\beta_i}^{\otimes 4}  (Z^{(i) \otimes 2}_{A, \alv_{c}} \otimes  Z^{(i) \otimes 2}_{A',\alv'_{c}})\right]- \prod_{i=1}^n  \Ebb_{\beta_i} \Tr\left[ \rho_{\beta_i}^{\otimes 2}  Z^{(i)\otimes 2}_{A, \alv_{c}}  \right]\Ebb_{\beta_i} \Tr\left[ \rho_{\beta_i}^{\otimes 2}  Z^{(i)\otimes 2}_{A, \alv_{c}}  \right] \\
 & =  \left(\frac{1}{5}\right)^{n_{zz}}\left(\frac{1}{15}\right)^{n_{zz'}}\left(\frac{1}{3}\right)^{n_{1z}} - \left(\frac{1}{3}\right)^{\Delta_{Z_A}(\alv_{c}) + \Delta_{Z_{A'}}(\alv'_{c})} \\
 & = \left(\frac{1}{5}\right)^{n_{zz}+n_{zz'}}\left(\frac{1}{3}\right)^{n_{zz'}}\left(\frac{1}{3}\right)^{n_{1z}} - \left(\frac{1}{9}\right)^{n_{zz}+n_{zz'}}\left(\frac{1}{3}\right)^{n_{1z}} \\
 & = \left(\frac{1}{3}\right)^{n_{1z}}\left(\frac{1}{9}\right)^{n_{zz}+n_{zz'}}\left(\left(\frac{9}{5}\right)^{n_{zz}+n_{zz'}}\left(\frac{1}{3}\right)^{n_{zz'}}-1\right)  \\
 & =  \left(\frac{1}{3}\right)^{\Delta_{Z_{A}}(\alv_{c}) + \Delta_{Z_{A'}}(\alv'_{c})}\left[\left(\frac{9}{5}\right)^{\left| \sup(Z_{A, \alv_{c}}) \cap \sup(Z_{A', \alv'_{c}})\right|}\left(\frac{1}{3}\right)^{n_{zz'}}-1\right] \;.
\end{align}

Hence, the lower-bound on the purity term can be expressed as
\begin{align}
     \Var_{\thv} [\MC(\rho_{\thv}, \rho_{\thv})] \geq \Ebb_{(A, \alv_{c}), (A', \alv'_{c}) \sim \tilde{\DC}_{c}(\sigma)}\left(\frac{1}{3}\right)^{\Delta_{Z_{A}}(\alv_{c}) + \Delta_{Z_{A'}}(\alv'_{c})}\left[\left(\frac{9}{5}\right)^{\left| \sup(Z_{A, \alv_{c}}) \cap \sup(Z_{A', \alv'_{c}})\right|}\left(\frac{1}{3}\right)^{n(Z_{A,\alv_{c}},Z_{A',\alv'_{c}})}-1\right] \geq 0 \;,
\end{align}
where we introduce a new notation $n(Z_{A,\alv_{c}},Z_{A',\alv'_{c}}) := n_{zz'}$ to emphasise the dependence on $Z_{A,\alv_{c}}$ and $Z_{A',\alv'_{c}}$. Note that since we begin with $\Var_{\btv} \left[ \Ebb_{\alv} \MC(\rho_{\thv}, \rho_{\thv})  \right]$ in Eq.~\eqref{eq:proof-sup-prop-purity0} and we do not do any further lower bound, our final expression is guaranteed to be non-negative. This completes the proof of the supplemental proposition.
\end{proof}

\subsubsection{Proof of Theorem~\ref{thm:mmd-train-general-appx}}
\begin{proof}
We consider the generic form of the MMD variance in Eq.~\eqref{eq:sup-prop-mmd-variance-pauli-rotations} from Supplemental Proposition~\ref{sup-prop:mmd-variance-pauli-rotations}. The lower bound of the MMD variance can be written as 
\begin{align}
    \Var_{\thv}[\LC^{(\sigma)}_{\rm MMD}(\thv)] & \geq 4   \sum_{\substack{A \subseteq \NC \\ A\neq \{\;\}}} \left( c_{\sigma}(A) z_A(\tilde{P}) \right)^2 \Ebb_{\alv_{c} \sim \DC_M}\left( \frac{1}{3}\right)^{\Delta_{Z_A} (\alv_{c})} \\
    & \geq 4   \sum_{\substack{A \subseteq \NC \\ A\neq \{\;\}}} \left( c_{\sigma}(A) z_A(\tilde{P}) \right)^2 \left( \frac{1}{3}\right)^{\Ebb_{\alv_{c} \sim \DC_M}[\Delta_{Z_A} (\alv_{c})]} \\
    & =  4   \sum_{\substack{A \subseteq \NC \\ A\neq \{\;\}}} \left( c_{\sigma}(A) z_A(\tilde{P}) \right)^2 \left( \frac{1}{3}\right)^{\Delta_{Z_A}^{\rm (avg)} } \;,
\end{align}
where the first inequality is obtained by throwing away the non-negative purity term, and in the second inequality is by applying Jensen's inequality, leading to $ \Ebb_{\alv_{c} \sim \DC_M}\left( \frac{1}{3}\right)^{\Delta_{Z_A} (\alv_{c})} \geq \left( \frac{1}{3}\right)^{\Ebb_{\alv_{c} \sim \DC_M}[\Delta_{Z_A} (\alv_{c})]}$. The last equality is by using the definition of the average light cone in Eq.~\eqref{eq:avg-light-cone}. 

Now, from Proposition~\ref{prop:mmd-k-body} the MMD observable acts as few-body for the linear bandwidth $\sigma \in \Theta(n)$ with $c_{\sigma}(A) \in \Omega(1/\poly(n))$ for $|A| \in \OC(\log(n))$. Hence, by the assumption that there exists $A$ with $|A|\in\OC(\log(n))$ and the average light cone on at most $\log(n)$ qubits together with the polynomial scaling of $z^2_A(\tilde{P})$, that is
\begin{align}
   \Delta_{Z_A}^{\rm (avg)} \in \OC(\log(n)) \;\;\, \;\; \exists A \;{\rm such\;that}\; |A| \in \OC(\log(n)) \;{\rm and \;}z^2_A(\Tilde{P}) \in \Omega(1/\poly(n))  \;,
\end{align}
we have that the MMD variance lower bound scales polynomially with the number of qubits
\begin{align}
    \Var_{\thv} [\LC^{(\sigma)}_{\rm MMD}(\thv)] \in \Omega(1/\poly(n)) \;.
\end{align}
This completes the proof of the theorem.
\end{proof}

\subsection{Beyond loss gradients - resolving high-order correlations with the MMD}\label{app:mmd-faithfulness}

Our results so far (Theorem~\ref{thm:mmd-sigma} and Theorem~\ref{thm:mmd-train-general}) indicate that picking a single bandwidth $\sigma \in \Theta(n)$ maximizes the expected magnitude of gradients for a randomly initialized QCBM. However, while non-vanishing gradients are necessary, they are not sufficient to guarantee reliable training performance. 

As discussed in the main text and Supplementary Note~\ref{app:mmd-k-body-observable}, the MMD observable can be decomposed into a weighted sum of Pauli-Z strings ranging from low-body to global interaction terms. 
For $\sigma \in \Theta(n)$, Proposition~\ref{prop:mmd-k-body} ensures that the MMD observable is largely composed of low-body terms, with the contribution from global terms negligible. While this leads to substantial cost gradients, we will argue that losses composed purely of low-body terms struggle to learn global properties of the target distribution. In particular, we will argue that an MMD-type loss that is at most $2k$ bodied cannot distinguish between two distributions with the same marginals on $k$-qubits but which differ on higher-order marginals. In Supplementary Note~\ref{app:faithful_arbitrary} we generalise this argument to a broader family of losses for generative modelling. 

For a given subset of bits $A \subseteq \NC=\{1, 2, ..., n\}$, denote $\xv_A$ as a part of the bitstring $\xv$ on that subset $A$ and $\xv_{\bar{A}}$ as the rest of the bitstring $\xv$. The full bitstring $\xv$ can be expressed (not in the right bit order) as $\xv = (\xv_A, \xv_{\bar{A}})$.
Then, the marginal probability of the training distribution on $A$ can be expressed as
\begin{align}\label{eq:marginal_prob}
    \pt(\xv_A) = & \sum_{\xv_{\bar{A}} \in \{0,1\}^{\otimes (n - |A|)}} \pt(\xv_A, \xv_{\bar{A}})  \\
    = & \Tr\left[\rho_{\tilde{p}} \left( |\xv_A\rangle\langle\xv_A| \otimes \id_{\bar{A}} \right)\right]\label{eq:p_xA_marginal} \;,
\end{align}
where the sum is over all constellations of the bits that are not in $A$. In the second line, $\pt(\xv_A)$ is equivalently expressed as the expectation value of a projector onto a computational basis of the subsystem $A$ with the training quantum state $\rho_{\tilde{p}}$. 
Similarly, the marginal distribution of the model on $A$ is of the form
\begin{align}\label{eq:marginal_prob_model}
    \qth(\xv_A) = &\sum_{\xv_{\bar{A}} \in \{0,1\}^{\otimes (n - |A|)}} \qth(\xv_A, \xv_{\bar{A}}) \\
    = &  \Tr\left[\rho_{\thv} \left( |\xv_A\rangle\langle\xv_A| \otimes \id_{\bar{A}} \right)\right]\label{eq:q_xA_marginal} \;.
\end{align}
Physically, we note that, when marginals of two distributions agree up to $k$-bits, this implies that the diagonal elements of the reduced density matrices on any subsets of $k$ qubits are also identical. That is, for all $A\subseteq \NC$ such that $|A|\leq k$, if $\pt(\xv_A) = \qth(\xv_A)$, we have
\begin{align}
    Diag(\Tr_{\bar A} [\rho_{\Tilde{p}}]) =  Diag(\Tr_{\bar A} [\rho_{\thv}]) \;,
\end{align}
where $\bar{A}$ is a complementary of $A$.

In addition, we recall that the truncated version of the MMD observable defined in Eq.~\eqref{eq:mmd-obs-binom} is of the form
\begin{align}
    \Tilde{O}^{(\sigma, k)}_{\rm MMD} = &  \sum_{l=0}^k {n \choose l} (1-p_\sigma)^{n-l} p_\sigma^l \, D_{2l} \\
    = &  \sum_{\substack{A\subseteq \NC \\ |A|\leq k}}(1-p_\sigma)^{n-|A|}p_\sigma^{|A|}\bigotimes_{i\in A}(Z_i\otimes Z_{n+i})\;,
\end{align}
where in the second line the observable is re-written explicitly as the sum over $A$ (with $l = |A|$). Then, the truncated version of the MMD loss can be expressed as
\begin{align}
    \Tilde{\LC}^{(\sigma, k)}_{\rm MMD}(\thv) = & \Tr\left[ \tilde{O}^{(\sigma,k)}_{\rm MMD}  (\rho_{\thv} \otimes \rho_{\thv})\right] - 2 \Tr\left[\tilde{O}^{(\sigma,k)}_{\rm MMD}  (\rho_{\thv} \otimes \rho_{\tilde{p}})\right]+ \Tr\left[  \tilde{O}^{(\sigma,k)}_{\rm MMD} (\rho_{\tilde{p}} \otimes \rho_{\tilde{p}})\right]  \\
    = & \Tr\left[ \tilde{O}^{(\sigma,k)}_{\rm MMD}  (\rho_{\thv} - \rho_{\tilde{p}})^{\otimes 2}\right] \label{eq:C120}
    \; . 
\end{align}

\noindent We are now ready to state and prove the following proposition.
\begin{proposition}[The truncated MMD loss is not faithful]
   Consider a distribution $\qth(\xv)$ that agrees with the training distribution $\pt(\xv)$ on all the marginals up to $k$ bits, but disagrees on higher-order marginals. The distribution $\qth(\xv)$ minimizes the truncated MMD loss. That is, suppose
   \begin{align}
       \qth(\xv_A) = \pt(\xv_A) \;,
   \end{align}
    for all $A\subseteq\{1,2,...,n\}$ with $|A| \leq k$, then
    \begin{align}
       \tilde{\LC}^{(\sigma, k)}_{\rm MMD}(\thv) = 0 \;.
   \end{align}
    Crucially, this is true even if for some $B\subseteq\{1,2,...,n\}$ with $|B| > k$
   \begin{align}
       \qth(\xv_B) \neq \pt(\xv_B) \;.
   \end{align}
\end{proposition}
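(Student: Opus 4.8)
**The goal is to show that the truncated MMD loss vanishes whenever all marginals up to $k$ bits agree, regardless of higher-order disagreement.**

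The plan is to exploit the representation of the truncated MMD loss in Eq.~\eqref{eq:C120}, namely $\Tilde{\LC}^{(\sigma,k)}_{\rm MMD}(\thv) = \Tr[\Tilde{O}^{(\sigma,k)}_{\rm MMD}(\rho_{\thv}-\rho_{\tilde{p}})^{\otimes 2}]$, together with the explicit Pauli-string form of the truncated operator. Writing $\Delta := \rho_{\thv}-\rho_{\tilde{p}}$ and expanding the observable as a sum over subsets $A$ with $|A|\leq k$, the loss becomes a sum of terms of the form $(1-p_\sigma)^{n-|A|}p_\sigma^{|A|}\Tr[(\bigotimes_{i\in A}Z_i)\Delta]\cdot\Tr[(\bigotimes_{i\in A}Z_i)\Delta]$, since the operator factorizes across the two $n$-qubit registers and $\Delta^{\otimes 2}$ is a product state across those registers. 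Thus each contribution is a perfect square, $(1-p_\sigma)^{n-|A|}p_\sigma^{|A|}\big(\Tr[Z_A\,\Delta]\big)^2$, where $Z_A:=\bigotimes_{i\in A}Z_i$.

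The key step is then to show that $\Tr[Z_A\,\Delta]=0$ for every $A$ with $|A|\leq k$. First I would observe that $\Tr[Z_A\,\rho]$ depends only on the reduced state $\Tr_{\bar A}[\rho]$, and in fact only on its diagonal in the computational basis, since $Z_A$ is diagonal. Concretely, $\Tr[Z_A\,\rho] = \sum_{\xv_A}(-1)^{\sum_{i\in A}x_i}\,\rho(\xv_A)$ where $\rho(\xv_A)=\Tr[\rho(|\xv_A\rangle\langle\xv_A|\otimes\id_{\bar A})]$ is exactly the marginal probability on $A$, as recorded in Eqs.~\eqref{eq:p_xA_marginal} and~\eqref{eq:q_xA_marginal}. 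Therefore $\Tr[Z_A\,\Delta] = \sum_{\xv_A}(-1)^{\sum_{i\in A}x_i}\big(\qth(\xv_A)-\pt(\xv_A)\big)$. By the hypothesis $\qth(\xv_A)=\pt(\xv_A)$ for all $A$ with $|A|\leq k$, each such marginal difference vanishes identically, so $\Tr[Z_A\,\Delta]=0$. Since every surviving term in the truncated sum carries such a factor (squared), the entire truncated loss collapses to zero. Crucially, the sum never reaches any $A$ with $|A|>k$, so disagreement on higher-order marginals $\qth(\xv_B)\neq\pt(\xv_B)$ with $|B|>k$ is simply invisible to the truncated operator, which establishes the final non-faithfulness claim.

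I do not anticipate a serious obstacle here; the proof is essentially a bookkeeping argument once the two structural facts are in place. The only point requiring a little care is the factorization across the two registers: I must verify that $\Tr[(\bigotimes_{i\in A}(Z_i\otimes Z_{n+i}))(\rho\otimes\rho')] = \Tr[Z_A\rho]\,\Tr[Z_A\rho']$ so that the bilinear form genuinely decouples into a product of single-register parities. This follows immediately from $\Tr[M\otimes N]=\Tr[M]\Tr[N]$ applied with $M$ acting on the first $n$ qubits and $N$ on the last $n$, but it is the one place where one should confirm the tensor-product indexing of $\Tilde{O}^{(\sigma,k)}_{\rm MMD}$ is being used correctly rather than conflating the two copies of $Z_A$.
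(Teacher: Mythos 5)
Your proposal is correct and follows essentially the same route as the paper's proof: both start from the bilinear form $\Tr[\tilde{O}^{(\sigma,k)}_{\rm MMD}(\rho_{\thv}-\rho_{\tilde{p}})^{\otimes 2}]$, expand the truncated operator into Pauli strings to obtain a weighted sum of squared parity differences $(1-p_\sigma)^{n-|A|}p_\sigma^{|A|}\left(\langle Z_A\rangle_{\thv}-\langle Z_A\rangle_{\tilde{p}}\right)^2$ over $|A|\leq k$, and then note that each parity $\langle Z_A\rangle$ is a signed sum of the $|A|$-bit marginals, so the hypothesis kills every surviving term while subsets with $|A|>k$ never appear. The factorization across the two registers that you flag as the one delicate point is exactly the step the paper uses implicitly when passing to its Eq.~\eqref{eq:mmd-loss-za}, and your verification of it is sound.
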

\begin{proof} Our proof idea is to express the truncated MMD loss in terms of the marginals up to $k$ bits and show that the loss is minimized when the marginals up to $k$ bits match. First, we explicitly expand $ \Tilde{O}^{(\sigma, k)}_{\rm MMD}$ in the truncated MMD loss function in Eq.~\eqref{eq:C120} leading to
\begin{align} \label{eq:mmd-loss-za}
    \Tilde{\LC}^{(\sigma, k)}_{\rm MMD}(\thv) = &  \sum_{\substack{A\subseteq \NC \\ |A|\leq k}}(1-p_\sigma)^{n-|A|}p_\sigma^{|A|} \left[\langle Z_A \rangle_{\thv} - \langle Z_A \rangle_{\Tilde{p}} \right]^2 \;,
\end{align}
where we introduce the shorthand notations $Z_A =\bigotimes_{i\in A} Z_i$, $\langle Z_A \rangle_{\thv} = \Tr[ Z_A \rho_{\thv}]$ and $\langle Z_A \rangle_{\Tilde{p}} = \Tr[ Z_A \rho_{\tilde{p}}]$. 
By expressing $Z_A$ in the computational basis, we have 
\begin{align}
Z_A=\sum_{\xv_A}(-1)^{\sum_{i\in A}x_i}|\xv_A\rangle\langle\xv_A|\otimes\id_{\bar{A}}\;.    
\end{align}
So, the expectation of $Z_A$ can be written as a sum of the marginals probabilities on $A$ with the definition in Eq.~\eqref{eq:marginal_prob} as follows,
\begin{align}
    \langle  Z_A \rangle_{\Tilde{p}} = \sum_{\xv_A}(-1)^{\sum_{i\in A}x_i}\pt(\xv_A) \;,
\end{align}
and 
\begin{align}
    \langle  Z_A \rangle_{\thv} = \sum_{\xv_A}(-1)^{\sum_{i\in A}x_i}\qth (\xv_A) \;.
\end{align}
Then, we have 
\begin{align}\label{eq:marginal-from-z-model}
    \Tilde{\LC}^{(\sigma, k)}_{\rm MMD}(\thv) = &  \sum_{\substack{A\subseteq \NC \\|A|\leq k}}(1-p_\sigma)^{n-|A|}p_\sigma^{|A|} \left[ \sum_{\xv_A}(-1)^{\sum_{i\in A}x_i} (\pt(\xv_A) - \qth (\xv_A)) \right]^2\\
    = & 0 \;,
\end{align}
which completes the proof. Note that we do not need information of the marginals beyond $k$ bits and therefore this leads to the unfaithfulness in the sense that higher-order marginals can disagree even with the truncated loss being minimized.

\end{proof}

In Proposition~\ref{prop:mmd_not_faithful}, we show that if the marginals between the model and training distributions match up to $k$ bits, then the truncated loss of order $k$ is minimized with the model distribution. We now show that the inverse direction also holds. That is, minimizing the truncated loss means learning the marginals of the training distribution.

To show this, we consider again the truncated MMD loss in Eq~\eqref{eq:mmd-loss-za} and notice that the loss is minimized  and equals to $0$ if and only if
\begin{align} \label{eq:z_p_z_q_equal}
    \langle Z_A \rangle_{\thv} = \langle Z_A \rangle_{\Tilde{p}} \;,
\end{align}
for all $A \in \NC$ such that $|A| \leq k$. Concerning the marginal probabilities, we now decompose the projector $|\xv_A\rangle\langle\xv_A|\otimes\id_{\bar{A}}$ in the Pauli basis
\begin{align} 
    |\xv_A\rangle\langle\xv_A|\otimes\id_{\bar{A}} &=\bigotimes_{i\in A} \frac{1}{2}\left(\id_i+(-1)^{x_i}Z_i\right) \\ \label{eq:partial_projector_as_Pauli}
    & =\frac{1}{2^{|A|}}\sum_{B\subseteq A} (-1)^{\sum_{i\in B}x_i} Z_B \;,
\end{align}
where we expanded the product fully with $Z_B = \bigotimes_{i\in B} Z_i$, where $B$ are all possible subsets of $A$. Thus any $k$ bit marginal can be computed from a sum of the average parities of all subsets up to $k$ bits via 
\begin{align}\label{eq:marginalsparity}
    \pt(\xv_A)  = & \frac{1}{2^{|A|}}\sum_{B\subseteq A} (-1)^{\sum_{i\in B}x_i}   \langle Z_B \rangle_{\Tilde{p}} \, .
\end{align}
It is clear from Eq.~\eqref{eq:mmd-loss-za} that training on the $k$-truncated MMD learns all average parities of the target distribution up to $k$ bits, Eq.~\eqref{eq:z_p_z_q_equal}, and hence Eq.~\eqref{eq:marginalsparity} implies we also learn all marginals up to and including $k$ bits. Put another way,
the difference between model and training marginal probabilities is given by
\begin{align}
    \pt(\xv_A) - \qth(\xv_A) = & \frac{1}{2^{|A|}}\sum_{B\subseteq A} (-1)^{\sum_{i\in B}x_i}  \left( \langle Z_B \rangle_{\Tilde{p}} - \langle Z_B \rangle_{\thv} \right)\;, \\
    = & 0 \; ,
\end{align}
for all $A$ such that $|A| \leq k$.

\subsection{Distinguishing marginals using an arbitrary loss}\label{app:faithful_arbitrary}

In Supplementary Note~\ref{app:mmd-faithfulness}, we have shown that a truncated MMD loss operator cannot distinguish between model distributions that agree with the data distribution up until a certain order of marginals, but disagree beyond. In this section, we show that this phenomenon can be extended to general generative losses for classical data that can be formulated as the expectation value of some observable. As a key example, we first consider loss functions $ \LC(\thv) = \Tr[O \rho_{\thv}]$ with an observable in the following form,
\begin{align}
    O = \sum_{\xv \in \XC} D_{\vec{\alpha}}(\xv) |\xv \rangle\langle \xv| \;.
\end{align}
Here, $D_{\vec{\alpha}}(\xv)$ is the eigenvalue of the operator corresponding with the computational basis sample $\xv$, which could additionally be parametrized by $\vec{\alpha}$. Notably, the loss for the Generator in a quantum GAN can be expressed in this form, where $D_{\vec{\alpha}}(\xv)$ is the classification output of the Discriminator. 

The truncated version of $O$ in the Pauli basis up to $k$-body terms can be expressed as
\begin{align}
   O^{(k)} = \sum_{\substack{A\subseteq \NC \\|A|\leq k}} c_A Z_A \;,
\end{align}
where $c_A = \frac{1}{2^n} \sum_{\xv} D_{\vec{\alpha}}(\xv)(-1)^{\sum_{i\in A} x_i}$, and $Z_A = \bigotimes_{i \in A} Z_i$ are Pauli operators acting non-trivially on qubits $i$ in a subset of qubits $A$.

Now, we show that the loss assigned by the truncated loss function is the same between that the model state $\rho_{\thv_1}$, which matches the training distribution exactly (i.e., it is the global optimum of the full loss but not necessarily of the truncated one), and a state  $\rho_{\thv_2}$, matches the training distribution up to $k$-bit marginals but disagrees beyond. Both are characterized by the property
\begin{equation}\label{eq:marginals_match}
    \Tr\left[\rho_{\thv_1} \left( |\xv_A\rangle\langle\xv_A| \otimes \id_{\bar{A}} \right)\right] = \Tr\left[\rho_{\thv_2} \left( |\xv_A\rangle\langle\xv_A| \otimes \id_{\bar{A}} \right)\right] = \pt(\xv_A) \, .
\end{equation}
for all $A \subseteq \NC$ such that $|A| \leq k$.  
This implies that the reduced states $\rho_{\thv_1,A} = \text{Tr}_{\bar{A}}[\rho_{\thv_1}]$, $\rho_{\thv_2, A} = \text{Tr}_{\bar{A}}[\rho_{\thv_2}]$ and $\rho_{\tilde{p}, A} = \text{Tr}_{\bar{A}}[\rho_{\Tilde{p}}]$ have the same diagonal, that is $Diag(\rho_{\thv_1,A}) = Diag(\rho_{\thv_2,A}) =  Diag(\rho_{\Tilde{p},A})$. Consequently, the expectations of any diagonal Pauli operator $Z_A$ on $A$ are the same,
\begin{equation}\label{eq:diags_match}
    \langle Z_A \rangle_{\thv_1} = \langle Z_A \rangle_{\thv_2} =  \langle Z_A \rangle_{\Tilde{p}}\,.
\end{equation}
Now consider the difference in the loss between any of these two states i.e., $\rho, \rho' \in \{ \rho_{\thv_1}, \rho_{\thv_2}, \rho_{\Tilde{p}}\}$
\begin{align}
    d(\rho,\rho') &= \mathcal{L}(\rho) - \mathcal{L}(\rho') \\
    &= \sum_{\substack{A\subseteq \NC \\|A|\leq k}} c_A \left(\Tr_{\bar A} [Z_A \rho] -\Tr_{\bar A} [Z_A \rho']\right)\\
    & = 0 \;.
\end{align}

This shows that any loss composed exclusively of low-body terms cannot distinguish between distributions with the same low-order marginals (but potentially different higher-order marginals). Beyond losses that are natively composed entirely of low-body operators, this result becomes of practical importance when the loss $ \LC(\thv) = \sum_{\xv \in \XC} \Tr[ D_{\vec{\alpha}}(\xv) |\xv \rangle\langle \xv| \rho_{\thv}]$ is effectively composed entirely of low-body terms, i.e., if the global contributions to $\LC(\thv)$ are too small to be resolved using the available shot budget. In that case, $\LC(\thv)$ is well approximated by its truncated version and our argument applies. Whether this is or is not the case is determined by the choice in $D_{\vec{\alpha}}(\xv)$.

We note that in this derivation we left the structure of $D_{\vec{\alpha}}$ entirely general up to the constraint that it is diagonal in the computational basis and acts only on a single copy of the model distribution at a time. 
But our proof can be directly applied to general generative losses $\LC_{gen}(\thv, \{C_O\})$ for classical data which take expectation values of several observables $C_O = \Tr[O \rho_{\thv}^{\otimes m}]$ with operators $O$ acting on $m$ different sub-systems, each of which contains up to $k$-body terms in the Pauli basis. That is, we have
the general operator of the form
\begin{align}
    O = \sum_{\substack{A_1,...,A_m \subseteq \NC \\|A_1|,...,|A_m|\leq k}} c_{A_1,...,A_m}(\boldsymbol\alpha, \tilde{P}) \left( Z_{A_1} \otimes Z_{A_2} \otimes ... \otimes Z_{A_m}\right) \;, 
\end{align}
where, $A_1, A_2, ..., A_m$ are subsets of $\NC = \{1,...,n\}$, $c_{A_1,...,A_m}(\boldsymbol\alpha, \tilde{P})$ are real coefficients that can depend on training data, and $Z_{A_j} = \bigotimes_{i\in A_j} Z_{(j-1)n + i}$ acting non-trivially on the qubits of the $j^{\rm th}$ subsystem. 

This form of the general loss covers loss functions for quantum circuit Born machines (in particular the MMD, as outlined in Sec.~\ref{sec:mmd}), quantum GANS, quantum Boltzmann machines~\cite{QBM_amin}, and any other proposed (quantum) generative model on classical discrete data. 
Therefore, any diagonal loss operator that implements a generative modelling loss and contains only low-bodied operators cannot be used to reliably learn global probability marginals.

\section{Supplementary Note - Analysis on the quantum fidelity loss}\label{ap:localcost}

In this section, we present an approach that can be used to estimate the local fidelity quantity $\LC_{QF}^{(L)}(\thv)$ using a series of Hadamard tests without explicitly loading the training data into a quantum state or requiring a quantum oracle.

We begin by introducing a pure quantum state corresponding to the training dataset $\ket{\phi} = \sum_{\xv} \sqrt{\pt(\xv)} \ket{\xv}$. Learning the training distribution is equivalent to the state learning task with $\ket{\phi}$ as the target state. We note that our choice of having the coefficient as $\sqrt{\pt(\xv)}$ is arbitrary and, generally, any target quantum state with the probabilities corresponding to the training probabilities would be valid candidates for the task.

As discussed in the main text, the quantum fidelity can be used as a cost function in this learning task
\begin{align}
    \LC_{QF}(\thv) = 1 - |\langle \phi | \psi(\thv) \rangle|^2 \;.
\end{align}
However, the globality of the loss leads to barren plateaus, which in turn leads to the untrainability of the loss. 
However, the local version of the quantum fidelity has been shown to be both trainable with the shallow depth circuits and faithful to the original global version~\cite{khatri2019quantum}. Specifically, the local quantum fidelity is of the form 
\begin{align}
    \LC_{QF}^{(L)}(\thv) = 1 - \bra{\phi} U(\thv) H_L U^{\dagger}(\thv) \ket{\phi} \;,
\end{align}
with
\begin{align}
    H_L = \frac{1}{n} \sum_{i=1}^n |0_i\rangle\langle 0_i | \otimes \id_{\bar{i}} \;,
\end{align}
where $|0_i\rangle\langle 0_i |$ are now single-qubit projectors.
In the context of the generative modeling with classical, there is an additional challenge as only the training dataset $\Tilde{P}$ is given to us and not the state $\ket{\phi}$. Using Hadamard tests, we now show that measuring the local quantum fidelity can be achieved efficiently without loading the training data into the quantum state.
To see this, we first express $H_L$ in the Pauli basis
\begin{align}
    H_L = &  \frac{1}{n} \sum_{i = 1}^n \left(\frac{\id_i + Z_i}{2} \right) \otimes \id_{\bar{i}} \\
    = & \frac{\id}{2} + \frac{1}{2n} \sum_i Z_i \;.
\end{align}
We then expand $\LC_{QF}^{(L)}(\thv)$ as
\begin{align}
    \LC_{QF}^{(L)}(\thv) = & 1 - \bra{\phi} U(\thv) \left( \frac{\id}{2} + \frac{1}{2n} \sum_i Z_i \right) U^{\dagger}(\thv) \ket{\phi} \\ \label{eq:local_fidelity_cost}
    = & \frac{1}{2} - \frac{1}{2n} \sum_i  \bra{\phi} U(\thv) Z_i U^{\dagger}(\thv) \ket{\phi}  \\
    =  & \frac{1}{2} - \frac{1}{2n} \sum_{i,\xv, \vec{x'}} \sqrt{\pt(\xv) \pt(\vec{x'})} \bra{\xv} U(\thv) Z_i U^\dagger(\thv) \ket{\vec{x'}} \\
    =  & \frac{1}{2} - \frac{1}{2n} \sum_{i,\xv, \vec{x'}} \sqrt{\pt(\xv) \pt(\vec{x'})} \bra{\xv} U(\thv) Z_i U^\dagger(\thv) U_{\vec{x'},\xv}\ket{\xv} \\
    =  & \frac{1}{2} - \frac{1}{2n} \sum_{i,\xv, \vec{x'}} \sqrt{\pt(\xv) \pt(\vec{x'})} \bra{\xv}  \Tilde{U} (\thv, i, \xv, \vec{x'}) \ket{\xv} \;,
\end{align}
where in the third line we explicitly expand $\ket{\phi} = \sum_{\xv} \sqrt{p(\xv)} \ket{\xv}$, and in the fourth line we introduce $U_{\vec{x'}, \xv}$ which is the unitary mapping from the computational basis $\xv$ to $\vec{x'}$ by applying the necessary single-qubit flips. Finally, in the last line, we introduce $ \Tilde{U} (\thv, i, \xv, \vec{x'})= U(\thv) Z_i U^\dagger(\thv) U_{\vec{x'},\xv}$ which summarizes the full unitary to be implemented for any pair $\xv$ and $\vec{x'}$. Each term can now be estimated using two Hadamard tests, i.e., one for the real part and the one for the imaginary part of each overlap. If $N_p \in \OC(\poly(n))$ is the number of unique bitstrings in the training dataset $\Tilde{P}$, one thus requires $2nN_p^2 \in \OC(
\poly(n))$ Hadamard tests to evaluate all terms in the loss. The number of Hadamard tests can be reduced by a factor of $2$ if the quantum model is constructed to span only either real or imaginary subspace. The number of controlled unitaries can also be reduced to simple control phase gates by using a diagonal ansatz~\cite{commeau2020variational}. 
While the number of Hadamard tests naively scales with the number of training bitstrings, this overhead is expected to be significantly reduced by employing techniques such as stochastic gradient descent~\cite{sweke2020stochastic} which allows us to stochastically optimize the loss in an unbiased manner.

\section{Supplementary Note - Additional training with exponential support}
\label{app:training}
As stated in the main text, QCBMs are not expected to be able to learn distribution with exponential support when the training data are stored in classical computers. This is since even storing so much data is unrealistic beyond a few dozens of qubits. However, due to the focus on systems with very few qubits, most applications taken from the literature utilize on distribution with non-zero probabilities on a macroscopic number of bitstrings. Such cases can still be trainable with the KL divergence and finite shots if the number of qubits is small enough ($n\leq 12$). We perform the same training procedure as in the main text in Sec.~\ref{sec:train-hea-dataset}, with the additions of gradient batching over $k=10$ iterations, and gradient clipping with a threshold of $\tau = 0.1$. These details aim at stabilising the optimisation and improve the performance, and follow best practices. Fig.~\ref{training_6_app} shows this numerically on the ECAL dataset with $n=6$ qubits. More particularly, we train on a transformed version of the dataset that follows $-\log(\pt(\xv))$, which exhibits exponential support (top three rows) and additionally on the original probabilities (bottom three rows), which have only a polynomial support, for different number of shots ($10^2,\,10^3,\,10^4$ and $\infty$) and layers ($1,\,8$ and $16$). Each column is divided into two blocks of three plots each, where the first shows the TVD during the training, the second the generated histograms with the best ansatz against the target distribution, and the third the absolute error between the two distributions. We observe that the three loss functions are comparable for the exponential support case, while KLD breaks down when the support is polynomial and using fewer than $10^3$ shots. 

\begin{figure*}
    \centering
    \includegraphics[scale=0.4]{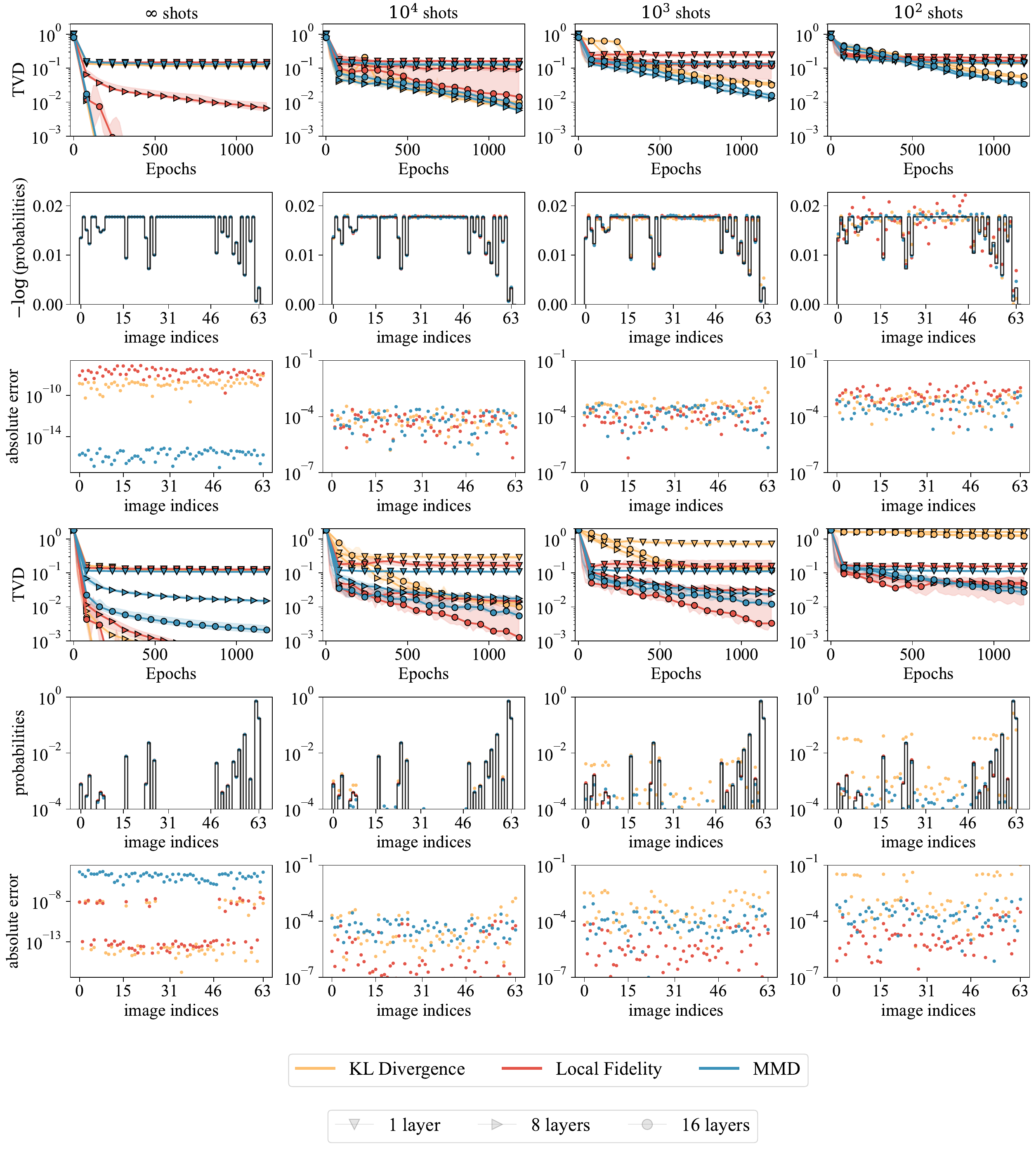}
    \caption{Training on the ECAL dataset with $n=6$ qubits on a dataset with exponential support (top three rows) and one with polynomial support (last three rows). The first rows shows the TVD during training, while the second displays the generated distribution against the target one (black) and the third the absolute error between the two. }
    \label{training_6_app}
\end{figure*}

We recall that these numerics are not a contradiction with the message of this paper, since learning a distribution with exponential support is not scalable using QCBMs, but may explain why the exponential concentration issue has not been discussed for quantum generative models before.

\end{document}